\documentclass[a4paper,11pt]{article}

\usepackage{a4wide}
\usepackage[margin=1in]{geometry}
\usepackage{times}

\usepackage[noadjust]{cite}
\bibliographystyle{plain}

\usepackage{amsmath}
\usepackage{amssymb}
\usepackage{amsthm} 
\usepackage{algorithm}
\usepackage[noend]{algpseudocode}

\usepackage{array}
\usepackage{enumerate}

\usepackage{tikz}
\usetikzlibrary{fit,positioning}
\usepackage{varwidth}
\usetikzlibrary{shapes.geometric,arrows,decorations.markings,positioning,calc}
\newcommand*{\StrikeThruDistance}{0.2cm}
\tikzset{
    >=stealth',
    dummy/.style={},
    cProb/.style={
           text width =8em,
           minimum height=1em,
           text centered,
           font=\normalsize},
    rightarr/.style={
           ->,
           thick,
           shorten <=2pt,
           shorten >=2pt,},
    leftarr/.style={
           <-,
           thick,
           shorten <=2pt,
           shorten >=2pt,},
    leftrightarr/.style={
           <->,
           thick,
           shorten <=2pt,
           shorten >=2pt,},
    seprightarr/.style={
    	   ->,
           thick,
           shorten <=2pt,
           shorten >=2pt,
    	decoration={markings, mark=at position 0.5 with {
        	\draw [thick,-] 
            	++ (-\StrikeThruDistance,-\StrikeThruDistance) 
            	-- ( \StrikeThruDistance, \StrikeThruDistance);}
    	},
    	postaction={decorate},
	},
	sepleftarr/.style={
    	   <-,
           thick,
           shorten <=2pt,
           shorten >=2pt,
    	decoration={markings, mark=at position 0.5 with {
        	\draw [thick,-] 
            	++ (-\StrikeThruDistance,-\StrikeThruDistance) 
            	-- ( \StrikeThruDistance, \StrikeThruDistance);}
    	},
    	postaction={decorate},
	}
}
\tikzset{}

\theoremstyle{plain}
\newtheorem{thm}{Theorem}
\newtheorem{cor}[thm]{Corollary}
\newtheorem{lem}[thm]{Lemma}
\newtheorem{obs}[thm]{Observation}

\theoremstyle{definition}
\newtheorem{defn}[thm]{Definition}
\newtheorem{constraint}[thm]{Constraint}
\newtheorem{remark}[thm]{Remark}

\newenvironment{myitemize}
{ \begin{itemize}
    \setlength{\itemsep}{0pt}
    \setlength{\parskip}{0pt}
    \setlength{\parsep}{0pt}     }
{ \end{itemize}                  }

\newcommand{\Hom}[1]{\mathrm{\#\textsc{Hom}}(#1)}
\newcommand{\SHom}[1]{\mathrm{\#\textsc{SHom}}(#1)}
\newcommand{\Comp}[1]{\mathrm{\#\textsc{Comp}}(#1)}
\newcommand{\Ret}[1]{\mathrm{\#\textsc{Ret}}(#1)}

\newcommand{\LHom}[1]{\mathrm{\#\textsc{LHom}}(#1)}

\newcommand{\LSHom}[1]{\mathrm{\#\textsc{LSHom}}(#1)}
\newcommand{\LComp}[1]{\mathrm{\#\textsc{LComp}}(#1)}
\newcommand{\OALHom}[1]{\Ret{#1}}
\newcommand{\cOALHom}[1]{\OALHom{#1}^\text{conn}}
\newcommand{\DirRet}[1]{\mathrm{\#\textsc{Dir-Ret}}(#1)}
\newcommand{\bis}{\#\mathrm{\textsc{BIS}}}
\newcommand{\is}{\#\mathrm{\textsc{IS}}}
\newcommand{\sat}{\#\mathrm{\textsc{SAT}}}
\newcommand{\TCut}[1]{\mathrm{\#\textsc{MultiterminalCut}(#1)}}
\newcommand{\largecut}{\mathrm{\#\textsc{LargeCut}}}

\newcommand{\csp}{\mathrm{\#\textsc{CSP}}}


\newcommand{\DRet}[1]{\mathrm{\textsc{Ret}}(#1)}


\newcommand{\Zivny}{{\v{Z}}ivn{\'y}}

\renewcommand{\P}{\mathrm{P}}
\newcommand{\E}{\mathbf{E}}
\newcommand{\FP}{\mathrm{FP}}
\newcommand{\NP}{\mathrm{NP}}
\newcommand{\numP}{\#\mathrm{P}}
\newcommand{\leap}{\le_\mathrm{AP}}
\newcommand{\geap}{\ge_\mathrm{AP}}
\newcommand{\eqap}{\equiv_\mathrm{AP}}
\newcommand{\WR}[1]{\mathrm{WR}_{#1}}

\newcommand{\calH}{\mathcal{H}}

\newcommand{\calM}{\mathcal{M}}
\newcommand{\calL}{\mathcal{L}}

\newcommand{\calY}{\mathcal{Y}}
\newcommand{\boldS}{\mathbf{S}}

\newcommand{\hatN}{\widehat{N}}

\newcommand{\TGS}{T_{G,\boldS}}
\newcommand{\tGS}{t_{G,\boldS}}
\newcommand{\OmGSi}{\Omega_{G,\boldS,i}}
\newcommand{\OmGSk}{\Omega_{G,\boldS,k}}
\newcommand{\OmPlus}{\Omega^+_{G,\boldS}}
\newcommand{\OmCup}{\Omega_{G,\boldS}}

\newcommand{\sx}{s_\alpha}
\newcommand{\sy}{s_\beta}
\newcommand{\sz}{s_\gamma}

\newcommand{\Imp}{\ensuremath{\mathrm{Imp}}}

\newcommand{\IIv}{I_{\mathrm{v}}}
\newcommand{\IIe}{I_{\mathrm{e}}}
\newcommand{\IIf}{I_{\mathrm{f}}}
\newcommand{\IIb}{I_{\mathrm{b}}}
\newcommand{\Cv}{C_{\mathrm{v}}}
\newcommand{\Ce}{C_{\mathrm{e}}}
\newcommand{\Cf}{C_{\mathrm{f}}}
\newcommand{\Cb}{C_{\mathrm{b}}}
\newcommand{\Hve}{H_{\IIv,\IIe}}
\newcommand{\Hvfb}{H_{\IIv,\IIf,\IIb}}

\newcommand{\type}{T}

\newcommand{\abs}[1]{\left\vert #1 \right\vert}
\newcommand{\tv}[2]{\mathrm{d}_{\mathrm{TV}}\left(#1,#2\right)}
\newcommand*\from{\colon}
\let\epsilon=\varepsilon
\newcommand{\eps}{\ensuremath{\varepsilon}}
\DeclareRobustCommand{\stirling}{\genfrac\{\}{0pt}{}}
\newcommand{\ucp}[2]{#1 \times #2}

\let\originalleft\left
\let\originalright\right
\renewcommand{\left}{\mathopen{}\mathclose\bgroup\originalleft}
\renewcommand{\right}{\aftergroup\egroup\originalright}

\renewcommand{\hom}[3][]{{N^{#1}\bigl(#2 \rightarrow #3\bigr)}}
\newcommand{\sur}[2]{\hom[\mathrm{sur}]{#1}{#2}}
\newcommand{\comp}[2]{\hom[\mathrm{comp}]{#1}{#2}}

\newcommand{\prob}[3]{
\vbox{
  \begin{description}
    \item[\bf Name:] #1
    \vspace{-1.75ex}
    \item[\bf Input:] #2  
    \vspace{-1.75ex}
    \item[\bf Output:] #3
  \end{description}
}}

\newcommand{\sampleHom}[1]{\textsc{SampleHom}_{H, \calL^*}(#1)}
\newcommand{\countHom}[1]{\textsc{CountHom}_{H, \calL^*}(#1)}

\usepackage{filecontents}
\begin{filecontents}{\jobname.bib}

@article{Harary1971,
AUTHOR = {Harary, Frank and Schwenk, Allen},
     TITLE = {{Trees with Hamiltonian square}},
   JOURNAL = {Mathematika},
  FJOURNAL = {Mathematika. A Journal of Pure and Applied Mathematics},
    VOLUME = {18},
      YEAR = {1971},
     PAGES = {138--140},
      ISSN = {0025-5793},
   MRCLASS = {05.45},
  MRNUMBER = {0288047},
MRREVIEWER = {M. D. Plummer},
       DOI = {10.1112/S0025579300008494},
       URL = {https://ezproxy-prd.bodleian.ox.ac.uk:4563/10.1112/S0025579300008494}
}

@inproceedings{Schnorr,
  author    = {Claus{-}Peter Schnorr},
  editor    = {S. Michaelson and
  	Robin Milner},
  title     = {{Optimal Algorithms for Self-Reducible Problems}},
  booktitle = {Third International Colloquium on Automata, Languages and Programming,
  	University of Edinburgh, UK, July 20-23, 1976},
  pages     = {322--337},
  publisher = {Edinburgh University Press},
  year      = {1976},
  timestamp = {Mon, 08 Jul 2019 11:03:55 +0200}
}

@book{Mitzenmacher2017,
  AUTHOR = {Mitzenmacher, Michael and Upfal, Eli},
     TITLE = {{Probability and Computing}},
   EDITION = {Second},
      NOTE = {{Randomization and Probabilistic techniques in Algorithms and Data Analysis}},
 PUBLISHER = {Cambridge University Press, Cambridge},
      YEAR = {2017},
     PAGES = {xx+467},
      ISBN = {978-1-107-15488-9},
   MRCLASS = {68-01 (60C05 60G42 60J10 60K25 62H30 68W20 68W40)},
  MRNUMBER = {3674428}
}

@article{FGZ2017,
  author    = {Jacob Focke and
               Leslie Ann Goldberg and
               Stanislav {\v{Z}}ivn{\'y}},
  title     = {{The Complexity of Counting Surjective Homomorphisms and Compactions}},
  journal   = {CoRR},
  volume    = {abs/1706.08786},
  year      = {2017},
  url       = {http://arxiv.org/abs/1706.08786},
  note 		= {A preliminary version of this work appeared in the Proceedings of the Twenty-Ninth Annual ACM-SIAM Symposium on Discrete Algorithms, pp. 1772-1781},
  timestamp = {Mon, 03 Jul 2017 13:29:02 +0200},
  biburl    = {http://dblp2.uni-trier.de/rec/bib/journals/corr/FockeGZ17},
  bibsource = {dblp computer science bibliography, http://dblp.org}
}

@article{DGJBooleanCSP,
AUTHOR = {Dyer, Martin and Goldberg, Leslie Ann and Jerrum, Mark},
     TITLE = {An approximation trichotomy for {B}oolean {$\sharp{\rm CSP}$}},
   JOURNAL = {J. Comput. System Sci.},
  FJOURNAL = {Journal of Computer and System Sciences},
    VOLUME = {76},
      YEAR = {2010},
    NUMBER = {3-4},
     PAGES = {267--277},
      ISSN = {0022-0000},
   MRCLASS = {68W25 (68Q17 68T20)},
  MRNUMBER = {2656492},
MRREVIEWER = {J\"{o}rg Rothe},
       DOI = {10.1016/j.jcss.2009.08.003},
       URL = {https://ezproxy-prd.bodleian.ox.ac.uk:4563/10.1016/j.jcss.2009.08.003}
}

@article{DGGJApprox,
AUTHOR = {Dyer, Martin and Goldberg, Leslie Ann and Greenhill, Catherine
              and Jerrum, Mark},
     TITLE = {{The Relative Complexity of Approximate Counting Problems}},
   JOURNAL = {Algorithmica},
  FJOURNAL = {Algorithmica. An International Journal in Computer Science},
    VOLUME = {38},
      YEAR = {2004},
    NUMBER = {3},
     PAGES = {471--500},
      ISSN = {0178-4617},
   MRCLASS = {68Q15 (68Q17 68W25)},
  MRNUMBER = {2044886},
MRREVIEWER = {Peter B\"{u}rgisser},
       DOI = {10.1007/s00453-003-1073-y},
       URL = {https://ezproxy-prd.bodleian.ox.ac.uk:4563/10.1007/s00453-003-1073-y}
}

@article{DGJSampling,
AUTHOR = {Dyer, Martin and Goldberg, Leslie Ann and Jerrum, Mark},
     TITLE = {Counting and sampling {$H$}-colourings},
   JOURNAL = {Inform. and Comput.},
  FJOURNAL = {Information and Computation},
    VOLUME = {189},
      YEAR = {2004},
    NUMBER = {1},
     PAGES = {1--16},
      ISSN = {0890-5401},
   MRCLASS = {68Q25 (05C30 68Q15 68R10 68W25)},
  MRNUMBER = {2031914},
MRREVIEWER = {Judy Goldsmith},
       DOI = {10.1016/j.ic.2003.09.001},
       URL = {https://ezproxy-prd.bodleian.ox.ac.uk:4563/10.1016/j.ic.2003.09.001}
}

@article{GKP2004,
 AUTHOR = {Goldberg, Leslie Ann and Kelk, Steven and Paterson, Mike},
     TITLE = {{The Complexity of Choosing an {$H$}-Coloring (Nearly) Uniformly at Random}},
   JOURNAL = {SIAM J. Comput.},
  FJOURNAL = {SIAM Journal on Computing},
    VOLUME = {33},
      YEAR = {2004},
    NUMBER = {2},
     PAGES = {416--432},
      ISSN = {0097-5397},
   MRCLASS = {68Q17 (05C15 68Q15 68R10 68W25)},
  MRNUMBER = {2048449},
MRREVIEWER = {Elchanan Mossel},
       DOI = {10.1137/S0097539702408363},
       URL = {https://ezproxy-prd.bodleian.ox.ac.uk:4563/10.1137/S0097539702408363}
}

@PhdThesis{KelkThesis,
author = {Steven Kelk},
title = {{On the relative complexity of approximately counting $H$-colourings}},
school = {Warwick University},
year = {2003}
}

@article{GGJBIS,
AUTHOR = {Galanis, Andreas and Goldberg, Leslie Ann and Jerrum, Mark},
     TITLE = {Approximately counting {$H$}-colorings is
              {$\#\mathrm{BIS}$}-hard},
   JOURNAL = {SIAM J. Comput.},
  FJOURNAL = {SIAM Journal on Computing},
    VOLUME = {45},
      YEAR = {2016},
    NUMBER = {3},
     PAGES = {680--711},
      ISSN = {0097-5397},
   MRCLASS = {68R10 (05C15 05C60 05C85 68Q17)},
  MRNUMBER = {3502617},
MRREVIEWER = {Petar Markovi\'{c}},
       DOI = {10.1137/15M1020551},
       URL = {https://ezproxy-prd.bodleian.ox.ac.uk:4563/10.1137/15M1020551}
}

@article{GGJList,
 AUTHOR = {Galanis, Andreas and Goldberg, Leslie Ann and Jerrum, Mark},
     TITLE = {{A Complexity Trichotomy for Approximately Counting List {$H$}-Colorings}},
   JOURNAL = {ACM Trans. Comput. Theory},
  FJOURNAL = {ACM Transactions of Computation Theory},
    VOLUME = {9},
      YEAR = {2017},
    NUMBER = {2},
     PAGES = {Art. 9, 22},
      ISSN = {1942-3454},
   MRCLASS = {68R10 (68Q25)},
  MRNUMBER = {3660445},
MRREVIEWER = {Peter Damaschke},
       DOI = {10.1145/3037381},
       URL = {https://ezproxy-prd.bodleian.ox.ac.uk:4563/10.1145/3037381}
}

@PhdThesis{Koehler1999,
author = {Ekkehard G. K\"{o}hler},
title = {{Graphs without asteroidal triples}},
school = {Technische Universit\"{a}t Berlin},
year = {1999}
}

@article{JVV1986,
AUTHOR = {Jerrum, Mark R. and Valiant, Leslie G. and Vazirani, Vijay V.},
     TITLE = {{Random Generation of Combinatorial Structures from a Uniform Distribution}},
   JOURNAL = {Theoret. Comput. Sci.},
  FJOURNAL = {Theoretical Computer Science},
    VOLUME = {43},
      YEAR = {1986},
    NUMBER = {2-3},
     PAGES = {169--188},
      ISSN = {0304-3975},
   MRCLASS = {68Q15},
  MRNUMBER = {855970},
MRREVIEWER = {Claus-Peter Schnorr},
       DOI = {10.1016/0304-3975(86)90174-X},
       URL = {https://ezproxy-prd.bodleian.ox.ac.uk:4563/10.1016/0304-3975(86)90174-X}
}

@INPROCEEDINGS{DGRndmWalks,
    AUTHOR = {Dyer, Martin and Greenhill, Catherine},
     TITLE = {{Random walks on combinatorial objects}},
 BOOKTITLE = {Surveys in combinatorics, 1999 ({C}anterbury)},
    SERIES = {London Math. Soc. Lecture Note Ser.},
    VOLUME = {267},
     PAGES = {101--136},
 PUBLISHER = {Cambridge Univ. Press, Cambridge},
      YEAR = {1999},
   MRCLASS = {60J10 (05C80 68Q17)},
  MRNUMBER = {1725001}
}

@article{GJTreeHoms,
 AUTHOR = {Goldberg, Leslie Ann and Jerrum, Mark},
     TITLE = {{The Complexity of Approximately Counting Tree Homomorphisms}},
   JOURNAL = {ACM Trans. Comput. Theory},
  FJOURNAL = {ACM Transactions of Computation Theory},
    VOLUME = {6},
      YEAR = {2014},
    NUMBER = {2},
     PAGES = {Art. 8, 31},
      ISSN = {1942-3454},
   MRCLASS = {68Q17 (68Q25 68R10 68W25)},
  MRNUMBER = {3224197},
MRREVIEWER = {Robert Cimikowski},
       DOI = {10.1145/2600917},
       URL = {https://ezproxy-prd.bodleian.ox.ac.uk:4563/10.1145/2600917}
}

@article{GJ2012,
  AUTHOR = {Goldberg, Leslie Ann and Jerrum, Mark},
     TITLE = {{Approximating the Partition Function of the Ferromagnetic {P}otts Model}},
   JOURNAL = {J. ACM},
  FJOURNAL = {Journal of the ACM},
    VOLUME = {59},
      YEAR = {2012},
    NUMBER = {5},
     PAGES = {Art. 25, 31},
      ISSN = {0004-5411},
   MRCLASS = {82B20 (05C31 05C65 68Q17)},
  MRNUMBER = {2995824},
       DOI = {10.1145/2371656.2371660},
       URL = {https://ezproxy-prd.bodleian.ox.ac.uk:4563/10.1145/2371656.2371660}
}

@article{Galanis2016,
  AUTHOR = {Galanis, Andreas and \v{S}tefankovi\v{c}, Daniel and Vigoda, Eric and
              Yang, Linji},
     TITLE = {{Ferromagnetic {P}otts Model: Refined \#{BIS}-Hardness and Related Results}},
   JOURNAL = {SIAM J. Comput.},
  FJOURNAL = {SIAM Journal on Computing},
    VOLUME = {45},
      YEAR = {2016},
    NUMBER = {6},
     PAGES = {2004--2065},
      ISSN = {0097-5397},
   MRCLASS = {68Q87 (68Q17)},
  MRNUMBER = {3572375},
       DOI = {10.1137/140997580},
       URL = {https://ezproxy-prd.bodleian.ox.ac.uk:4563/10.1137/140997580}
}

@article{Potts1952, 
AUTHOR = {Potts, R. B.},
     TITLE = {{Some generalized order-disorder transformations}},
   JOURNAL = {Proc. Cambridge Philos. Soc.},
    VOLUME = {48},
      YEAR = {1952},
     PAGES = {106--109},
   MRCLASS = {80.1X},
  MRNUMBER = {0047571},
MRREVIEWER = {F. J. Murray}
}

@article{WidomRowlinson,
author = {Benjamin Widom and John S. Rowlinson},
title = {{New Model for the Study of Liquid-Vapor Phase Transitions}},
journal = {The Journal of Chemical Physics},
volume = {52},
number = {4},
pages = {1670-1684},
year = {1970},
doi = {10.1063/1.1673203},
URL = { 
        http://dx.doi.org/10.1063/1.1673203  
},
eprint = { 
        http://dx.doi.org/10.1063/1.1673203  
}
}

@Book{Schmidt1991,
AUTHOR = {Schmidt, Wolfgang M.},
     TITLE = {Diophantine approximations and {D}iophantine equations},
    SERIES = {Lecture Notes in Mathematics},
    VOLUME = {1467},
 PUBLISHER = {Springer-Verlag, Berlin},
      YEAR = {1991},
     PAGES = {viii+217},
      ISBN = {3-540-54058-X},
   MRCLASS = {11Jxx (11D57 11D61 11J68)},
  MRNUMBER = {1176315},
MRREVIEWER = {Carlo Viola},
       DOI = {10.1007/BFb0098246},
       URL = {https://ezproxy-prd.bodleian.ox.ac.uk:4563/10.1007/BFb0098246}
}

@article {DG2000,
author    = {Martin E. Dyer and
	Catherine S. Greenhill},
title     = {The complexity of counting graph homomorphisms},
journal   = {Random Struct. Algorithms},
volume    = {17},
number    = {3-4},
pages     = {260--289},
year      = {2000},
url       = {https://doi.org/10.1002/1098-2418(200010/12)17:3/4\<260::AID-RSA5\>3.0.CO;2-W},
doi       = {10.1002/1098-2418(200010/12)17:3/4\<260::AID-RSA5\>3.0.CO;2-W},
timestamp = {Fri, 26 May 2017 22:50:39 +0200}
}

@Incollection{Hell2004b,
AUTHOR = {Hell, Pavol and Ne\v{s}et\v{r}il, Jaroslav},
     TITLE = {Counting list homomorphisms for graphs with bounded degrees},
 BOOKTITLE = {Graphs, morphisms and statistical physics},
    SERIES = {DIMACS Ser. Discrete Math. Theoret. Comput. Sci.},
    VOLUME = {63},
     PAGES = {105--112},
 PUBLISHER = {Amer. Math. Soc., Providence, RI},
      YEAR = {2004},
   MRCLASS = {05C30 (05C15 68R10)},
  MRNUMBER = {2056230},
MRREVIEWER = {Richard C. Brewster},
       DOI = {10.1093/acprof:oso/9780198528173.001.0001},
       URL = {https://ezproxy-prd.bodleian.ox.ac.uk:4563/10.1093/acprof:oso/9780198528173.001.0001}
}

@incollection {BorgsCountingSurvey,
    AUTHOR = {Borgs, Christian and Chayes, Jennifer and Lov\'asz, L\'aszl\'o and
              S\'os, Vera T. and Vesztergombi, Katalin},
     TITLE = {{Counting graph homomorphisms}},
 BOOKTITLE = {Topics in discrete mathematics},
    SERIES = {Algorithms Combin.},
    VOLUME = {26},
     PAGES = {315--371},
 PUBLISHER = {Springer, Berlin},
      YEAR = {2006},
   MRCLASS = {05C35 (82B99)},
  MRNUMBER = {2249277},
MRREVIEWER = {David J. Galvin},
       DOI = {10.1007/3-540-33700-8_18},
       URL = {http://dx.doi.org/10.1007/3-540-33700-8_18},
}

@article{BodirskySurvey,
AUTHOR = {Bodirsky, Manuel and K\'{a}ra, Jan and Martin, Barnaby},
     TITLE = {The complexity of surjective homomorphism problems---a survey},
   JOURNAL = {Discrete Appl. Math.},
  FJOURNAL = {Discrete Applied Mathematics. The Journal of Combinatorial
              Algorithms, Informatics and Computational Sciences},
    VOLUME = {160},
      YEAR = {2012},
    NUMBER = {12},
     PAGES = {1680--1690},
      ISSN = {0166-218X},
   MRCLASS = {05C60 (05-02)},
  MRNUMBER = {2921790},
       DOI = {10.1016/j.dam.2012.03.029},
       URL = {https://ezproxy-prd.bodleian.ox.ac.uk:4563/10.1016/j.dam.2012.03.029}
}

@inproceedings{BulatovCSPDichotomy,
  AUTHOR = {Bulatov, Andrei A.},
     TITLE = {{A Dichotomy Theorem for Nonuniform {CSP}s}},
 BOOKTITLE = {58th {A}nnual {IEEE} {S}ymposium on {F}oundations of
              {C}omputer {S}cience---{FOCS} 2017},
     PAGES = {319--330},
 PUBLISHER = {IEEE Computer Soc., Los Alamitos, CA},
      YEAR = {2017},
   MRCLASS = {68Q17 (68Q25 68R05)},
  MRNUMBER = {3734240}
}

@inproceedings{ZhukCSPDichotomy,
  AUTHOR = {Zhuk, Dmitriy},
     TITLE = {{A Proof of {CSP} Dichotomy Conjecture}},
 BOOKTITLE = {58th {A}nnual {IEEE} {S}ymposium on {F}oundations of
              {C}omputer {S}cience---{FOCS} 2017},
     PAGES = {331--342},
 PUBLISHER = {IEEE Computer Soc., Los Alamitos, CA},
      YEAR = {2017},
   MRCLASS = {68Q17 (68Q25 68R05)},
  MRNUMBER = {3734241}
}

@article{GolovachFinding,
 AUTHOR = {Golovach, Petr A. and Lidick\'{y}, Bernard and Martin, Barnaby and
              Paulusma, Dani\"{e}l},
     TITLE = {Finding vertex-surjective graph homomorphisms},
   JOURNAL = {Acta Inform.},
  FJOURNAL = {Acta Informatica},
    VOLUME = {49},
      YEAR = {2012},
    NUMBER = {6},
     PAGES = {381--394},
      ISSN = {0001-5903},
   MRCLASS = {68Q25 (05C60)},
  MRNUMBER = {2969057},
MRREVIEWER = {T. R. S. Walsh},
       DOI = {10.1007/s00236-012-0164-0},
       URL = {https://ezproxy-prd.bodleian.ox.ac.uk:4563/10.1007/s00236-012-0164-0}
}

@article{GolovachTrees,
AUTHOR = {Golovach, Petr A. and Paulusma, Dani\"{e}l and Song, Jian},
     TITLE = {Computing vertex-surjective homomorphisms to partially reflexive trees},
   JOURNAL = {Theoret. Comput. Sci.},
  FJOURNAL = {Theoretical Computer Science},
    VOLUME = {457},
      YEAR = {2012},
     PAGES = {86--100},
      ISSN = {0304-3975},
   MRCLASS = {05C85 (05C15 05C60 68Q25)},
  MRNUMBER = {2961201},
       DOI = {10.1016/j.tcs.2012.06.039},
       URL = {https://ezproxy-prd.bodleian.ox.ac.uk:4563/10.1016/j.tcs.2012.06.039}
}

@incollection{GolovachNewHardness,
  AUTHOR = {Golovach, Petr A. and Johnson, Matthew and Martin, Barnaby and
              Paulusma, Dani\"{e}l and Stewart, Anthony},
     TITLE = {{Surjective {$H$}-Colouring: New Hardness Results}},
 BOOKTITLE = {Unveiling dynamics and complexity},
    SERIES = {Lecture Notes in Comput. Sci.},
    VOLUME = {10307},
     PAGES = {270--281},
 PUBLISHER = {Springer, Cham},
      YEAR = {2017},
   MRCLASS = {68R10},
  MRNUMBER = {3678757}
}

@article{MartinPaulusmaSHomC4,
    AUTHOR = {Martin, Barnaby and Paulusma, Dani\"{e}l},
     TITLE = {The computational complexity of disconnected cut and {$2K_2$}-partition},
   JOURNAL = {J. Combin. Theory Ser. B},
  FJOURNAL = {Journal of Combinatorial Theory. Series B},
    VOLUME = {111},
      YEAR = {2015},
     PAGES = {17--37},
      ISSN = {0095-8956},
   MRCLASS = {05C85 (05C40)},
  MRNUMBER = {3315598},
MRREVIEWER = {Yancai Zhao},
       DOI = {10.1016/j.jctb.2014.09.002},
       URL = {https://ezproxy-prd.bodleian.ox.ac.uk:4563/10.1016/j.jctb.2014.09.002}
}

@article{VikasReflCycle,
 AUTHOR = {Vikas, Narayan},
     TITLE = {{Computational Complexity of Compaction to Reflexive Cycles}},
   JOURNAL = {SIAM J. Comput.},
  FJOURNAL = {SIAM Journal on Computing},
    VOLUME = {32},
      YEAR = {2002/03},
    NUMBER = {1},
     PAGES = {253--280},
      ISSN = {0097-5397},
   MRCLASS = {68Q25 (05C15 68R10)},
  MRNUMBER = {1954862},
MRREVIEWER = {Gary MacGillivray},
       DOI = {10.1137/S0097539701383522},
       URL = {https://ezproxy-prd.bodleian.ox.ac.uk:4563/10.1137/S0097539701383522}
}

@article{VikasIrreflCycle,
AUTHOR = {Vikas, Narayan},
     TITLE = {Computational complexity of compaction to irreflexive cycles},
   JOURNAL = {J. Comput. System Sci.},
  FJOURNAL = {Journal of Computer and System Sciences},
    VOLUME = {68},
      YEAR = {2004},
    NUMBER = {3},
     PAGES = {473--496},
      ISSN = {0022-0000},
   MRCLASS = {05C15 (68Q25 68R10)},
  MRNUMBER = {2047745},
MRREVIEWER = {Richard C. Brewster},
       DOI = {10.1016/S0022-0000(03)00034-5},
       URL = {https://ezproxy-prd.bodleian.ox.ac.uk:4563/10.1016/S0022-0000(03)00034-5}
}

@Article{Vikas2013,
	AUTHOR = {Vikas, Narayan},
     TITLE = {{Algorithms for Partition of Some Class of Graphs under Compaction and Vertex-Compaction}},
   JOURNAL = {Algorithmica},
  FJOURNAL = {Algorithmica. An International Journal in Computer Science},
    VOLUME = {67},
      YEAR = {2013},
    NUMBER = {2},
     PAGES = {180--206},
      ISSN = {0178-4617},
   MRCLASS = {68R10 (05C70 68Q25)},
  MRNUMBER = {3078296},
       DOI = {10.1007/s00453-012-9720-9},
       URL = {https://ezproxy-prd.bodleian.ox.ac.uk:4563/10.1007/s00453-012-9720-9}
}

@incollection{VikasHexagon,
AUTHOR = {Vikas, Narayan},
     TITLE = {Computational complexity of graph partition under vertex-compaction to an irreflexive hexagon},
 BOOKTITLE = {42nd {I}nternational {S}ymposium on {M}athematical {F}oundations of {C}omputer {S}cience},
    SERIES = {LIPIcs. Leibniz Int. Proc. Inform.},
    VOLUME = {83},
     PAGES = {Art. No. 69, 14},
 PUBLISHER = {Schloss Dagstuhl. Leibniz-Zent. Inform., Wadern},
      YEAR = {2017},
   MRCLASS = {68R10},
  MRNUMBER = {3755362}
}

@article {FederPseudoForest,
    AUTHOR = {Feder, Tom\'{a}s and Hell, Pavol and Jonsson, Peter and Krokhin,
              Andrei and Nordh, Gustav},
     TITLE = {Retractions to pseudoforests},
   JOURNAL = {SIAM J. Discrete Math.},
  FJOURNAL = {SIAM Journal on Discrete Mathematics},
    VOLUME = {24},
      YEAR = {2010},
    NUMBER = {1},
     PAGES = {101--112},
      ISSN = {0895-4801},
   MRCLASS = {05C85 (05C15 08A70 68R10)},
  MRNUMBER = {2600655},
       DOI = {10.1137/080738866},
       URL = {https://ezproxy-prd.bodleian.ox.ac.uk:4563/10.1137/080738866}
}

@article {FederLHomRefl,
    AUTHOR = {Feder, Tomas and Hell, Pavol},
     TITLE = {List homomorphisms to reflexive graphs},
   JOURNAL = {J. Combin. Theory Ser. B},
  FJOURNAL = {Journal of Combinatorial Theory. Series B},
    VOLUME = {72},
      YEAR = {1998},
    NUMBER = {2},
     PAGES = {236--250},
      ISSN = {0095-8956},
   MRCLASS = {68R10 (05C85)},
  MRNUMBER = {1616616},
MRREVIEWER = {Stanis\l aw P. Radziszowski},
       DOI = {10.1006/jctb.1997.1812},
       URL = {https://ezproxy-prd.bodleian.ox.ac.uk:4563/10.1006/jctb.1997.1812}
}

@article{FederLHomIrrefl,
  AUTHOR = {Feder, Tomas and Hell, Pavol and Huang, Jing},
     TITLE = {{List Homomorphisms and Circular Arc Graphs}},
   JOURNAL = {Combinatorica},
  FJOURNAL = {Combinatorica. An International Journal on Combinatorics and
              the Theory of Computing},
    VOLUME = {19},
      YEAR = {1999},
    NUMBER = {4},
     PAGES = {487--505},
      ISSN = {0209-9683},
   MRCLASS = {05C15 (05C75 05C85)},
  MRNUMBER = {1773654},
MRREVIEWER = {Richard C. Brewster},
       DOI = {10.1007/s004939970003},
       URL = {https://ezproxy-prd.bodleian.ox.ac.uk:4563/10.1007/s004939970003}
}

@book{HellThesis,
AUTHOR = {Hell, Pavol},
     TITLE = {Retractions des graphes},
      NOTE = {Thesis (Ph.D.)--Universite de Montreal (Canada)},
 PUBLISHER = {ProQuest LLC, Ann Arbor, MI},
      YEAR = {1973},
     PAGES = {(no paging)},
   MRCLASS = {Thesis},
  MRNUMBER = {2940529},
       URL =
              {http://ezproxy-prd.bodleian.ox.ac.uk:2175/openurl?url_ver=Z39.88-2004&rft_val_fmt=info:ofi/fmt:kev:mtx:dissertation&res_dat=xri:pqdiss&rft_dat=xri:pqdiss:0289365}
}

@article{Hell1974,
AUTHOR = {Hell, Pavol},
     TITLE = {Absolute retracts in graphs},
 BOOKTITLE = {Graphs and combinatorics ({P}roc. {C}apital {C}onf., {G}eorge
              {W}ashington {U}niv., {W}ashington, {D}. {C}., 1973)},
     PAGES = {291--301. Lecture Notes in Math., Vol. 406},
 PUBLISHER = {Springer, Berlin},
      YEAR = {1974},
   MRCLASS = {05C10},
  MRNUMBER = {0401523},
MRREVIEWER = {Paul C. Kainen}
}

@Article{HR1987,
AUTHOR = {Hell, Pavol and Rival, Ivan},
     TITLE = {Absolute retracts and varieties of reflexive graphs},
   JOURNAL = {Canad. J. Math.},
  FJOURNAL = {Canadian Journal of Mathematics. Journal Canadien de
              Math\'{e}matiques},
    VOLUME = {39},
      YEAR = {1987},
    NUMBER = {3},
     PAGES = {544--567},
      ISSN = {0008-414X},
   MRCLASS = {05C75 (05C38)},
  MRNUMBER = {905743},
MRREVIEWER = {R. P. Dilworth},
       DOI = {10.4153/CJM-1987-025-1},
       URL = {https://ezproxy-prd.bodleian.ox.ac.uk:4563/10.4153/CJM-1987-025-1}
}

@article{HNOriginal,
AUTHOR = {Hell, Pavol and Ne\v{s}et\v{r}il, Jaroslav},
     TITLE = {On the complexity of {$H$}-coloring},
   JOURNAL = {J. Combin. Theory Ser. B},
  FJOURNAL = {Journal of Combinatorial Theory. Series B},
    VOLUME = {48},
      YEAR = {1990},
    NUMBER = {1},
     PAGES = {92--110},
      ISSN = {0095-8956},
   MRCLASS = {68Q25 (05C15 05C85 68R10)},
  MRNUMBER = {1047555},
       DOI = {10.1016/0095-8956(90)90132-J},
       URL = {https://ezproxy-prd.bodleian.ox.ac.uk:4563/10.1016/0095-8956(90)90132-J}
}

@book{HellNesetrilBook,
  AUTHOR = {Hell, Pavol and Ne\v{s}et\v{r}il, Jaroslav},
     TITLE = {{Graphs and Homomorphisms}},
    SERIES = {Oxford Lecture Series in Mathematics and its Applications},
    VOLUME = {28},
 PUBLISHER = {Oxford University Press, Oxford},
      YEAR = {2004},
     PAGES = {xii+244},
      ISBN = {0-19-852817-5},
   MRCLASS = {05-02 (05C15 05C60 18B15 68R10)},
  MRNUMBER = {2089014},
MRREVIEWER = {Richard C. Brewster},
       DOI = {10.1093/acprof:oso/9780198528173.001.0001},
       URL = {https://ezproxy-prd.bodleian.ox.ac.uk:4563/10.1093/acprof:oso/9780198528173.001.0001}
}

@article{HN2008,
author    = {Pavol Hell and
               Jaroslav Ne\v{s}et\v{r}il},
  title     = {Colouring, constraint satisfaction, and complexity},
  journal   = {Computer Science Review},
  volume    = {2},
  number    = {3},
  pages     = {143--163},
  year      = {2008},
  url       = {https://doi.org/10.1016/j.cosrev.2008.10.003},
  doi       = {10.1016/j.cosrev.2008.10.003},
  timestamp = {Wed, 17 May 2017 14:25:47 +0200},
  biburl    = {https://dblp.org/rec/bib/journals/csr/HellN08},
  bibsource = {dblp computer science bibliography, https://dblp.org}
}

@book{Pesch1988,
AUTHOR = {Pesch, Erwin},
     TITLE = {Retracts of graphs},
    SERIES = {Mathematical Systems in Economics},
    VOLUME = {110},
 PUBLISHER = {Athen\"{a}um Verlag GmbH, Frankfurt am Main},
      YEAR = {1988},
     PAGES = {xii+220},
      ISBN = {3-610-09805-8},
   MRCLASS = {05C10 (90B10)},
  MRNUMBER = {930269},
MRREVIEWER = {Pavol Hell}
}

@article{VikasCompRetCSP,
AUTHOR = {Vikas, Narayan},
     TITLE = {{Compaction, Retraction, and Constraint Satisfaction}},
   JOURNAL = {SIAM J. Comput.},
  FJOURNAL = {SIAM Journal on Computing},
    VOLUME = {33},
      YEAR = {2004},
    NUMBER = {4},
     PAGES = {761--782},
      ISSN = {0097-5397},
   MRCLASS = {68Q25 (05C10 68R10 68T20)},
  MRNUMBER = {2065332},
MRREVIEWER = {Peter Damaschke},
       DOI = {10.1137/S0097539701397801},
       URL = {https://ezproxy-prd.bodleian.ox.ac.uk:4563/10.1137/S0097539701397801}
}

@article{Vikas4Vertex,
AUTHOR = {Vikas, Narayan},
     TITLE = {A complete and equal computational complexity classification of compaction and retraction to all graphs with at most four vertices and some general results},
   JOURNAL = {J. Comput. System Sci.},
  FJOURNAL = {Journal of Computer and System Sciences},
    VOLUME = {71},
      YEAR = {2005},
    NUMBER = {4},
     PAGES = {406--439},
      ISSN = {0022-0000},
   MRCLASS = {68Q25 (05C15 05C70 68R10)},
  MRNUMBER = {2178073},
MRREVIEWER = {Aleksander Vesel},
       DOI = {10.1016/j.jcss.2004.07.003},
       URL = {https://ezproxy-prd.bodleian.ox.ac.uk:4563/10.1016/j.jcss.2004.07.003}
}

@incollection{Vikas2017,
  AUTHOR = {Vikas, Narayan},
     TITLE = {{Computational Complexity Relationship between Compaction, Vertex-Compaction, and Retraction}},
 BOOKTITLE = {Combinatorial algorithms},
    SERIES = {Lecture Notes in Comput. Sci.},
    VOLUME = {10765},
     PAGES = {154--166},
 PUBLISHER = {Springer, Cham},
      YEAR = {2018},
   MRCLASS = {68R10 (68Q25)},
  MRNUMBER = {3794678}
}

@article{BHT1992,
AUTHOR = {B\'{i}r\'{o}, M. and Hujter, M. and Tuza, Zs.},
     TITLE = {Precoloring extension. {I}. {I}nterval graphs},
      NOTE = {Special volume to mark the centennial of Julius Petersen's ``Die Theorie der regul\"{a}ren Graphs'', Part I},
   JOURNAL = {Discrete Math.},
  FJOURNAL = {Discrete Mathematics},
    VOLUME = {100},
      YEAR = {1992},
    NUMBER = {1-3},
     PAGES = {267--279},
      ISSN = {0012-365X},
   MRCLASS = {05C15 (68Q25 68R10)},
  MRNUMBER = {1172354},
MRREVIEWER = {Ruth Bari},
       DOI = {10.1016/0012-365X(92)90646-W},
       URL = {https://ezproxy-prd.bodleian.ox.ac.uk:4563/10.1016/0012-365X(92)90646-W}
}

@article{KS1997,
AUTHOR = {Kratochv\'{i}l, Jan and Seb\H{o}, Andr\'{a}s},
     TITLE = {Coloring precolored perfect graphs},
   JOURNAL = {J. Graph Theory},
  FJOURNAL = {Journal of Graph Theory},
    VOLUME = {25},
      YEAR = {1997},
    NUMBER = {3},
     PAGES = {207--215},
      ISSN = {0364-9024},
   MRCLASS = {05C15 (68R10)},
  MRNUMBER = {1451298},
MRREVIEWER = {Dieter Kratsch},
       DOI = {10.1002/(SICI)1097-0118(199707)25:3<207::AID-JGT4>3.0.CO;2-P},
       URL =
              {https://ezproxy-prd.bodleian.ox.ac.uk:4563/10.1002/(SICI)1097-0118(199707)25:3<207::AID-JGT4>3.0.CO;2-P}
}

@ARTICLE{Tuza1997,
    AUTHOR = {Tuza, Zsolt},
     TITLE = {Graph colorings with local constraints---a survey},
   JOURNAL = {Discuss. Math. Graph Theory},
  FJOURNAL = {Discussiones Mathematicae. Graph Theory},
    VOLUME = {17},
      YEAR = {1997},
    NUMBER = {2},
     PAGES = {161--228},
      ISSN = {1234-3099},
   MRCLASS = {05C15 (68R10 90C35)},
  MRNUMBER = {1627932},
MRREVIEWER = {Peter Mih\'{o}k},
       DOI = {10.7151/dmgt.1049},
       URL = {https://ezproxy-prd.bodleian.ox.ac.uk:4563/10.7151/dmgt.1049}
}

@article{Marx2006,
AUTHOR = {Marx, D\'{a}niel},
     TITLE = {Parameterized coloring problems on chordal graphs},
   JOURNAL = {Theoret. Comput. Sci.},
  FJOURNAL = {Theoretical Computer Science},
    VOLUME = {351},
      YEAR = {2006},
    NUMBER = {3},
     PAGES = {407--424},
      ISSN = {0304-3975},
   MRCLASS = {68Q25 (05C15 05C85)},
  MRNUMBER = {2202499},
       DOI = {10.1016/j.tcs.2005.10.008},
       URL = {https://ezproxy-prd.bodleian.ox.ac.uk:4563/10.1016/j.tcs.2005.10.008}
}

@article{FHH2009,
AUTHOR = {Feder, Tomas and Hell, Pavol and Huang, Jing},
     TITLE = {Extension problems with degree bounds},
   JOURNAL = {Discrete Appl. Math.},
  FJOURNAL = {Discrete Applied Mathematics. The Journal of Combinatorial
              Algorithms, Informatics and Computational Sciences},
    VOLUME = {157},
      YEAR = {2009},
    NUMBER = {7},
     PAGES = {1592--1599},
      ISSN = {0166-218X},
   MRCLASS = {05C15 (05C85)},
  MRNUMBER = {2510239},
       DOI = {10.1016/j.dam.2008.04.006},
       URL = {https://ezproxy-prd.bodleian.ox.ac.uk:4563/10.1016/j.dam.2008.04.006},
}
\end{filecontents}

\usepackage[hidelinks]{hyperref}

\title{The Complexity of Approximately Counting Retractions}
\author{Jacob Focke, Leslie Ann Goldberg and  Stanislav \Zivny 
\thanks{A preliminary version of this paper (without the proofs) appeared in the proceedings of the Thirtieth Annual {ACM-SIAM} Symposium on Discrete
	Algorithms, {SODA} 2019, San Diego, California, USA, January 6-9,
	2019. This preliminary version states a trichotomy for approximately counting retractions to trees rather than (more generally) to graphs of girth at least $5$.
The research leading to these results has received funding from 
the European Research Council under the European Union's Seventh Framework Programme (FP7/2007-2013) ERC grant agreement no.\ 334828 and under the European Union's Horizon 2020 research and innovation programme (grant agreement No 714532). Jacob Focke has received funding from the Engineering and Physical Sciences Research Council (grant ref: EP/M508111/1). Stanislav \Zivny\ was supported by a Royal Society University Research Fellowship. The paper 
reflects only the authors' views and not the views of the ERC or the European Commission. The European Union is not liable for any use that may be made of the information contained therein.}}
\date{12 March 2020}

\begin{document}
\maketitle
\begin{abstract}
Let $G$ be a graph that 
contains an induced subgraph $H$.
A  \emph{retraction} from $G$ to $H$ is a homomorphism from $G$ to $H$ 
that is the identity function on $H$. 
Retractions are very well-studied:
Given $H$, the complexity of deciding whether there is a retraction from an input graph~$G$ to~$H$
is completely classified, in the sense that it is known for which~$H$ this problem is tractable (assuming $\mathrm{P}\neq \mathrm{NP}$). 
Similarly, the complexity of (exactly) counting retractions from~$G$ to~$H$ is classified (assuming 
$\mathrm{FP}\neq \#\mathrm{P}$).  
However,   almost nothing is known about approximately counting retractions. 
Our first contribution is to give a complete trichotomy for approximately counting retractions to graphs without short cycles.
The result is as follows: (1)~Approximately counting retractions to a graph~$H$ of girth at least $5$ is in $\mathrm{FP}$ if every connected component of $H$ is a star, a single looped vertex, or an edge with two loops. 
(2)~Otherwise, if every component is an irreflexive caterpillar or a partially bristled reflexive path, then 
approximately counting retractions to~$H$ is
equivalent to approximately counting the independent sets of a bipartite graph --- a problem
  which is complete in the approximate counting complexity class $\mathrm{RH}\Pi_1$. (3)~Finally, if none of these hold, then approximately counting retractions to $H$ is equivalent to approximately counting the satisfying assignments of a Boolean formula.
  
Our second contribution is to locate the retraction counting problem for each $H$ in the complexity landscape of related approximate counting problems. Interestingly,  our results are in contrast to the situation in the exact counting context. We show that the problem of approximately counting retractions is separated both from the problem of approximately counting homomorphisms and 
from the problem of approximately counting list homomorphisms --- whereas for exact counting all three of these problems are interreducible. We also show that  the number of retractions is at least as hard to approximate as both the number of surjective homomorphisms and the number of compactions. In contrast, exactly counting compactions is the hardest of all of these exact counting problems.
\end{abstract}

\section{Introduction}\label{sec:Intro}
A \emph{homomorphism} from a graph~$G$ to 
a graph~$H$ is a function $h \from V(G) \to V(H)$ 
such that, for all $\{u,v\} \in E(G)$, we have $\{h(u),h(v)\}\in E(H)$. For example, suppose that $H$ is a path $a,b,c$. Then a homomorphism from $G$ to $H$ is a $3$-colouring of $G$ in which the colour classes $\{a,c\}$ and $\{b\}$ induce a bipartition of $G$. Suppose that $G$ itself contains a path $A,B,C$. Then a \emph{retraction} from $G$ to $H$ is a homomorphism from $G$ to $H$ that maps $A$ to $a$, $B$ to $b$ and $C$ to $c$. In general, let $G$ and $H$ be graphs such that $G$ contains a fixed copy of $H$ as an induced subgraph. Then a retraction from $G$ to $H$ is a homomorphism from $G$ to $H$ that is the identity function on this fixed copy of $H$ in $G$.

Retractions have been  studied over a long period of time~\cite{HellThesis, Hell1974, HR1987, Pesch1988}. 
In particular, the computational decision problem of determining whether there is a retraction from~$G$
to~$H$ is  well-studied~\cite{HellNesetrilBook, VikasCompRetCSP, Vikas4Vertex, Vikas2017, FederPseudoForest}. 
Retractions have also been studied under the name of \emph{one-or-all list homomorphisms}, \emph{pre-colouring extensions} or simply \emph{extensions}, see, e.g.,~\cite{FederLHomRefl, FederLHomIrrefl, BHT1992, KS1997, Tuza1997, Marx2006, FHH2009}. 
See Hell and Ne\v{s}et\v{r}il's review article~\cite{HN2008} for a more extensive list of such work.

Homomorphism \emph{counting} problems have been researched extensively as well~\cite{DGGJApprox, GJTreeHoms, DGJSampling, GGJList, GGJBIS, KelkThesis, GKP2004, DG2000, Hell2004b, BorgsCountingSurvey, 
FGZ2017}. The problem of exactly counting retractions has been studied recently and a complete complexity classification is given in~\cite{FGZ2017}. However, very little is known about \emph{approximately} counting retractions. In this work we do two things. First we give a complexity trichotomy for approximately counting retractions for the class of graphs that have girth at least $5$. Second we relate the complexity of approximately counting retractions to other approximate counting graph homomorphism problems.

\subsection{First Contribution: A Trichotomy for Approximately Counting Retractions to Graphs of Girth at least 5}\label{sec:FirstContribution}
A \emph{(self-)loop} is an edge from a vertex to itself. 
A \emph{cycle} is a walk $w_0 w_1 \cdots w_k w_0$ where $k>1$ and all vertices in $\{w_0,\ldots,w_k\}$ are distinct. The \emph{length} of the cycle is $k+1$.
We sometimes refer to length-$3$ cycles as ``triangles'' and to length-$4$ cycles as ``squares''.
The \emph{girth} of a graph $H$ is the length of a shortest cycle in $H$. If $H$ is acyclic (that is if $H$ is a forest with possibly some loops) then its girth is infinity.
In this work we give a complete complexity classification for 
the problem of approximately counting retractions to graphs that have a girth of at least $5$ (Theorem~\ref{thm:RetMain}). Thus, our classification applies to all graphs $H$ except to those that contain $3$-cycles or $4$-cycles.
We now informally introduce some notation and concepts in order to state this result.  

Given a  graph~$H$, we use $\Ret{H}$ to denote the problem of counting retractions
to~$H$, given as input a graph~$G$ containing a fixed copy of~$H$.

To investigate the complexity of approximate counting problems, Dyer, Goldberg, Greenhill and Jerrum~\cite{DGGJApprox} introduce the concept of an approximation-preserving reduction (AP-reduction). Intuitively, an AP-reduction from a problem $A$ to a problem $B$ is an algorithm that is a ``good'' approximation to $A$ if it has oracle access to a ``good'' approximation to $B$. We write $A\leap B$ if such an AP-reduction exists. Two problems 
that are studied in this paper
appear frequently as benchmark problems in this line of research. $\sat$ is the problem of counting the satisfying assignments of a Boolean formula.
This problem is complete for $\numP$ with respect to AP-reductions. $\bis$ is the problem of counting the independent sets of a bipartite graph. This problem is complete for the approximate counting complexity class  
$\mathrm{RH}\Pi_1$ (with respect to AP-reductions).
While it is not believed that there is an efficient approximation algorithm for~$\bis$, it is also not
believed that it is complete for $\numP$ with respect to AP-reductions. 

While, in general, the vertices of~$H$ may or may not have loops,
we will consider two special cases.
We say that a graph is \emph{irreflexive} if it does not contain any loops.
We say that it is  \emph{reflexive} if every vertex has a loop. 
A \emph{tree} may be irreflexive, reflexive, or neither, but it may not have any cycles.
A \emph{caterpillar} is an irreflexive tree which contains a path $P$ such that all vertices outside of $P$ have degree $1$. A \emph{partially bristled reflexive path} (formally defined in Definition~\ref{def:PBRP}) is a tree consisting of a reflexive path~$P$, together with a (possibly empty) set of 
unlooped ``bristle'' vertices~$U$  
and a matching connecting all of the vertices of $U$ to 
``internal'' vertices of~$P$ (vertices of~$P$ that are not endpoints of the path). A more formal definition, along with an example, is given in Section~\ref{sec:Preliminaries}.

\newcommand{\ThmRetMain}{
Let $H$ be a graph of girth at least $5$.
{
\renewcommand{\theenumi}{\roman{enumi})}
\renewcommand{\labelenumi}{\theenumi}
\begin{enumerate}            
\item If every connected component of $H$ is an irreflexive star, a single looped vertex, or an edge with two loops, then $\Ret{H}$ is in $\FP$.
\item Otherwise, if every connected component of $H$ is an irreflexive caterpillar or a partially bristled reflexive path, then $\Ret{H}$ is approximation-equivalent to $\bis$.
\item Otherwise, $\Ret{H}$ is approximation-equivalent to $\sat$.
\end{enumerate}
}
}
\begin{thm}\label{thm:RetMain}
\ThmRetMain
\end{thm}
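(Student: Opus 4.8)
The plan is to prove the three cases of the trichotomy by establishing, for each, an upper bound together with a matching lower bound, organised around connected target graphs and then lifted to disconnected $H$. First I would dispose of the easiness statements. For Case~(i), each of the three allowed component shapes admits a direct polynomial-time counting algorithm: if $H$ is an irreflexive star, then a retraction of $G$ is obtained by choosing, for each connected component of $G$, one of the (at most two) ways its forced bipartition can map to the two sides of the star, followed by an independent choice of leaf for each non-pinned vertex mapped to the leaf side; if $H$ is a single looped vertex there is exactly one retraction; and if $H$ is an edge with two loops there are $2^{\abs{V(G)}-2}$ retractions. A short lemma combining per-component contributions then gives $\Ret{H}\in\FP$ for disconnected $H$ of this type. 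The upper bound in Case~(iii) is immediate: $\Ret{H}\in\numP$, and by Dyer, Goldberg, Greenhill and Jerrum~\cite{DGGJApprox} the problem $\sat$ is $\numP$-complete with respect to AP-reductions, so $\Ret{H}\leap\sat$ for every~$H$. For the upper bound in Case~(ii), I would use that a retraction instance is a restricted list-homomorphism instance, so $\Ret{H}\leap\LHom{H}$; it then suffices to show $\LHom{H}\leap\bis$ whenever every component of $H$ is an irreflexive caterpillar or a partially bristled reflexive path (Definition~\ref{def:PBRP}), either by verifying that such $H$ lie in the $\bis$-easy class of the list-homomorphism trichotomy of Galanis, Goldberg and Jerrum~\cite{GGJList}, or by a direct reduction exploiting the path-like structure of these $H$ to express the list-homomorphism count as a $\bis$ computation.

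The heart of the proof is the hardness direction. I would first prove component-combining lemmas: if some connected component $H_i$ of $H$ satisfies $\bis\leap\Ret{H_i}$ (respectively $\sat\leap\Ret{H_i}$), then the same holds for $H$; this reduces the problem to analysing connected targets. For connected $H$ of girth at least~$5$, the girth hypothesis forces $H$ to be either a tree (possibly with loops) or a graph containing an induced cycle of length at least~$5$, and this split governs the case analysis. The main technical tool throughout is a toolbox of pinning and bristling gadgets: because $G$ must contain a fixed copy of~$H$, attaching a vertex of $G$ to a copy-vertex~$v$ forces its image into $N_H(v)$, and by attaching a vertex to several copy-vertices, or through short auxiliary paths, one realises progressively more restrictive ``effective lists''. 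The girth hypothesis is used twice here: it limits the possible shapes of~$H$, and it guarantees that such attachments do not create unintended short cycles, so that the effect of each gadget on the retraction count is easy to determine.

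The $\bis$-hardness needed for Case~(ii) --- which also separates Case~(ii) from Case~(i) --- I would obtain by noting that any connected irreflexive caterpillar or partially bristled reflexive path that is not one of the three trivial shapes contains a gadget, such as a sufficiently long induced subpath with its loop pattern or a bristled internal vertex, from which the standard reduction from $\bis$ to counting homomorphisms with path-like targets can be carried out. The $\sat$-hardness needed for Case~(iii) is the principal obstacle and the bulk of the work. Here I would case-split on the structure of a connected $H$ of girth at least~$5$ that is neither trivial nor an irreflexive caterpillar nor a partially bristled reflexive path. If $H$ contains an induced cycle of length at least~$5$, a cycle-plus-pendant gadget gives $\sat\leap\Ret{H}$. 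If $H$ is a tree with loops, I would enumerate the obstructions to being an irreflexive caterpillar or a partially bristled reflexive path --- a vertex of degree at least~$3$ with three long branches, a bristle attached at a spine endpoint, or a loop pattern not of partially bristled reflexive path type (for instance, when $H$ itself is an edge with one loop, $\Ret{H}$ already counts independent sets and is thus $\eqap\sat$) --- and for each obstruction build a gadget reducing from $\sat$, equivalently from $\is$. The subtle point is that the existing $\sat$-hardness arguments for list homomorphisms use arbitrary lists, whereas a retraction instance supplies only singleton-or-full lists; so for each obstruction one must first use the pinning and bristling gadgets to \emph{simulate} the lists that the list-homomorphism reduction requires before invoking it. Checking that this simulation is available for exactly the obstructions that arise --- and is \emph{not} available for the caterpillars and partially bristled reflexive paths of Case~(ii) --- is where the girth-at-least-$5$ hypothesis and the fine combinatorics of these tree families do the real work.
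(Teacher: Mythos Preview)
Your proposal has a genuine gap in the $\bis$-easiness argument for Case~(ii). You propose to show $\Ret{H}\leap\bis$ for partially bristled reflexive paths by first reducing to $\LHom{H}$ and then showing $\LHom{H}\leap\bis$, either via the trichotomy of~\cite{GGJList} or by a direct argument. This cannot work: a partially bristled reflexive path with at least one unlooped bristle is neither irreflexive nor reflexive, so it lies in the $\sat$-equivalent class of the $\LHom{H}$ trichotomy (Theorem~\ref{thm:LHomTricho}). Indeed, the paper isolates exactly this phenomenon as Corollary~\ref{cor:RetSepLHom}: these graphs witness a separation between $\Ret{H}$ and $\LHom{H}$, subject to $\bis\not\eqap\sat$. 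Any route through $\LHom{H}$ is therefore blocked, and one must argue $\bis$-easiness of $\Ret{H}$ directly for these~$H$. The paper does this (Lemma~\ref{lem:RetBristledPath}) by encoding vertices of~$H$ as satisfying assignments of a $\csp(\{\Imp\})$ instance and reducing $\Ret{H}$ parsimoniously to $\csp(\{\Imp,\delta_0,\delta_1\})\eqap\bis$; the size-$1$ lists become $\delta_0/\delta_1$ constraints, which is precisely why this works for retractions but would not extend to arbitrary lists.

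Your $\sat$-hardness sketch is also substantially different from, and less concrete than, the paper's. For irreflexive $H$ with a cycle, the paper does not build a ``cycle-plus-pendant gadget'' but instead invokes $\NP$-hardness of the decision problem (Theorem~\ref{thm:DRetPseudotree} and Hell--Ne\v{s}et\v{r}il) and then~\cite[Theorem~1]{DGGJApprox}. For trees containing an induced $J_3$, the paper gives a reduction from $\TCut{3}$ (Lemma~\ref{lem:SquareFreeHardness}). For mixed graphs, the paper classifies the distance-$1$ and distance-$2$ neighbourhoods of looped vertices (Lemmas~\ref{lem:MixedTreeHardness2a}, \ref{lem:MixedTreeHardness2b}, \ref{lem:MixedTreeHardness3}); the last of these, handling the case where $H[\Gamma(b)]$ is a $2$-Wrench, requires an elaborate gadget whose homomorphisms are analysed by ``type'' with a Dirichlet-style balancing of parameters and a reduction from $\largecut$. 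Your enumeration of obstructions is in the right spirit, but the idea of ``simulating the lists that the list-homomorphism reduction requires'' again runs into the difficulty above: for the graphs near the $\bis$/$\sat$ boundary, the $\LHom{H}$ reductions use lists that are \emph{not} realisable from one-or-all lists by local gadgets, so one needs bespoke reductions rather than list simulation.
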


Since there has been prior work on the problem of counting homomorphisms to trees~\cite{GJTreeHoms}, it is worth noting the special case of Theorem~\ref{thm:RetMain} where $H$ is an irreflexive tree. In this case, $\Ret{H}$ is in $\FP$ if $H$ is a star. If $H$ is a caterpillar but not a star, then $\Ret{H} \eqap \bis$. For all other irreflexive trees~$H$, the problem $\Ret{H}$ is $\sat$-equivalent under AP-reductions. 
The special case where $H$ is a reflexive tree is also easy to state. In this case, $\Ret{H}$ is in $\FP$ if $H$ is a single looped vertex or an edge with two loops. If $H$ is a reflexive path with at least three vertices, then $\Ret{H} \eqap \bis$. Otherwise, $\Ret{H}$ is $\sat$-equivalent with respect to AP-reductions. 

The special case of our classification where $H$ is irreflexive is given in Theorem~\ref{thm:RetIrreflexive}. It is slightly more general than what is given in Theorem~\ref{thm:RetMain} since it covers all irreflexive square-free graphs $H$, including those that have $3$-cycles. The proof of Theorem~\ref{thm:RetMain} is given in Section~\ref{sec:MainTheorem}.

\subsection{Second Contribution: Locating $\Ret{H}$ in the Approximate Counting Landscape} 
We locate the retraction counting problem in the complexity landscape of related homomorphism counting problems. Interestingly, it turns out that the complexity landscape for approximate counting looks very different from the one for exact counting.

We use $\calH(G,H)$ to denote the set of homomorphisms from $G$ to $H$ and $\hom{G}{H}$ to denote the size of $\calH(G,H)$. The following is the well-known homomorphism counting problem.

\prob
{
$\Hom{H}$.
}
{
An irreflexive graph $G$.
}
{
$\hom{G}{H}$.
}

Note that, in the problem $\Hom{H}$, the input graph~$G$ is
required to be irreflexive. This is standard in the field, and the reason for it is to
make results stronger --- typically it is the hardness results that are most challenging.
In the problem $\Ret{H}$, as we have informally defined it,
it does not make sense to force~$G$ to be irreflexive, since it contains an induced copy of~$H$, which may have loops.
However, we can insist that 
$G$   have no loops outside of the induced copy of~$H$.
Theorem~\ref{thm:RetMain} is still true  under this restriction, and 
we incorporate this restriction into our formal definitions below.
In order to give formal definitions, it is more natural to re-cast the retraction problem in terms of
list homomorphisms, so we define these next.

Let $\boldS=\{S_v\subseteq V(H) \mid v\in V(G)\}$ be a set of ``lists'' indexed by the vertices of $G$. 
Each list $S_v$ is a subset of $V(H)$.
We say that a function $h \from V(G) \to V(H)$ is a homomorphism from $(G,\boldS)$ to $H$ (also called a \emph{list homomorphism}) if $h$ is a homomorphism from $G$ to $H$ and, for each vertex $v$ of $G$, we have $h(v)\in S_v$. We use $\calH((G,\boldS),H)$ to denote the set of homomorphisms from $(G,\boldS)$ to $H$ and we use $\hom{(G,\boldS)}{H}$ to denote the size of $\calH((G,\boldS),H)$. We will be interested in the following generalisation of $\Hom{H}$.

\prob
{
$\LHom{H}.$
}
{
An irreflexive graph $G$ and a collection of lists $\boldS=\{S_v\subseteq V(H)\mid v\in V(G)\}$.
}
{
$\hom{(G,\boldS)}{H}$.
}
 
As noted earlier, we will find it convenient to formally define the computational problem $\Ret{H}$ in terms of list homomorphisms.

\prob
{
$\OALHom{H}$.
}
{
An irreflexive graph $G$ and a collection of lists $\boldS=\{S_v\subseteq V(H) \mid v\in V(G)\}$ such that, for all $v\in V(G)$, $\abs{S_v}\in \{1,\abs{V(H)}\}$.
}
{
$\hom{(G,\boldS)}{H}$.
}

The polynomial-time interreducibility between the problem $\Ret{H}$ which we defined informally 
(with the restriction on loops in~$G$)
and the one defined here, is demonstrated by Feder and Hell~\cite[Theorem 4.1]{FederLHomRefl} who give a parsimonious  reduction between them. (A reduction is parsimonious if it preserves the number of solutions.) This reduction also shows that the corresponding decision problems are polynomial-time interreducible.

We consider two more related counting problems, namely $\SHom{H}$, the problem of counting vertex-surjective homomorphisms, and $\Comp{H}$, the problem of counting edge-surjective homomorphisms, which are called \emph{compactions}. We give their formal definitions at the beginning of Section~\ref{sec:ComplexityLandscape}. Both problems are well-studied in the decision setting~\cite{BodirskySurvey, GolovachFinding, GolovachTrees, GolovachNewHardness, MartinPaulusmaSHomC4, VikasReflCycle, VikasIrreflCycle, Vikas2013, VikasHexagon}. All three of the problems $\SHom{H}$, $\Comp{H}$ and $\Ret{H}$ can be interpreted as problems requiring one to count homomorphisms with some kind of surjectivity constraint. $\LSHom{H}$ and $\LComp{H}$ are the corresponding list homomorphism problems and these are also formally defined in Section~\ref{sec:ComplexityLandscape}.

A \emph{separation} between two homomorphism-counting problems $A$ and $B$ is given by a parameter $H$ for which $A$ and $B$ are of different complexity, subject to some complexity-theory assumptions.

Before stating our results we give an overview of the approximate counting complexity landscape in Figure~\ref{fig:ApproxCountingLandscape}.
 The results summarised in this figure   are  consistent with 
 the results that are known
 concerning   the corresponding decision problems, as surveyed by Bodirsky, K\'{a}ra and Martin~\cite{BodirskySurvey}, 
 but they are in contrast to the situation in the exact counting world.
For exact counting, $\Hom{H}$, $\Ret{H}$ and $\LHom{H}$ are interreducible.
Also, all of the exact counting problems that we have mentioned
   reduce to $\Comp{H}$ and $\LComp{H}$~\cite{FGZ2017}, as depicted in Figure~\ref{fig:ExactCountingLandscape}. Moreover, $\Comp{H}$ and $\LComp{H}$ are separated from the remaining problems.

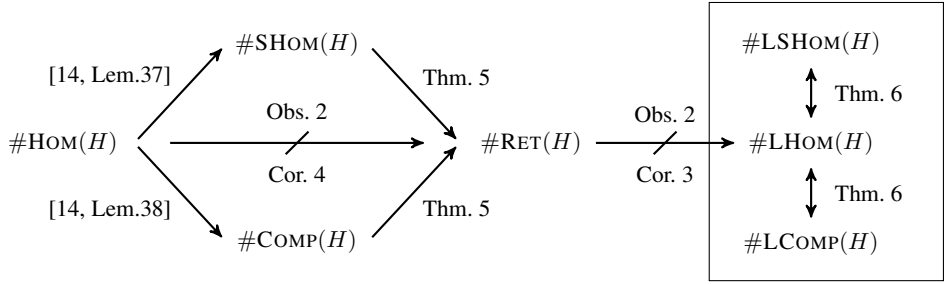
\begin{figure}[t]
\centering
\begin{minipage}[t]{\linewidth}
\hspace*{+0em}\raisebox{0em}{
\begin{tikzpicture}[node distance=.5cm, auto, scale=0.8, every node/.style={transform shape}]

 \node[cProb](Hom){$\Hom{H}$};
 
 \node[cProb,right= of Hom](dummy){\vphantom{(}};
 \node[cProb,above=1cm of dummy](SHom){$\SHom{H}$};
 \node[cProb,below=1cm of dummy](Comp){$\Comp{H}$};
 \node[cProb,right= of dummy](Ret){$\Ret{H}$};
 
 \node[cProb,right= 1.25cm of Ret](LHom){$\LHom{H}$};
 \node[cProb,above=1cm of LHom](LSHom){$\LSHom{H}$};
 \node[cProb,below=1cm of LHom](LComp){$\LComp{H}$};
 \draw ($(LHom.north west)+(0cm,2cm)$) rectangle ($(LComp.south east)+(.5cm,-.3cm)$);
 

 \draw[rightarr] ($(Hom.east)+(-0.5cm,0cm)$)--($(SHom.west)+(0.5cm,0cm)$) node [midway] {\cite[Lem.37]{FGZ2017}};
 \draw[seprightarr] ($(Hom.east)$)--($(Ret.west)$) node [midway,above=0.25cm] {Obs.~\ref{obs:HomToRetToLHom}} node [midway,below=0.25cm] {Cor.~\ref{cor:HomSepRet}};
 \draw[rightarr] ($(Hom.east)+(-0.5cm,0cm)$)--($(Comp.west)+(0.5cm,0cm)$) node [midway,below left] {\cite[Lem.38]{FGZ2017}};
 \draw[rightarr] ($(SHom.east)+(-0.5cm,0cm)$)--($(Ret.west)+(0.5cm,0cm)$)  node [midway] {Thm.~\ref{thm:SHomCompToRet}};
 \draw[rightarr] ($(Comp.east)+(-0.5cm,0cm)$)--($(Ret.west)+(0.5cm,0cm)$) node [midway,below right] {Thm.~\ref{thm:SHomCompToRet}};
 
 \draw[seprightarr] ($(Ret.east)+(-0.7cm,0cm)$)--($(LHom.west)+(0.5cm,0cm)$)  node [midway, above=0.2cm] {Obs.~\ref{obs:HomToRetToLHom}} node [midway,below=0.25cm] {Cor.~\ref{cor:RetSepLHom}};

 \draw[leftrightarr] (LSHom.south)--(LHom.north) node [midway, right=.25cm] {Thm.~\ref{thm:LHomEquivalent}};
 \draw[leftrightarr] (LHom.south)--(LComp.north) node [midway, right=.25cm] {Thm.~\ref{thm:LHomEquivalent}};
\end{tikzpicture}}
\end{minipage}
\caption{Approximate counting complexity landscape. An arrow from a  problem $A$ to a problem $B$ means that there exists an AP-reduction from $A$ to $B$. A struck through arrow corresponds to a reduction with a separation. The references for the reduction and the separation are given above and below the arrow, respectively.}
\label{fig:ApproxCountingLandscape}
\end{figure}

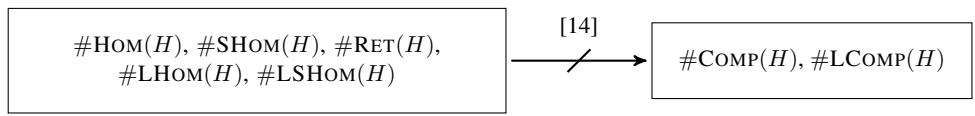
\begin{figure}[h!]
\centering
\begin{minipage}[t]{\linewidth}
\centering
\hspace*{+0em}\raisebox{0em}{
\begin{tikzpicture}[node distance=.5cm, auto, scale=0.8, every node/.style={transform shape}]

 \node[text width =19em, text centered](LHS){$\Hom{H}$, $\SHom{H}$, $\Ret{H}$, $\LHom{H}$, $\LSHom{H}$};
 \node[right=3cm of LHS](RHS){$\Comp{H}$, $\LComp{H}$};
 
 \draw ($(LHS.north west)+(-.3cm,.3cm)$) rectangle ($(LHS.south east)+(.3cm,-.3cm)$);
 \draw ($(RHS.north west)+(-.3cm,.3cm)$) rectangle ($(RHS.south east)+(.3cm,-.3cm)$);
 

 \draw[seprightarr] ($(LHS.east)+(.3cm,0cm)$)--($(RHS.west)+(-.3cm,0cm)$) node [midway, above= .25cm] {\cite{FGZ2017}};
\end{tikzpicture}}
\end{minipage}
\caption{Exact counting complexity landscape. All problems in the same box are interreducible with respect to polynomial-time Turing reductions. The arrow means that each problem in the box on the left-hand side reduces to each problem on the right-hand side using a polynomial-time Turing reduction. The arrow is struck through as there exists a separation between each problem on the left and each problem on the right.}
\label{fig:ExactCountingLandscape}
\end{figure}

We now give our results in more detail. We start off with the following simple observation which follows immediately from the problem definitions.
\begin{obs}\label{obs:HomToRetToLHom}
Let $H$ be a graph. Then $\Hom{H}\leap\OALHom{H}\leap\LHom{H}$.
\end{obs}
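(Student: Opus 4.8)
The statement to prove is Observation~\ref{obs:HomToRetToLHom}: for any graph $H$, $\Hom{H}\leap\OALHom{H}\leap\LHom{H}$.

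The plan is to exhibit two trivial (parsimonious, polynomial-time) reductions that are in particular AP-reductions.

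For $\Hom{H}\leap\OALHom{H}$: given an instance $G$ of $\Hom{H}$ (an irreflexive graph), I would construct an instance of $\OALHom{H}$ by taking the disjoint union of $G$ with the fixed copy of $H$ that is built into every retraction instance, and assigning lists as follows. Each vertex of the embedded copy of $H$ gets the singleton list consisting of itself (so that the list homomorphism is forced to be the identity on that copy — this is exactly the retraction condition), and each vertex $v$ of the original graph $G$ gets the full list $S_v = V(H)$. This satisfies the constraint $\abs{S_v}\in\{1,\abs{V(H)}\}$ required in the definition of $\OALHom{H}$. Since $G$ and the copy of $H$ are in different connected components and the copy of $H$ is an induced subgraph with the identity list-constraint, the number of list homomorphisms of this combined instance equals $\hom{G}{H}$ (the embedded copy contributes exactly one extension, the identity). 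Hence the oracle's answer on this instance is exactly the desired count, which makes this a parsimonious reduction and therefore an AP-reduction.

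For $\OALHom{H}\leap\LHom{H}$: this is immediate from the problem definitions, since every instance of $\OALHom{H}$ (an irreflexive graph $G$ together with lists $\boldS$ satisfying $\abs{S_v}\in\{1,\abs{V(H)}\}$) is already a valid instance of $\LHom{H}$ (which allows arbitrary lists $S_v\subseteq V(H)$), and the required output $\hom{(G,\boldS)}{H}$ is the same in both problems. So the identity map on instances is a parsimonious reduction, hence an AP-reduction.

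There is essentially no obstacle here — this is a bookkeeping observation recording that $\OALHom{H}$ (the formal version of $\Ret{H}$) sits between $\Hom{H}$ and $\LHom{H}$ in the hierarchy of list-constrained homomorphism counting problems, simply because the admissible list-systems form a nested chain: all-full lists $\subseteq$ lists of size $1$ or $\abs{V(H)}$ $\subseteq$ arbitrary lists. The only mild care needed is to observe that forcing the identity on the embedded copy of $H$ via singleton lists genuinely captures the retraction condition (which is guaranteed by the Feder--Hell parsimonious equivalence cited earlier in the excerpt, but is also transparent directly), and that adding the disjoint copy of $H$ does not change the count because a connected component that is an induced copy of $H$ with the identity list constraint admits exactly one list homomorphism.
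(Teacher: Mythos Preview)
Your second reduction ($\OALHom{H}\leap\LHom{H}$) is correct and matches the paper: the identity on instances works because the list constraint in $\OALHom{H}$ is a special case of that in $\LHom{H}$.

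Your first reduction, however, is both unnecessarily complicated and technically flawed. The paper's intended argument (``follows immediately from the problem definitions'') is simply this: given an instance $G$ of $\Hom{H}$, set $S_v = V(H)$ for every $v\in V(G)$. Then $(G,\boldS)$ is a valid instance of $\OALHom{H}$ (every list has size $\abs{V(H)}$, which is allowed), and trivially $\hom{(G,\boldS)}{H} = \hom{G}{H}$. No copy of $H$ needs to be adjoined.

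The reason you should not adjoin a copy of $H$ is not just parsimony: the formal definition of $\OALHom{H}$ requires the input graph to be \emph{irreflexive}. If $H$ has loops, the disjoint union of $G$ with a copy of $H$ is not irreflexive, so your construction does not produce a valid instance. The misconception appears to be that $\OALHom{H}$ requires an embedded copy of $H$; it does not. The formal problem only constrains the list sizes to lie in $\{1,\abs{V(H)}\}$, and the equivalence with the informal ``retraction'' formulation (where a copy of $H$ sits inside the input) is handled separately via the Feder--Hell parsimonious reduction the paper cites. For Observation~\ref{obs:HomToRetToLHom} you should work directly with the formal list-based definition.
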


As we will see later, the complexity of approximately counting homomorphisms is still open (despite a lot of work on the problem) --- even if restricted to trees $H$. The complexity of approximately counting list homomorphisms is known, due to Galanis, Goldberg and Jerrum~\cite{GGJList}. Thus, 
Observation~\ref{obs:HomToRetToLHom} 
 indicates that
$\OALHom{H}$ 
is an important intermediate problem, between the solved $\LHom{H}$ and the wide-open $\Hom{H}$.

The first interesting consequence of Theorem~\ref{thm:RetMain} is a separation between $\Ret{H}$ and $\LHom{H}$.
\newcommand{\CorRetSepLHom}{
$\Ret{H}$ and $\LHom{H}$ are separated subject to the assumption that $\bis$ and $\sat$ are not AP-interreducible. In particular, if $H$ is a partially bristled reflexive path with at  least one unlooped vertex, then $\Ret{H} \eqap \bis$, whereas $\LHom{H} \eqap \sat$. 
}
\begin{cor}\label{cor:RetSepLHom}
\CorRetSepLHom
\end{cor}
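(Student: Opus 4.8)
The plan is to derive the corollary by combining Theorem~\ref{thm:RetMain} with the known approximate-counting trichotomy for list homomorphisms due to Galanis, Goldberg and Jerrum~\cite{GGJList}. It suffices to produce a single graph $H$ that is a partially bristled reflexive path with at least one unlooped vertex and for which $\Ret{H}\eqap\bis$ while $\LHom{H}\eqap\sat$; the stated assertion for all such $H$ follows from the same two ingredients applied component-wise. I would take the smallest witness: let $H$ be the tree whose reflexive path $P$ consists of three consecutive looped vertices $v_1,v_2,v_3$ and which has a single unlooped bristle vertex $u$ joined to the internal vertex $v_2$ (this is a partially bristled reflexive path in the sense of Definition~\ref{def:PBRP}). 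Since $H$ is acyclic its girth is infinite, so Theorem~\ref{thm:RetMain} applies to it.

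First I would establish $\Ret{H}\eqap\bis$. The graph $H$ is connected, and its single component is a partially bristled reflexive path; it is not an irreflexive star (it has loops), not a single looped vertex, and not an edge with two loops (it has four vertices). Hence $H$ is not covered by case~(i) of Theorem~\ref{thm:RetMain} but is covered by case~(ii), giving $\Ret{H}\eqap\bis$.

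Next I would establish $\LHom{H}\eqap\sat$ by invoking the trichotomy of~\cite{GGJList}: $\LHom{H}$ is $\sat$-equivalent under AP-reductions unless every connected component of $H$ lies in one of the two restricted structural classes identified there as tractable or as $\bis$-equivalent. I would check that our $H$ lies in neither: it is not an irreflexive complete bipartite graph or reflexive complete graph, and it is not in the $\bis$-equivalent class because the unlooped bristle $u$ attached to the reflexive internal vertex $v_2$ destroys the structure (reflexive proper interval / bipartite permutation type) that~\cite{GGJList} requires for $\bis$-equivalence — indeed $H$ itself is one of the configurations that~\cite{GGJList} places in the $\sat$-equivalent regime. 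Carefully matching $H$ against the precise structural conditions of that trichotomy, so as to confirm it meets the $\sat$-hardness criterion and fails both tractability conditions, is the step I expect to require the most care; no new argument beyond citing~\cite{GGJList} is needed, but the bookkeeping must be done correctly (alternatively, one could give a direct AP-reduction from $\sat$ using list-homomorphism gadgets, but quoting~\cite{GGJList} is cleaner).

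Finally I would assemble the separation. Assume $\bis$ and $\sat$ are not AP-interreducible; since $\sat$ is $\numP$-complete with respect to AP-reductions we have $\bis\leap\sat$, hence the hypothesis says precisely that $\sat\not\leap\bis$. By Observation~\ref{obs:HomToRetToLHom} there is always an AP-reduction $\Ret{H}\leap\LHom{H}$. Conversely, an AP-reduction $\LHom{H}\leap\Ret{H}$ would give $\sat\eqap\LHom{H}\leap\Ret{H}\eqap\bis$, i.e.\ $\sat\leap\bis$, contradicting the assumption. Therefore $\Ret{H}$ and $\LHom{H}$ are separated, which is exactly the conclusion of Corollary~\ref{cor:RetSepLHom}.
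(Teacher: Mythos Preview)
Your proposal is correct and follows essentially the same approach as the paper: the paper's proof is just the two-sentence observation that $\Ret{H}\eqap\bis$ comes from Theorem~\ref{thm:RetMain} and $\LHom{H}\eqap\sat$ comes from the trichotomy of~\cite{GGJList} (stated in the paper as Theorem~\ref{thm:LHomTricho}). One small simplification you could make: to place a partially bristled reflexive path with an unlooped vertex into the $\sat$-equivalent case of~\cite{GGJList}, you need not analyse interval or permutation structure at all---since such an $H$ has both looped and unlooped vertices it is neither irreflexive nor reflexive, and cases~(i) and~(ii) of Theorem~\ref{thm:LHomTricho} only cover irreflexive or reflexive graphs, so $H$ lands in case~(iii) immediately.
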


The fact that $\Ret{H} \eqap \bis$ for partially bristled reflexive paths follows from Theorem~\ref{thm:RetMain}. The fact that $\LHom{H} \eqap \sat$ is from~\cite{GGJList}, see Theorem~\ref{thm:LHomTricho} in the Related Work section.

As a second consequence, Theorem~\ref{thm:RetMain} separates $\Ret{H}$ from $\Hom{H}$, but in a different sense. For $q\ge 3$ let $J_q$ be the irreflexive graph obtained from the $q$-leaf star by subdividing each edge. The graph $J_3$ is depicted in Figure~\ref{fig:J3} on page~\pageref{fig:J3}. From Goldberg and Jerrum~\cite{GJTreeHoms} it is known that the problem $\Hom{J_q}$ is AP-interreducible with the task of computing the partition function of the $q$-state ferromagnetic Potts model~\cite{Potts1952} --- a well-studied model from statistical physics. Despite extensive work on this problem~\cite{GJTreeHoms, GJ2012, Galanis2016} it is only known to be $\bis$-hard but 
is not   known to be $\bis$-easy or to  be $\sat$-hard (with respect to AP-reductions).
\begin{cor}\label{cor:HomSepRet}
Let $q$ be an integer with $q\ge 3$.
$\Hom{H}$ and $\Ret{H}$ are separated subject to the assumption that approximately computing the partition function of the $q$-state ferromagnetic Potts model is not $\sat$-hard. In particular, it follows from Theorem~\ref{thm:RetMain} that $\sat\leap \Ret{J_q}$.
\end{cor}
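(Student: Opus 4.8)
The plan is to apply Theorem~\ref{thm:RetMain} with $H = J_q$ and to show that $J_q$ lands in case (iii), which immediately yields $\Ret{J_q} \eqap \sat$ and hence $\sat \leap \Ret{J_q}$. The separation claim then follows by combining this with the known AP-interreducibility of $\Hom{J_q}$ with the $q$-state ferromagnetic Potts partition function together with Observation~\ref{obs:HomToRetToLHom}.

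First I would verify the hypothesis of Theorem~\ref{thm:RetMain}: $J_q$ is an irreflexive tree (the ``spider'' with one centre vertex, $q$ middle vertices, and $q$ leaves), so it is acyclic and its girth is infinite, in particular at least $5$. Next I would rule out cases (i) and (ii). Since $J_q$ has more than two vertices and no loops, it is not a single looped vertex, not an edge with two loops, and not a partially bristled reflexive path (the latter has looped spine vertices). It is not an irreflexive star either, since it has diameter $4$ (a leaf in one leg is at distance $4$ from a leaf in another leg), whereas a star has diameter at most $2$. The only mildly delicate point is checking that $J_q$ is not an irreflexive caterpillar when $q \ge 3$: any path $P$ in the tree $J_q$ can use at most two of the $q$ edges incident to the centre (two if the centre is internal to $P$, at most one if the centre is an endpoint of $P$, and none if the centre is off $P$, in which case the centre itself is an off-spine vertex of degree $q \ge 3$); hence at least one middle vertex $m_k$ lies off $P$, and $m_k$ has degree $2$, contradicting the requirement that every vertex outside the spine have degree $1$. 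Thus $J_q$ satisfies neither (i) nor (ii), and Theorem~\ref{thm:RetMain}(iii) gives $\Ret{J_q} \eqap \sat$; in particular $\sat \leap \Ret{J_q}$.

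Finally, for the separation: by Goldberg and Jerrum~\cite{GJTreeHoms}, $\Hom{J_q}$ is AP-interreducible with the problem of approximately computing the partition function of the $q$-state ferromagnetic Potts model~\cite{Potts1952}. Observation~\ref{obs:HomToRetToLHom} already gives $\Hom{J_q} \leap \Ret{J_q}$. If, in addition, $\Ret{J_q} \leap \Hom{J_q}$ held, then transitivity together with $\sat \leap \Ret{J_q}$ would give $\sat \leap \Hom{J_q}$, i.e.\ the Potts partition function would be $\sat$-hard, contradicting the stated assumption. Hence $\Hom{J_q}$ and $\Ret{J_q}$ have different complexity under that assumption, which is exactly the asserted separation. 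The whole argument is essentially a bookkeeping exercise around Theorem~\ref{thm:RetMain}; the only step requiring a short combinatorial argument is confirming that $J_q$ is not a caterpillar, and I do not expect any genuine obstacle there.
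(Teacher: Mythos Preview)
Your proposal is correct and follows essentially the same approach as the paper: the paper does not give a standalone proof of this corollary but simply notes in the surrounding text that $\Hom{J_q}$ is AP-interreducible with the $q$-state Potts partition function and that $\sat \leap \Ret{J_q}$ follows from Theorem~\ref{thm:RetMain}. Your write-up fills in exactly the details one would expect (checking that $J_q$ has girth $\geq 5$ and is neither a star nor a caterpillar, hence falls into case~(iii)), and your separation argument is the intended one.
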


In addition to these separations, we show that approximately counting retractions is at least as hard as approximately counting surjective homomorphisms and also at least as hard as approximately counting compactions. The latter is surprising as it is in contrast to known results for the corresponding exact counting problems (see Figure~\ref{fig:ExactCountingLandscape}). Our proof uses an interesting Monte Carlo approach to AP-reductions and more details on this method are given in Section~\ref{sec:Methods}. The approach gives analogous reductions for the list versions of these problems for free.

\newcommand{\ThmSHomCompToRet}{
Let $H$ be a graph. Then $\SHom{H}\leap \Ret{H}$ and $\Comp{H}\leap \Ret{H}$.
}
\begin{thm}\label{thm:SHomCompToRet}
\ThmSHomCompToRet
\end{thm}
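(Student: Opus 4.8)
I would establish the two reductions $\SHom{H}\leap\Ret{H}$ and $\Comp{H}\leap\Ret{H}$; I describe the vertex‑surjective case in detail, the compaction case being analogous. Throughout I work with the list formulation $\OALHom{H}$ of $\Ret{H}$ and use two standard facts. First, if $G$ is the input graph of $\SHom{H}$, $u_1,\dots,u_t$ are distinct vertices of $G$, and $w_1,\dots,w_t\in V(H)$, then the instance $(G,\boldS)$ with $S_{u_i}=\{w_i\}$ and $S_v=V(H)$ for every other $v$ is a legitimate $\OALHom{H}$ instance whose solution set is exactly $\{h\in\calH(G,H):h(u_i)=w_i\text{ for all }i\}$; so the $\Ret{H}$ oracle approximates the size of this set, and in particular (all lists full) it approximates $\hom{G}{H}$. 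Second, $\OALHom{H}$ is self‑reducible — fix one full‑listed vertex to a singleton at a time — so by the usual reduction from approximate counting to almost‑uniform sampling~\cite{JVV1986} the $\Ret{H}$ oracle also provides an almost‑uniform sampler for the solution set of any such pinned instance.

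Write $V(H)=\{1,\dots,k\}$, noting that $k$ is a constant. To a surjective homomorphism $h\in\calH(G,H)$ associate the signature $\sigma(h)=(\min h^{-1}(1),\dots,\min h^{-1}(k))$, a tuple of $k$ distinct vertices of $G$ relative to a fixed ordering of $V(G)$. The surjective homomorphisms are partitioned according to their signature, and there are at most $|V(G)|^k$ candidate signatures — polynomially many, since $k$ is constant. For a candidate signature $\vec u=(u_1,\dots,u_k)$ set $A_{\vec u}=\#\{h\in\calH(G,H):h(u_w)=w\text{ for all }w\}$; this is a pinned count as above, hence oracle‑approximable, and every homomorphism it counts is surjective. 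Put $\Phi=\sum_{\vec u}A_{\vec u}$, a sum of polynomially many approximable quantities. Counting pairs $(h,\vec u)$ two ways yields $\Phi=\sum_{h\ \mathrm{surjective}}\prod_{w}|h^{-1}(w)|$, so that $\sur{G}{H}\le\Phi\le(|V(G)|/k)^{k}\,\sur{G}{H}$ by the AM--GM inequality; in particular $\sur{G}{H}/\Phi$ is bounded below by an inverse polynomial in $|V(G)|$.

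The estimator is now a direct Monte Carlo one. Approximate every $A_{\vec u}$ and hence $\Phi$ (if $\Phi=0$, output $0$); then repeat polynomially often: draw a signature $\vec u$ with probability proportional to $A_{\vec u}$, sample $h$ almost uniformly from $\{h:h(u_w)=w\text{ for all }w\}$ using the sampler above, and record the indicator $X$ that $\sigma(h)=\vec u$. Under the ideal distribution $\mathbf E[X]=\Phi^{-1}\sum_{\vec u}\#\{h:\sigma(h)=\vec u\}=\sur{G}{H}/\Phi$, which we have shown is inverse‑polynomially large; a Chernoff bound then shows that polynomially many samples pin down $\mathbf E[X]$ to within a factor $(1\pm\epsilon)$ with high probability, and multiplying the empirical mean by the estimate of $\Phi$ gives a $(1\pm O(\epsilon))$‑approximation of $\sur{G}{H}$. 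Choosing the oracle precision inverse‑polynomially small (this also controls the error of the almost‑uniform samplers and of sampling $\vec u$ from the estimated $\widehat{A_{\vec u}}$ rather than the true $A_{\vec u}$) yields an AP‑reduction. For $\Comp{H}\leap\Ret{H}$ one runs the same argument with the constant‑size set $F$ of non‑loop edges of $H$ in place of $V(H)$, a signature that selects for each $e\in F$ the lexicographically least edge of $G$ mapped onto $e$, and at most $|E(G)|^{|F|}$ candidate signatures; the only new point is that the condition $\{h(x),h(y)\}=e$ is a two‑way disjunction over the orientations of $e$, which is removed by a further constant ($2^{|F|}$) case split into pinned instances, after which the double count gives $\comp{G}{H}\le\Psi\le(|E(G)|/|F|)^{|F|}\,\comp{G}{H}$ with $\Psi=\sum_{h}\prod_{e\in F}\#\{\{u,v\}\in E(G):\{h(u),h(v)\}=e\}$ and the analysis is identical. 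Since the only gadgets used are pinnings of single vertices, replacing the pinning of $u$ to $w$ by intersection of the list of $u$ with $\{w\}$ turns the same constructions into AP‑reductions $\LSHom{H}\leap\LHom{H}$ and $\LComp{H}\leap\LHom{H}$ for free.

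The main obstacle — and the reason a Monte Carlo reduction is needed — is sign cancellation. The naive route of writing $\sur{G}{H}$ as an alternating sum of homomorphism counts into induced subgraphs of $H$ is useless for approximate counting (the target can be exponentially smaller than individual summands, and the ``avoid a vertex set'' lists are not even $\OALHom{H}$ instances). Passing to the canonical‑signature decomposition repairs both defects at once — the summands $\#\{h:\sigma(h)=\vec u\}$ are non‑negative, and there are only polynomially many because $H$ is fixed — but then one must verify that (i) each summand is reachable by a Monte Carlo estimator built only from pinned ($\OALHom{H}$) instances and (ii) the weight $\prod_w|h^{-1}(w)|$ that relates $\Phi$ to $\sur{G}{H}$ (and likewise $\Psi$ to $\comp{G}{H}$) is only polynomially large. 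Step (ii) — where one uses that $|V(H)|$ (resp.\ $|E(H)|$) is a constant, together with AM--GM — is the heart of the argument.
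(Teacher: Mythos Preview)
Your approach is essentially the paper's Monte Carlo ``union of sets'' reduction: enumerate a polynomial-size family of pinned $\OALHom{H}$ instances whose solutions together cover exactly the surjective homomorphisms (resp.\ compactions), approximate the disjoint union via the oracle, and use a sample-and-reject step --- your canonical-signature test ``$\sigma(h)=\vec u$'' plays the same role as the paper's ``smallest index $i$ with $\sigma\in\Omega_{G,\boldS,i}$'' --- to pass from the disjoint union to the union. Your density bound via AM--GM is a mild sharpening of the paper's cruder $\lvert\OmCup\rvert\ge\lvert\OmPlus\rvert/\tGS$, but both are inverse-polynomial, which is all that is required. Your observation that inverse-polynomial sampler accuracy suffices (because only the acceptance \emph{event}, not individual homomorphisms, must be controlled) is in fact cleaner than the paper's use of exponentially small total-variation error, though the latter is harmless since the sampler runs in time poly-logarithmic in the precision.

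There is one genuine gap in your compaction case. The paper's definition of compaction requires the homomorphism to be vertex-surjective \emph{and} to hit every non-loop edge; your edge-only signature (one lex-least preimage per $e\in F$) certifies only the second condition. If $H$ has a vertex incident to no non-loop edge --- an isolated vertex, or a looped vertex with no non-loop neighbour --- then an edge-surjective homomorphism need not be a compaction, and your estimator would compute the number of edge-surjective homomorphisms rather than $\comp{G}{H}$. (Concretely, your double count gives $\Psi=\sum_{h}\prod_{e\in F}\#\{\cdot\}$ summed over all homomorphisms, and the product is positive exactly when $h$ is edge-surjective, not when $h$ is a compaction; so the claimed upper bound $\Psi\le(|E(G)|/|F|)^{|F|}\comp{G}{H}$ fails.) The fix is immediate: augment the edge signature with the vertex signature $(\min h^{-1}(1),\dots,\min h^{-1}(k))$ from your $\SHom{H}$ argument --- this is precisely what the paper's bound $\lvert U\rvert\le\lvert V(H)\rvert+2\lvert E(H)\rvert$ in the definition of $\TGS$ accomplishes --- after which your analysis goes through unchanged.
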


\newcommand{\ThmLHomEquivalent}{
Let $H$ be a graph. Then $\LSHom{H}\eqap \LHom{H}$ and $\LComp{H}\eqap \LHom{H}$.
}
\begin{thm}\label{thm:LHomEquivalent}
\ThmLHomEquivalent
\end{thm}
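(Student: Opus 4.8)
The plan is to prove each of the two equivalences by giving an AP-reduction in both directions, observing that in each case one direction is a routine parsimonious reduction and the other is the substantive one. For the easy directions $\LHom{H}\leap\LSHom{H}$ and $\LHom{H}\leap\LComp{H}$, I would take an instance $(G,\boldS)$ of $\LHom{H}$ and attach to $G$ a family of pairwise disjoint ``witness'' gadgets carrying singleton lists: a fresh isolated vertex $z_a$ with list $\{a\}$ for each $a\in V(H)$ (forcing vertex-surjectivity), and, for the compaction version, additionally a fresh disjoint edge $\{p_e,q_e\}$ with lists $\{a\}$ and $\{b\}$ for each non-loop edge $e=\{a,b\}$ of $H$ (forcing that $e$ is covered). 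The enlarged graph stays irreflexive, the gadget vertices are forced onto their prescribed images in any list homomorphism, and so the enlarged instance has exactly the same list homomorphisms as $(G,\boldS)$ --- and now every one of them is vertex-surjective (respectively a compaction). These are therefore parsimonious polynomial-time reductions, hence AP-reductions.

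For $\LSHom{H}\leap\LHom{H}$ I would use a canonical-representative decomposition that introduces no cancellation. Fix a linear order $\prec$ on $V(G)$ and write $V(H)=\{v_1,\dots,v_n\}$, where $n=\abs{V(H)}$ is a constant. Assign to each surjective list homomorphism $h\colon(G,\boldS)\to H$ the tuple $(\mu_1(h),\dots,\mu_n(h))$, where $\mu_i(h)$ is the $\prec$-least element of $h^{-1}(v_i)$; these $n$ vertices are distinct. Grouping the surjective list homomorphisms by this tuple, their number equals $\sum\hom{(G,\boldS')}{H}$ over all tuples $(u_1,\dots,u_n)$ of distinct vertices of $G$, where $\boldS'$ is obtained from $\boldS$ by replacing $S_{u_i}$ with $S_{u_i}\cap\{v_i\}$ and replacing every other list $S_w$ with $S_w\setminus\{v_i\mid w\prec u_i\}$; this encodes exactly ``$h(u_i)=v_i$ for all $i$, and no $\prec$-smaller vertex maps to $v_i$''. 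Each summand is a non-negative $\LHom{H}$ instance, and there are at most $\abs{V(G)}^n$ of them, hence polynomially many; evaluating each with the oracle at relative error $\epsilon$ and failure probability $\delta/\abs{V(G)}^n$ and summing gives an AP-reduction, since a sum of non-negative $(1\pm\epsilon)$-approximations is a $(1\pm\epsilon)$-approximation of the sum and a union bound controls the failure probability.

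The reduction $\LComp{H}\leap\LHom{H}$ is the crux, and the representative trick just used does not adapt to it: a signature recording, for each non-loop edge $e=\{v_i,v_j\}$ of $H$, the $\prec$-least edge of $G$ mapped onto $e$ would need us to forbid a specified edge of $G$ from being mapped onto $\{v_i,v_j\}$, which is a constraint on a \emph{pair} of vertices and cannot be imposed by editing the lists $\boldS$; likewise, an inclusion--exclusion over uncovered edges of $H$ would call a different problem --- counting list homomorphisms into $H-F$ for $F\subseteq E(H)$ --- rather than $\LHom{H}$. Instead I would invoke the Monte Carlo AP-reduction of Section~\ref{sec:Methods} that proves $\Comp{H}\leap\Ret{H}$ inside Theorem~\ref{thm:SHomCompToRet}: that argument never uses anything about the input lists beyond passing them to its counting oracle, so taking the oracle to be the unrestricted $\LHom{H}$ makes it apply verbatim to $\LComp{H}$-instances with arbitrary lists, giving $\LComp{H}\leap\LHom{H}$ (and, run on plain inputs with all lists equal to $V(H)$, the same argument reproduces $\Comp{H}\leap\Ret{H}$, so the two theorems genuinely share this core; this method also re-proves $\LSHom{H}\leap\LHom{H}$ uniformly). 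The main obstacle this method must surmount --- and the reason a direct estimator fails --- is that compactions, like surjective homomorphisms, can form an exponentially small fraction of all list homomorphisms of $(G,\boldS)$: if $H$ has a high-degree region and a low-degree region and $G$ has several components, almost every homomorphism collapses into the high-degree region and misses the rest, so ``sample a uniform list homomorphism and test it'' would succeed only with exponentially small probability, and overcoming this is exactly what the Section~\ref{sec:Methods} technique is for.
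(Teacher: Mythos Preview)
Your proposal is correct. The easy directions and the reduction $\LComp{H}\leap\LHom{H}$ match the paper almost exactly: the paper's Lemma~\ref{lem:APLHomHardness} attaches a single pinned copy of~$H$ (rather than separate vertex/edge witnesses, but with the same parsimonious effect), and Corollary~\ref{cor:LCompToLHom} is precisely the Monte Carlo argument of Lemma~\ref{lem:MCAlgo} applied with $\calL=2^{V(H)}$, which is what you invoke.

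The one genuine divergence is your treatment of $\LSHom{H}\leap\LHom{H}$. The paper does \emph{not} use your canonical-representative decomposition; instead it applies the same Monte Carlo machinery a second time (Corollary~\ref{cor:LSHomToLHom2}), replacing the set $\TGS$ of small compaction witnesses by the set of $|V(H)|$-vertex surjectivity witnesses. Your argument is more elementary here: it is deterministic, needs no sampling or self-reducibility, and writes $\sur{(G,\boldS)}{H}$ directly as a sum of polynomially many non-negative $\LHom{H}$ instances, so a plain sum-of-oracle-calls suffices. The paper's route buys uniformity --- one lemma handles both $\LSHom{H}$ and $\LComp{H}$ --- at the cost of the heavier Section~\ref{sec:MonteCarlo} infrastructure. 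You correctly note that your elementary trick does not extend to compactions (edge-coverage constraints are on pairs and cannot be encoded by list edits alone), which is exactly why the Monte Carlo method is still needed and why the paper opts for it throughout.
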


Using Theorem~\ref{thm:SHomCompToRet} and Corollary~\ref{cor:RetSepLHom} we can deduce that $\SHom{H}$ and $\LHom{H}$ are also separated subject to the assumption that $\bis$ and $\sat$ are not AP-interreducible. The same holds for $\Comp{H}$ and $\LHom{H}$. Moreover, from Theorem~\ref{thm:LHomEquivalent} it follows that we can replace the problem $\LHom{H}$ with $\LSHom{H}$ or $\LComp{H}$ in these separations.

Our reductions $\SHom{H}\leap \Ret{H}$ and $\Comp{H}\leap \Ret{H}$ allow us to state new $\bis$-easiness results which are not limited to graphs of girth at least $5$, namely the $\bis$-easiness results in the following corollary.

\newcommand{\CorBisEasiness}{
Let $H$ be one of the following:
\begin{itemize}
\item A reflexive proper interval graph but not a complete graph.
\item An irreflexive bipartite permutation graph but not a complete bipartite graph.
\end{itemize} 
Then $\SHom{H}$, $\Comp{H}$ and $\Ret{H}$ are $\bis$-equivalent.
}
\begin{cor}\label{cor:BisEasiness}
\CorBisEasiness
\end{cor}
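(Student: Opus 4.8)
The plan is to sandwich each of $\SHom{H}$, $\Comp{H}$ and $\Ret{H}$ between $\bis$ from both sides, routing the upper bounds through $\LHom{H}$ and the lower bounds through $\Hom{H}$. It is harmless to assume $H$ is connected (otherwise each component again lies in one of the two listed families, and the surjectivity constraints of $\SHom$ and $\Comp$ are handled componentwise). The key preliminary fact is that the two families in the statement are exactly the connected graphs that the Galanis--Goldberg--Jerrum list-homomorphism trichotomy (Theorem~\ref{thm:LHomTricho}, i.e.\ the trichotomy of~\cite{GGJList}) places strictly inside the $\bis$-regime: a connected reflexive proper interval graph has $\LHom{H}\in\FP$ precisely when it is a reflexive complete graph, a connected irreflexive bipartite permutation graph has $\LHom{H}\in\FP$ precisely when it is an irreflexive complete bipartite graph, and in every other case $\LHom{H}\eqap\bis$. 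The hypotheses ``not a complete graph'' and ``not a complete bipartite graph'' exclude exactly these $\FP$-cases --- which are, incidentally, exactly the $\FP$-cases of Theorem~\ref{thm:RetMain}, part (i) --- so in all cases covered by the corollary, $\LHom{H}\eqap\bis$.

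For the upper bounds, Theorem~\ref{thm:SHomCompToRet} gives $\SHom{H}\leap\Ret{H}$ and $\Comp{H}\leap\Ret{H}$, Observation~\ref{obs:HomToRetToLHom} gives $\Ret{H}\leap\LHom{H}$, and $\LHom{H}\leap\bis$ by the previous paragraph; composing these AP-reductions yields $\SHom{H}\leap\bis$, $\Comp{H}\leap\bis$ and $\Ret{H}\leap\bis$. This is the genuinely new content of the corollary and the reason it is not confined to girth at least $5$, unlike Theorem~\ref{thm:RetMain}. For the lower bounds I would establish $\bis\leap\Hom{H}$ and then lift it using reductions already present in the landscape: $\Hom{H}\leap\SHom{H}$ and $\Hom{H}\leap\Comp{H}$ by \cite[Lemmas~37 and~38]{FGZ2017}, and $\Hom{H}\leap\Ret{H}$ by Observation~\ref{obs:HomToRetToLHom}. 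Combining the two directions gives $\SHom{H}\eqap\Comp{H}\eqap\Ret{H}\eqap\bis$, as claimed.

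The one step that needs real input --- and the place I expect the difficulty to sit --- is $\bis\leap\Hom{H}$ for every $H$ in the two families. For irreflexive caterpillars this is in~\cite{GJTreeHoms}; for the remaining reflexive proper interval graphs and irreflexive bipartite permutation graphs, which may contain triangles or squares so that Theorem~\ref{thm:RetMain} does not apply, one appeals to the $\bis$-hardness results for $H$-colouring of~\cite{GGJBIS}. The mechanism is transparent for $\Ret{H}$, which is a useful sanity check: since $H$ is connected and not complete (resp.\ not complete bipartite) it contains an induced reflexive path $p_0p_1p_2$ (resp.\ an induced path $p_0p_1p_2p_3$); given a bipartite graph $B$ with parts $X,Y$, pin the fixed copy of $H$ in $G=B$, join every vertex of $X$ to $p_0$ and every vertex of $Y$ to $p_2$ (resp.\ $p_3$), and note that the homomorphisms using only path-vertices on $V(B)$ are in bijection with the pairs $(A\subseteq X,\,A'\subseteq Y)$ that have no $B$-edge between $A$ and $A'$, i.e.\ with the independent sets of $B$. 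The subtlety that turns this into a theorem is forcing precisely the path-colours to occur (a vertex joined to $p_0$ could a priori also map to a neighbour of $p_0$ outside the path), and it is the linear structure of proper interval / bipartite permutation graphs --- packaged in the cited hardness results --- that does this work; beyond that, only transitivity of AP-reducibility and a check of the excluded trivial graphs (single looped vertex, reflexive $K_2$, irreflexive stars) remain.
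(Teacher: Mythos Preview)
Your proof is correct and follows essentially the same route as the paper: the $\bis$-easiness chain $\SHom{H},\Comp{H}\leap\Ret{H}\leap\LHom{H}\leap\bis$ via Theorem~\ref{thm:SHomCompToRet}, Observation~\ref{obs:HomToRetToLHom} and Theorem~\ref{thm:LHomTricho} is exactly what the paper does. For $\bis$-hardness the paper simply cites \cite[Theorem~35]{FGZ2017}, whereas you route it through $\bis\leap\Hom{H}$ (Theorem~\ref{thm:HomBIS}) and then $\Hom{H}\leap\SHom{H},\Comp{H},\Ret{H}$ --- this is equivalent and arguably more transparent given what is already stated in the paper. Your final paragraph's gadget sketch is unnecessary and slightly off (Theorem~\ref{thm:HomBIS} already gives $\bis\leap\Hom{H}$ for \emph{every} non-trivial connected~$H$, so no ``linear structure'' of proper interval or bipartite permutation graphs is needed for the hardness direction); you can drop it entirely.
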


The $\bis$-easiness results in Corollary~\ref{cor:BisEasiness} come from our Theorem~\ref{thm:SHomCompToRet} together with Observation~\ref{obs:HomToRetToLHom} and the $\bis$-easiness results for $\LHom{H}$ given in Theorem~\ref{thm:LHomTricho} on page~\pageref{thm:LHomTricho}. The corresponding $\bis$-hardness comes from~\cite[Theorem 35]{FGZ2017}).

The proof of Theorem~\ref{thm:RetMain} is given in Section~\ref{sec:MainTheorem}. Theorem~\ref{thm:SHomCompToRet} is proved in Section~\ref{sec:MonteCarlo} in the form of Corollaries~\ref{cor:CompToRet} and~\ref{cor:SHomToRet}. The proof of Theorem~\ref{thm:LHomEquivalent} is in Section~\ref{sec:AdditionalReductions}.

\subsection{Methods}\label{sec:Methods}
In the proof of Theorem~\ref{thm:RetMain} we use several different techniques. In the $\bis$-easiness proof for partially bristled reflexive paths (Lemma~\ref{lem:BristledPathEasiness}) we build upon a technique that was introduced by Dyer et al.~\cite{DGGJApprox} and extended by Kelk~\cite{KelkThesis} to reduce the problem of approximately  counting homomorphisms to the problem of approximately counting the downsets of a partial order. In order to obtain more general results, we formalise this technique and use it in the context of the constraint satisfaction framework. 
This framework is convenient 
for generating
$\bis$-easiness results, not only for counting homomorphisms but also for counting retractions, both in the setting of undirected graphs (as used in this work) and even in the setting of directed graphs.

In order to obtain the $\sat$-hardness part of Theorem~\ref{thm:RetMain}, we analyse, and classify, different local structures
in graphs. 
The  most difficult part is Lemma~\ref{lem:MixedTreeHardness3} which analyses distance-$2$ neighbourhood structures 
as defined in Section~\ref{sec:Preliminaries}. This lemma requires 
intricate gadgets 
(based on simpler versions used in~\cite{DGGJApprox} and~\cite{KelkThesis}), rather careful analysis, and classifying homomorphisms by type.
Other $\sat$-hardness results are easier to come by.
For instance, we use modifications of, and a more careful analysis of, a gadget from~\cite{GJTreeHoms} to prove the $\sat$-hardness for irreflexive square-free graphs that have an induced subgraph $J_3$ (Lemma~\ref{lem:SquareFreeHardness}). 
Some hardness results are also based on $\NP$-completeness results for the retraction decision problem that carry over to the approximate counting version. 

The algorithms captured by the reductions $\SHom{H}\leap\Ret{H}$ and $\Comp{H}\leap\Ret{H}$ are based on a Monte Carlo approach (Lemma~\ref{lem:MCAlgo}). 
We will discuss this approach here in the context of counting surjective homomorphisms
to~$H$. The details, and the  related approach for counting compactions, are described in Section~\ref{sec:MonteCarlo}.
The Monte Carlo approach is applicable 
for a reduction from $\SHom{H}$ to $\Ret{H}$
because $\Ret{H}$ is a so-called self-reducible problem.
Recall that the output of $\Ret{H}$, given a graph $G$ and lists 
$\boldS=\{S_v\subseteq V(H) \mid v\in V(G)\}$ such that, for all $v\in V(G)$, $\abs{S_v}\in \{1,\abs{V(H)}\}$,
is the number of homomorphisms from $(G,\boldS)$ to $H$.
Using an oracle for approximating this number,
it is also possible to \emph{sample} a homomorphism from $(G,\boldS)$ to $H$ approximately uniformly at random
(following the general method of Jerrum, Valiant, and  Vazirani~\cite{JVV1986}).
A naive  approach for sampling surjective homomorphisms (and hence for approximately counting them) is as follows:
Start with an input~$G$ to~$\SHom{H}$.  
Let $\boldS$ be the trivial set of lists
$\boldS=\{S_v = V(H) \mid v\in V(G)\}$.
Using the oracle for $\Ret{H}$, obtain a random homomorphism from $(G,\boldS)$ to~$H$
(which is just a random homomorphism from~$G$ to~$H$).
Reject (and repeat) if this homomorphism is not surjective. Eventually, we obtain a random surjective
homomorphism from $G$ to $H$, as required.
While this approach is certainly straightforward, it does not
lead to an \emph{efficient} algorithm for sampling surjective homomorphisms because  the number of surjective homomorphisms  might be very small compared to the total number of homomorphisms. 

Our method to shrink the sample space is based on the following fact. For every surjective homomorphism $h$ from $G$ to $H$ there exists a constant-size set of vertices $U\subseteq V(G)$ such that the restriction of $h$ to $U$ is already surjective.  
We can enumerate all these constant-size sets $U$ and use single vertex lists to fix their images. Consequently we obtain a (polynomial) number of instances $(G,\boldS^1), \ldots,(G,\boldS^k)$
of the problem $\Ret{H}$. For $i \in \{1,\dots, k\}$ let $R_i$ be the set of homomorphisms from $(G,\boldS^i)$ to~$H$.
Then the set of surjective homomorphisms from $G$ to $H$ is the union $R=\bigcup_{i=1}^k R_i$. The final building block of our reduction is the 
idea that we can sample the union $R$ 
by first sampling
from the disjoint union $R^+=\bigcup_{i=1}^k \{(h,i) \mid h\in R_i\}$. This idea is explained more generally, for instance, in~\cite[Section 11.2.2]{Mitzenmacher2017}. The point is, that we can sample uniformly from $R^+$ by using a $\Ret{H}$ oracle, and the union $R$ is relatively dense in the disjoint union $R^+$ (its size is at least $|R^+|/k$). 
So we can obtain a sample from~$R$. Then the samples can be combined to obtain, with high probability,
an approximate count.
 A lot of AP-reductions are based on the use of gadgets and we have not seen the use of Monte Carlo algorithms  in AP-reductions before.

\subsection{Related Work}\label{sec:RelatedWork}

Let $H$ be a graph. It is well-known that the complexity of $\LHom{H}$ is determined by the maximum complexity $\LHom{C}$ for a connected component $C$ of $H$. In the connected case the complexity is determined by the following theorem by Galanis et al.~\cite{GGJList}. 
\begin{thm}[\cite{GGJList}]\label{thm:LHomTricho}
Let $H$ be a connected graph. 
{
\renewcommand{\theenumi}{(\roman{enumi})}
\renewcommand{\labelenumi}{\theenumi}
\begin{enumerate}            
\item If $H$ is an irreflexive complete bipartite graph or a reflexive complete graph, then
$\LHom{H}$ is in $\FP$.
\item 
Otherwise, if $H$ is an irreflexive bipartite permutation graph or a reflexive proper interval graph, then
$\LHom{H}$ is $\bis$-equivalent under AP-reductions. 
\item Otherwise, $\LHom{H}$ is $\sat$-equivalent under AP-reductions.
\end{enumerate}
}
\end{thm}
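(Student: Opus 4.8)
The plan is to prove Theorem~\ref{thm:LHomTricho} by combining the known dichotomy for the \emph{decision} version of list homomorphism with new approximation algorithms for the $\bis$-easy regime and new hardness gadgets for the $\sat$-hard regime. Note first that $\LHom{H}$ always lies in $\numP$, so $\LHom{H}\leap\sat$ holds unconditionally because $\sat$ is $\numP$-complete under AP-reductions~\cite{DGGJApprox}; thus for $\sat$-equivalence one only ever needs a reduction $\sat\leap\LHom{H}$. The structural starting point is the Feder--Hell--Huang characterisation of those $H$ for which deciding the existence of a list homomorphism is in $\P$, namely the \emph{bi-arc graphs}~\cite{FederLHomRefl,FederLHomIrrefl} (for reflexive $H$ these are the interval graphs; for irreflexive $H$ they are the bipartite graphs whose complement is a circular-arc graph). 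If $H$ is not a bi-arc graph then deciding existence is $\NP$-complete, and since an approximate-counting oracle for $\LHom{H}$ decides existence, we get $\sat\leap\LHom{H}$ and $H$ falls in case~(iii). It therefore remains to analyse connected bi-arc graphs.

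I would next dispatch the $\FP$ cases and the $\bis$-easiness. If $H$ is a reflexive complete graph then every list-respecting map $V(G)\to V(H)$ is a homomorphism, so $\hom{(G,\boldS)}{H}=\prod_{v\in V(G)}\abs{S_v}$, computable in polynomial time; if $H$ is an irreflexive complete bipartite graph a similar but slightly more involved product formula, taken over the connected components of $G$ and the two sides of $H$, works. For the $\bis$-easiness of reflexive proper interval graphs and irreflexive bipartite permutation graphs I would use the technique of Dyer, Goldberg, Greenhill and Jerrum of reducing approximate homomorphism counting to approximately counting downsets of a poset, together with $\DS\eqap\bis$~\cite{DGGJApprox}: the geometric representation of $H$ yields a linear order on $V(H)$ with respect to which the feasible images of each vertex of $G$ form an interval, and the list homomorphisms can then be encoded as the downsets of an explicitly built poset. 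The matching $\bis$-hardness comes from the fact that a reflexive proper interval graph that is not complete contains an induced reflexive path on three vertices, and an irreflexive bipartite permutation graph that is not complete bipartite contains an induced $P_4$; $\bis\leap\LHom{\cdot}$ for these two small graphs is the base case~\cite{DGGJApprox,KelkThesis}, and a pinning argument (restricting lists) transfers it to $H$.

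The remaining, and hardest, part is to show $\sat\leap\LHom{H}$ for every connected bi-arc graph $H$ that is neither in the $\FP$ class nor in the $\bis$ class: reflexive interval graphs that are not proper interval, irreflexive bipartite co-circular-arc graphs that are not bipartite permutation graphs, and every connected graph with a mixture of looped and unlooped vertices. Here I would invoke forbidden-induced-subgraph descriptions --- a reflexive interval graph fails to be proper interval exactly when it has an induced reflexive claw $K_{1,3}$, a bipartite permutation graph is excluded by a finite obstruction set, and a connected mixed graph contains an edge joining a looped vertex to an unlooped one (list homomorphisms into that $2$-vertex subgraph already count the independent sets of $G$, which is $\sat$-equivalent). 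For each obstruction I would build a gadget --- a wide ``parallel-wire'' construction in the spirit of~\cite{DGGJApprox,GJTreeHoms} --- classify its homomorphisms by their type on a constant-size interface, and show that the number of extensions encodes, up to an easily computable factor, the number of independent sets of an arbitrary graph or the number of satisfying assignments of an arbitrary formula. I expect this case analysis to be the main obstacle: the obstruction lists have to be produced and verified, the reflexive, irreflexive and mixed families need genuinely different gadgets, and each gadget analysis is a delicate counting argument rather than a routine reduction, with the Feder--Hell--Huang bi-arc dichotomy and the classical forbidden-subgraph characterisations of proper interval and bipartite permutation graphs providing the glue between the three regimes.
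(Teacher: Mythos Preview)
The paper does not prove Theorem~\ref{thm:LHomTricho}. It is stated in the Related Work section as a result of Galanis, Goldberg and Jerrum~\cite{GGJList} and is used throughout the paper purely as a black box (for instance to derive $\bis$-easiness and $\sat$-easiness bounds via Observation~\ref{obs:HomToRetToLHom}). There is therefore no ``paper's own proof'' to compare your proposal against.

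That said, your sketch is a reasonable high-level outline of the argument in~\cite{GGJList}, and the mixed-graph observation (a looped vertex adjacent to an unlooped one already encodes $\is$ via lists, hence $\sat$-hardness) is exactly the easy part of case~(iii). One point to be careful about in the $\bis$-easiness direction: the downset/implication encoding for reflexive proper interval graphs and irreflexive bipartite permutation graphs is more delicate than ``feasible images form an interval'' suggests, because arbitrary lists need not be intervals in the geometric order; the actual reduction in~\cite{GGJList} uses a specific min--max structure on $V(H)$ and is not an immediate consequence of the Dyer--Goldberg--Greenhill--Jerrum technique. Your plan for the remaining $\sat$-hard bi-arc cases via forbidden-subgraph obstructions and gadget constructions is also broadly right in spirit, but as you anticipate, this is where essentially all the work in~\cite{GGJList} lies, and the obstruction list and gadget analyses there are substantial.
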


From our Theorem~\ref{thm:RetMain} and Theorem~\ref{thm:LHomTricho} we immediately obtain the separation between $\Ret{H}$ and $\LHom{H}$ given in Corollary~\ref{cor:RetSepLHom}. Note that when restricting to irreflexive or reflexive graphs, the classification of $\Ret{H}$ for graphs of girth at least $5$ is identical to the corresponding classification of the problem $\LHom{H}$. A separation between these problems only occurs for graphs with at least one looped and one unlooped vertex.

The complexity of approximately counting homomorphisms in the absence of lists is still far from being resolved. Galanis, Goldberg and Jerrum~\cite{GGJBIS} give a dichotomy for the problem in terms of $\bis$.

\begin{thm}[\cite{GGJBIS}]\label{thm:HomBIS}
Let $H$ be a connected graph. If $H$ is a reflexive complete graph or an irreflexive complete bipartite graph, then $\Hom{H}$ admits an FPRAS. Otherwise, $\bis\leap\Hom{H}$.
\end{thm}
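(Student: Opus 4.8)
First I would dispose of the two tractable cases by noting they are exactly solvable. If $H$ is the reflexive complete graph on $q$ vertices, then every map $h\from V(G)\to V(H)$ is a homomorphism (all pairs of images, equal or not, span edges of $H$), so $\hom{G}{H}=q^{\abs{V(G)}}$, computed in polynomial time. If $H$ is the irreflexive complete bipartite graph with parts of sizes $a$ and $b$, then $\hom{G}{H}=0$ unless $G$ is bipartite, and for bipartite $G$, since each connected component must send one of its two colour classes entirely into one side of $H$, the count is a closed-form product over the components of $G$. In both cases $\hom{G}{H}$ is computed exactly, so in particular $\Hom{H}$ admits an FPRAS.

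\textbf{Reduction to the irreflexive bipartite case.} Now assume $H$ is connected and is neither a reflexive complete graph nor an irreflexive complete bipartite graph; the goal is $\bis\leap\Hom{H}$, and the first move is to reduce to the subcase where $H$ is moreover irreflexive and bipartite. Suppose $H$ is \emph{not} irreflexive bipartite, i.e.\ it has a loop or an odd cycle, hence an odd closed walk; then the tensor product $H\times K_2$ is connected (as $H$ itself is), it is irreflexive and bipartite by construction, and it is not complete bipartite, since that would force $H$ to be a reflexive complete graph, contrary to assumption. Moreover, for every bipartite graph $G$ one checks that $\hom{G}{H\times K_2}=2^{c(G)}\hom{G}{H}$, where $c(G)$ is the number of connected components of $G$ (on each component the second coordinate is fixed by a two-way choice and the first coordinate is an arbitrary homomorphism into $H$), while $\hom{G}{H\times K_2}=0$ for non-bipartite $G$. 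This gives a polynomial-time, hence approximation-preserving, reduction $\Hom{H\times K_2}\leap\Hom{H}$. So it suffices to prove $\bis\leap\Hom{H}$ when $H$ is connected, irreflexive, bipartite, and not complete bipartite --- the remaining, and genuinely hard, case.

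\textbf{The irreflexive bipartite case.} Such an $H$ always contains an induced $P_4$: take a shortest path between two non-adjacent vertices lying in opposite colour classes of $H$ (this path has odd length at least three), and keep its first four vertices; they induce a $P_4$. I would use this $P_4$ --- and, more carefully, the local structure of $H$ around it (neighbourhoods and distances between specific vertices) --- to build a gadget: a graph $F$ with one or two distinguished port vertices whose homomorphisms into $H$, classified by the images of the ports, approximately implement a simple combinatorial relation on a designated pair of vertices of $H$, namely an ``independent set''/``not both selected'' relation (as already happens in the toy instance $H=P_4$, where on bipartite inputs $\hom{G}{H}$ counts independent sets of $G$ up to the factor $2^{c(G)}$), or equivalently an order relation if one prefers to reduce from $\DS$. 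Given an instance of $\bis$ (a bipartite graph), one wires together copies of this gadget according to its edges to obtain a single graph $G$; then $\hom{G}{H}$, divided by the known multiplicative contribution of the gadget copies, lands within a controllable factor of the quantity to be computed, which is exactly what an AP-reduction needs. The slack inherent in AP-reductions is essential here: the ``useful'' homomorphisms of the construction need only \emph{dominate} the count, not be the only ones, so the gadget may leak a small, estimable amount of probability mass onto irrelevant images.

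\textbf{Main obstacle.} The gadget construction of the previous paragraph is the crux of the theorem, and I expect it to be by far the hardest part. Because $\Hom{H}$ carries no lists, there is no direct way to force a gadget's homomorphisms to use only the intended vertices of $H$; the gadgets must therefore be assembled from the local structure of $H$ and their homomorphism counts controlled by a careful classification of homomorphisms ``by type'', arranged so that the useful class provably dominates no matter how the rest of $H$ is shaped. Making this argument uniform across \emph{all} connected irreflexive bipartite graphs that are not complete bipartite --- rather than building a separate ad hoc gadget for each $H$ --- is where essentially all of the technical effort lies.
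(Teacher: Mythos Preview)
The paper does not prove this theorem: it is quoted from~\cite{GGJBIS} in the Related Work section and used as a black box, so there is no ``paper's own proof'' to compare against.

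That said, your outline matches the actual structure of the argument in~\cite{GGJBIS}. The easiness part is fine, and the reduction to the irreflexive bipartite case via the bipartite double cover $H\times K_2$ is exactly the standard opening move. You have also correctly located the real content: building, for an arbitrary connected irreflexive bipartite non-complete-bipartite $H$, gadgets that dominate the homomorphism count in the right way without the benefit of lists. Your sketch of that step is honest but is not a proof --- you name the obstacle rather than overcome it --- and indeed the bulk of~\cite{GGJBIS} is devoted to precisely this gadget analysis, which is substantially more delicate than a single induced $P_4$ argument. So as a proof attempt this is a reasonable plan with the hard core missing; as a comparison exercise there is simply nothing in the present paper to compare it to.
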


Surprisingly, even for the subclass of problems where $H$ is an irreflexive tree, the complexity of approximately counting homomorphisms is not completely classified. The following partial classification, originally due to Goldberg and Jerrum~\cite{GJTreeHoms}, follows from Theorems~\ref{thm:LHomTricho} and~\ref{thm:HomBIS}. 

\begin{thm}[\cite{GJTreeHoms}]\label{thm:HomTree}
Let $H$ be an irreflexive tree. 
{
\renewcommand{\theenumi}{\roman{enumi})}
\renewcommand{\labelenumi}{\theenumi}
\begin{enumerate}            
\item If $H$ is a star, then $\Hom{H}$ is in $\FP$.
\item Otherwise, if $H$ is a caterpillar, then $\Hom{H}$ is $\bis$-equivalent under AP-reductions.
\item Otherwise, $\Hom{H}$ is $\bis$-hard under AP-reductions.
\end{enumerate}
}
\end{thm}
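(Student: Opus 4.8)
The plan is to deduce Theorem~\ref{thm:HomTree} from the two classifications already in hand: the list-homomorphism trichotomy (Theorem~\ref{thm:LHomTricho}) and the $\bis$-hardness dichotomy (Theorem~\ref{thm:HomBIS}). Beyond those theorems, the only ingredients are two elementary facts about trees. \textbf{(a)} A complete bipartite graph that contains no cycle is a star; hence, among irreflexive trees, ``$H$ is not a star'' is the same as ``$H$ is not complete bipartite''. \textbf{(b)} Every caterpillar is a bipartite permutation graph --- in fact it is well known that a tree is a bipartite permutation graph if and only if it is a caterpillar (a non-caterpillar tree contains the spider $J_3$, whose three degree-$1$ vertices form an asteroidal triple, so it is not a permutation graph; and caterpillars are easily seen to be permutation graphs). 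Since an irreflexive tree is connected, Theorems~\ref{thm:LHomTricho} and~\ref{thm:HomBIS} apply to $H$ directly, and we consider the three cases in turn.

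For part~(i), if $H$ is a star then it is an irreflexive complete bipartite graph, and counting homomorphisms to such a graph is exactly solvable in polynomial time --- this is the tractable case of the exact-counting dichotomy of Dyer and Greenhill~\cite{DG2000}, and it can also be checked directly, since $\hom{G}{H}$ factors over the connected components of $G$ into quantities (determined by the proper $2$-colourings of the component and the choices of leaves) that are computable in polynomial time. Hence $\Hom{H}\in\FP$.

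Now suppose $H$ is not a star; I claim $\bis\leap\Hom{H}$. By fact~\textbf{(a)}, $H$ is not an irreflexive complete bipartite graph, and since $H$ is irreflexive it is not a reflexive complete graph either, so $H$ avoids both tractable cases of Theorem~\ref{thm:HomBIS}; that theorem therefore yields $\bis\leap\Hom{H}$. This already establishes part~(iii) in full, and it is the hardness half of part~(ii). For the easiness half of part~(ii), assume in addition that $H$ is a caterpillar. By fact~\textbf{(b)}, $H$ is an irreflexive bipartite permutation graph; by~\textbf{(a)} it is not complete bipartite, and being irreflexive it is not a reflexive proper interval graph. So $H$ lies in case~(ii) of Theorem~\ref{thm:LHomTricho}, giving $\LHom{H}\eqap\bis$, hence $\LHom{H}\leap\bis$. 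Combining this with $\Hom{H}\leap\LHom{H}$ (the trivial reduction setting every list to $V(H)$, noted in Observation~\ref{obs:HomToRetToLHom}) and the transitivity of AP-reductions gives $\Hom{H}\leap\bis$; together with the hardness direction this yields $\Hom{H}\eqap\bis$, completing part~(ii).

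I do not anticipate any genuine obstacle: once Theorems~\ref{thm:LHomTricho} and~\ref{thm:HomBIS} are granted, everything is bookkeeping with AP-reductions. The one point that must be handled with care is matching the graph classes --- verifying that for irreflexive trees the partition into \{star\}, \{non-star caterpillar\}, \{all other trees\} coincides with the $\FP$ / $\bis$-equivalent / $\bis$-hard partition produced by the two theorems --- and this is precisely what facts~\textbf{(a)} and~\textbf{(b)} provide; of these, \textbf{(b)} is the only non-trivial (though standard) structural input, and only its ``if'' direction is actually needed for the proof.
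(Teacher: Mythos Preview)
Your proposal is correct and mirrors exactly what the paper indicates: it states Theorem~\ref{thm:HomTree} as a cited result and remarks that it ``follows from Theorems~\ref{thm:LHomTricho} and~\ref{thm:HomBIS}'', which is precisely the derivation you spell out. The structural fact that a caterpillar is a bipartite permutation graph is also invoked by the paper (with the same citation to~\cite{Koehler1999}) in the proof of Theorem~\ref{thm:RetIrreflexive}, so your use of it is in line with the paper's own toolkit.
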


Note that, in general, for irreflexive trees $H$ that are neither stars nor caterpillars it is open whether approximately counting homomorphisms is $\bis$-equivalent, $\sat$-hard or even none of the two. It is only known that $\bis$ AP-reduces to $\Hom{H}$. However, there exist trees for which $\Hom{H}$ is $\sat$-equivalent with respect to AP-reductions (see~\cite[Section 5]{GJTreeHoms}).

The decision version of the retraction problem is formally defined as follows.\footnote{
The literature is slightly inconsistent in the sense that the decision problem $\DRet{H}$ is often defined without the restriction that $G$ is irreflexive.
It is easy to see that the two versions (with and without the restriction) are polynomial-time interreducible
since a looped vertex $v$ of $G$ with list $S_v$ can be replaced with an irreflexive clique of size $\abs{V(H)}+1$ (all of whose members have list $S_v$) without changing whether or not there is a homomorphism to $H$. Thus, results stated for one version apply to the other.}

\prob
{
$\DRet{H}$. 
}
{
An irreflexive graph $G$ and a collection of lists $\boldS=\{S_v\subseteq V(H) \mid v\in V(G)\}$ such that, for all $v\in V(G)$, $\abs{S_v}\in \{1,\abs{V(H)}\}$.
}
{
Is $\hom{(G,\boldS)}{H}$ positive?
}

$\DRet{H}$ is completely classified as a result of the recent proof of the CSP dichotomy conjecture~\cite{BulatovCSPDichotomy, ZhukCSPDichotomy}.  
However, these proofs do not give a graph-theoretical characterisation. 
Feder, Hell, Jonsson, Krokhin and Nordh~\cite[Corollary 4.2, Theorem 5.1]{FederPseudoForest} give the following graph-theoretical characterisation for pseudotrees, where a pseudotree is a graph with at most one cycle. A graph $H$ is called \emph{loop-connected} if, for every connected component $C$ of $H$, the looped vertices in $C$ induce a connected subgraph of $C$.

\begin{thm}[\cite{FederPseudoForest}]\label{thm:DRetPseudotree}
Let $H$ be a pseudotree. Then $\DRet{H}$ is 
$\NP$-complete if any of  the following hold:
\begin{itemize}
\item $H$ is not loop-connected,
\item $H$  contains a cycle of size at least $5$,
\item $H$  contains a  reflexive cycle of size $4$ or
\item $H$  contains an   irreflexive cycle of size $3$.
\end{itemize} 
Otherwise $\DRet{H}$ is  in $\P$.
\end{thm}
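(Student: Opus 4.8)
The plan is to prove the two directions of the dichotomy separately, reducing first to the case that $H$ is connected. A routine argument — a connected component of the input that meets the fixed copy of a component $C$ of $H$ must map entirely into $C$, whereas one disjoint from the fixed copy may be mapped into any component of $H$ — shows that $\DRet{H}$ is polynomial-time solvable if and only if $\DRet{C}$ is, for every connected component $C$ of $H$. Membership of $\DRet{H}$ in $\NP$ is immediate, so all of the content lies in the $\NP$-hardness and in the algorithms. A connected pseudotree is either a tree, with an arbitrary pattern of loops, or else consists of a single cycle $Z$ together with a forest of trees attached to $Z$; I use this case split throughout.

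For the hardness direction I would, for each of the four conditions, identify a small ``core'' graph $H'$ inside $H$ with $\DRet{H'}$ already known to be $\NP$-hard and transfer this hardness to $\DRet{H}$. For the irreflexive triangle take $H' = K_3$: $\DRet{K_3}$ contains $3$-colouring and so is $\NP$-hard. For a cycle of length at least $5$, or a reflexive $C_4$, take $H' = Z$: retraction to a reflexive cycle of length at least $4$, and to an irreflexive or partially looped cycle of length at least $5$, is $\NP$-hard — for irreflexive odd cycles this follows from the Hell--Ne\v{s}et\v{r}il dichotomy applied to $\DHom{Z}$, and the remaining loop patterns need a short additional gadget. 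If $H$ is not loop-connected, take $H'$ to be a path with looped endpoints and unlooped interior vertices, which must occur inside a component of $H$ that witnesses the failure of loop-connectedness; one shows $\DRet{H'}$ is $\NP$-hard either by a direct reduction (for instance from a restricted satisfiability or hypergraph $2$-colouring problem) or, invoking the later CSP dichotomy, by checking that $E(H')$ together with the singleton relations admits no weak near-unanimity polymorphism. The transfer step rests on the fact that $\DRet{H'} \le_p \DRet{H}$ whenever $H'$ is a retract of $H$; in a loop-connected unicyclic pseudotree one obtains such a retraction by folding the attached trees inward onto $Z$, folding a looped pendant branch onto a looped arc of $Z$ (loop-connectedness guarantees one exists). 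When $H'$ is not literally a retract of $H$ — for example when $H$ is an irreflexive triangle with a connected looped pendant subtree — one instead folds $H$ onto a minimal $\NP$-hard retract of it and runs the reduction there, or builds list-forcing gadgets by hand. Making all of these reductions and foldings cohere is one of the two principal obstacles.

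For the tractability direction, the connected pseudotrees surviving all four conditions are precisely: loop-connected trees, and loop-connected unicyclic graphs whose cycle $Z$ has length $3$ or $4$ and is neither an irreflexive triangle nor a reflexive square. For each of these families one must exhibit a polynomial-time algorithm for $\DRet{H}$. Reflexive trees, and the reflexive triangle, carry a median polymorphism — a ternary idempotent majority operation preserving the edge relation — so $\DRet{H}$ is solved by enforcing path consistency and reading off a solution. Irreflexive trees are more delicate, since their general list-homomorphism problem can already be $\NP$-hard, so one must exploit the one-or-all structure of the lists: after two-colouring the input and propagating the singleton lists, any vertex adjacent to two already-fixed vertices becomes fixed (two vertices of a tree have at most one common neighbour), and one argues that this propagation either determines the answer or leaves a loosely constrained forest that is always satisfiable. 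The small tractable cycles — and caterpillar-shaped trees — instead possess a ``min ordering'' of their vertices, yielding a polynomial algorithm by consistency propagation; and loop-connected trees with both looped and unlooped vertices are handled by combining the median-based and ordering-based methods across the interface between the looped and unlooped parts. Checking that every admissible unicyclic pseudotree really does support such an algorithm, via a finite case analysis over the cycle lengths $3$ and $4$ and their loop patterns, together with the explicit polymorphism and ordering constructions, is the second principal obstacle, and it constitutes the bulk of the work.
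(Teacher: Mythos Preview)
This theorem is not proved in the paper: it is quoted from Feder, Hell, Jonsson, Krokhin and Nordh~\cite{FederPseudoForest} (see the attribution in the theorem header and the surrounding text) and used as a black box in the proofs of Theorem~\ref{thm:RetIrreflexive} and Theorem~\ref{thm:RetMain}. There is therefore no ``paper's own proof'' to compare your proposal against.

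As a side remark on your sketch itself: the overall architecture --- reducing to connected components, proving hardness by identifying an $\NP$-hard retract and transferring via $\DRet{H'}\le_p\DRet{H}$, and proving tractability via conservative polymorphisms (median for reflexive trees, min-orderings for the small admissible cycles) --- is indeed the shape of the argument in~\cite{FederPseudoForest}. One caution: your tractability discussion for irreflexive trees is more elaborate than needed. Every irreflexive tree is vacuously loop-connected and contains no cycle, so the theorem asserts $\DRet{H}\in\P$ for \emph{all} irreflexive trees; this follows because irreflexive trees have tree duality for the retraction (one-or-all list) problem, not because of an ad hoc propagation argument. Conversely, your hardness discussion is a bit optimistic about the ``folding'' step: when $H$ is not loop-connected the putative hard path $H'$ need not be a retract of $H$, so the reduction $\DRet{H'}\le_p\DRet{H}$ genuinely requires the gadget constructions you allude to rather than a simple retraction.
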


Finally, the complexity of exactly counting retractions is completely classified. It is in $\FP$ if every connected component of $H$ is a reflexive complete graph or an irreflexive complete bipartite graph and $\numP$-complete otherwise~\cite{FGZ2017}.

\subsection{Preliminaries}\label{sec:Preliminaries}

For a positive integer $n$ let $[n]=\{1,\dots, n\}$.
As partially bristled reflexive paths appear in a number of our results we give a more formal definition of this class of graphs. We also give an example in Figure~\ref{fig:PBRP}.

\begin{defn}\label{def:PBRP}
A \emph{partially bristled reflexive path}
is a reflexive path, or a tree with the following form.
Let $Q$ be a positive integer and let $S$ be a non-empty subset of $[Q]$.
Then $V(H) = \{c_0,\ldots,c_{Q+1}\} \cup \bigcup_{i\in S} \{g_i\}$
and $E(H) = \bigcup_{i=0}^{Q} \{c_i,c_{i+1}\} \cup \bigcup_{i=0}^{Q+1} \{c_i,c_i\} \cup \bigcup_{i\in S} \{c_i,g_i\}$.
\end{defn}

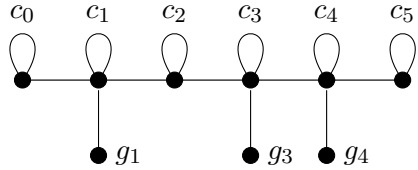
\begin{figure}[h!]\centering
{\def\scaleFactor{1}
\begin{tikzpicture}[scale=1, every loop/.style={min distance=10mm,looseness=10}]

			\filldraw (0,0) node(c0){} circle[radius=3pt] --++ (0:1cm) node(c1){} circle[radius=3pt] --++ (0:1cm) node(c2){} circle[radius=3pt] --++ (0:1cm) node(c3){} circle[radius=3pt] --++ (0:1cm) node(c4){} circle[radius=3pt] --++ (0:1cm) node(c5){} circle[radius=3pt];		
			\filldraw (c1) --++ (-90:1cm) node(g1){} circle[radius=3pt];
			\filldraw (c3) --++ (-90:1cm) node(g3){} circle[radius=3pt];
			\filldraw (c4) --++ (-90:1cm) node(g4){} circle[radius=3pt];
			
			\path[-] (c0.center) edge  [in=125,out=55,loop] node {} ();
			\path[-] (c1.center) edge  [in=125,out=55,loop] node {} ();
			\path[-] (c2.center) edge  [in=125,out=55,loop] node {} ();
			\path[-] (c3.center) edge  [in=125,out=55,loop] node {} ();		
			\path[-] (c4.center) edge  [in=125,out=55,loop] node {} ();
			\path[-] (c5.center) edge  [in=125,out=55,loop] node {} ();
	
			\node[above=.5cm of c0]{$c_0$};
			\node[above=.5cm of c1]{$c_1$};
			\node[above=.5cm of c2]{$c_2$};
			\node[above=.5cm of c3]{$c_3$};
			\node[above=.5cm of c4]{$c_4$};
			\node[above=.5cm of c5]{$c_5$};
			
			\node at ($(g1)+(.4cm,0)$) {$g_1$};
			\node at ($(g3)+(.4cm,0)$) {$g_3$};
			\node at ($(g4)+(.4cm,0)$) {$g_4$};
					
\end{tikzpicture}
}
\caption{Partially bristled reflexive path with $Q=4$ and $S=\{1,3,4\}$.}
\label{fig:PBRP}
\end{figure}

For a graph $H$ and a vertex $u\in V(H)$ we define the \emph{(distance-$1$) neighbourhood of $u$} as $\Gamma(u)=\{v\in V(H) \mid \{v,u\} \in E(H)\}$. Similarly, the \emph{distance-$2$ neighbourhood of $u$} is defined as $\Gamma^2(u)=\{v\in V(H) \mid \exists w \in V(H) : \{v,w\}, \{w,u\} \in E(H)\}$. Let $U$ be a subset of $V(H)$. Then $\Gamma(U)= \bigcap_{u\in U} \Gamma(u)$ is the \emph{set of common neighbours} of the vertices in $U$. 
The set of vertices that have a neighbour in $S$ 
is denoted 
by $\Phi(S)=\bigcup_{v\in S} \Gamma(v)$.

We will also use the concept of \emph{induced graphs}. Given a subset $U$ of $V(H)$, the graph $H[U]=(U,\{\{u_1,u_2\}\in E(H) \mid u_1,u_2\in U\})$ is called \emph{the subgraph of $H$ induced by $U$}. 
The graph
$H'=(V',E')$ is a \emph{subgraph} of $H=(V,E)$ if $V'\subseteq V$ and $E'\subseteq E$.

For the complexity theory part of our work we restate some standard definitions taken from~\cite[Definitions 11.1, 11.2, Exercise 11.3]{Mitzenmacher2017}.
A randomised algorithm gives an \emph{$(\epsilon,\delta)$-approximation}
for the value~$V$ if the output~$X$ of the algorithm satisfies $\Pr(|X-V| \leq \epsilon V) \geq 1-\delta$.
Slightly overloading the notation, an \emph{$(\epsilon,\delta)$-approximation}
for a problem $V$ is a randomised algorithm which, given an input~$x$ and parameters 
$\epsilon,\ \delta \in (0,1)$, outputs an $(\epsilon,\delta)$-approximation for~$V(x)$. A \emph{randomised approximation scheme} (RAS) for a problem $V$ is a $(\epsilon,1/4)$-approximation of $V$. A RAS is called \emph{fully polynomial} (FPRAS) if it runs in time that is polynomial in~$1/\epsilon$ and the size of the input~$x$.

Intuitively, an \emph{approximation-preserving reduction} (AP-reduction) from a problem~$A$ to a problem~$B$ is an algorithm that yields an FPRAS for~$A$ if it has access to an FPRAS for~$B$. We state the technical definition  from~\cite{DGGJApprox}. An approximation-preserving reduction from a problem~$A$ to a problem~$B$ is a probabilistic oracle Turing machine $M$ which takes as input an instance $x$ of $A$ and a parameter $\eps\in (0,1)$, and satisfies the following three properties: 1) Every oracle call made by $M$ is of the form $(y, \delta)$, where $y$ is an instance of $B$ and $\delta\in(0,1)$ with $1/\delta\in \text{poly}(\abs{x},1/\eps)$ specifies the precision of approximation. 2) The Turing machine $M$ is a RAS for $A$ whenever the oracle is a RAS for $B$. 3) The runtime of $M$ is polynomial in $\abs{x}$ and $1/\eps$.

Sometimes it is useful to switch between different notions of accuracy. To this end we use the following observation which follows immediately from the Taylor expansion of the exponential function.
\begin{obs}\label{obs:accuracy}
Let $\eps$ be in $(0,1)$. Then
$1+\eps\le e^\eps\le 1+2\eps$ and $1-\eps\le e^{-\eps}\le 1-\eps/2$.
\end{obs}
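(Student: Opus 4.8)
The plan is to derive all four inequalities from the Taylor series $e^x = \sum_{k\ge 0} x^k/k!$, which converges for every real $x$; no machinery beyond elementary series manipulation is needed.

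First I would dispatch the two bounds involving $e^\eps$. Since all terms with $k\ge 2$ in $\sum_{k\ge 0}\eps^k/k!$ are positive, we immediately get $e^\eps = 1 + \eps + \sum_{k\ge 2}\eps^k/k! \ge 1+\eps$. For the upper bound, I would write $e^\eps - 1 - \eps = \sum_{k\ge 2}\eps^k/k! \le \eps^2 \sum_{k\ge 2} 1/k! = (e-2)\eps^2 < \eps^2 \le \eps$, where the last inequality uses $\eps\in(0,1)$; rearranging gives $e^\eps \le 1 + 2\eps$.

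Next I would handle $e^{-\eps}$ using the alternating-series truncation bound. Evaluating the series at $-\eps$ yields $e^{-\eps} = \sum_{k\ge 0}(-\eps)^k/k!$, and the absolute values $\eps^k/k!$ of the terms are strictly decreasing in $k$ for $\eps\in(0,1)$ because $\eps^{k+1}/(k+1)! < \eps^k/k!$ whenever $\eps < k+1$. Hence the odd partial sums are lower bounds and the even partial sums are upper bounds, giving $1 - \eps \le e^{-\eps} \le 1 - \eps + \eps^2/2$. Since $\eps^2 \le \eps$ on $(0,1)$, the right-hand side is at most $1 - \eps + \eps/2 = 1 - \eps/2$, which completes the chain $1-\eps \le e^{-\eps} \le 1-\eps/2$.

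There is essentially no obstacle in this argument; the only point that warrants a line of justification is the alternating-series truncation for $e^{-\eps}$, namely the monotonicity of $\eps^k/k!$ in $k$, which as noted is immediate once $\eps < 1$.
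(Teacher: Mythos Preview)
Your proof is correct and follows exactly the approach the paper indicates: the paper states only that the observation ``follows immediately from the Taylor expansion of the exponential function'' without giving any details, and your argument is a careful expansion of precisely that idea.
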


We conclude this section with some simple remarks regarding the connectivity of graphs when investigating the complexity of counting retractions. We will show that in the context of approximately counting retractions we can restrict to connected graph without loss of generality. We define the following problem which restricts the input to connected graphs.

\prob
{
$\cOALHom{H}$.
}
{
An irreflexive \emph{connected} graph $G$ and a collection of lists $\boldS=\{ S_v\subseteq  V(H) \mid v\in V(G) \}$
such that, for all $v\in V(G)$, $|S_v| \in \{1, |V(H)| \}$.
}
{
$\hom{(G,\boldS)}{H}$.
}

The following observation is well known.

\begin{obs}\label{obs:connectedOALHom}
Let $H$ be a graph. Then $\OALHom{H} \eqap \cOALHom{H}$.
\end{obs}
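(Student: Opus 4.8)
The plan is to prove both directions of the approximation-equivalence $\OALHom{H} \eqap \cOALHom{H}$. The direction $\cOALHom{H} \leap \OALHom{H}$ is immediate, since an instance of $\cOALHom{H}$ is already an instance of $\OALHom{H}$ (the identity reduction works). The substantive direction is $\OALHom{H} \leap \cOALHom{H}$, where we must handle an input $(G,\boldS)$ with $G$ possibly disconnected. First I would write $G$ as the disjoint union of its connected components $G_1,\ldots,G_m$, and observe the multiplicativity $\hom{(G,\boldS)}{H} = \prod_{j=1}^m \hom{(G_j,\boldS_j)}{H}$, where $\boldS_j$ is the restriction of $\boldS$ to $V(G_j)$. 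Each $(G_j,\boldS_j)$ is a valid instance of $\cOALHom{H}$ (note the list-size condition $|S_v|\in\{1,|V(H)|\}$ is inherited componentwise), so the reduction calls the oracle once per component and multiplies the results.

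The only real issue is error propagation: multiplying $m$ approximations, each with relative error roughly $\delta'$, yields relative error roughly $m\delta'$ in the worst case, so to obtain an overall $(\epsilon, 1/4)$-approximation I would call the oracle on each $(G_j,\boldS_j)$ with precision parameter $\delta' = \epsilon/(3m)$ (say) and confidence boosted so that the union bound over the $m$ calls keeps the total failure probability below $1/4$. Using Observation~\ref{obs:accuracy} to pass between multiplicative errors and exponential-scale errors makes this bookkeeping clean: if each oracle output $X_j$ satisfies $e^{-\epsilon/(3m)}\,\hom{(G_j,\boldS_j)}{H} \le X_j \le e^{\epsilon/(3m)}\,\hom{(G_j,\boldS_j)}{H}$, then $\prod_j X_j$ lies within a factor $e^{\epsilon/3}$ of the true product, which is within $(1\pm\epsilon)$ by Observation~\ref{obs:accuracy}. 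Since $m \le |V(G)|$ and $1/\delta' \in \mathrm{poly}(|x|,1/\epsilon)$, this respects the definition of an AP-reduction; amplification of the per-call confidence costs only a logarithmic factor and keeps the running time polynomial.

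One small technical point to address is the definition of an FPRAS/RAS as an $(\epsilon,1/4)$-approximation rather than a general $(\epsilon,\delta)$-approximation: a RAS only guarantees confidence $3/4$, so the reduction should first use the standard median-of-$O(\log(m/\epsilon))$-independent-trials trick to turn each oracle call into a high-confidence estimate before combining, so that the union bound over the $m$ components still leaves failure probability at most $1/4$. The main obstacle — such as it is — is purely this error-budget accounting; there is no combinatorial difficulty, which is why the statement is flagged as "well known." I would present the argument at the level of: decompose into components, invoke multiplicativity, set the precision and confidence parameters, and verify the three conditions in the definition of an AP-reduction.
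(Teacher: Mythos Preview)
Your proposal is correct and takes essentially the same approach as the paper: decompose $G$ into connected components, use multiplicativity of $\hom{(G,\boldS)}{H}$ over components, call the oracle on each component with precision $\epsilon/k$ (the paper) or $\epsilon/(3m)$ (you), and multiply. You are slightly more careful than the paper in explicitly addressing the accumulation of failure probabilities via the median trick; the paper's proof simply asserts the product has the desired precision, implicitly relying on this standard boosting, but the substance is identical.
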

\begin{proof}
The fact that $\OALHom{H} \geap \cOALHom{H}$ is trivial. We now show that $\OALHom{H} \leap \cOALHom{H}$.
Let $(G,\boldS)$ be an instance of $\OALHom{H}$ and let $\eps\in (0,1)$ be the desired precision.  Let $C_1, \dots, C_k$ be the connected components of $G$. For each $i\in [k]$, let $\boldS_i = \{S_v \mid v\in V(C_i)\}$. Then $\hom{(G,\boldS)}{H} = \prod_{i=1}^k \hom{(C_i, \boldS_i)}{H}$. The algorithm which, for each $i\in [k]$, makes a $\cOALHom{H}$ oracle call with precision $\delta=\eps/k$ and input $(C_i, \boldS_i)$, and returns the product of outputs,   approximates $\hom{(G,\boldS)}{H}$ with the desired precision.
\end{proof}

\begin{remark}\label{rem:Connectivity}
Let $H$ be a graph with connected components $H_1, \dots, H_k$ and let $(G,\boldS)$ be an input to $\cOALHom{H}$. For $j\in [k]$ let $S^j_v = S_v \cap V(H_j)$ and let $\boldS^j=\{S^j_v \mid v\in V(G)\}$. Then, as $G$ is connected, it holds that
\[
\hom{(G, \boldS)}{H}= \sum_{j\in [k]} \hom{(G,\boldS^j)}{H_j}.
\]
Therefore, given an oracle for $\cOALHom{H_j}$ for each $j\in[k]$, we obtain an algorithm for $\cOALHom{H}$. By Observation~\ref{obs:connectedOALHom} this means that given an oracle for $\OALHom{H_j}$ for each $j\in[k]$, we obtain an algorithm for $\OALHom{H}$.

In the opposite direction, it is straightforward to see that for each $j\in [k]$ we have $\cOALHom{H_j}\leap \cOALHom{H}$ (and therefore  $\Ret{H_j}\leap \Ret{H}$). The details are as follows: Let $(G,\boldS)$ be an input to $\cOALHom{H_j}$. If all lists in $\boldS$ have size $1$, computing $\hom{(G,\boldS)}{H_j}$ is trivial. Otherwise we fix some vertex $v\in V(G)$ with $S_v=V(H_j)$. For each $u\in V(H)$ and $w\in V(G)$ we define 
\[
S^u_w = 
\begin{cases}
\{u\}, &\text{if $w=v$}\\
S_w, &\text{if $\abs{S_w}=1$}\\
V(H), &\text{otherwise}.
\end{cases}
\]
and $\boldS^u=\{S^u_w \mid w\in V(G)\}$. As $G$ is connected, a homomorphism from $G$ to $H$ maps all vertices of $G$ to the same connected component of $H$. Therefore, $\hom{(G,\boldS)}{H_j}=\sum_{u\in V(H_j)} \hom{(G,\boldS^u)}{H}$. This shows that $\abs{V(H_j)}$ calls to a $\cOALHom{H}$ oracle are sufficient to approximate the number of retractions to a component $H_j$ of $H$.
\end{remark}

\section{Approximately Counting Retractions to Graphs without short Cycles}
First we study the complexity of approximately counting retractions to graphs of girth at least $5$. We start off by restricting to irreflexive graphs in Section~\ref{sec:IrreflTrees}. The corresponding classification (Theorem~\ref{thm:RetIrreflexive}) is for irreflexive square-free graphs. Subsequently, in Section~\ref{sec:MixedTrees}, we consider graphs that have at least one loop. 

\subsection{Irreflexive Square-free Graphs}\label{sec:IrreflTrees} 

The goal of this section is to prove Theorem~\ref{thm:RetIrreflexive}.  The most difficult part is 
Lemma~\ref{lem:SquareFreeHardness}, which shows that,
if $H$ is a square-free graph containing an induced~$J_3$, then $\sat \leap \Ret{H}$.

The proof of Lemma~\ref{lem:SquareFreeHardness}
 generalises ideas from the proof of Lemma~3.6 of \cite{GJTreeHoms},
 so we start with some definitions from there. A \emph{multiterminal cut} of a graph $G$ with distinguished vertices $\alpha$, $\beta$ and $\gamma$ (called \emph{terminals}) is a set of edges $E'\subseteq E(G)$ that disconnects the terminals (i.e. ensures that there is no path in $(V(G),E(G)\setminus E')$ that connects any two distinct terminals). The \emph{size} of a multiterminal cut is its cardinality. 
 We consider the following computational problem.

\prob
{
$\TCut{3}$.
}
{
A connected irreflexive graph $G$ with $3$ distinct terminals $\alpha,\ \beta,\ \gamma \in V(G)$ and a positive integer $B$. The input has the property that every multiterminal cut has size at least $B$.
}
{
The number of size-$B$ multiterminal cuts of $G$ with terminals $\alpha$, $\beta$ and $\gamma$.
}

\begin{figure*}[ht]
	\centering
	
	\begin{minipage}[t]{0.45\textwidth}
		\centering
		\begin{tikzpicture}[scale=1, baseline=0.36cm, every loop/.style={min distance=10mm,looseness=10}]

\filldraw (0,0) node(w){} circle[radius=3pt] --++ (210:1cm) node(x0){} circle[radius=3pt] --++ (210:1cm) node(x1){} circle[radius=3pt];	
\filldraw (w.center) --++ (330:1cm) node(y0){} circle[radius=3pt] --++ (330:1cm) node(y1){} circle[radius=3pt];
\filldraw (w.center) --++ (90:1cm) node(z0){} circle[radius=3pt] --++ (90:1cm) node(z1){} circle[radius=3pt];	
	
			\node at ($(.3cm,.3cm)$) {$w$};
			\node at ($(z0) +(.3cm,.3cm)$) {$z_0$};
			\node at ($(z1) +(.3cm,.3cm)$) {$z_1$};
			\node at ($(x0) +(-.3cm,.3cm)$) {$x_0$};
			\node at ($(x1) +(-.3cm,.3cm)$) {$x_1$};
			\node at ($(y0) +(.3cm,.3cm)$) {$y_0$};
			\node at ($(y1) +(.3cm,.3cm)$) {$y_1$};
			
		\end{tikzpicture}
	\end{minipage}%
	\caption{The graph $J_3$.}
\label{fig:J3}
\end{figure*}
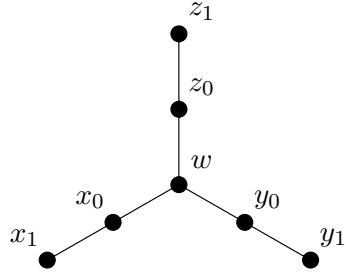

For motivation we consider the case where $H$ is a tree. Suppose that $H$ is an irreflexive tree with an induced~$J_3$, labelled as in Figure~\ref{fig:J3}.
Lemma~3.6 of \cite{GJTreeHoms} gives an AP-reduction
from $\TCut{3}$ to the problem of counting ``weighted'' homomorphisms to $H$.
In fact, the weights used in the proof are Boolean values, so the proof
actually reduces $\TCut{3}$ to the problem of counting list homomorphisms to $H$.
Given an input $G, \alpha, \beta, \gamma$ of $\TCut{3}$, 
 a homomorphism instance is created in which lists ensure that the terminals $\alpha$, $\beta$ and $\gamma$ are mapped to the vertices $x_0$, $y_0$ and $z_0$ of $J_3$, respectively. 
 Lists also ensure  that all other vertices of $G$ are mapped to $\{x_0,y_0,z_0\}$.
 Finally, lists ensure that all remaining vertices of the homomorphism
 instance are mapped to the vertices $\{w,x_1,y_1,z_1\}$ of $J_3$. 
 Our proof shows how to refine the gadgets so that lists have size~$1$ or size~$|V(H)|$.  
 Thus, our reduction is to the more refined problem $\Ret{H}$.
 We also show how to handle graphs~$H$ that are not trees (as long as they are square-free). 
 The details are given in the proof of Lemma~\ref{lem:SquareFreeHardness}. We use the following technical lemma,
 known as Dirichlet's approximation lemma, which 
 bounds the extent to which reals can be approximated by integers. Using this lemma is   a standard technique   in this line of research (see for instance~\cite{GGJBIS}).

\begin{lem}[{\cite[p. 34]{Schmidt1991}}] \label{lem:Dirichlet}
Let $\lambda_1, \dots, \lambda_d > 0$ be real numbers and $N$ be a natural number. Then there exist positive integers $p_1, \dots, p_d, r$ with $r \le N$ such that $\abs{r\lambda_i - p_i}\le 1/N^{1/d}$ for every $i\in [d].$
\end{lem}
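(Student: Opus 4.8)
The statement is the classical (simultaneous) Dirichlet approximation theorem, so the plan is simply to reproduce one of its two standard proofs; I would lead with the pigeonhole argument. For each $r \in \{0,1,\ldots,N\}$ form the point $P_r = (\{r\lambda_1\},\ldots,\{r\lambda_d\}) \in [0,1)^d$, where $\{x\}$ denotes the fractional part of $x$; this gives $N+1$ points in the half-open unit cube. Set $K = \lfloor N^{1/d}\rfloor$ and partition $[0,1)^d$ into $K^d$ axis-aligned sub-boxes by splitting each coordinate into $K$ equal half-open intervals of length $1/K$. Since $K^d \le N$, there are at most $N$ boxes, so by the pigeonhole principle two of the $N+1$ points, say $P_a$ and $P_b$ with $0 \le a < b \le N$, lie in the same box.

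Given such a collision, I would put $r = b-a$, so that $1 \le r \le N$, and $p_i = \lfloor b\lambda_i\rfloor - \lfloor a\lambda_i\rfloor$ for each $i \in [d]$. A one-line computation gives $r\lambda_i - p_i = \{b\lambda_i\} - \{a\lambda_i\}$, and since $P_a$ and $P_b$ share a box of side $1/K$ this quantity has absolute value at most $1/K$. As $\lambda_i > 0$ and $r \ge 1$ we have $r\lambda_i > 0$, so $p_i$ can be taken to be a positive integer (if $p_i$ happens to be $0$, replace it by the nearest positive integer to $r\lambda_i$, which perturbs the estimate harmlessly). Note that this elementary argument literally yields $\abs{r\lambda_i - p_i} < 1/\lfloor N^{1/d}\rfloor$, i.e.\ the statement up to the standard tightening of the constant.

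To obtain the constant $N^{-1/d}$ exactly — which is the route I would actually write up — I would instead invoke Minkowski's convex body theorem, applied to the symmetric convex set $\{(x_0,x_1,\ldots,x_d) \in \mathbb{R}^{d+1} : \abs{x_0} \le N,\ \abs{x_i - \lambda_i x_0} \le N^{-1/d}\text{ for all }i \in [d]\}$. This set is a parallelepiped obtained from a box by a unimodular shear, so its volume is exactly $(2N)\cdot(2N^{-1/d})^d = 2^{d+1}$; Minkowski's theorem, in the version for closed bounded symmetric convex bodies of volume at least $2^{d+1}$, then produces a nonzero integer point $(r,p_1,\ldots,p_d)$. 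One cannot have $r = 0$, since that would force $\abs{p_i} \le N^{-1/d} < 1$ and hence $p_i = 0$ for all $i$ (the degenerate case $N = 1$ being trivial separately); negating if necessary yields $1 \le r \le N$ with $\abs{r\lambda_i - p_i} \le N^{-1/d}$, and the same harmless adjustment makes each $p_i$ positive. There is no genuine obstacle here — the whole thing is a textbook fact, which is why in the paper it is merely cited to~\cite{Schmidt1991} rather than proved — and the only point needing a little care is exactly this matching of the constant and the positivity convention on the $p_i$.
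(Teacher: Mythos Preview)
The paper does not prove this lemma; it is simply cited from~\cite{Schmidt1991} as a classical result, exactly as you anticipate in your final sentence. Your reproduction of the standard pigeonhole and Minkowski arguments is correct. The one point that deserves more care than ``harmless adjustment'' is the positivity of the $p_i$: if the Minkowski argument yields $p_i = 0$ (which happens precisely when $r\lambda_i \le N^{-1/d}$), then replacing $p_i$ by $1$ gives an error $1 - r\lambda_i$, which is not in general bounded by $N^{-1/d}$. In the paper's only application (Lemma~\ref{lem:SquareFreeHardness}) the $\lambda_i = \log_{d}(2^s)$ are large, so $r\lambda_i \ge 1$ and the issue does not arise; but as a standalone statement the positivity clause either needs a mild hypothesis on the $\lambda_i$ or should be relaxed to $p_i \in \mathbb{Z}$.
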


Using Lemma~\ref{lem:Dirichlet}, we can prove our main lemma.

\begin{lem}\label{lem:SquareFreeHardness} Let $H$ be a square-free graph that contains an induced $J_3$. Then $\sat \leap\Ret{H}$.
\end{lem}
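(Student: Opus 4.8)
The plan is to reduce from $\TCut{3}$, which is known to be $\sat$-hard (this is the starting point used in \cite{GJTreeHoms}), by refining the list-homomorphism gadgets of Lemma~3.6 of \cite{GJTreeHoms} so that every list has size $1$ or size $|V(H)|$, i.e.\ so that the instance produced is a legal input to $\Ret{H}$ rather than merely to $\LHom{H}$. Fix an induced copy of $J_3$ in $H$ labelled as in Figure~\ref{fig:J3}, with centre $w$, ``inner'' vertices $x_0,y_0,z_0$ (the neighbours of $w$) and ``outer'' vertices $x_1,y_1,z_1$. Given an input $G,\alpha,\beta,\gamma,B$ of $\TCut{3}$, the idea is to build a graph $G'$ (together with lists of size $1$ or $|V(H)|$) such that homomorphisms from $(G',\boldS)$ to $H$ correspond, with known multiplicities, to multiterminal cuts of $G$; the number of size-$B$ cuts is then recovered by a Dirichlet-type interpolation argument as in \cite{GGJBIS}, using Lemma~\ref{lem:Dirichlet} to separate the contributions of cuts of different sizes.

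The heart of the construction is replacing each ``forced'' singleton-list vertex or ``forced into a small set'' vertex in the \cite{GJTreeHoms} gadget by a small rigid subgraph attached to $G'$, each of whose vertices genuinely has list of size $1$ or $|V(H)|$. First I would pin the three terminals using size-$1$ lists: $S_\alpha=\{x_0\}$, $S_\beta=\{y_0\}$, $S_\gamma=\{z_0\}$. The key gadget is one that forces an internal vertex of $G$ to take a value in $\{x_0,y_0,z_0\}$ while only using size-$1$ and size-$|V(H)|$ lists; the natural device is to attach, via a path of appropriate parity, a vertex with singleton list $\{w\}$, exploiting that in a square-free graph the common neighbourhood of $w$ together with distance constraints confines a vertex to $\Gamma(w)\supseteq\{x_0,y_0,z_0\}$ — and here square-freeness is exactly what prevents ``extra'' common neighbours and stray short cycles from corrupting the intended structure. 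One then needs a second gadget that, for each edge $\{u,v\}$ of $G$, contributes a homomorphism weight of the form $\mu$ if $u,v$ receive the same label in $\{x_0,y_0,z_0\}$ and $1$ (or some fixed smaller value) if they receive different labels, so that the total count over a fixed cut $E'$ of $G$ is a monomial $c\,\mu^{\,|E|-|E'|}$ in $\mu$; this is achieved by attaching, to each edge, a small gadget whose ``outer'' vertices live in $\{w,x_1,y_1,z_1\}$ and whose number of internal completions depends on whether the two endpoints agree. Varying the effective parameter (by taking disjoint copies / series constructions of the edge gadget, as is standard) lets one write the $\Ret{H}$ output as a polynomial in that parameter whose coefficients are the cut-size counts, and Lemma~\ref{lem:Dirichlet} handles the fact that the attainable parameter values are integers.

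The steps in order: (1) fix the induced $J_3$ and record, using square-freeness, the precise local structure around $w$ and around $x_0,y_0,z_0$ (in particular that any vertex at distance $\le 2$ from $w$ with the relevant neighbour constraints is confined to a controlled set, and that $x_1,y_1,z_1$ are distinguished from $w$'s other neighbours); (2) design and verify the ``confinement gadget'' forcing an internal vertex into $\{x_0,y_0,z_0\}$; (3) design and verify the ``edge gadget'' whose completion count depends only on agreement/disagreement of its two endpoint labels, and arrange a tunable parameter; (4) assemble $G'$ from $G$, the terminal pins, a confinement gadget per non-terminal vertex, and an edge gadget per edge of $G$, and show $\hom{(G',\boldS)}{H}=\sum_{E'}(\text{contribution})$ decomposes by multiterminal cut; (5) classify homomorphisms by \emph{type} (which label each region of $G$ uses, and how the gadget internals behave) to prove there are no unintended homomorphisms — this bookkeeping, together with square-freeness rulings-out, is where almost all the work lies; (6) interpolate via Lemma~\ref{lem:Dirichlet} to extract the number of size-$B$ cuts, giving the AP-reduction $\TCut{3}\leap\Ret{H}$ and hence $\sat\leap\Ret{H}$.

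The main obstacle I expect is step (5) combined with step (2)–(3): ensuring that the gadgets are \emph{rigid} — that with the size-$1$/size-$|V(H)|$ list restriction there are no ``leakage'' homomorphisms that send gadget vertices to unexpected parts of $H$ (which could happen because $H$ may be much larger than $J_3$ and may have complicated structure away from the induced copy). In the \cite{GJTreeHoms} setting, small explicit lists kill all such leakage for free; without that luxury one must instead use parity of attachment paths, the vertex $w$ as an ``anchor'', and square-freeness (no triangles, no squares) to argue that any homomorphism respecting the singleton anchors is forced, component by component, into the intended sets. Getting a clean, checkable such argument — likely by a careful case analysis on the image of a single anchored vertex and propagating outward — is the crux; everything downstream (the polynomial identity and the Dirichlet interpolation) is then routine and parallels \cite{GGJBIS,GJTreeHoms}.
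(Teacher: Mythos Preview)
Your overall strategy matches the paper's: reduce from $\TCut{3}$, pin the terminals to $x_0,y_0,z_0$ and an auxiliary vertex to $w$, build an edge gadget, and use Dirichlet. But you are making the construction substantially harder than it needs to be, and a few details are off.

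First, no ``path of appropriate parity'' or elaborate confinement gadget is needed. The paper simply introduces one extra vertex $\omega$ with singleton list $\{w\}$ and makes it adjacent to \emph{every} vertex of $G$. This immediately forces every vertex of $G$ into $\Gamma(w)$, which contains but may strictly contain $\{x_0,y_0,z_0\}$. There is no attempt to forbid the ``extra'' neighbours of $w$: homomorphisms that use them are allowed, but the edge gadget ensures such edges contribute only a factor of~$1$ and are swamped by the dominant term. So the leakage worry you flag in step~(5) is resolved by accounting rather than by prevention, and the type classification collapses to a one-line use of square-freeness.

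Second, the vertices $x_1,y_1,z_1$ play no role whatsoever in the paper's gadget. The edge gadget for $e=\{u,v\}$ consists of $s_\alpha+s_\beta+s_\gamma$ fresh vertices (all with full list $V(H)$), each adjacent to both $u$ and $v$ and to exactly one of the pinned terminals $\alpha,\beta,\gamma$. Square-freeness is used in precisely one way: any two distinct vertices of $\Gamma(w)$ have $w$ as their \emph{unique} common neighbour, so a gadget vertex attached to $\alpha$ is forced to $w$ unless $h(u)=h(v)=x_0$, in which case it ranges over the $d_{x_0}$ neighbours of $x_0$ (and similarly for $\beta,\gamma$).

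Third, the role of Lemma~\ref{lem:Dirichlet} is not to ``separate cuts of different sizes'' --- that comes for free once the per-monochromatic-edge weight is large --- but to \emph{equalise} the a priori different weights $d_{x_0}^{\,s_\alpha}$, $d_{y_0}^{\,s_\beta}$, $d_{z_0}^{\,s_\gamma}$. One chooses integers $s_\alpha,s_\beta,s_\gamma$ so that these three quantities are all approximately a common large value $2^{sr}$; then each monochromatic edge contributes the same factor regardless of colour, the total count depends only on $|E(G)|-|E'|$, and the size-$B$ cuts dominate. With these simplifications your plan goes through.
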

\begin{proof} Suppose that $H$ is a square-free graph with $q$ vertices and an induced $J_3$, which we label as shown in Figure~\ref{fig:J3}. 
The problem $\TCut{3}$ is shown 
in~\cite[Lemma 3.5]{GJTreeHoms} to be equivalent to $\sat$ with respect to AP-reductions.
We will give a reduction   from $\TCut{3}$ to $\Ret{H}$.

Let $G$, $\alpha$, $\beta$, $\gamma$, $B$ be an instance of $\TCut{3}$ with $n=\abs{V(G)}$ and let $\eps$ be an error bound in $(0,1)$. From this instance we construct an input $(J,\boldS)$ to $\Ret{H}$ as follows. 
Each of the terminals $\alpha$, $\beta$ and $\gamma$ will be vertices of $J$.
$J$ will also have a vertex~$\omega$ which is distinct from $\alpha$, $\beta$ and $\gamma$.
Let $\sx$, $\sy$ and $\sz$ be positive integers (we will give their precise values later).
 For every edge $e=\{u,v\}\in E(G)$ we define the set of vertices 
\[
V'(e)=\{(e,\alpha,1),\dots,(e,\alpha,\sx),(e,\beta,1),\dots,(e,\beta,\sy),(e,\gamma,1),\dots,(e,\gamma,\sz)\}.
\] 
The label ``$\alpha$'' in the name of the vertex $(e,\alpha,i)$ indicates that, in the instance $(J,\boldS)$, 
this vertex will be adjacent to~$\alpha$. The labels ``$\beta$'' and ``$\gamma$'' are similar.
The label ``$e$'' in the name of the vertex $(e,\alpha,i)$ indicates 
that this vertex is in $V'(e)$. 

For each  edge $e=\{u,v\}\in E(G)$ we then define a graph $J(e)$ with vertex set 
$V(J(e))=V'(e)\cup \{\alpha,\beta,\gamma,\omega\}$.
Note that the vertices in $V'(e)$ are distinct for each edge $e$ whereas $\alpha$, $\beta$, $\gamma$ and $\omega$ are identical for all $e$.
The edge set $E(J(e))$ of the graph $J(e)$ is defined as shown in Figure~\ref{fig:EdgeGadgetUnpinned}.

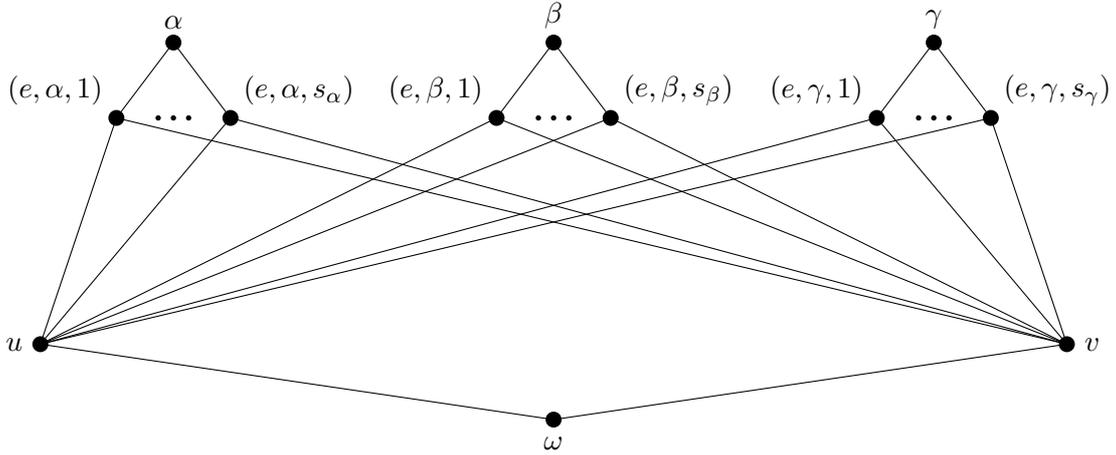
\begin{figure*}[ht]
	\centering
	
	\begin{minipage}[t]{\textwidth}
		\centering
		\begin{tikzpicture}[scale=1, baseline=0.36cm, every loop/.style={min distance=10mm,looseness=10}]

			\coordinate (u)  at (0, 0);
			\node at (u) [left = 1mm of u] {$u$};
			\coordinate (v)  at (13.5, 0);
			\node at (v) [right = 1mm of v] {$v$};
			
			\coordinate (w)  at (6.75, -1);
			\node at (w) [below = 1mm of w] {$\omega $};			
			
			\coordinate (ex1)  at (1, 3);
			\node at (ex1) [above left = .5mm of ex1] {$(e,\alpha,1)$};
			
			\coordinate (vdots1)  at (1.55, 3);
			\coordinate (vdots2)  at (1.75, 3);
			\coordinate (vdots3)  at (1.95, 3);
			
			\coordinate (exsx)  at (2.5, 3);
			\node at (exsx) [above right = .5mm of exsx] {$(e,\alpha,\sx)$};
			
			\coordinate (ey1)  at (6, 3);
			\node at (ey1) [above left = .5mm of ey1] {$(e,\beta,1)$};
			
			\coordinate (vdots4)  at (6.55, 3);
			\coordinate (vdots5)  at (6.75, 3);
			\coordinate (vdots6)  at (6.95, 3);
			
			\coordinate (eysy)  at (7.5, 3);
			\node at (eysy) [above right = .5mm of eysy] {$(e,\beta,\sy)$};
			
			\coordinate (ez1)  at (11, 3);
			\node at (ez1) [above left = .5mm of ez1] {$(e,\gamma,1)$};
			
			\coordinate (vdots7)  at (11.55, 3);
			\coordinate (vdots8)  at (11.75, 3);
			\coordinate (vdots9)  at (11.95, 3);
			
			\coordinate (ezsz)  at (12.5, 3);
			\node at (ezsz) [above right = .5mm of ezsz] {$(e,\gamma,\sz)$};
			
			\coordinate (x0)  at (1.75, 4);
			\node at (x0) [above = .5mm of x0] {$\alpha  $};			
			\coordinate (y0)  at (6.75, 4);
			\node at (y0) [above = .5mm of y0] {$\beta $};
			\coordinate (z0)  at (11.75, 4);
			\node at (z0) [above = .5mm of z0] {$\gamma $};

			\fill (u) circle[radius=3pt];
			\fill (v) circle[radius=3pt];
			\fill (w) circle[radius=3pt];
			\fill (ex1) circle[radius=3pt];
			\fill (exsx) circle[radius=3pt];
			\fill (ey1) circle[radius=3pt];
			\fill (eysy) circle[radius=3pt];
			\fill (ez1) circle[radius=3pt];
			\fill (ezsz) circle[radius=3pt];
			
			\fill (vdots1) circle[radius=1pt];
			\fill (vdots2) circle[radius=1pt];
			\fill (vdots3) circle[radius=1pt];
			\fill (vdots4) circle[radius=1pt];
			\fill (vdots5) circle[radius=1pt];
			\fill (vdots6) circle[radius=1pt];
			\fill (vdots7) circle[radius=1pt];
			\fill (vdots8) circle[radius=1pt];
			\fill (vdots9) circle[radius=1pt];
			
			\fill (x0) circle[radius=3pt];
			\fill (y0) circle[radius=3pt];
			\fill (z0) circle[radius=3pt];			
			
			\draw (u) -- (w) -- (v);
			\draw (u) -- (ex1) -- (v);
			\draw (u) -- (exsx) -- (v);
			\draw (u) -- (ey1) -- (v);
			\draw (u) -- (eysy) -- (v);
			\draw (u) -- (ez1) -- (v);
			\draw (u) -- (ezsz) -- (v);
			\draw (ex1) -- (x0) -- (exsx);
			\draw (ey1) -- (y0) -- (eysy);
			\draw (ez1) -- (z0) -- (ezsz);

		\end{tikzpicture}
	\end{minipage}%
	\caption{The graph $J(e)$ for $e=\{u,v\}$.  }
\label{fig:EdgeGadgetUnpinned}
\end{figure*}

Then we define $J=\left(V(G)\cup\{\omega\}\cup \bigcup_{e\in E(G)} V'(e),\bigcup_{e\in E(G)} E(J(e))\right)$. Intuitively, $J$ is 
constructed from the graph $G$
by replacing each edge $e\in E(G)$ with   the corresponding graph $J(e)$. 
Since $G$ is connected, every vertex $v\in V(G)$ is a member of some edge in $E(G)$.
This ensures that $\{v,\omega\}$ is an edge of~$J$. 
  
Next we define the set of lists $\boldS$ in the instance $(J,\boldS)$. 
We set $S_\omega =\{w\}$, $S_\alpha = \{x_0\}$, $S_\beta =\{y_0\}$, $S_\gamma =\{z_0\}$ and $S_v = V(H)$ for all $v\in V(J)\setminus\{\alpha, \beta, \gamma, \omega\}$. Then 
we define $\boldS = \{S_v \subseteq V(H) \mid v\in V(J)\}$.  
This is depicted in Figure~\ref{fig:EdgeGadget}.

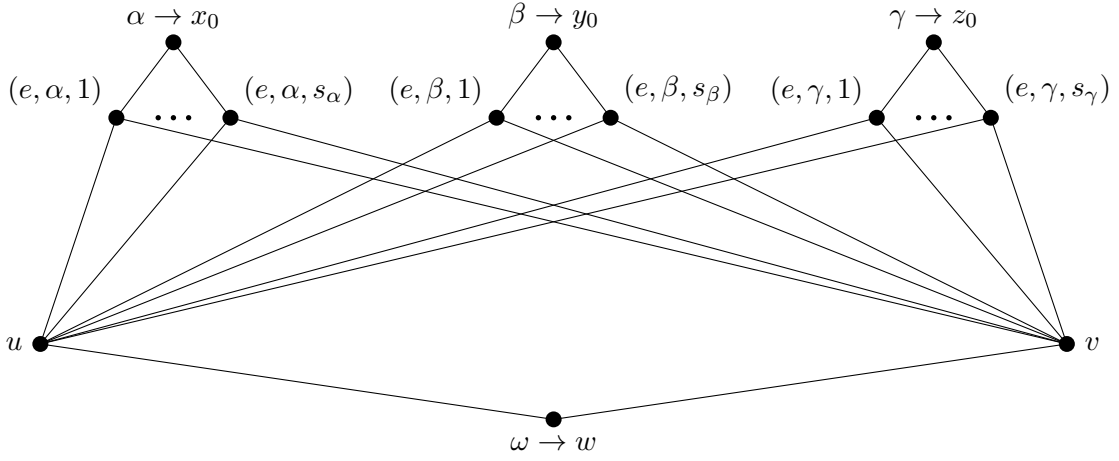
\begin{figure*}[ht]
	\centering
	
	\begin{minipage}[t]{\textwidth}
		\centering
		\begin{tikzpicture}[scale=1, baseline=0.36cm, every loop/.style={min distance=10mm,looseness=10}]

			\coordinate (u)  at (0, 0);
			\node at (u) [left = 1mm of u] {$u$};
			\coordinate (v)  at (13.5, 0);
			\node at (v) [right = 1mm of v] {$v$};
			
			\coordinate (w)  at (6.75, -1);
			\node at (w) [below = 1mm of w] {$\omega\rightarrow w$};			
			
			\coordinate (ex1)  at (1, 3);
			\node at (ex1) [above left = .5mm of ex1] {$(e,\alpha,1)$};
			
			\coordinate (vdots1)  at (1.55, 3);
			\coordinate (vdots2)  at (1.75, 3);
			\coordinate (vdots3)  at (1.95, 3);
			
			\coordinate (exsx)  at (2.5, 3);
			\node at (exsx) [above right = .5mm of exsx] {$(e,\alpha,\sx)$};
			
			\coordinate (ey1)  at (6, 3);
			\node at (ey1) [above left = .5mm of ey1] {$(e,\beta,1)$};
			
			\coordinate (vdots4)  at (6.55, 3);
			\coordinate (vdots5)  at (6.75, 3);
			\coordinate (vdots6)  at (6.95, 3);
			
			\coordinate (eysy)  at (7.5, 3);
			\node at (eysy) [above right = .5mm of eysy] {$(e,\beta,\sy)$};
			
			\coordinate (ez1)  at (11, 3);
			\node at (ez1) [above left = .5mm of ez1] {$(e,\gamma,1)$};
			
			\coordinate (vdots7)  at (11.55, 3);
			\coordinate (vdots8)  at (11.75, 3);
			\coordinate (vdots9)  at (11.95, 3);
			
			\coordinate (ezsz)  at (12.5, 3);
			\node at (ezsz) [above right = .5mm of ezsz] {$(e,\gamma,\sz)$};
			
			\coordinate (x0)  at (1.75, 4);
			\node at (x0) [above = .5mm of x0] {$\alpha \rightarrow x_0$};			
			\coordinate (y0)  at (6.75, 4);
			\node at (y0) [above = .5mm of y0] {$\beta\rightarrow y_0$};
			\coordinate (z0)  at (11.75, 4);
			\node at (z0) [above = .5mm of z0] {$\gamma\rightarrow z_0$};

			\fill (u) circle[radius=3pt];
			\fill (v) circle[radius=3pt];
			\fill (w) circle[radius=3pt];
			\fill (ex1) circle[radius=3pt];
			\fill (exsx) circle[radius=3pt];
			\fill (ey1) circle[radius=3pt];
			\fill (eysy) circle[radius=3pt];
			\fill (ez1) circle[radius=3pt];
			\fill (ezsz) circle[radius=3pt];
			
			\fill (vdots1) circle[radius=1pt];
			\fill (vdots2) circle[radius=1pt];
			\fill (vdots3) circle[radius=1pt];
			\fill (vdots4) circle[radius=1pt];
			\fill (vdots5) circle[radius=1pt];
			\fill (vdots6) circle[radius=1pt];
			\fill (vdots7) circle[radius=1pt];
			\fill (vdots8) circle[radius=1pt];
			\fill (vdots9) circle[radius=1pt];
			
			\fill (x0) circle[radius=3pt];
			\fill (y0) circle[radius=3pt];
			\fill (z0) circle[radius=3pt];			
			
			\draw (u) -- (w) -- (v);
			\draw (u) -- (ex1) -- (v);
			\draw (u) -- (exsx) -- (v);
			\draw (u) -- (ey1) -- (v);
			\draw (u) -- (eysy) -- (v);
			\draw (u) -- (ez1) -- (v);
			\draw (u) -- (ezsz) -- (v);
			\draw (ex1) -- (x0) -- (exsx);
			\draw (ey1) -- (y0) -- (eysy);
			\draw (ez1) -- (z0) -- (ezsz);

		\end{tikzpicture}
	\end{minipage}%
	\caption{The graph $J(e)$ for $e=\{u,v\}$. A label of the form $a \rightarrow b$ means that the vertex $a\in V(G)$ is pinned to $b\in V(H)$ since $S_a=\{b\}$.}
\label{fig:EdgeGadget}
\end{figure*}

We now show how a multiterminal cut of $G,\alpha,\beta,\gamma$
 corresponds to a certain set of homomorphisms from $(J,\boldS)$ to $H$. Every multiterminal cut $E'$ of 
 $G,\alpha,\beta,\gamma$ induces a partition of $V(G)$ into connected components. 
These are the connected components of $(V(G),E(G)\setminus E')$.
Let $\kappa(E')$ be the number of these components. Let $\Gamma(w)$ be the 
set of neighbours of vertex~$w$ in $H$ and let $d_w\ge 3$ be  the degree of~$w$ in~$H$. Let $\Psi(E')$ be the set of functions $\psi\from V(G)\to \Gamma(w)$ such that 
\begin{itemize}
\item $\psi$ maps the vertices of the components containing the terminals $\alpha$, $\beta$, 
and $\gamma$ to $x_0$, $y_0$ and $z_0$, respectively, and
\item  the set of bichromatic edges $\{\{u,v\}\in E(G) \mid \psi(u)\neq \psi(v)\}$ is exactly the cut $E'$. \end{itemize}
Then 
\begin{equation}
\label{eq:juneone}
\abs{\Psi(E')}={d_w}^{\kappa(E')-3}.
\end{equation}

Now, for every $\psi\in\Psi(E')$, let 
  $X(\psi)=\{\{u,v\}\in E(G) \mid \psi(u)=\psi(v)=x_0\}$. Note
  that $X(\psi)$ is   the set of monochromatic edges in $G$ whose endpoints are mapped to $x_0$. Similarly, let  $Y(\psi)=\{\{u,v\}\in E(G) \mid  \psi(u)=\psi(v)=y_0\}$  and $Z(\psi)=\{\{u,v\}\in E(G) \mid \psi(u)=\psi(v)=z_0\}$. 

Given a multiterminal cut~$E'$ of~$G,\alpha,\beta,\gamma$ 
and a map $\psi \in \Psi(E')$,
we say that a homomorphism $\sigma\in \calH((J,\boldS),H)$ \emph{agrees} with $\psi$ if, for all $v\in V(G)$, we have $\sigma(v)=\psi(v)$. Let $\Sigma(\psi)$ be the set of all $\sigma\in \calH((J,\boldS),H)$ that agree with $\psi$. 
Given a multiterminal cut~$E'$ of~$G,\alpha,\beta,\gamma$
we say that a homomorphism $\sigma\in \calH((J,\boldS),H)$ \emph{agrees} with~$E'$ 
if there is a $\psi\in \Psi(E')$ such that $\sigma$ agrees with~$\psi$.
Let $Z_{E'}=\sum_{\psi \in \Psi(E')}\abs{\Sigma(\psi)}$ be the number of homomorphisms from $(J,\boldS)$ to $H$ that agree with the cut $E'$.

Now  consider a multiterminal cut~$E'$ of~$G,\alpha,\beta,\gamma$. 
We will bound $Z_{E'}$ by considering two cases. Recall that $B$ is a positive integer and part of the instance of $\TCut{3}$.

\medskip
\noindent {\bf Case 1: $\abs{E'}=B$.}

If $\kappa(E')\ge 4$ then,  since $G$ is connected, 
the input $G,\alpha,\beta,\gamma$ has a multiterminal cut of size less than $B$, which contradicts the definition of $\TCut{3}$. 
Hence, it must be the case that $\kappa(E')=3$, which means that  $\abs{\Psi(E')}=1$. 
For the single $\psi\in \Psi(E')$ we have $\abs{X(\psi)}+\abs{Y(\psi)}+\abs{Z(\psi)}=\abs{E(G)}-B$. 
We will consider the possible homomorphisms $\sigma \in \Sigma(\psi)$.

Let $d_x, d_y,d_z \ge 2$ be the degrees of $x_0$, $y_0$ and $z_0$ in $H$, respectively. 
\begin{itemize}
\item Consider any edge $e$ of $G$ that  is not in the cut $E'$.
Then, as $\abs{E'}=B$, $e$ has to be in $X(\psi)$, $Y(\psi)$ or $Z(\psi)$. 
\begin{itemize}
\item Suppose that  $e$ is in $X(\psi)$. 
Consider a vertex $(e,\beta,i)$ of $J(e)$.
This vertex is adjacent to the terminal $\beta$, which is mapped to~$y_0$ by~$\psi$
and also to its endpoints, which are mapped to~$x_0$ by~$\psi$.
Thus, $\sigma$ has to map $(e,\beta,i)$ 
to a mutual neighbour (in $H$) of $x_0$ and $y_0$. Since $H$ is square-free,
the only possibility is vertex~$w$. 
Similarly, $\sigma$ has to map each vertex $(e, \gamma,i)$ of $J(e)$ to~$w$.
There is more choice concerning each vertex $(e,\alpha,i)$ of $J(e)$ ---
the homomorphism~$\sigma$ can map this vertex to any of the $d_x$ neighbours of~$x_0$ in~$H$.
Putting this together, 
the edge $e$ contributes a factor of ${d_x}^{\sx}$
to the number of homomorphisms in $\Sigma(\psi)$.
\item Suppose that $e$ is in $Y(\psi)$. Similarly, the
edge $e$ contributes a factor of ${d_y}^{\sy}$ to  $|\Sigma(\psi)|$.
\item Suppose that $e$ is in $Z(\psi)$.  Similarly, the edge $e$ contributes
a factor of ${d_z}^{\sz}$ to  $|\Sigma(\psi)|$.
\end{itemize}

\item Consider any edge $e=\{u,v\}$ of~$G$ that is in the cut $E'$.
 Then a homomorphism $\sigma\in \Sigma(\psi)$ has to map all vertices of $V'(e)$ to a common neighbour of 
 $\psi(u)$ and $\psi(v)$ in~$H$.
 Since $\psi(u) \neq \psi(v)$ and $H$ is square-free, the only possibility is to map all vertices of $V'(e)$ to~$w$.
 Thus, the edge $e$ contributes a factor of~$1$ to 
 $|\Sigma(\psi)|$. 
\end{itemize}
 Putting all of this together, we have
\[
Z_{E'}={d_x}^{\sx \abs{X(\psi)}}{d_y}^{\sy\abs{Y(\psi)}}{d_z}^{\sz\abs{Z(\psi)}}.
\]

Now our goal is to choose $\sx$, $\sy$ and $\sz$  so that $Z_{E'}$ depends only on the size of $E'$ rather than 
on the sizes of $X(\psi)$, $Y(\psi)$ and $Z(\psi)$. (Intuitively, we want to design our graph $J(e)$ in such a way that it balances out the weights that are induced by the different degrees of $x_0$, $y_0$ and $z_0$.) We are limited by the fact that $\sx$, $\sy$ and $\sz$ have to be integers. We use Lemma~\ref{lem:Dirichlet} to get around this. We set $\delta'=\log_q e^{\eps/42}$ which we will use in the error bound of the Dirichlet approximation (the reasons behind our choice of $\delta'$ will become clear at the end of the proof). Note that $1/{\delta'} \in \text{poly}(\eps^{-1})$. Further, let $s=2+\abs{E(G)} + \lceil\log_2q\rceil\abs{V(G)}$. We use Lemma~\ref{lem:Dirichlet} to approximate the real values $\lambda_1=\log_{d_x}(2^s)$, $\lambda_2=\log_{d_y}(2^s)$ and $\lambda_3=\log_{d_z}(2^s)$ and obtain positive integers $p_1, p_2, p_3$ and $r$ with $r \le (n^2/{\delta'})^3\in \text{poly}(n, \eps^{-1})$ such that for all $i\in \{1,2,3\}$ we have $\abs{r\lambda_i-p_i} \le \delta'/n^2$. Note that $p_1, p_2, p_3 \in \text{poly}(n, \eps^{-1})$. We set $\sx=p_1$, $\sy=p_2$ and $\sz=p_3$ to obtain
\begin{align*}
Z_{E'}&={d_x}^{p_1\abs{X(\psi)}}{d_y}^{p_2\abs{Y(\psi)}}{d_z}^{p_3\abs{Z(\psi)}}\\
&\le {d_x}^{(r\lambda_1 + \delta'/n^2) \abs{X(\psi)}}{d_y}^{(r\lambda_2 + \delta'/n^2)\abs{Y(\psi)}}{d_z}^{(r\lambda_3 + \delta'/n^2)\abs{Z(\psi)}}\\
&\le 2^{sr(\abs{X(\psi)} + \abs{Y(\psi)} + \abs{Z(\psi)})} q^{\delta'/n^2(\abs{X(\psi)} + \abs{Y(\psi)} + \abs{Z(\psi)})}
\end{align*}
where we used the fact that $d_x,d_y,d_z \le q$.
Since $\abs{X(\psi)}+\abs{Y(\psi)}+\abs{Z(\psi)}=\abs{E(G)}-B\le n^2$ it holds that 
\[
Z_{E'}\le q^{\delta'} 2^{sr(\abs{E(G)}-B)}.
\]
Analogously we obtain $q^{-\delta'} 2^{sr(\abs{E(G)}-B)}\le Z_{E'}$. 

Let $Z^*= 2^{sr(\abs{E(G)}-B)}$ (this value will be used later in the proof). Then
\begin{align}
q^{-\delta'} Z^*\le Z_{E'} \le q^{\delta'} Z^*.\label{eq:ZE1}
\end{align}
{(\bf End of Case 1.)}

\medskip
\noindent {\bf Case 2: $\abs{E'} > B$.} 

In this case we have $\kappa(E')\ge 3$. 
For  any $\psi\in\Psi(E')$, as in Case 1, 
each edge in $X(\psi)$
contributes a factor of $d_x^{\sx}$
to $|\Sigma(\psi)|$, each edge in $Y(\psi)$
contributes a factor of  $d_y^{\sy}$, 
and each edge in $Z(\psi)$ contributes a factor of  $d_z^{\sz}$. 

Consider any $\psi \in \Psi(E')$ and
let $E''=E(G)\setminus(X(\psi)\cup Y(\psi) \cup Z(\psi))$. 
Then $E''$ consists of edges in   $E'$ and edges in $\{\{u,v\}\in E(G) \mid \psi(u)=\psi(v) \text{ and } \psi(u)\notin \{x_0,y_0,z_0\}\}$.
Any edge $e$ in $E''$ contributes a factor of~$1$ to $|\Sigma(\psi)|$
as every vertex in $V'(e)$ has to be mapped to~$w$.
Putting all of this together and simplifying as in Case~1, we have 

\begin{align*}
Z_{E'}&=\sum_{\psi \in \Psi(E')}{d_x}^{p_1\abs{X(\psi)}}{d_y}^{p_2\abs{Y(\psi)}}{d_z}^{p_3\abs{Z(\psi)}}\\
&\le \sum_{\psi \in \Psi(E')} q^{\delta'} 2^{sr(\abs{X(\psi)} + \abs{Y(\psi)} + \abs{Z(\psi)})}.
\end{align*} 
We can analogously derive a lower bound for $Z_{E'}$ to obtain
\begin{align}
q^{-\delta'}  \sum_{\psi \in \Psi(E')} 2^{sr(\abs{X(\psi)} + \abs{Y(\psi)} + \abs{Z(\psi)})} \le Z_{E'} \le q^{\delta'}  \sum_{\psi \in \Psi(E')} 2^{sr(\abs{X(\psi)} + \abs{Y(\psi)} + \abs{Z(\psi)})}\label{eq:ZE2}
\end{align} 
{(\bf End of Case 2.)}
\medskip

Let $\calM$ denote the set of multiterminal cuts of $G$ with terminals
$\alpha$, $\beta$ and $\gamma$ and
let $T$ be the number of multiterminal cuts in $\calM$
with size~$B$. 
We would like to show how to estimate~$T$ using an approximation 
for the number of homomorphisms from $(J,\boldS)$ to~$H$. 
Towards this end, define~$Z$ as follows. 
\[
\displaystyle Z=TZ^* + \sum_{E'\in\calM: \abs{E'}> B}\enspace  \sum_{\psi \in \Psi(E')} 2^{sr(\abs{X(\psi)} + \abs{Y(\psi)} + \abs{Z(\psi)})}.
\]
The proof is in two parts.

\medskip \noindent{\bf Part 1: 
 We show that $Z/Z^*\in [T,T+1/4]$. }

Since $\abs{X(\psi)} + \abs{Y(\psi)} + \abs{Z(\psi)}\le \abs{E(G)}-\abs{E'}$, we have
\[
Z \le TZ^* + \sum_{E'\in\calM: \abs{E'}> B} \enspace\sum_{\psi \in \Psi(E')} 2^{sr(\abs{E(G)}-\abs{E'})}.
\]
Using $\abs{\Psi(E')}={d_w}^{\kappa(E')-3}$ 
(from \eqref{eq:juneone})
and the definition of $Z^*$ 
(just before \eqref{eq:ZE1}),
we obtain
\[
Z \le TZ^* + \sum_{E'\in\calM: \abs{E'}> B} {d_w}^{\kappa(E')-3} \frac{Z^*}{2^{sr(\abs{E'}-B)}}.
\] 
Then, in the following expression, the first inequality follows from the definition of $Z$ and the second inequality follows from the fact that there are at most $2^{\abs{E(G)}}$ multiterminal cuts and from the bounds ${d_w}^{\kappa(E')-3}\le q^n$, $\abs{E'}-B\ge 1$ and $r\ge 1$. The third inequality follows from the choice of $s$.
\[
T\le \frac{Z}{Z^*} \le T+\frac{2^{\abs{E(G)}}q^n}{2^s}\le T + 1/4.
\]
We have verified that $Z/Z^*\in [T,T+1/4]$.

\medskip \noindent{\bf Part 2:
 We show that we can obtain a close approximation to $Z/Z^*$ using an oracle for approximating $\Ret{H}$.}

Recall that $\hom{(J,\boldS)}{H}$
is the number of homomorphisms from $(J,\boldS)$ to~$H$ and
note that
$$ \hom{(J,\boldS)}{H} = \sum_{E'\in\calM : \abs{E'}= B}Z_{E'} + \sum_{E'\in\calM : \abs{E'}> B} Z_{E'}.$$ 

Using Inequalities~(\ref{eq:ZE1}) and~(\ref{eq:ZE2}), and the fact that  $G,\alpha,\beta,\gamma$ has $T$ multiterminal cuts of size $B$, we have
\[
q^{-\delta'}Z\le \hom{(J,\boldS)}{H} \le q^{\delta'} Z.
\]
Let $\hat{Q}$ be a solution returned by the $\Ret{H}$ oracle when called with input $\bigl((J,\boldS),\eps/42\bigr)$. Then
\[e^{-\eps/42}q^{-\delta'} Z\le e^{-\eps/42} \hom{(J,\boldS)}{H} \le \hat{Q} \le e^{\eps/42} \hom{(J,\boldS)}{H} \le e^{\eps/42}q^{\delta'} Z.\]
The choice of $\displaystyle\delta'=\log_q e^{\eps/42}$ yields $e^{-\eps/21} \frac{Z}{Z^*} \le \frac{\hat{Q}}{Z^*} \le e^{\eps/21}\frac{Z}{Z^*}$. 
Note that $Z^*$ is easy to compute. The fact that this precision in the approximation of $Z$ suffices to obtain the required accuracy of the output $\hat{Q}/Z^*$ as an approximation of $T$ is derived in~\cite[Proof of Theorem 3]{DGGJApprox}.
\end{proof}

We can now give a classification of $\Ret{H}$ for irreflexive square-free graphs.

\newcommand{\ThmRetIrreflexive}{
Suppose that $H$ is an irreflexive square-free graph.
{
\renewcommand{\theenumi}{(\roman{enumi})}
\renewcommand{\labelenumi}{\theenumi}
\begin{enumerate}            
\item \label{item:SquarefreeFP}  If every connected component of $H$ is a star, then $\Ret{H}$ is in $\FP$.
\item \label{item:SquarefreeBIS} Otherwise, if every connected component of $H$ is a caterpillar, then $\Ret{H}$ approximation-equivalent to $\bis$.
\item \label{item:SquarefreeSAT} Otherwise, $\Ret{H}$ approximation-equivalent to $\sat$.
\end{enumerate}
}
}
\begin{thm}\label{thm:RetIrreflexive}
\ThmRetIrreflexive
\end{thm}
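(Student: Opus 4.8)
The plan is to reduce everything to the connected case and then read the two easy cases off from known classifications, using Lemma~\ref{lem:SquareFreeHardness} (together with the standard structure theory of caterpillars) for the hard case. By Observation~\ref{obs:connectedOALHom} and Remark~\ref{rem:Connectivity}, the complexity of $\Ret{H}$ is the ``join'' of the complexities of $\Ret{C}$ over the connected components $C$ of $H$: for an upper bound it suffices to bound each $\Ret{C}$, and for hardness it suffices to exhibit one component $C$ with $\Ret{C}$ hard. Since $\Ret{H}$ lies in $\numP$ and $\sat$ is $\numP$-complete with respect to AP-reductions, $\Ret{H}\leap\sat$ holds automatically, so in case~(iii) only the lower bound $\sat\leap\Ret{H}$ needs an argument. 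For case~(i), a star is an irreflexive complete bipartite graph, so $\LHom{C}\in\FP$ for each star component $C$ by Theorem~\ref{thm:LHomTricho}(i); by Observation~\ref{obs:HomToRetToLHom}, $\Ret{C}\leap\LHom{C}$, hence $\Ret{C}\in\FP$, and combining over components gives $\Ret{H}\in\FP$. For the easy direction of case~(ii), every caterpillar is an irreflexive bipartite permutation graph, so $\LHom{C}\leap\bis$ for each caterpillar component $C$ by Theorem~\ref{thm:LHomTricho}(i)--(ii), hence $\Ret{C}\leap\LHom{C}\leap\bis$ and $\Ret{H}\leap\bis$.

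For the hard direction of case~(ii), since not every component is a star there is a component $C$ that is a caterpillar but not a star; Theorem~\ref{thm:HomTree}(ii) gives $\bis\leap\Hom{C}$, and $\Hom{C}\leap\Ret{C}\leap\Ret{H}$ by Observation~\ref{obs:HomToRetToLHom} and Remark~\ref{rem:Connectivity}, so $\Ret{H}\eqap\bis$. For case~(iii), pick a component $C$ of $H$ that is not a caterpillar; it is connected, square-free, and not a star, and it remains to prove $\sat\leap\Ret{C}$. If $C$ is a tree, then $C$ contains an induced copy of $J_3$: a tree is a caterpillar if and only if it has no subgraph isomorphic to the subdivided claw $J_3$, and in a tree every such subgraph is induced; since a tree is square-free, Lemma~\ref{lem:SquareFreeHardness} then yields $\sat\leap\Ret{C}$. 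If $C$ is not a tree, a shortest cycle of $C$ is induced and, as $C$ is square-free, has length $3$ or at least $5$; if $C$ nevertheless contains an induced $J_3$, Lemma~\ref{lem:SquareFreeHardness} again applies.

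The remaining sub-case, which is the main obstacle, is when $C$ is a connected square-free non-caterpillar with \emph{no} induced $J_3$, so that $C$ is a ``near-cycle'' graph --- it contains a triangle or an induced cycle of length at least $5$, but Lemma~\ref{lem:SquareFreeHardness} does not apply to it (for instance $K_3$, $C_5$, or $C_5$ with pendant paths attached at pairwise non-adjacent vertices). Here I would first record that $\DRet{C}$ is $\NP$-complete --- if $C$ contains a triangle it is non-bipartite, so even $\DHom{C}$ is $\NP$-complete by the Hell--Ne\v{s}et\v{r}il theorem~\cite{HNOriginal} and hence so is $\DRet{C}$; if $C$ contains an induced cycle of length at least $5$ this follows from the structural arguments behind Theorem~\ref{thm:DRetPseudotree} --- and then upgrade $\NP$-completeness of the decision problem to $\sat\leap\Ret{C}$ by means of dedicated gadget-based hardness lemmas built in the spirit of Lemma~\ref{lem:SquareFreeHardness} and the multiterminal-cut gadgets of~\cite{GJTreeHoms}, or, alternatively, by showing that a base cycle ($K_3$, or $C_k$ with $k\ge5$) is a retract of $C$, reducing the retraction-counting problem to that base cycle, and invoking $\sat$-hardness of retraction-counting to $K_3$, $C_5$ and the longer cycles. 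Once $\sat\leap\Ret{C}$ is available for every non-caterpillar component, $\sat\leap\Ret{H}$, and together with $\Ret{H}\leap\sat$ this gives $\Ret{H}\eqap\sat$. The bulk of the work lies in this last sub-case: both the structural claim that these near-cycle graphs are exactly what the $J_3$-criterion overlooks, and the hardness proofs establishing $\sat\leap\Ret{C}$ for each of them.
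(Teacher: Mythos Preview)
Your overall strategy --- reduce to connected components, use the sandwich $\Hom{H}\leap\Ret{H}\leap\LHom{H}$ for cases~(i) and~(ii), and focus the hardness work on case~(iii) --- matches the paper exactly. The gap is in your treatment of the ``remaining sub-case'' in case~(iii), which you flag as the main obstacle but which in fact dissolves once two observations are in place.

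First, you do not need to ``upgrade $\NP$-completeness of the decision problem to $\sat\leap\Ret{C}$'' by building new gadgets or retract arguments. The paper simply cites~\cite[Theorem~1]{DGGJApprox}: whenever $\DRet{C}$ is $\NP$-hard, one has $\sat\leap\Ret{C}$ directly. So it suffices to establish $\NP$-hardness of the decision problem in that sub-case.

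Second, your structural analysis of that sub-case is not sharp enough. Rather than splitting on ``triangle'' versus ``induced cycle of length $\ge 5$'', split on parity. If $C$ contains \emph{any} odd cycle (a triangle or an odd cycle of length $\ge 5$), then $C$ is non-bipartite and Hell--Ne\v{s}et\v{r}il gives $\NP$-hardness of $\DHom{C}$, hence of $\DRet{C}$; this already handles your $C_5$-based examples. If all cycles of $C$ are even, then since $C$ is square-free every cycle has length $\ge 6$. The paper now observes that if $C$ had at least two cycles it would contain an induced $J_3$: take a shortest cycle and an outgoing path of length~$2$ (which exists because a second cycle forces a vertex of degree $\ge 3$ on or adjacent to the first); the seven vertices are distinct and the copy of $J_3$ is induced because the girth is at least~$6$. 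This contradicts the hypothesis of the sub-case, so $C$ has exactly one cycle, is a pseudotree with a cycle of length $\ge 6$, and Theorem~\ref{thm:DRetPseudotree} gives $\NP$-hardness of $\DRet{C}$ directly --- no appeal to ``the structural arguments behind'' it is needed. Thus no new ``near-cycle'' hardness lemmas are required; the bulk of the work you anticipate is absent.
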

\begin{proof}
We first give the classification assuming that $H$ is a connected graph. Then we use Remark~\ref{rem:Connectivity} to recover the full classification. 

Suppose that $H$ is a connected irreflexive square-free graph. We have $\Hom{H}\leap\Ret{H}$ and $\Ret{H}\leap \LHom{H}$ by Observation~\ref{obs:HomToRetToLHom}.
Therefore, 
$\Ret{H}$ inherits hardness results from $\Hom{H}$ hardness results and it inherits easiness results from $\LHom{H}$ easiness results. 
Thus, since a star is a complete bipartite graph,
item~\ref{item:SquarefreeFP} follows from Theorem~\ref{thm:LHomTricho}.
Since a square-free graph that is not a star cannot be a complete bipartite graph, the $\bis$-hardness part of item~\ref{item:SquarefreeBIS} follows from Theorem~\ref{thm:HomBIS}.
It is known that
a caterpillar is
a bipartite permutation graph~\cite{Koehler1999} (see also~\cite[Appendix A]{GGJList}),
so the $\bis$-easiness part of item~\ref{item:SquarefreeBIS} follows again from Theorem~\ref{thm:LHomTricho}.
Theorem~\ref{thm:LHomTricho} also implies that $\Ret{H}$ is always $\sat$-easy, giving the easiness result in item~\ref{item:SquarefreeSAT}.

It remains to show the hardness result in item~\ref{item:SquarefreeSAT}. If $H$ is not a caterpillar, then it  contains either a cycle or an induced $J_3$~\cite[Theorem 1]{Harary1971}.

\bigskip
\noindent{\bf Case 1: $H$ contains an induced $J_3$.}
The fact that $\sat \leap \Ret{H}$ follows from Lemma~\ref{lem:SquareFreeHardness}.
{\bf End of Case 1.}

\bigskip
\noindent{\bf Case 2: $H$ contains a cycle.}
\begin{itemize}
\item Suppose that $H$ contains a cycle of odd length. Then $H$ is not bipartite and even the problem of deciding whether there exists a homomorphism to $H$ is $\NP$-complete due to Hell and Ne\v{s}et\v{r}il~\cite{HNOriginal}. This homomorphism decision problem reduces to the retraction decision problem $\DRet{H}$~\cite{BodirskySurvey} and therefore $\DRet{H}$ is $\NP$-hard as well. Then, $\NP$-hardness of $\DRet{H}$ implies $\sat$-hardness of the corresponding approximate counting problem $\Ret{H}$ by~\cite[Theorem 1]{DGGJApprox}.
\item Suppose that $H$ contains exactly one cycle of even length. Then $H$ is a pseudotree and, as $H$ is square-free, the cycle has length at least $6$.  Therefore $\DRet{H}$ is $\NP$-complete by Theorem~\ref{thm:DRetPseudotree}. Then, as before, it follows that $\Ret{H}$ is $\sat$-hard under AP-reductions by~\cite[Theorem 1]{DGGJApprox}.
\item Suppose that $H$ contains at least $2$ cycles and all cycles in $H$ have even length. We will show that $H$ contains an induced $J_3$ and therefore is covered by Case~1. Let $C$ be a shortest cycle in $H$. As there are at least two cycles in $H$ and $H$ is connected, there exists a path $P_1=w,z_0,z_1$ such that $w$ is in $C$ and $z_0$ is not in $C$. As $C$ has length at least $6$, there exists a path $P_2$ in $C$ of the form $x_0,x_1,w,y_0,y_1$. As $z_0$ is not in $C$ it does not coincide with any of the vertices of $P_2$. Further, as $C$ is a shortest cycle, $z_1$ cannot coincide with any of the vertices of $P_2$. Therefore the vertices of $P_1$ and $P_2$ form a graph $J_3$ as shown in Figure~\ref{fig:J3}. This subgraph $J_3$ is induced as $H$ does not contain any cycles of length less than $6$.
\end{itemize}
{\bf End of Case 2.}

The theorem now follows easily by Remark~\ref{rem:Connectivity}. If every connected component is easy, so is $H$. If any connected component is hard, so is $H$.
\end{proof}

\subsection{Graphs with Loops}\label{sec:MixedTrees}
In this section we consider graphs that are not irreflexive.

\subsubsection{$\bis$-Easiness Results for Graphs with Loops}\label{sec:MixedTreesEasyness}

The point of this section is to prove the following lemma.
\newcommand{\LemRetBristledPath}{
Let $H$ be a partially bristled reflexive path with at least $3$ vertices. Then $\Ret{H} \eqap \bis$.
}
\begin{lem}\label{lem:RetBristledPath}
\LemRetBristledPath
\end{lem}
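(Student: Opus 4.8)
The plan is to prove $\Ret{H} \eqap \bis$ for a partially bristled reflexive path $H$ with at least $3$ vertices by establishing the two directions separately: the $\bis$-hardness ($\bis \leap \Ret{H}$) and the $\bis$-easiness ($\Ret{H} \leap \bis$). For the hardness direction, I would rely on results already available in the literature: since $H$ is a reflexive path (or a partially bristled reflexive path), it is a reflexive proper interval graph, so $\LHom{H}$ is $\bis$-hard by Theorem~\ref{thm:LHomTricho}; but this does not immediately transfer to $\Ret{H}$. Instead I would look for a direct gadget reduction from $\bis$, or better, invoke the $\bis$-hardness of $\Ret{H}$ that should follow from \cite[Theorem 35]{FGZ2017} (the exact-counting hardness paper) combined with the observation that a reflexive path with $\ge 3$ vertices contains an induced reflexive $P_3$; alternatively one can reduce from counting downsets / independent sets in bipartite graphs directly using a simple gadget encoding each vertex of the bipartite graph by a vertex of $G$ pinned to have list $V(H)$ and forcing the monotone structure of a reflexive path. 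The cleanest route is probably: show $\bis \leap \Ret{P}$ where $P$ is the reflexive path on $3$ vertices $c_0 c_1 c_2$ (all looped), and then note that Remark~\ref{rem:Connectivity}-style arguments let one embed this into a larger partially bristled reflexive path.

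For the easiness direction, which I expect to be the main obstacle, the plan is to follow the strategy sketched in Section~\ref{sec:Methods}: reduce $\Ret{H}$ to the problem of counting downsets of a partial order (equivalently $\#\textsc{Downsets}$, which is known to be $\bis$-equivalent by \cite{DGGJApprox}), building on the Dyer–Goldberg–Greenhill–Jerrum technique as extended by Kelk~\cite{KelkThesis}. The key idea is that a reflexive path has a natural linear order on its vertices $c_0 < c_1 < \cdots < c_{Q+1}$, and a homomorphism to a reflexive path corresponds to an assignment of integers to vertices of $G$ that is ``$1$-Lipschitz'' with respect to adjacency: if $\{u,v\}\in E(G)$ then $|h(u) - h(v)| \le 1$. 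One encodes such an assignment by, for each vertex $v$ of $G$ and each threshold $t$, a Boolean variable $y_{v,t}$ indicating $h(v) \ge t$; the monotonicity constraints $y_{v,t+1} \Rightarrow y_{v,t}$ together with the Lipschitz constraints become a $2$-monotone / implicational CSP whose solutions are exactly the downsets of an explicitly constructed poset. The bristle vertices need extra care: a bristle $g_i$ attached to internal vertex $c_i$ can only be mapped to $c_{i-1}$, $c_i$, or $c_{i+1}$ (since it is unlooped its image... wait, $g_i$ is unlooped so it can be mapped anywhere adjacent-compatibly, but its only neighbour is $c_i$, so its constraints only couple it to one vertex) — so a vertex of $G$ whose list is all of $V(H)$ interacts with bristles only through parity/adjacency to the spine, and I would handle this by treating bristle-images as an auxiliary choice that factors in cleanly or by a small case analysis showing the bristles only contribute a polynomially-bounded multiplicative correction or can be absorbed into the poset.

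Concretely, the steps I would carry out are: (1) reduce $\Ret{H} \eqap \cOALHom{H}$ via Observation~\ref{obs:connectedOALHom}, so assume $G$ connected; (2) set up the CSP framework formalizing the ``counting homomorphisms to a reflexive path'' as a $\#\csp$ instance over a Boolean domain with only implicational (``$\textsc{Imp}$'') and unary constraints, which \cite{DGJBooleanCSP} / \cite{DGGJApprox} tells us is $\bis$-easy; (3) handle pinned vertices (lists of size $1$) simply by fixing the corresponding thresholds; (4) handle the bristle vertices of $H$ itself — these are vertices of $H$, not of $G$, so actually the subtlety is that a vertex $v$ of $G$ with list $V(H)$ may be mapped to a bristle $g_i$, which means the ``linear order'' picture needs $g_i$ inserted near $c_i$; I would argue that since each bristle has degree $1$ in $H$, any vertex of $G$ mapped to $g_i$ must have all its $G$-neighbours mapped to $c_i$, so such vertices are ``leaves'' in the relevant sense and can be summed out, reducing to the pure reflexive-path case; (5) assemble the AP-reduction, tracking that the precision degrades only polynomially. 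The main obstacle is step (4) — making the bristle handling rigorous while keeping everything inside the implicational-CSP / downset framework — together with verifying that the reflexive-path Lipschitz constraints genuinely encode as a monotone poset-downset problem rather than something more expressive; I would model this carefully after the $\#\textsc{Downsets}$ reductions in \cite{KelkThesis} and \cite{DGGJApprox}, likely citing a lemma (perhaps stated separately in this paper as a general CSP $\bis$-easiness tool) that any $\#\csp$ whose constraint language is contained in the class generated by implications and constants is $\leap \bis$.
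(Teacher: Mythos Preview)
Your overall architecture for the easiness direction is right --- encode homomorphisms to $H$ as satisfying assignments of an implicational Boolean CSP, then invoke $\csp(\{\Imp,\delta_0,\delta_1\}) \eqap \bis$ --- and this is exactly the framework the paper uses (formalised as Lemma~\ref{lem:OALHomToCSPImplies}). But your step~(4), the bristle handling, is a genuine gap. The claim that bristle-mapped vertices ``can be summed out, reducing to the pure reflexive-path case'' does not hold: whether a vertex $v\in V(G)$ can map to $g_i$ depends on \emph{all} of its neighbours mapping to $c_i$, and conversely those neighbours may themselves map to bristles only under further constraints, so the bristle choices do not decouple from the spine homomorphism by a uniform multiplicative factor. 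What the paper (following Kelk) does instead is encode the bristles \emph{inside} the CSP: using variables $x_0,\dots,x_Q$, the spine vertices $c_i$ correspond to the monotone $\{0,1\}$-assignments, and one then \emph{omits} the vertex-constraint $\Imp(x_i,x_{i-1})$ precisely for $i\in S$. This permits, for each $i\in S$, exactly one extra ``single-gap'' assignment $\sigma'_i$ (with $\sigma'_i(x_{i-1})=0$, $\sigma'_i(x_i)=1$), and one checks from the edge constraints that $\sigma'_i$ is a degree-$1$ neighbour of $\sigma_i$ only --- so it plays the role of $g_i$. The resulting graph $\Hve$ is then $H$ together with some isolated unlooped vertices, and a separate lemma (Lemma~\ref{lem:noSingletons}) is needed to show those stray components do not affect $\Ret{H}$; you do not mention this cleanup step either.

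For the hardness direction you are overcomplicating matters. Since $H$ has at least $3$ vertices and is triangle-free, it is neither a reflexive complete graph nor an irreflexive complete bipartite graph, so Theorem~\ref{thm:HomBIS} gives $\bis \leap \Hom{H}$ directly; then $\Hom{H} \leap \Ret{H}$ by Observation~\ref{obs:HomToRetToLHom}. No gadget or embedding of $P_3$ is needed.
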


This lemma builds on Kelk~\cite[Appendix A.8]{KelkThesis},
who shows $\Hom{H} \eqap \bis$ for partially bristled reflexive paths~$H$.
Thus, our work in this section is generalising Kelk's work
from homomorphism-counting to retraction-counting.
For us, the main interest is actually that we manage to classify \emph{all} graphs of girth at least $5$, rather
than that we show that these particular graphs are $\bis$-equivalent.
Nevertheless, partially bristled reflexive paths
allow us to explore some interesting ideas, providing a convenient setting
for generalising useful techniques.

In particular, in order to reduce $\Ret{H}$ to $\bis$, we generalise a technique that
was  introduced by Dyer et al.~\cite[Lemma 8]{DGGJApprox} in order to reduce homomorphism-counting problems to $\bis$.
Although the graphs $H$ that we consider in this work are undirected, we show that the technique also applies to directed graphs. We expect this to be useful for future work.\footnote{The technique
also  applies if the input~$G$ to $\Ret{H}$ 
is allowed to have loops.
This is the main observation needed to show that Theorem~\ref{thm:RetMain} extends to the setting where $G$ might have loops.
}
A homomorphism from a digraph~$G$ to a digraph~$H$ is simply a function $h\from V(G) \to V(H)$ such that, for all
$(u,v) \in E(G)$, the image $(h(u),h(v))$ is in $E(H)$.
A homomorphism from $(G,\boldS)$ to~$H$ must satisfy $h(v) \in S_v$, as in the undirected case.
As for undirected graphs, we use 
$ \hom{(G,\boldS)}{H}$ to denote the number of homomorphisms from
$(G,\boldS)$ to~$H$.
Thus, we consider the following  directed retraction problem.

\prob
{
$\DirRet{H}$.
}
{
An irreflexive digraph $G$ and a collection of lists $\boldS=\{ S_v\subseteq  V(H) \mid v\in V(G) \}$
such that, for all $v\in V(G)$, $|S_v| \in \{1, |V(H)| \}$.
}
{
$\hom{(G,\boldS)}{H}$.
}
 
The main method used in the literature to prove $\bis$-easiness of
approximate homomorphism-counting problems is to reduce them to 
the problem of counting the downsets of a partial order, which is
known to be $\bis$-equivalent~\cite{DGGJApprox}.  
In order
to obtain more general results, we formalise the technique introduced in the proof of~\cite[Lemma 8]{DGGJApprox} and expanded by Kelk~\cite{KelkThesis}, and use it in the context of the 
constraint satisfaction framework. 
Let  $\calL$ 
be a set of  Boolean relations 
(called a constraint language).
The counting constraint satisfaction problem (CSP) with parameter~$\calL$ is defined as follows.

\prob
{
$\csp(\calL)$.
}
{
A set of variables $X$ and a set of constraints $C$, where each constraint applies a relation from~$\calL$ to a list of variables from~$X$.
}
{
The number of assignments $\sigma\from X\to \{0,1\}$ that satisfy all constraints in $C$.
} 
 
The constraint language that we will use consists of 
the two unary  Boolean relations  
$\delta_0=\{(0)\}$ and $\delta_1=\{(1)\}$ 
and the   arity-two Boolean relation  $\Imp=\{(0,0), (0,1), (1,1)\}$.
Note that the constraint $\delta_0(x)$ forces a satisfying assignment to 
assign the value~$0$ to the variable~$x$
and the constraint $\delta_1(x)$ forces a satisfying assignment to 
assign the value~$1$ to~$x$.
The constraint $\Imp(x,y)$  
ensures that, in any satisfying assignment~$\sigma$, we have
$\sigma(x) \implies \sigma(y)$ (that is, if $\sigma(x)=1$ then $\sigma(y)=1$). It is known that 
the counting constraint satisfaction problem is $\bis$-equivalent when the constraint language contains
(exactly) these three relations.

\begin{lem} \cite[Theorem 3]{DGJBooleanCSP}
\label{lem:CSPImpliesEqBIS}
$\csp(\{ \Imp, \delta_0, \delta_1\}) \eqap \bis$.
\end{lem}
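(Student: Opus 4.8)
The plan is to establish the two AP-reductions separately, both of which are quite direct once one observes that an instance of $\csp(\{\Imp,\delta_0,\delta_1\})$ is essentially a presentation of a preorder together with ``pinned'' elements.

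For $\bis \leap \csp(\{\Imp,\delta_0,\delta_1\})$ I would use the standard complementation trick. Given a bipartite graph $G$ with parts $U$ and $W$, introduce a Boolean variable $z_v$ for every vertex $v$, where $z_u$ records whether $u$ lies \emph{in} the independent set for $u\in U$, but $z_w$ records whether $w$ lies \emph{outside} the independent set for $w\in W$. Under this encoding an edge $\{u,w\}$ with $u\in U$, $w\in W$ imposes exactly the constraint $\Imp(z_u,z_w)$: independence forbids choosing $u$ and $w$ simultaneously, i.e.\ $z_u=1 \Rightarrow z_w=1$. This is a bijection between independent sets of $G$ and satisfying assignments of the resulting instance, hence parsimonious and in particular an AP-reduction; note that $\delta_0,\delta_1$ are not even needed here, so already $\bis \leap \csp(\{\Imp\})$.

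For the converse, $\csp(\{\Imp,\delta_0,\delta_1\}) \leap \bis$, form the digraph $D$ on the variable set with an arc $x\to y$ for each constraint $\Imp(x,y)$; a satisfying assignment is precisely one whose $1$-set is closed under following arcs of $D$, subject to the pins. First propagate the pins deterministically: let $F_1$ be the set of vertices reachable in $D$ from a $\delta_1$-pinned variable and $F_0$ the set of vertices that can reach a $\delta_0$-pinned variable; every satisfying assignment sets $F_1$ to $1$ and $F_0$ to $0$, so if $F_0\cap F_1\neq\emptyset$ the answer is $0$. Otherwise, writing $X'=X\setminus(F_0\cup F_1)$, one checks that every $\Imp$-constraint with an endpoint outside $X'$ is automatically satisfied (e.g.\ $\Imp(x,y)$ with $x\in F_1$ forces $y\in F_1$, so $y$ cannot lie in $X'$ or, by $F_0\cap F_1=\emptyset$, in $F_0$; the cases $y\in F_1$ and $x\in F_0$ are likewise fine). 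Hence the number of satisfying assignments of $(X,C)$ equals the number of up-sets of the reachability preorder induced on $X'$ by the remaining $\Imp$-constraints, equivalently the number of downsets of the poset obtained by contracting its strongly connected components, i.e.\ an instance of $\DS$. Since this preprocessing is exact polynomial-time work and $\DS \eqap \bis$ is known~\cite{DGGJApprox}, we obtain the desired AP-reduction, and together with the first part this proves $\csp(\{\Imp,\delta_0,\delta_1\}) \eqap \bis$.

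The only real subtlety is the bookkeeping in the second direction: checking that pin-propagation captures \emph{all} forced variables and that the residual problem is exactly $\DS$ on the induced poset. The substantive input --- the equivalence $\DS \eqap \bis$ --- is imported from the literature rather than reproved here; with it in hand the lemma is essentially immediate, which is why it can be attributed directly to \cite{DGJBooleanCSP}.
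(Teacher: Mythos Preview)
Your argument is correct, but note that the paper does not prove this lemma at all: it is stated with a direct citation to \cite[Theorem~3]{DGJBooleanCSP} and used as a black box. So there is no ``paper's own proof'' to compare against here.

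That said, your sketch is the standard argument and is sound. The first direction is the usual parsimonious encoding of $\bis$ into $\csp(\{\Imp\})$ via complementing one side of the bipartition. For the second direction, your pin-propagation is the right preprocessing: once $F_0$ and $F_1$ are removed (or the instance declared infeasible if they intersect), every remaining constraint touching $F_0\cup F_1$ is indeed automatically satisfied, and the residual problem on $X'$ is exactly counting up-sets of the induced reachability preorder. Contracting strongly connected components (within which all variables are forced equal) yields a poset, and up-sets are in bijection with downsets of the opposite order, so you land in $\DS$. The appeal to $\DS \eqap \bis$ from \cite{DGGJApprox} then closes the loop. The only point you might make more explicit in a full write-up is the handling of the ``answer is $0$'' case under the AP-reduction framework, but this is routine.
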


We now formalise the  downsets reduction technique from~\cite[Lemma 8]{DGGJApprox} and state it as a 
technique for reducing homomorphism-counting problems
to $\csp(\{ \Imp, \delta_0, \delta_1\})$.
We generalise the original technique in two ways.
First, we allow size-$1$ and size-$|V(H)|$ lists in the input,
so we obtain $\bis$-easiness results for $\Ret{H}$ and not merely for $\Hom{H}$. Second, even though the main focus of this work is on undirected graphs,
we set up the machinery to enable (stronger) $\bis$-easiness results for the directed problem $\DirRet{H}$.
  
The main idea is as follows.
Given \emph{any} instances 
$\IIv$, $\IIe$, $\IIf$ and $\IIb$ 
of $\csp(\{\Imp\})$ 
on a variable set~$X$ we will define (Definition~\ref{def:Hve})
an undirected graph~$\Hve$ and (Definition~\ref{def:dirHve})
a digraph~$\Hvfb$.
Then Lemma~\ref{lem:OALHomToCSPImplies}
will show that the problems
$\Ret{\Hve}$
and $\DirRet{\Hvfb}$ both reduce to the $\bis$-easy problem
$\csp(\{ \Imp, \delta_0, \delta_1\})$.
Finally, to prove the $\bis$-easiness of $\Ret{H}$ when
$H$ is a partially bristled reflexive path 
(in order to achieve our goal of proving  Lemma~\ref{lem:RetBristledPath}), 
we have to show, given a partially bristled reflexive path~$H$,
how to set up the corresponding instances 
$\IIv$ and $\IIe$
of $\csp(\{\Imp\})$ so
that $\Hve = H$.

Before defining the graph $\Hve$ and the digraph $\Hvfb$,
it helps to explain the notation.
The subscript ``v'' stands for ``vertex''
and the $\csp(\{\Imp\})$ instance $\IIv$
is used to define the vertices of the graph~$\Hve$ and the vertices of the digraph~$\Hvfb$.
The subscript ``e'' stands for ``edge'' and
the CSP instance $\IIe$ is used to define the edges of $\Hve$.
The instance $\IIf$   gives the ``forward'' constraints for each directed edge of $\Hvfb$ and 
the instance $\IIb$ gives the corresponding ``backward'' constraints.
We will use $\Cv$, $\Ce$, $\Cf$, and $\Cb$ to
denote the  constraint sets
of the instances $\IIv$, $\IIe$, $\IIf$, and $\IIb$,  respectively.

\begin{defn}\label{def:Hve}
Let $\IIv=(X,\Cv)$ and $\IIe=(X,\Ce)$ be instances of $\csp(\{\Imp\})$.
We define the   undirected graph $\Hve$ as follows. 
The vertices of $\Hve$ are the satisfying assignments of $\IIv$. 
Given any  assignments $\sigma$ and $\sigma'$ in $V(\Hve)$, there is an edge $\{\sigma, \sigma'\}$ in $\Hve$ if and only if 
the following holds:
For every constraint $\Imp(x,y)$ in $\IIe$,
we have  $\sigma(x) \Rightarrow \sigma'(y)$  and $\sigma'(x) \Rightarrow \sigma(y)$.
\end{defn}

The definition of the digraph $\Hvfb$ is similar.

\begin{defn}\label{def:dirHve}
Let $\IIv=(X,\Cv)$, $\IIf=(X,\Cf)$ and $\IIb = (X,\Cb)$ 
be instances of $\csp(\{\Imp\})$.
We define the directed graph $\Hvfb$ as follows.
The vertices of $\Hvfb$ are the satisfying assignments of $\IIv$. 
Given any  
assignments $\sigma$ and $\sigma'$ in $V(\Hvfb)$,
there is a (directed) edge $(\sigma, \sigma')$ in $\Hvfb$ if and only if 
the following holds:
\begin{itemize}
\item For every constraint $\Imp(x,y)$ in $\IIf$,
we have  $\sigma(x) \Rightarrow \sigma'(y)$, and
\item for every constraint $\Imp(x,y)$ in $\IIb$,
we have $\sigma'(x) \Rightarrow \sigma(y)$.
\end{itemize} 
\end{defn}

\begin{lem}\label{lem:OALHomToCSPImplies}
Let 
$\IIv=(X,\Cv)$, $\IIe=(X,\Ce)$, $\IIf=(X,\Cf)$ and $\IIb = (X,\Cb)$ 
be instances of $\csp(\{\Imp\})$.
 Then $\Ret{\Hve} \leap \csp(\{ \Imp, \delta_0, \delta_1\})$ and $\DirRet{\Hvfb} \leap \csp(\{ \Imp, \delta_0, \delta_1\})$.
\end{lem}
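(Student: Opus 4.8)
The plan is to prove each of the two statements by exhibiting a \emph{parsimonious} polynomial-time reduction to $\csp(\{\Imp,\delta_0,\delta_1\})$. Such a reduction makes a single oracle call, on an instance whose input size is polynomial in the original (since $\IIv,\IIe,\IIf,\IIb$ are fixed), and preserves the count exactly; hence any approximation returned by the oracle with precision~$\eps$ is already an equally good approximation of the desired quantity, so it is in particular an AP-reduction (and, composed with Lemma~\ref{lem:CSPImpliesEqBIS}, yields $\bis$-easiness downstream).

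For the undirected statement, let $(G,\boldS)$ be an instance of $\Ret{\Hve}$, so $G$ is irreflexive and each $S_v$ is either $V(\Hve)$ or a singleton $\{\tau_v\}$; note that $\tau_v \in V(\Hve)$ by definition of a list, so $\tau_v$ is automatically a satisfying assignment of $\IIv$. I would build a $\csp(\{\Imp,\delta_0,\delta_1\})$ instance on the variable set $Y = V(G)\times X$, where the value of the variable $(v,x)$ is intended to record $h(v)(x)$ for a homomorphism $h\from(G,\boldS)\to\Hve$. The constraints come in three families: (i)~for every $v\in V(G)$ and every $\Imp(x,y)\in\Cv$, add $\Imp\bigl((v,x),(v,y)\bigr)$, which forces the restriction $\sigma(v,\cdot)$ of a satisfying assignment to be a satisfying assignment of $\IIv$, i.e.\ a vertex of $\Hve$; (ii)~for every edge $\{u,v\}\in E(G)$ and every $\Imp(x,y)\in\Ce$, add the pair $\Imp\bigl((u,x),(v,y)\bigr)$ and $\Imp\bigl((v,x),(u,y)\bigr)$, which by Definition~\ref{def:Hve} is exactly the condition that $\{\sigma(u,\cdot),\sigma(v,\cdot)\}$ be an edge of $\Hve$ (this reads correctly even when $\sigma(u,\cdot)=\sigma(v,\cdot)$, where it becomes the condition for a loop of $\Hve$); and (iii)~for every $v$ with $S_v=\{\tau_v\}$ and every $x\in X$, add the unary constraint $\delta_{\tau_v(x)}\bigl((v,x)\bigr)$, while for $v$ with $S_v=V(\Hve)$ nothing further is added. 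The map $\sigma\mapsto h$ given by $h(v)(x)=\sigma(v,x)$ is then a bijection between satisfying assignments of the constructed instance and $\calH((G,\boldS),\Hve)$: family~(i) says each $h(v)$ is a vertex of $\Hve$, family~(ii) says each $G$-edge maps to an edge of $\Hve$, and family~(iii) says $h$ respects the lists. Hence the number of satisfying assignments equals $\hom{(G,\boldS)}{\Hve}$.

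For the directed statement, the construction is identical except that the edge family is split according to direction: for each directed edge $(u,v)\in E(G)$, add $\Imp\bigl((u,x),(v,y)\bigr)$ for every $\Imp(x,y)\in\Cf$ and $\Imp\bigl((v,x),(u,y)\bigr)$ for every $\Imp(x,y)\in\Cb$. By Definition~\ref{def:dirHve} this is precisely the condition for $(\sigma(u,\cdot),\sigma(v,\cdot))$ to be a directed edge of $\Hvfb$, so the same bijection argument gives that the instance has $\hom{(G,\boldS)}{\Hvfb}$ satisfying assignments.

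There is no serious obstacle here; the reduction is essentially a bookkeeping exercise, and the construction is a mild generalisation of the downsets reduction of Dyer et al.\ and Kelk. The points that need a little care are: correctly accommodating both permitted list sizes, using the observation that a singleton list value lies in $V(\Hve)$ so that non-membership never has to be expressed; checking the bijection in the boundary case where an edge of $G$ is mapped to a loop of $\Hve$ (respectively $\Hvfb$); and noting that a parsimonious polynomial-time reduction is in particular an AP-reduction.
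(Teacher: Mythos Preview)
Your proposal is correct and follows essentially the same approach as the paper: the same variable set $V(G)\times X$, the same three families of constraints (vertex constraints from $\Cv$, edge constraints from $\Ce$ or $\Cf/\Cb$, and pinning constraints for singleton lists), and the same bijection $\sigma\mapsto(v\mapsto\sigma(v,\cdot))$. The paper's presentation spells out the inverse map and the verification in slightly more detail, but the content is the same.
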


\begin{proof}

{\bf Undirected case: }We first show the 
reduction from $\Ret{\Hve}$ to $\csp(\{ \Imp, \delta_0, \delta_1\})$  and extend this to the directed result afterwards. The reductions we show are parsimonious.
From an instance $(G,\boldS)$ 
of $\Ret{\Hve}$
we create an instance $I$ of $\csp(\{ \Imp, \delta_0, \delta_1\})$ as follows. The set of variables of $I$ is $V(G) \times X$ and the set of constraints $C$ of~$I$ is constructed as follows.
\begin{enumerate}[(1)]
\item
For each $v\in V(G)$ and each constraint $\Imp(x,y) \in \IIv$, we add the
constraint $\Imp((v,x),(v,y))$ to $C$.\label{eq:csp1}
\item For each edge $\{u,v\} \in E(G)$ and each constraint $\Imp(x,y) \in \IIe$,
we add the constraints 
$\Imp((u,x) ,(v,y))$ and $\Imp(   (v,x),(u,y))$ to~$C$. \label{eq:csp2}
\item For each $v\in V(G)$ with $|S_v| = 1$ let $\tau$ be the  (only) element of $S_v$.
If $\tau(x)=0$ then add the constraint $\delta_0((v,x))$ to $C$.
Otherwise, add the constraint $\delta_1((v,x))$ to $C$. \label{eq:csp3} 
\end{enumerate}

To complete the reduction from $\Ret{\Hve}$ to $\csp(\{ \Imp, \delta_0, \delta_1\})$,
we will 
show that there is a bijection between homomorphisms from $(G,\boldS)$ to $\Hve$ and   satisfying assignments of $I$.
This  bijection ensures that  the number of satisfying assignments of $I$
is  equal to $ \hom{(G,\boldS)}{\Hve}$.
Hence the approximation to $ \hom{(G,\boldS)}{\Hve}$ can be achieved using a single oracle call to $\csp(\{ \Imp, \delta_0, \delta_1\})$ with the desired accuracy~$\epsilon$.

To establish the bijection, we present an (invertible) map from  
satisfying assignments of~$I$ to
homomorphisms from $(G,\boldS)$ to $\Hve$.
The map is constructed as follows. Let $\sigma$ be any satisfying assignment of~$I$.
\begin{itemize}
\item For every vertex $v\in V(G)$, 
define a function $\sigma_v \from X \to \{0,1\}$ as follows.
For all $x\in X$, let $\sigma_v(x) = \sigma((v,x))$.
The constraints added to~$C$ in item~\eqref{eq:csp1} ensure that, since $\sigma$ is a satisfying assignment of $I$, the assignment $\sigma_v$ is a satisfying assignment of $\IIv$. Thus, $\sigma_v$ is a vertex of $\Hve$.
\item Next, we will argue that the function from $V(G)$ to $V(\Hve)$ that maps every vertex $v\in V(G)$ to $\sigma_v$
is a homomorphism from $(G,\boldS)$ to $\Hve$.
\begin{itemize}
\item Consider an edge $\{u,v\}$ of~$G$. We must show that $\{\sigma_u,\sigma_v\}$ is an edge of $\Hve$.
Using Definition~\ref{def:Hve}, this is equivalent to showing that, for every constraint 
$\Imp(x,y)$ in $\IIe$,
we have  $\sigma_u(x) \Rightarrow \sigma_v(y)$  and $\sigma_v(x) \Rightarrow \sigma_u(y)$.
Using the construction of
$\sigma_u$ and $\sigma_v$, this is equivalent to showing that, for every constraint
$\Imp(x,y)$ in $\IIe$,
we have $\sigma(u,x) \Rightarrow \sigma(v,y)$ and
$\sigma(v,x) \Rightarrow \sigma(u,y)$.
This is ensured by the fact that $\sigma$ is a satisfying assignment of~$I$, so it satisfies the constraints added in item~\eqref{eq:csp2}.
\item Consider a vertex $v\in V(G)$ with $S_v = \{\tau\}$.
We must show that $\sigma_v = \tau$.
This is ensured by the constraints added in item~\eqref{eq:csp3}.

\end{itemize}
 
\end{itemize}

Starting from the satisfying assignment~$\sigma$ of~$I$,  we produced a
homomorphism from~$(G,\boldS)$ to~$\Hve$, namely the homomorphism
that maps every vertex $v\in V(G)$ to $\sigma_v$.
To finish the proof, we need only note that this construction 
is invertible --- given any homomorphism from~$(G,\boldS)$ 
to~$\Hve$ we can let $\sigma_v$ denote the image of~$v$ under this homomorphism.
Given the collection $\{\sigma_v \mid v\in V(G)\}$,
we construct an assignment $\sigma$ 
from $V(G) \times X$ to $\{0,1\}$ by inverting
the above construction: 
For every $v\in V(G)$ and $x\in X$, let $\sigma((v,x)) = \sigma_v(x)$.
We must then check that $\sigma$ is satisfying.
\begin{itemize}
\item For each $v\in V(G)$, the assignment $\sigma$ satisfies the relevant constraints
added in item~\eqref{eq:csp1} because $\sigma_v$ is a vertex of $\Hve$, hence a satisfying assignment of $\IIv$.
\item For each $\{u,v\} \in E(G)$
and each pair of constraints 
$\Imp((u,x) ,(v,y))$ and $\Imp((v,x),(u,y))$ 
added to~$C$ in item~\eqref{eq:csp2},
$\sigma$ satisfies the constraints  because $\{\sigma_u,\sigma_v\}$ is an edge of $\Hve$
(so $\sigma_u(x) \implies\sigma_v(y)$ and $\sigma_v(x) \implies \sigma_u(y)$).
\item Finally, for any $s\in \{0,1\}$,
consider a constraint $\delta_s((v,x))$ introduced in item~\eqref{eq:csp3}.
The procedure in item~\eqref{eq:csp3} ensures that, for some $\tau$ with $S_v=\{\tau\}$,
we have $\tau(x)=s$.
Our homomorphism has $\sigma_v = \tau$.
Thus, the constraint $\sigma((v,x))=s$ is satisfied by~$\sigma$.
\end{itemize}

{\bf Directed Case:} The 
reduction from $\DirRet{\Hvfb}$
to $\csp(\{ \Imp, \delta_0, \delta_1\})$
is similar to the one given in the undirected case.
Starting with an instance
 $(G,\boldS)$ of $\DirRet{\Hvfb}$ 
 we  create an instance  $I$ of $\csp(\{ \Imp, \delta_0, \delta_1\})$ as follows. 
 The set of variables of $I$ is $V(G) \times X$, as in the undirected reduction.
 The set     of constraints $C$ of~$I$ is constructed in the same way as 
 in the undirected reduction, except that item~\eqref{eq:csp2} is replaced with the following.

\begin{enumerate}  
\item[(2)']  For each (directed) edge $(u,v) \in E(G)$, we  add the following constraints to~$C$.
For  each constraint $\Imp(x,y) \in \IIf$,
we add the constraint
$\Imp((u,x) ,(v,y))$ to~$C$. For each   constraint $\Imp(x,y) \in \IIb$,
we add the constraint 
  $\Imp((v,x),(u,y))$ to~$C$.  \label{eq:csp2prime}
 \end{enumerate}

 As in the undirected case, we complete the proof by establishing a bijection  from
 satisfying assignments of~$I$ to homomorphisms from $(G,\boldS)$ to $\Hvfb$.
Let $\sigma$ be any satisfying assignment of~$I$.
The construction of~$\sigma_v$ from~$\sigma$ is the same as in the undirected case.
Only one difference arises in the verification that the function from $V(G)$ to $V(\Hvfb)$ that maps
every vertex $v\in V(G)$ to $\sigma_v$ is a homomorphism from $(G,\boldS)$ to $\Hvfb$.
Consider any directed edge $(u,v)$ of $G$.
We must show that $(\sigma_u,\sigma_v)$ is an edge of $\Hvfb$.
Using Definition~\ref{def:dirHve} and the construction of~$\sigma_u$ and
$\sigma_v$, this is equivalent to showing 
\begin{itemize}
\item For every constraint $\Imp(x,y)$ in $\IIf$,
we have  $\sigma(u,x) \Rightarrow \sigma(v,y)$, and
\item for every constraint $\Imp(x,y)$ in $\IIb$,
we have $\sigma(v,x) \Rightarrow \sigma(u,y)$.
\end{itemize}
This is ensured by the constraints added in item~\eqref{eq:csp2}'.

As in the undirected case, we next show that we have a bijection by
starting with a homomorphism from $(G,\boldS)$ to $\Hvfb$
and letting $\sigma_v$ denote the image of~$v$ under this homomorphism.
Given the collection $\{\sigma_v \mid v\in V(G)\}$ we construct an assignment $\sigma$ from $V(G)\times X$ to $\{0,1\}$
exactly as in the undirected case.
We must check that $\sigma$ is a satisfying assignment of~$I$.
This is the same as the undirected case except when checking that $\sigma$ satisfies the constraints
added in item~\eqref{eq:csp2}'.
For $(u,v)\in E(G)$ 
and a constraint 
$\Imp((u,x) ,(v,y))$ added to~$C$ because $\Imp(x,y) \in \IIf$,
note that, since $(\sigma_u,\sigma_v)$ is an edge of~$\Hvfb$,
by Definition~\ref{def:dirHve},
we have $\sigma_u(x) \implies \sigma_v(x)$,
so the constraint is satisfied.
Similarly, for a constraint 
 $\Imp((v,x),(u,y))$ added to~$C$ because
 $\Imp(x,y) \in \IIb$
 we again have $\sigma_v(x) \implies \sigma_u(y)$, so the constraint is satisfied.
 
 So we have a bijection from 
 satisfying assignments of~$I$ to homomorphisms from $(G,\boldS)$ to $\Hvfb$
 and the reduction from 
$\DirRet{\Hvfb}$
to $\csp(\{ \Imp, \delta_0, \delta_1\})$ follows. 
\end{proof} 
 
Although, to be general, we have presented 
undirected and directed reductions in  Lemma~\ref{lem:OALHomToCSPImplies},
our goal in Lemma~\ref{lem:RetBristledPath}
is to prove 
$\bis$-easiness of $\Ret{H}$ for
  a partially bristled reflexive path, which is an undirected graph.
So we will use the undirected reduction from Lemma~\ref{lem:OALHomToCSPImplies}
for this.
The rough idea will be to take a  partially bristled reflexive path~$H$
and show
how to set up the corresponding instances 
$\IIv$ and $\IIe$
of $\csp(\{\Imp\})$ so
that $\Hve = H$.
Then Lemma~\ref{lem:OALHomToCSPImplies}
shows that  $\Ret{H}$ reduces to $\csp(\{ \Imp, \delta_0, \delta_1\})$, so $\Ret{H}$ is $\bis$-easy.
Unfortunately, we can't precisely achieve this goal, but
we can set up the corresponding instances
$\IIv$ and $\IIe$ so that
$\Hve$ is equal to $H$, together with some additional small connected components,
which turn out not to matter. 
The fact that these small connected components don't cause trouble was
 first observed by Kelk~\cite{KelkThesis} in the context of counting homomorphisms. In Lemma~\ref{lem:noSingletons} we show that this is also true when counting retractions.
Lemma~\ref{lem:Kelk6.6} states the well-known fact that subtracting polynomial-size entities does not spoil an AP-reduction, which is, for instance, pointed out in~\cite[Lemma 6.6]{KelkThesis}. For the sake of completeness we give a short proof.

\begin{lem} \label{lem:Kelk6.6}
Let $H$ and $H'$ be graphs.
For any graph~$G$,
let $f(G) = \hom{G}{H} - \hom{G}{H'}$.
If $f(G)$ is non-negative and bounded from above by a polynomial in $|V(G)|$, and
can be computed in polynomial time, then
$\Hom{H'}\leap \Hom{H}$. 
\end{lem}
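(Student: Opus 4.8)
The plan is to exhibit an AP-reduction from $\Hom{H'}$ to $\Hom{H}$ which, on input $G$, first computes $f(G)$ exactly in polynomial time (possible by hypothesis) and then extracts an approximation to $\hom{G}{H'}=\hom{G}{H}-f(G)$ from a single call to the $\Hom{H}$ oracle. The only delicate point is that naively subtracting the small quantity $f(G)$ from an approximation of $\hom{G}{H}$ could ruin the relative error when $\hom{G}{H'}$ itself is tiny; the remedy is to call the oracle with a precision fine enough that \emph{either} the subtraction is harmless \emph{or} $\hom{G}{H}$ is pinned down exactly by rounding.

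Concretely, fix a polynomial $p$ with $f(G)\le p(\abs{V(G)})$ for all~$G$, where we may assume $p\ge 1$. Given an instance~$G$ with $n=\abs{V(G)}$ and error bound $\eps\in(0,1)$, put $\delta=\eps/(8(p(n)+1))$, so that $1/\delta\in\mathrm{poly}(n,1/\eps)$, as an AP-reduction requires. Compute $f=f(G)$ in polynomial time and call the $\Hom{H}$ oracle on $(G,\delta)$ to obtain $\hatN$; with probability at least $3/4$ this satisfies $\abs{\hatN-N}\le\delta N$ where $N=\hom{G}{H}$, and from now on condition on this event. Write $N'=\hom{G}{H'}=N-f\ge 0$.

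The reduction now branches on whether $\hatN\le 2f$. If it is, then $N\le\hatN/(1-\delta)\le 4p(n)$, so $\abs{\hatN-N}\le\delta N<1/2$ and the integer nearest $\hatN$ equals~$N$; the reduction returns this integer minus~$f$, which is exactly~$N'$. If instead $\hatN>2f$, then $f<(1+\delta)N/2$, whence $N'>(1-\delta)N/2\ge N/4$; the reduction returns $\hatN-f$, and then $\abs{(\hatN-f)-N'}=\abs{\hatN-N}\le\delta N\le 4\delta N'\le\eps N'$. In both branches the output is a correct $(\eps,1/4)$-approximation of $\hom{G}{H'}$ (a single good oracle call suffices for a RAS), the post-processing runs in polynomial time, and only one oracle call of the prescribed form is made, so the reduction is valid.

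The step to be careful about is this case analysis: one must verify that the branch chosen on the basis of the \emph{approximate} value $\hatN$ is consistent with the true value of~$N$ whenever the oracle answer is accurate --- in particular, that $\hatN\le 2f$ genuinely forces $N$ to be polynomially bounded (so that rounding recovers it exactly), and that the threshold $2f$ leaves enough slack that in the other branch $N'$ is a constant fraction of~$N$. Beyond this bookkeeping there is no serious obstacle; the structural facts that make it work --- that $f$ is polynomial-time computable and polynomially bounded, so $1/\delta$ stays polynomial --- are exactly the hypotheses of the lemma.
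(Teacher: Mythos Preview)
Your proof is correct and follows essentially the same idea as the paper's: call the $\Hom{H}$ oracle with precision of order $\eps/f(G)$ (or $\eps/p(n)$), subtract $f(G)$, and argue that either $N$ is small enough to be pinned down exactly or $N'$ is a constant fraction of $N$ so the subtraction preserves relative error. The only cosmetic difference is that the paper does not branch the algorithm at all---it always returns $R-k$ and relies on the oracle output being an integer to get the exact answer when $N'=0$---whereas you branch on $\hatN\le 2f$ and round; both devices serve the same purpose.
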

\begin{proof}
Let $G$ be an instance of $\Hom{H'}$ and let~$\eps\in(0,1)$ be the desired precision. 
To shorten notation, let $N = \hom{G}{H}$. From the  definition of~$f$ in the statement of the lemma, $\hom{G}{H'} = N - f(G)$. 
First, the algorithm computes $k=f(G)$ in polynomial time. If $k=0$, then $\hom{G}{H'}= \hom{G}{H}$, and the algorithm simply returns the result of a $\Hom{H}$ oracle call with precision $\eps$.  

Suppose instead that $k\geq 1$. In this case, the algorithm makes a $\Hom{H}$ oracle call with input $G$ and precision 
$\delta\le \frac{\eps}{16k}$. Let $R$ be the integer solution returned by this oracle 
call (note that $R$ is an approximation to $N$ satisfying $e^{-\delta} N \le R \le e^{\delta} N$). The algorithm returns $R-k$. We show that this output  approximates $\hom{G}{H'}$ with the desired precision.

If $\hom{G}{H'}=0$ then $N=k$ and $e^{-\delta} k \le R \le e^{\delta} k$. By Observation~\ref{obs:accuracy} and the facts that $\eps<1$ and $k\ge 1$ this implies $R \in (k-1/4, k+1/4)$ and since $R$ is integer this gives $R=k$. Thus, in this case the algorithm returns $0$, which is the exact solution.

Suppose instead that $\hom{G}{H'}\ge 1$. In this case, $N \ge k+1$ and by Observation~\ref{obs:accuracy} we have
\[
R-k \le e^\delta N - k \le (1+2\delta)N -k = (1+2\delta)(N-k) + 2k\delta.
\]
Since $N\ge k+1$ and $2\delta\le \eps/8$ we have $2k\delta\le \eps/8 \le \eps/8\cdot(N-k)$ and consequently
\[
(1+2\delta)(N-k) + 2k\delta \le (1+\eps/4)(N-k).
\]
Analogously, we obtain $R-k \ge (1-\delta)(N-k) -k\delta \ge (1-\eps/8)(N-k)$.
Finally, by Observation~\ref{obs:accuracy}, this implies $e^{-\eps} (N-k)\le R-k \le e^{\eps} (N-k)$ and thus returning $R-k$ has the desired precision.
\end{proof}

\begin{lem}\label{lem:noSingletons}
Let $H'$ be a graph
and let $H$ be the graph consisting of a connected component that is isomorphic to $H'$ together
with some additional connected components $C_1,\ldots,C_k$.
Suppose that, for each $i\in [k]$, $C_i$ is
one of the following graphs: a singleton vertex, with or without a loop, or an unlooped edge. 
Then $\OALHom{H'}\leap\OALHom{H}$.
\end{lem}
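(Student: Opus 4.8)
The plan is to build an AP-reduction from $\OALHom{H'}$ to $\OALHom{H}$ that, on an instance of $\OALHom{H'}$, lifts its lists from $V(H')$ to $V(H)$ and then corrects for the few spurious homomorphisms this introduces. Since $\Ret{\cdot}$ depends only on the isomorphism type of the target, I would first assume that $H'$ is literally the connected component of $H$ isomorphic to it, so that $V(H')\subseteq V(H)$ and $E(H')\subseteq E(H)$; and by the componentwise-product argument of Observation~\ref{obs:connectedOALHom} it is enough to reduce $\cOALHom{H'}$ to $\OALHom{H}$, so I may take the input graph $G$ to be connected (this connectedness is what makes the rest work). Given such an instance $(G,\boldS)$ and an error bound $\eps\in(0,1)$, the first step is to define lists $\boldS'=\{S_v'\mid v\in V(G)\}$ over $V(H)$ by $S_v'=S_v$ when $|S_v|=1$ and $S_v'=V(H)$ when $S_v=V(H')$; this is a legitimate $\OALHom{H}$-instance since $G$ is irreflexive and every list has size $1$ or $|V(H)|$.

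The key step is to show that $\hom{(G,\boldS')}{H}$ exceeds $\hom{(G,\boldS)}{H'}$ only by a quantity $D$ that is a non-negative integer, at most the constant $2k$, and computable in polynomial time. I would argue this by a short case analysis: any $h\in\calH((G,\boldS'),H)$ maps the connected graph $G$ into a single component of $H$. If some list $S_v$ of $\boldS$ is a singleton $\{\tau\}$ (necessarily $\tau\in V(H')$), then $h(v)=\tau$ forces $h(V(G))\subseteq V(H')$, so $\calH((G,\boldS'),H)=\calH((G,\boldS),H')$ and $D=0$. Otherwise every list of $\boldS$ equals $V(H')$ and $\boldS'$ is the trivial all-$V(H)$ assignment, so $\hom{(G,\boldS)}{H'}=\hom{G}{H'}$, $\hom{(G,\boldS')}{H}=\hom{G}{H}$, and — $G$ being connected — $\hom{G}{H}=\hom{G}{H'}+\sum_{i=1}^{k}\hom{G}{C_i}$, giving $D=\sum_{i=1}^{k}\hom{G}{C_i}$. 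Here the hypothesis on the $C_i$ is exactly what is needed: each $\hom{G}{C_i}$ is at most $2$ and read off from $G$ in polynomial time (it is $1$ for a looped singleton; $1$ or $0$ according to whether $G$ has no edge for an unlooped singleton; $2$ or $0$ according to whether $G$ is bipartite for an unlooped edge), so $D$ is a polynomially bounded, polynomial-time computable non-negative integer with $\hom{(G,\boldS')}{H}\ge D$.

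The final step is routine: to approximate $\hom{(G,\boldS)}{H'}=\hom{(G,\boldS')}{H}-D$ I would make one $\OALHom{H}$ oracle call on $(G,\boldS')$ with precision of order $\eps/\max\{1,D\}$, obtain an estimate $\hat R$, and return $\hat R-D$; since $D$ is a non-negative integer bounded by a constant with $\hom{(G,\boldS')}{H}\ge D$, the error analysis in the proof of Lemma~\ref{lem:Kelk6.6} (via Observation~\ref{obs:accuracy}) applies verbatim and shows this is a RAS for $\hom{(G,\boldS)}{H'}$, returning the exact value $0$ when that value is $0$. Composing with Observation~\ref{obs:connectedOALHom} then yields $\OALHom{H'}\leap\OALHom{H}$. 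The only point that needs any thought is the case where $\boldS$ contains no singleton list: because $\OALHom{H}$ forbids lists of size strictly between $1$ and $|V(H)|$ one cannot simply reinterpret a $V(H')$-sized list over $V(H)$, and it is precisely there that one must invoke the smallness of the components $C_i$ together with the subtraction technique; everything else is bookkeeping.
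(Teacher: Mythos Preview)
Your proposal is correct and follows essentially the same route as the paper: reduce to connected inputs via Observation~\ref{obs:connectedOALHom}, lift the lists from $V(H')$ to $V(H)$, observe that a singleton list forces the image into the component $H'$ (so $D=0$), and in the remaining all-full-lists case compute $D=\sum_i \hom{G}{C_i}\le 2k$ and subtract it using the error analysis of Lemma~\ref{lem:Kelk6.6}. The only cosmetic difference is that the paper routes the middle step through $\cOALHom{H}$ rather than directly to $\OALHom{H}$, which is immaterial.
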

\begin{proof}
Recall the definition of $\cOALHom{H}$ from Section~\ref{sec:Preliminaries}.
To prove the lemma we show
\begin{equation}
\OALHom{H'} \leap \cOALHom{H'} \leap \cOALHom{H} \leap \OALHom{H}.
\end{equation}
The first and the trivial third reduction follow from Observation~\ref{obs:connectedOALHom}. It remains to show that $\cOALHom{H'} \leap \cOALHom{H}$. Let $(G,\boldS')$ be 
an input to $\cOALHom{H'}$  and let $\eps \in(0,1)$ be
the desired precision. From the problem definition, $G$ is connected. 
Now define the lists $S_v$ for $v\in V(G)$ as follows.
If $S'_v = V(H')$ then let $S_v = V(H)$. Otherwise, let $S_v = S'_v$. Let $\boldS=\{S_v \mid v\in V(G)\}$
 
First, the algorithm tests whether there  is a 
list $S'_v \in \boldS'$ with $|S'_v|=1$. 
If there is such a list, 
then there is a particular component of $H'$ 
with the property that every homomorphism from~$G$ to~$H'$
maps all vertices of~$G$ to this component, and every homomorphism from~$G$ to~$H$
maps all vertices of~$G$ to this component.
Thus, $\hom{(G,\boldS')}{H'}=\hom{(G,\boldS)}{H}$. 
So  a single oracle call with precision $\eps$ gives the sought-for approximation. 

If there is no list $S'_v \in \boldS'$ with $|S'_v|=1$ then
every list $S'_v$ is equal to $V(H')$
and every list $S_v$ is equal to $V(H)$. Thus, $\hom{(G,\boldS')}{H'} = \hom{G}{H'}$ and $\hom{(G,\boldS)}{H} = \hom{G}{H}$. As $G$ is connected we also have $\hom{G}{H}= \hom{G}{H'} + \sum_{i =1}^k \hom{G}{C_i}$. As $C_1,\dots C_k$ are either singleton vertices or unlooped edges, the algorithm can compute $\sum_{i =1}^k \hom{G}{C_i}$ efficiently. Also, for each $i\in [k]$,  $\hom{G}{C_i}\le 2$. Setting $f(G)=\sum_{i =1}^k \hom{G}{C_i}$ in Lemma~\ref{lem:Kelk6.6} gives the sought-for AP-reduction.
\end{proof}

As noted at the beginning of this section, 
approximately counting homomorphisms to partially bristled reflexive paths is shown to be $\bis$-easy in~\cite[Appendix A.8]{KelkThesis}. Using the same construction and our Lemma~\ref{lem:OALHomToCSPImplies} we can now prove Lemma~\ref{lem:RetBristledPath}, which is the generalisation for counting retractions.  
We restate the lemma and recast the construction in our setting for convenience.

{\renewcommand{\thethm}{\ref{lem:RetBristledPath}}
\begin{lem}\label{lem:BristledPathEasiness}
\LemRetBristledPath
\end{lem}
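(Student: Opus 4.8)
The plan is to establish the two directions of $\Ret{H}\eqap\bis$ separately, with almost all of the work going into $\Ret{H}\leap\bis$. For the hardness direction $\bis\leap\Ret{H}$ I would argue as follows. A partially bristled reflexive path $H$ with at least $3$ vertices is connected, and it is neither a reflexive complete graph (ignoring loops, $H$ is a tree, hence triangle-free, so a reflexive complete graph on $H$'s vertex set would need at most $2$ vertices; and if $H$ has a bristle it even has an unlooped vertex) nor an irreflexive complete bipartite graph (every spine vertex $c_i$ carries a loop). Hence Theorem~\ref{thm:HomBIS} gives $\bis\leap\Hom{H}$, and Observation~\ref{obs:HomToRetToLHom} gives $\Hom{H}\leap\Ret{H}$; composing yields $\bis\leap\Ret{H}$. (We cannot instead route this through $\LHom{H}$: once $H$ has a bristle it is not a reflexive proper interval graph, and Theorem~\ref{thm:LHomTricho} then places $\LHom{H}$ in the $\sat$-equivalent class.)

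For the harder inclusion $\Ret{H}\leap\bis$, the plan is to feed Kelk's construction~\cite[Appendix A.8]{KelkThesis} into the CSP machinery of Lemma~\ref{lem:OALHomToCSPImplies}. Writing $H$ in the notation of Definition~\ref{def:PBRP} via its parameters $Q$ and $S\subseteq[Q]$, I would build a common variable set $X$ and two $\csp(\{\Imp\})$ instances $\IIv=(X,\Cv)$ and $\IIe=(X,\Ce)$ so that $\Hve$ is isomorphic to $H$ together with a (possibly empty) family of additional connected components, each of which is a single vertex (with or without a loop) or an unlooped edge. The encoding is the standard ``downset/threshold'' one: first take spine variables $z_1,\dots,z_{Q+1}$ and put the chain constraints $\Imp(z_{j+1},z_j)$ for $1\le j\le Q$ into \emph{both} $\Cv$ and $\Ce$, so that the monotone satisfying assignments of $\IIv$ are exactly the $Q+2$ threshold assignments $\zeta_0,\dots,\zeta_{Q+1}$ and, by Definition~\ref{def:Hve}, $\zeta_i$ and $\zeta_{i'}$ are adjacent in $\Hve$ precisely when $|i-i'|\le 1$, with a loop at every $\zeta_i$ --- i.e. the spine appears as a reflexive path. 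Then, for each $i\in S$, I would enlarge $X$ by a fresh ``bristle'' variable and add a few $\Imp$-constraints to $\Cv$ and $\Ce$, chosen so that $\IIv$ gains exactly one further useful satisfying assignment $\eta_i$, that in $\Hve$ the vertex $\eta_i$ is adjacent only to $\zeta_i$, and that there is no loop at $\eta_i$ (so $\eta_i$ plays the role of the bristle $g_i$ at the internal spine vertex $c_i$). Since the satisfying assignments of a conjunction of $\Imp$-constraints form a sublattice of $\{0,1\}^X$, a few further ``junk'' assignments are forced into $V(\Hve)$, and the constraints are arranged so that these induce only the permitted small components.

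Granting this construction, the proof closes by composing three facts. Lemma~\ref{lem:noSingletons}, applied with $H'=H$ (which is connected) and with ambient graph $\Hve$, gives $\OALHom{H}\leap\OALHom{\Hve}$, that is $\Ret{H}\leap\Ret{\Hve}$. Lemma~\ref{lem:OALHomToCSPImplies} gives $\Ret{\Hve}\leap\csp(\{\Imp,\delta_0,\delta_1\})$. And Lemma~\ref{lem:CSPImpliesEqBIS} gives $\csp(\{\Imp,\delta_0,\delta_1\})\eqap\bis$. Chaining these yields $\Ret{H}\leap\bis$, which together with the hardness direction gives $\Ret{H}\eqap\bis$.

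The step I expect to be the main obstacle is verifying the combinatorial claim about $\Hve$. One must choose the bristle variables and their $\Imp$-constraints so that simultaneously: (i) the useful satisfying assignments of $\IIv$ are exactly $\zeta_0,\dots,\zeta_{Q+1}$ together with $\{\eta_i : i\in S\}$; (ii) the edge relation of Definition~\ref{def:Hve} makes each $\eta_i$ a genuine pendant at $\zeta_i$ --- in particular $\eta_i$ is loopless, $\eta_i$ is not adjacent to any $\zeta_{i'}$ with $i'\ne i$, and $\eta_i\not\sim\eta_{i'}$ even for consecutive $i,i'\in S$; and (iii) every remaining (junk) satisfying assignment lies in a separate component that is a lone vertex or an unlooped edge, disjoint from the copy of $H$. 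Pinning down $\Imp$-constraints that achieve (ii) and (iii) at once in the presence of the sublattice closure is the delicate point, and this is precisely where Kelk's construction is invoked, recast here in the CSP language so that, via Lemma~\ref{lem:OALHomToCSPImplies}, it also accommodates the size-$1$ and size-$|V(H)|$ lists of $\Ret{H}$ rather than only plain $\Hom{H}$.
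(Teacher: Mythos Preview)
Your overall plan matches the paper exactly: hardness via Theorem~\ref{thm:HomBIS} together with Observation~\ref{obs:HomToRetToLHom}, and easiness by chaining Lemmas~\ref{lem:noSingletons}, \ref{lem:OALHomToCSPImplies} and~\ref{lem:CSPImpliesEqBIS} once a suitable pair $(\IIv,\IIe)$ has been exhibited.

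The one place where your sketch diverges from the paper (and from Kelk's actual construction, which the paper reuses) is how the bristles are encoded. You propose adding a fresh variable for each $i\in S$ and then worrying about sublattice closure. The paper avoids this entirely: it takes $X=\{x_0,\ldots,x_Q\}$ only, puts into $\Ce$ the \emph{full} family $\Imp(x_j,x_i)$ for all $0\le i<j\le Q$ (not just the chain constraints), and puts into $\Cv$ only the constraints $\Imp(x_i,x_{i-1})$ for $i\in[Q]\setminus S$. The point is that \emph{omitting} the constraint at position $i$ when $i\in S$ lets the ``one-gap'' assignment $\sigma'_i$ (equal to the threshold $\sigma_i$ except that $\sigma'_i(x_{i-1})=0$ and $\sigma'_i(x_i)=1$) survive in $V(\Hve)$; the edge constraints in $\Ce$ then make $\sigma'_i$ a loopless pendant at $\sigma_i$, i.e.\ the bristle $g_i$. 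One then checks that any assignment with positive degree in $\Hve$ must be a $\sigma_i$ or a $\sigma'_i$, so every junk vertex is an isolated unlooped singleton. This is simpler than your proposed route and sidesteps the concerns you flag in (ii) and~(iii). Note also that using only the chain constraints in $\Ce$, as you suggest, would be insufficient in the paper's encoding: for consecutive $i,i'\in S$ the assignments $\sigma'_i$ and $\sigma'_{i'}$ would become adjacent; the non-adjacent pair $\Imp(x_{i'},x_{i-1})$ is what separates them.
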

\addtocounter{thm}{-1}
}
\begin{proof}
The $\bis$-hardness part of the statement is inherited from $\Hom{H}$ using Theorem~\ref{thm:HomBIS} and the reduction $\Hom{H} \leap \OALHom{H}$ from Observation~\ref{obs:HomToRetToLHom}. We now show $\bis$-easiness.

Matching the notation from Definition~\ref{def:PBRP}, the partially bristled reflexive path $H$ can be described as follows. 
There exists a positive integer $Q$ and a be a subset $S$ of $[Q]$
such that  $V(H) = \{c_0,\ldots,c_{Q+1}\} \cup \bigcup_{i\in S} \{g_i\}$
and $E(H) = \bigcup_{i=0}^{Q} \{c_i,c_{i+1}\} \cup \bigcup_{i=0}^{Q+1} \{c_i,c_i\} \cup \bigcup_{i\in S} \{c_i,g_i\}$. Note that $S$ can be empty.
 
Let   $X= \{x_0,\ldots,x_{Q}\}$.
Define the instances 
$\IIv=(X,\Cv)$ and $\IIe=(X,\Ce)$ 
of $\csp(\{\Imp\})$ as follows.
\begin{itemize}
\item
For each $i\in[Q] \setminus S$,
we add a constraint $\Imp(x_i,x_{i-1})$ to $\Cv$.
\item
For each pair $(i,j)$ satisfying $0\leq i < j \leq Q$,
we add a constraint $\Imp(x_j, x_i)$ to $\Ce$.
\end{itemize}

We claim that the graph $\Hve$, 
as defined in Definition~\ref{def:Hve}, has a connected component that is isomorphic to 
$H$ and that  all other connected components of $\Hve$ are singleton vertices (without loops).
Given the claim, the reduction from $\Ret{H}$ to $\bis$
follows from Lemmas~\ref{lem:noSingletons}, \ref{lem:OALHomToCSPImplies} and~\ref{lem:CSPImpliesEqBIS} (applied in that order). 
 
 We conclude the proof by showing the claim. 
 For each $i\in \{0,\ldots,Q+1\}$ let 
 $\sigma_i \from X \to \{0,1\}$ be the following assignment 
 of Boolean values to variables in~$X$.
 $$\sigma_i(x_j) = 
 \begin{cases}
1, \quad\text{if $j<i$}\\
0, \quad\text{otherwise}.
\end{cases}$$   
 Note that $\sigma_0$    maps all arguments to $0$
 and $\sigma_{Q+1}$ maps all arguments to~$1$.
 
The indices of $\sigma_0, \dots, \sigma_{Q+1}$ are chosen this way to match the indices of the vertices $c_0$ to $c_{Q+1}$ of the graph $H$. 
Note that $\sigma_0,\dots, \sigma_{Q+1}$ are satisfying assignments of $\IIv$ and, therefore, they are vertices of $\Hve$. 
By the definition of $\IIe$,
 these vertices are looped in~$\Hve$.
Also, for all $i\in [Q]$, we have $\{\sigma_{i},\sigma_{i+1}\}\in E(\Hve)$. 
Hence, the vertices $\sigma_0,\dots, \sigma_{Q+1}$ form a reflexive path in $\Hve$.  

Now for each $i\in [Q]$ let
$\sigma'_i \from X \to \{0,1\}$ be the following assignment of Boolean values
to variables in~$X$.
$$
\sigma'_i(x_j) =  
\begin{cases}
1, \quad\text{if $j\le i$  and $j\neq i-1$}\\
0, \quad\text{otherwise}.
\end{cases}$$  
For every $i\in [Q]$, we have $\sigma'_i(x_{i-1})=0$ and $\sigma'(x_i)=1$,
so $\sigma'_i$ is not equal to any $\sigma_{i'}$. 

Consider a vertex $c_i$ of~$H$ with $i\in [Q]$.
\begin{itemize}
\item If $i\in S$:
In this case, 
$\sigma'_{i}$ is a satisfying assignment of $\IIv$.
 By the definition of $\IIe$,  $\sigma'_{i}$ has degree $1$ and is adjacent to $\sigma_{i}$ in $\Hve$.
 Thus, the vertex $\sigma'_{i}$ 
 of $\Hve$
 corresponds to the  vertex $g_i$ of~$H$.
\item If $i\notin S$:   In this case, as $i\ge 1$, the  constraint $\Imp(x_i, x_{i-1})$ 
in $\IIv$
ensures that $\sigma'_i$ is not a satisfying assignment of $\IIv$ and, therefore,
$\sigma'_i$ is not a vertex of $\Hve$.
\end{itemize} 
   
We will next show that the
edges that we have already described constitute all of the edges of $\Hve$.
This means that the rest of the vertices of $\Hve$ have degree~$0$, so we are finished.

To this end,
 let $\sigma$ be any function from $X$ to $\{0,1\}$. 
 From the definition of $\IIe$, we obtain the following necessary condition for $\sigma$ to have a neighbour in $\Hve$: 
 Let $i\in \{0,\ldots,Q\}$ be the largest index for which $\sigma(x_i)=1$. If $\psi$ is a neighbour of $\sigma$ then, for all $j \le i-1$, $\psi(x_j)=1$ and hence, for all $j \le i-2$, $\sigma(x_j)=1$. Thus, for $\sigma$ to have a neighbour in $\Hve$ it has to be of the form $\sigma_i$ or $\sigma'_i$.
\end{proof}

\begin{remark}
One interesting feature of 
Lemma~\ref{lem:RetBristledPath}  
is that it
 shows that there  are graphs $H$ for which $\OALHom{H}$ is $\bis$-equivalent, whereas $\LHom{H}$ is $\sat$-hard. 
 Thus, subject to the complexity assumption that $\bis$ is not $\sat$-equivalent,
 there is a graph~$H$ for which   the complexity of $\OALHom{H}$ differs from that of $\LHom{H}$. 
 The smallest example from the class of partially bristled reflexive paths for which this separation holds is the so-called $2$-Wrench, depicted in Figure~\ref{fig:2Wrench}.
The fact that $\sat\leap\LHom{\text{$2$-Wrench}}$ follows from Theorem~\ref{thm:LHomTricho}.
\end{remark}

\begin{figure}[h!]\centering
{\def\scaleFactor{1}
\begin{tikzpicture}[scale=1, every loop/.style={min distance=10mm,looseness=10}]

	\filldraw (0,0) node (r1){} circle[radius=3pt] --++ (0:1.5cm) node (b){} circle[radius=3pt] --++ (0:1.5cm) node (r2){} circle[radius=3pt];
	\filldraw (b) --++ (-90:1.5cm) node (g){} circle[radius=3pt];

	\path[-] (r1.center) edge  [in=125,out=55,loop] node {} ();
	\path[-] (b.center) edge  [in=125,out=55,loop] node {} ();
	\path[-] (r2.center) edge  [in=125,out=55,loop] node {} ();
	
\end{tikzpicture}
}
\caption{The graph $2$-Wrench.}
\label{fig:2Wrench}
\end{figure}
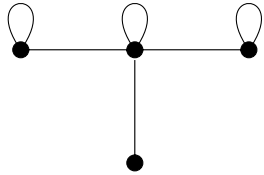

\subsubsection{$\sat$-Hardness Results for Graphs with Loops}\label{sec:MixedTreesHardness}

The goal of this section is to prove the hardness results given in Lemmas~\ref{lem:MixedTreeHardness2a},~\ref{lem:MixedTreeHardness2b} and~\ref{lem:MixedTreeHardness3}. In order to show $\sat$-hardness results we will prove that certain neighbourhood structures induce hardness. To this end consider the following easy and well-known observation proved here for completeness.

\begin{obs}
\label{obs:PinNeighbourhood}
Let $H$ be a graph and let $u$ be a vertex of $H$. Then $\Hom{H[\Gamma(u)]} \leap \OALHom{H}$.
\end{obs}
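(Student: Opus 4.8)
The plan is to give an AP-reduction from $\Hom{H[\Gamma(u)]}$ to $\OALHom{H}$ that is in fact parsimonious, so that a single oracle call suffices. Let $F$ denote the graph $H[\Gamma(u)]$, the subgraph of $H$ induced by the neighbourhood of $u$. Given an irreflexive graph $G$ which is an instance of $\Hom{F}$, I would construct an instance $(G',\boldS)$ of $\OALHom{H}$ as follows: take $G'$ to be $G$ together with one new vertex $z$, joined to every vertex of $V(G)$; set $S_z=\{u\}$ and $S_v=V(H)$ for every $v\in V(G)$. Note that $G'$ is irreflexive and that all lists have size $1$ or $|V(H)|$, so $(G',\boldS)$ is a legitimate instance of $\OALHom{H}$.

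First I would check that the homomorphisms from $(G',\boldS)$ to $H$ are exactly the homomorphisms from $G$ to $F$. Indeed, in any homomorphism $h$ from $(G',\boldS)$ to $H$, the list constraint forces $h(z)=u$; since $z$ is adjacent in $G'$ to every vertex $v\in V(G)$, the edge $\{z,v\}$ must map to an edge of $H$, which forces $h(v)\in\Gamma(u)=V(F)$ for all $v\in V(G)$. Conversely, any map $h$ with $h(z)=u$ and $h(v)\in V(F)$ automatically satisfies all the edges incident to $z$ (since each such vertex lies in $\Gamma(u)$). Restricting attention to $E(G)$, and using that $F$ is the \emph{induced} subgraph on $\Gamma(u)$, a map $h|_{V(G)}\colon V(G)\to V(F)$ respects all edges of $G$ as a map into $H$ if and only if it is a homomorphism from $G$ to $F$. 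Hence $h\mapsto h|_{V(G)}$ is a bijection between $\calH((G',\boldS),H)$ and $\calH(G,F)$, so $\hom{(G',\boldS)}{H}=\hom{G}{F}$.

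Given this bijection, the AP-reduction is immediate: on input $G$ with precision $\eps$, build $(G',\boldS)$ in polynomial time, make a single oracle call to $\OALHom{H}$ with input $\bigl((G',\boldS),\eps\bigr)$, and return the answer. Since the counts are equal, an $(\eps,1/4)$-approximation to $\hom{(G',\boldS)}{H}$ is an $(\eps,1/4)$-approximation to $\hom{G}{F}=\hom{G}{H[\Gamma(u)]}$, and all three conditions in the definition of an AP-reduction are trivially met. There is no real obstacle here; the only point needing care is the observation that $F$ must be the \emph{induced} subgraph of $H$ on $\Gamma(u)$ (not just any subgraph), which is precisely what makes the correspondence between homomorphisms into $F$ and list-constrained homomorphisms into $H$ a bijection rather than just an injection. (One subtlety worth a sentence: if $G$ has an isolated vertex or is empty, the construction still works, and if $\Gamma(u)=\emptyset$ then $F$ is the empty graph and $\hom{G}{F}$ is $1$ when $G$ is empty and $0$ otherwise, matching $\hom{(G',\boldS)}{H}$ since then no valid image exists for the vertices of $G$.)
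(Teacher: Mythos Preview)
Your proof is correct and takes essentially the same approach as the paper: add a single new vertex $z$ adjacent to all of $V(G)$, pin it to $u$, and observe that the resulting list homomorphisms to $H$ are in bijection with homomorphisms from $G$ to $H[\Gamma(u)]$. Your write-up is simply more detailed (spelling out the bijection and edge cases), but the construction is identical.
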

\begin{proof}
Let $G$ be an input to $\Hom{H[\Gamma(u)]}$ and let $v_1, \dots, v_n$ be the vertices of $G$. Let $w$ be a vertex distinct from the vertices in $G$. Then we construct the graph $G'$ with vertices $V(G')=V(G)\cup \{w\}$ and edges $E(G')=E(G) \cup \{\{w,v_i\} \mid i\in [n]\}$.
We set $S_{w}=\{u\}$ and $S_v=V(H)$ for all remaining vertices of $G'$. Let $\boldS=\{S_v \mid v\in V(G')\}$. Then $\hom{G}{H[\Gamma(u)]} = \hom{(G',\boldS)}{H}$.
\end{proof}

First we combine some known results to show hardness that is derived from the analysis of distance-$1$ neighbourhoods (Lemmas~\ref{lem:MixedTreeHardness2a} and~\ref{lem:MixedTreeHardness2b}). Then we show hardness results derived from the analysis of distance-$2$ neighbourhoods in the more difficult Lemma~\ref{lem:MixedTreeHardness3}, which is the main result of this section.

For Lemmas~\ref{lem:MixedTreeHardness2a} and~\ref{lem:MixedTreeHardness2b} we use gadgets based on complete bipartite graphs where two states dominate (see, e.g.,~\cite[Lemma 25]{DGGJApprox},~\cite[Section 5]{GKP2004} and~\cite[Lemma 5.1]{KelkThesis}). We use the version of Kelk~\cite{KelkThesis}.
Let $F(H)=\{u \in V(H) \mid \Gamma(u) = V(H)\}$. For a set of vertices $S$ recall the set of common neighbours $\Gamma(S)$ from Section~\ref{sec:Preliminaries}.
\begin{lem}[{\cite[Lemma 5.1]{KelkThesis}}]\label{lem:Kelk5.1}
Let $H$ be a graph with $\emptyset \subsetneq F(H) \subsetneq V(H)$. Suppose that, for every pair $(S,T)$ with $\emptyset \subseteq S,T \subseteq V(H)$ satisfying $S\subseteq \Gamma(T)$ and $T\subseteq \Gamma(S)$, at least one of the following holds:
\begin{enumerate}[(1)]
\item $S=F(H)$.\label{eq:Kelk1}
\item $T=F(H)$.\label{eq:Kelk2}
\item $\abs{S}\cdot \abs{T} < \abs{F(H)}\cdot \abs{V(H)}$.\label{eq:Kelk3}
\end{enumerate}
Then $\sat \leap \Hom{H}$.
\end{lem}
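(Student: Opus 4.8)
The plan is to prove the $\sat$-hardness by an AP-reduction from $\is$, the problem of counting the independent sets of an arbitrary (irreflexive) graph; as $\is$ is $\sat$-equivalent~\cite{DGGJApprox, DGJBooleanCSP}, it is enough to give an AP-reduction $\is\leap\Hom{H}$. Write $q=\abs{V(H)}$ and $i=\abs{F(H)}$, so $1\le i<q$, and call a pair $(S,T)$ of subsets of $V(H)$ \emph{valid} if $S\subseteq\Gamma(T)$ and $T\subseteq\Gamma(S)$ (these are the pairs addressed by the hypothesis). The key consequence of the hypothesis is a product bound. For any $S\subseteq V(H)$ one always has $S\subseteq\Gamma(\Gamma(S))$, so $(S,\Gamma(S))$ is valid; since $\Gamma(F(H))=V(H)$ and, by the definition of $F(H)$, $\Gamma(V(H))=F(H)$, the three alternatives (1)--(3) force $\abs{S}\cdot\abs{\Gamma(S)}\le iq$ for all $S$, with equality exactly when $S\in\{F(H),V(H)\}$. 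More generally every valid pair $(S,T)$ satisfies $\abs{S}\abs{T}\le iq$, and a valid pair other than $(F(H),V(H))$ and $(V(H),F(H))$ has $\abs{S}\abs{T}\le iq-1$.

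Given an instance $G$ of $\is$ on $n$ vertices and an accuracy $\eps\in(0,1)$, fix a parameter $m$ (polynomial in $n$ and $1/\eps$; its value is pinned down below) and build a graph $G'$ as a ``blow-up with pendants'' of $G$: replace each vertex $v$ of $G$ by an independent set $I_v$ of $m$ fresh vertices, replace each edge $\{u,v\}$ of $G$ by a complete bipartite graph between $I_u$ and $I_v$, and, for each $v$, attach a pendant independent set $J_v$ of $m$ fresh vertices joined to $I_v$ by a complete bipartite graph. Then $G'$ is irreflexive and of polynomial size, and a homomorphism $h\from G'\to H$ is classified by its image sets $S_v=h(I_v)$ and $T_v=h(J_v)$: the edges of $G'$ force $(S_v,T_v)$ to be a valid pair for every $v$ and $(S_u,S_v)$ to be a valid pair for every edge $\{u,v\}$ of $G$, and the number of homomorphisms realising a prescribed family $(S_v,T_v)_v$ equals $\prod_v\rho(m,\abs{S_v})\,\rho(m,\abs{T_v})$, where $\rho(m,k)$ is the number of surjections from an $m$-set onto a $k$-set (so $k^m-k(k-1)^m\le\rho(m,k)\le k^m$).

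By the product bound, $\abs{S_v}\abs{T_v}\le iq$ for every $v$, so every family contributes at most $(iq)^{mn}$; the families with $\abs{S_v}\abs{T_v}=iq$ for all $v$ are exactly those with $(S_v,T_v)\in\{(F(H),V(H)),(V(H),F(H))\}$ for all $v$, and --- using that $(F(H),F(H))$, $(F(H),V(H))$, $(V(H),F(H))$ are valid pairs while $(V(H),V(H))$ is not --- these ``extremal'' families are in bijection with the independent sets of $G$: the independent set $I$ corresponds to the family with $S_v=V(H)$, $T_v=F(H)$ for $v\in I$ and $S_v=F(H)$, $T_v=V(H)$ for $v\notin I$. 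Each extremal family contributes exactly $W_0:=(\rho(m,q)\rho(m,i))^n$, independently of $I$. Hence $\hom{G'}{H}=\is(G)\cdot W_0+E$, where $W_0$ is computable exactly and, since there are at most $2^{2qn}$ families and every non-extremal valid family contributes at most $(iq-1)^m(iq)^{m(n-1)}$, we have $0\le E\le 2^{2qn}(iq)^{mn}(1-1/(iq))^m$. Taking $m$ to be a suitable polynomial in $n$ and $\log(1/\eps)$ ensures $W_0\ge(iq)^{mn}/2$ and $E\le(\eps/3)W_0$; one oracle call to $\Hom{H}$ on $G'$ with precision $\eps/4$, divided by $W_0$, then yields an $e^{\pm\eps}$-approximation to $\is(G)$ (using $\is(G)\ge1$ and Observation~\ref{obs:accuracy} to absorb the constants, as in~\cite[proof of Theorem 3]{DGGJApprox}).

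The main obstacle is the product bound of the first paragraph --- specifically the quantitative gap ``$\abs{S}\abs{T}\le iq-1$ for every valid pair other than the two extremal ones'', which is exactly what condition~(3) supplies and what makes the non-extremal families of $G'$ exponentially (in $m$) negligible; merely knowing that the two extremal pairs are the unique maximisers would not suffice. A secondary point requiring care is the role of the pendants $J_v$: they are what pin each $(S_v,T_v)$ to an extremal pair in the dominant families, and without them ``intermediate'' self-consistent sets $S$ with $S\subseteq\Gamma(S)$ could dominate (for instance when $G$ is a large complete graph), destroying the correspondence with independent sets. The remaining work --- choosing $m$, bounding $E$, and propagating the approximation error through the single oracle call --- is routine.
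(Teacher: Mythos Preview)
Your proposal is correct and follows essentially the same approach that the paper sketches. The paper does not give a full proof of this lemma (it is quoted from Kelk's thesis), offering only the one-sentence hint that ``a homomorphism from a complete bipartite graph to $H$ will typically map one side to $F(H)$ and the other to $V(H)$, so it is easy to reduce from counting independent sets.'' Your construction realises exactly this idea: the complete bipartite graph between $I_v$ and its pendant $J_v$ is the bipartite gadget in question, the two extremal valid pairs $(F(H),V(H))$ and $(V(H),F(H))$ are the two dominant ``states'' of each vertex gadget, and the fact that $(V(H),V(H))$ is not valid while $(F(H),F(H))$ and the mixed pairs are is what makes adjacent $V(H)$-states forbidden, giving the bijection with independent sets. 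Your quantitative analysis (the product bound $|S||T|\le iq-1$ for non-extremal valid pairs, the surjection counts $\rho(m,\cdot)$, and the choice of $m$) is the standard way to make this rigorous and is correct as written.
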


Lemma~\ref{lem:Kelk5.1} is not difficult to prove. A homomorphism from a complete bipartite graph to $H$ will typically map one side to $F(H)$ and the other to $V(H)$. So it is easy to reduce from counting independent sets.

Let $\WR{q}$ be a reflexive star with $q$ leaves. (The name is not relevant here but it comes from the Widom-Rowlinson model~\cite{WidomRowlinson} from statistical physics.) The non-leaf vertex of $\WR{q}$ is called its centre.

\begin{lem}\label{lem:MixedTreeHardness2a}
Let $H$ be a graph that has a looped vertex $b$ such that $H[\Gamma(b)]$ is isomorphic to $\WR{q}$ for some $q\ge 3$. Then $\sat\leap \OALHom{H}$.
\end{lem}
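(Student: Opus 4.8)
The plan is to exploit the neighbourhood of $b$. By the construction in Observation~\ref{obs:PinNeighbourhood} --- but now allowing the auxiliary input graph to carry singleton lists as well, which that construction preserves parsimoniously --- one gets $\OALHom{H[\Gamma(b)]} \leap \OALHom{H}$: pinning a fresh vertex $w$ to $b$ and joining $w$ to every vertex of the given graph forces all images into $\Gamma(b)$, and a vertex whose list is all of $V(H[\Gamma(b)])=\Gamma(b)$ is given the trivial list $V(H)$ while singleton lists are kept. Since $H[\Gamma(b)] \cong \WR{q}$ (and $b$, being looped and adjacent in $H$ to all of $\Gamma(b)$, plays the role of the centre), it therefore suffices to prove $\sat \leap \OALHom{\WR{q}}$; by Observation~\ref{obs:HomToRetToLHom} it would even suffice to show $\sat \leap \Hom{\WR{q}}$ whenever lists are not needed.

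For $q \ge 4$ I would finish using Kelk's Lemma~\ref{lem:Kelk5.1} with $H := \WR{q}$. Write $c$ for the centre and $\ell_1,\dots,\ell_q$ for the leaves. Since $\Gamma(\ell_i) = \{c,\ell_i\}$ while $\Gamma(c) = V(\WR{q})$, we have $F(\WR{q}) = \{c\}$, so $\emptyset \subsetneq F(\WR{q}) \subsetneq V(\WR{q})$ and $\abs{F(\WR{q})}\cdot \abs{V(\WR{q})} = q+1$. A pair $(S,T)$ with $S\subseteq\Gamma(T)$ and $T\subseteq\Gamma(S)$ is exactly one in which every vertex of $S$ is adjacent in $\WR{q}$ (loops included) to every vertex of $T$; a short case analysis shows that then either $S$ or $T$ is empty (so $\abs{S}\cdot\abs{T}=0$), or $S=\{c\}$, or $T=\{c\}$, or both $S$ and $T$ are contained in $\{c,\ell_i\}$ for one fixed $i$ (so $\abs{S}\cdot\abs{T}\le 4 < q+1$). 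In every case one of the three alternatives of Lemma~\ref{lem:Kelk5.1} holds, so $\sat\leap\Hom{\WR{q}}\leap\OALHom{\WR{q}}$, as required.

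The remaining, and genuinely delicate, case is $q=3$: here $\abs{F(\WR{3})}\cdot\abs{V(\WR{3})}=4$, so the biclique $(\{c,\ell_i\},\{c,\ell_i\})$ --- with $\abs{S}\cdot\abs{T}=4$ --- satisfies none of the three alternatives, and Lemma~\ref{lem:Kelk5.1} no longer applies; equivalently, the usual ``two dominant states'' complete-bipartite gadget now acquires three further, equally dominant, configurations. To break this degeneracy I would use the singleton lists available to $\OALHom{}$ but not to $\Hom{}$: pinning a gadget vertex to a leaf $\ell_i$ forces its neighbours into $\Gamma(\ell_i)=\{c,\ell_i\}$, and one can try to use such pins to prune the spurious configurations from a complete-bipartite gadget so that only the two configurations in which one side maps to $\{c\}$ and the other to $V(\WR{3})$ remain dominant, after which a reduction from $\is$ proceeds as in the proof of Lemma~\ref{lem:Kelk5.1}; an alternative is to mimic Lemma~\ref{lem:SquareFreeHardness}, building an edge gadget around the three length-one ``branches'' of $\WR{3}$ whose weights are balanced using Dirichlet's Lemma~\ref{lem:Dirichlet}. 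I expect this $q=3$ analysis to be the main obstacle --- in particular, ruling out the possibility that a pinned gadget is satisfied ``for free'' by sending all its internal vertices to the universal vertex $c$, which is precisely what makes $\WR{3}$ subtler to handle here than the graph $J_3$ of Lemma~\ref{lem:SquareFreeHardness}.
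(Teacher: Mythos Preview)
Your reduction $\OALHom{H[\Gamma(b)]}\leap\OALHom{H}$ and the $q\ge 4$ verification via Lemma~\ref{lem:Kelk5.1} are correct and track the paper's argument closely (the paper cites \cite[Lemma~26]{DGGJApprox} for $q\ge 4$, which is the same $\sat\leap\Hom{\WR{q}}$ fact you rederive from Kelk's lemma). Note, incidentally, that the list-preserving extension of Observation~\ref{obs:PinNeighbourhood} is not needed: since the hardness you are importing is already at the $\Hom{}$ level, the plain $\Hom{H[\Gamma(b)]}\leap\OALHom{H}$ of Observation~\ref{obs:PinNeighbourhood} suffices.

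The gap is the $q=3$ case, which you leave open and propose to attack with new list-based gadgetry or a $\TCut{3}$-style edge gadget. This is unnecessary: $\sat\leap\Hom{\WR{3}}$ is already established in Kelk's thesis \cite[Section~2.3]{KelkThesis}, by a direct gadget construction that does \emph{not} require lists. The paper simply cites this. So there is no ``main obstacle'' here --- the delicate $q=3$ analysis you anticipate has already been carried out in the literature, and the lemma follows by the same chain $\sat\leap\Hom{\WR{3}}=\Hom{H[\Gamma(b)]}\leap\OALHom{H}$ as for $q\ge 4$.
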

\begin{proof}
The problem $\Hom{\WR{q}}$ is the same as $\Hom{H[\Gamma(b)]}$, and by Observation~\ref{obs:PinNeighbourhood} we obtain $\Hom{H[\Gamma(b)]}\leap \OALHom{H}$.
For $q\ge 4$ Dyer et al.~\cite[Lemma 26]{DGGJApprox} show $\sat\leap\Hom{\WR{q}}$. For $q=3$ this fact is due to Kelk~\cite[Section 2.3]{KelkThesis}.
Summarising we obtain
\[
\sat\leap\Hom{\WR{q}}\eqap\Hom{H[\Gamma(b)]}\leap \OALHom{H}.
\] 
\end{proof}

Recall the $2$-Wrench as given in Figure~\ref{fig:2Wrench}.

\begin{lem}\label{lem:MixedTreeHardness2b}
Let $H$ be a triangle-free graph that has a looped vertex $b$ which has an unlooped neighbour. If $H[\Gamma(b)]$ is not isomorphic to a $2$-Wrench, then $\sat\leap \OALHom{H}$.
\end{lem}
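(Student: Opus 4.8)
The plan is to push the problem down to the distance-$1$ neighbourhood of $b$ and then read off hardness from its structure, just as in the proof of Lemma~\ref{lem:MixedTreeHardness2a}. By Observation~\ref{obs:PinNeighbourhood}, $\Hom{H[\Gamma(b)]} \leap \OALHom{H}$, so it suffices to analyse $H' := H[\Gamma(b)]$. Since $b$ is looped, $b$ is a vertex of $H'$; and since $H$ is triangle-free, no two distinct neighbours of $b$ are adjacent. Hence $H'$ is a partially reflexive star whose centre $b$ is looped and adjacent to every vertex of $H'$, and whose leaves (the other neighbours of $b$) are each looped or unlooped. Let $\ell$ be the number of looped leaves and $m$ the number of unlooped leaves. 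By hypothesis $m \geq 1$, and since $H'$ is not a $2$-Wrench, $(\ell,m) \neq (2,1)$. Also $F(H') = \{b\}$, because the centre is adjacent to everything while a leaf is adjacent to no other leaf and there is at least one leaf; in particular $\emptyset \subsetneq F(H') \subsetneq V(H')$.

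First I would dispose of the generic case by checking the hypothesis of Lemma~\ref{lem:Kelk5.1} for $H'$. Let $(S,T)$ be any pair with $S \subseteq \Gamma(T)$ and $T \subseteq \Gamma(S)$. If $S = \emptyset$ or $T = \emptyset$, then $\abs{S}\cdot\abs{T} = 0 < \abs{V(H')} = \abs{F(H')}\cdot\abs{V(H')}$, so condition~(3) of Lemma~\ref{lem:Kelk5.1} holds; if $S = \{b\}$ or $T = \{b\}$, condition~(1) or~(2) holds. Otherwise $S$ and $T$ are nonempty and each contains a leaf, and since $\Gamma(w) \subseteq \{b,w\}$ for every leaf $w$, a short case analysis shows the only valid such pairs are $S = T = \{w\}$ for a looped leaf $w$, $\{S,T\} = \bigl\{\{b,w\},\{w\}\bigr\}$ for a looped leaf $w$, and $S = T = \{b,w\}$ for a looped leaf $w$; here $\abs{S}\cdot\abs{T}$ is $1$, $2$ and $4$ respectively, and condition~(3) holds in the first two cases (where $\abs{V(H')} \geq 3$) and in the third whenever $\abs{V(H')} \geq 5$. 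So whenever $\ell = 0$ (no such bad pair exists) or $\ell + m \geq 4$ (so $\abs{V(H')} \geq 5$), Lemma~\ref{lem:Kelk5.1} yields $\sat \leap \Hom{H'} \leap \OALHom{H}$.

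What is left is the case $\ell \geq 1$ and $\ell + m \leq 3$; combined with $m \geq 1$ and $(\ell,m) \neq (2,1)$, this is exactly $(\ell,m) \in \{(1,1),(1,2)\}$ --- that is, $H'$ consists of a reflexive edge on $\{b,u\}$ (both endpoints looped) with either one or two unlooped leaves attached to $b$. For these two graphs Lemma~\ref{lem:Kelk5.1} genuinely fails: a large complete bipartite graph embeds into $H'$ by mapping both sides into the reflexive edge $\{b,u\}$, and this outweighs the intended ``$F(H')$ versus $V(H')$'' mapping, so the standard dominating-states argument does not apply. For these cases I would instead establish $\sat \leap \OALHom{H}$ directly, reducing from $\sat$ (equivalently from $\is$, using $\is \eqap \sat$) by a bespoke gadget built around the reflexive edge $\{b,u\}$ and making essential use of the size-$1$ lists to pin parts of the construction; alternatively one may appeal to a known classification of the homomorphism-counting problem for such small partially reflexive graphs. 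I expect this step --- handling the two borderline graphs on which the dominating-states machinery just breaks down --- to be the main obstacle; the rest is routine bookkeeping on top of Observation~\ref{obs:PinNeighbourhood} and Lemma~\ref{lem:Kelk5.1}.
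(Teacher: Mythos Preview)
Your proposal is correct and follows essentially the paper's route: reduce to $H' = H[\Gamma(b)]$ via Observation~\ref{obs:PinNeighbourhood}, apply Lemma~\ref{lem:Kelk5.1} in the generic case, and handle a few small exceptional graphs by citation. Your case split is actually slightly sharper than the paper's: the paper splits at $\abs{V(H')} \le 4$ versus $\abs{V(H')} \ge 5$, giving five small graphs to dispatch individually, whereas you observe that Kelk's lemma already covers all $\ell = 0$ cases, leaving only $(\ell,m) \in \{(1,1),(1,2)\}$. For these two borderline graphs the paper does exactly what you propose as the alternative --- it cites known hardness of $\Hom{H'}$: the $1$-Wrench is \cite[Theorem~21]{DGGJApprox} and the $(\ell,m)=(1,2)$ graph is \cite[Section~2.3]{KelkThesis}. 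No bespoke retraction gadget is needed.
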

\begin{proof}
By Observation~\ref{obs:PinNeighbourhood} we know $\Hom{H[\Gamma(b)]} \leap \OALHom{H}$. We show $\sat \leap \Hom{H[\Gamma(b)]}$ to obtain $\sat \leap \OALHom{H}$.

To shorten the notation let $H_b=H[\Gamma(b)]$. We consider different cases depending on the graph $H_b$. By assumption the vertex $b$ is looped and has at least one unlooped neighbour. First consider the case where $H_b$ has at most $4$ vertices. Since, by assumption, $H_b$ is triangle-free and not isomorphic to a $2$-Wrench, it has to be isomorphic to one of the graphs depicted in Figure~\ref{fig:MixedTrees0}. Approximately counting homomorphisms to the first graph in Figure~\ref{fig:MixedTrees0} is well-known to be equivalent to $\is$ (the problem of approximately counting independent sets in a graph) which is $\sat$-equivalent~\cite[Theorem 3]{DGGJApprox}. The second and fourth graphs correspond to weighted versions of $\is$ which are known to be $\sat$-equivalent~\cite[Lemma 2.3]{KelkThesis}. The third graph is the so-called $1$-Wrench and the corresponding $\sat$-hardness is shown in~\cite[Theorem 21]{DGGJApprox}. Finally, approximately counting homomorphisms to the fifth graph in Figure~\ref{fig:MixedTrees0} is shown to be $\sat$-hard in~\cite[Section 2.3]{KelkThesis}.

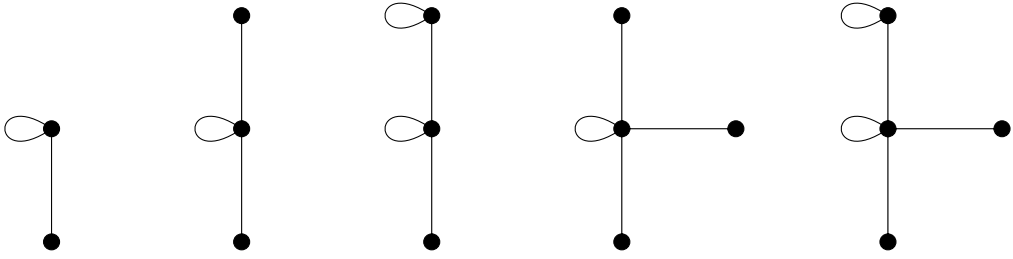
\begin{figure}[ht]
	\centering
	\begin{tikzpicture}[scale=1, every loop/.style={min distance=10mm,looseness=10}]

			\filldraw (0,0) node(a){} circle[radius=3pt] --++ (90:1.5cm) node(b){} circle[radius=3pt];
			
b			\path[-] (b.center) edge  [in=215,out=145,loop] node {} ();
	
			\filldraw (2.5,0) node(a){} circle[radius=3pt] --++ (90:1.5cm) node(b){} circle[radius=3pt] -- ++(90:1.5cm) node(c){} circle[radius=3pt];
	
			\path[-] (b.center) edge  [in=215,out=145,loop] node {} ();

			\filldraw (5,0) node(a){} circle[radius=3pt] --++ (90:1.5cm) node(b){} circle[radius=3pt] -- ++(90:1.5cm) node(c){} circle[radius=3pt];
	
			\path[-] (b.center) edge  [in=215,out=145,loop] node {} ();
			\path[-] (c.center) edge  [in=215,out=145,loop] node {} ();
			
			\filldraw (7.5,0) node(a){} circle[radius=3pt] --++ (90:1.5cm) node(b){} circle[radius=3pt] -- ++(90:1.5cm) node(c){} circle[radius=3pt] ++(90:-1.5cm) -- ++(0:1.5cm) node(d){} circle[radius=3pt];
	
			\path[-] (b.center) edge  [in=215,out=145,loop] node {} ();

			\filldraw (11,0) node(a){} circle[radius=3pt] --++ (90:1.5cm) node(b){} circle[radius=3pt] -- ++(90:1.5cm) node(c){} circle[radius=3pt] ++(90:-1.5cm) -- ++(0:1.5cm) node(d){} circle[radius=3pt];
	
			\path[-] (b.center) edge  [in=215,out=145,loop] node {} ();
			\path[-] (c.center) edge  [in=215,out=145,loop] node {} ();
			
	\end{tikzpicture}
	\caption{Possible graphs $H_b$ with at most $4$ vertices.}
	\label{fig:MixedTrees0}
\end{figure}

Now consider the case where $H_b$ has $5$ or more vertices. We claim that, under this assumption, Lemma~\ref{lem:Kelk5.1} gives $\sat \leap \Hom{H_b}$. To see this, note that $F(H_b)=\{b\}$ and $\abs{F(H_b)}\abs{V(H_b)} \ge 5$. Consider any pair $(S,T)$ with $\emptyset\subseteq S,T \subseteq V(H_b)$, $S\subseteq \Gamma(T)$ and $T\subseteq \Gamma(S)$. We distinguish between different cases depending on the cardinalities of $S$ and $T$ and show that in each case the conditions of Lemma~\ref{lem:Kelk5.1} are fulfilled. 
\begin{itemize}
\item If $\abs{S}=1$ then either item~\eqref{eq:Kelk1} or item~\eqref{eq:Kelk3} of Lemma~\ref{lem:Kelk5.1} are satisfied.
\item If $\abs{T}=1$ then either item~\eqref{eq:Kelk2} or item~\eqref{eq:Kelk3} of Lemma~\ref{lem:Kelk5.1} are satisfied.
\item If $\abs{S}\ge 3$ then $T=\{b\}$ since $T\subseteq \Gamma(S)$ and $H$ is triangle-free. So $\abs{T}=1$.
\item If $\abs{T}\ge 3$ then $S=\{b\}$ since $S\subseteq \Gamma(T)$ and $H$ is a triangle-free. So $\abs{S}=1$.
\item If $\abs{S}=\abs{T}=2$ then $\abs{S}\cdot \abs{T} = 4$ and item~\eqref{eq:Kelk3} of Lemma~\ref{lem:Kelk5.1} is satisfied.
\end{itemize}
So Lemma~\ref{lem:Kelk5.1} gives $\sat \leap \Hom{H_b}$.
\end{proof}

The goal of the remainder of this section is to show Lemma~\ref{lem:MixedTreeHardness3}, in which we prove $\sat$-hardness using distance-$2$ neighbourhoods of vertices in $H$. In order to show $\sat$-hardness we use a reduction from counting large cuts in a graph $G$. We use graph gadgets to model these cuts. We replace each vertex of $G$ by a graph $J$ such that the number of homomorphisms from $J$ to $H$ is dominated by exactly two ``types'' of homomorphisms. These two types encode the two parts of a cut. In Table 1 we give all types that represent a significant share of the set of homomorphisms. In Lemma 44 we show how to choose parameters of the graph $J$ to ensure that only 2 significant types remain. In the proof of Lemma 45 we verify another desired property, which is that the two types interact in an “anti-ferromagnetic” way to ensure that large cuts dominate.

At this point we introduce the gadget graph $J$ and introduce some of its properties. Note that a similar but simpler gadget has been used in~\cite{DGGJApprox} and~\cite{KelkThesis}.

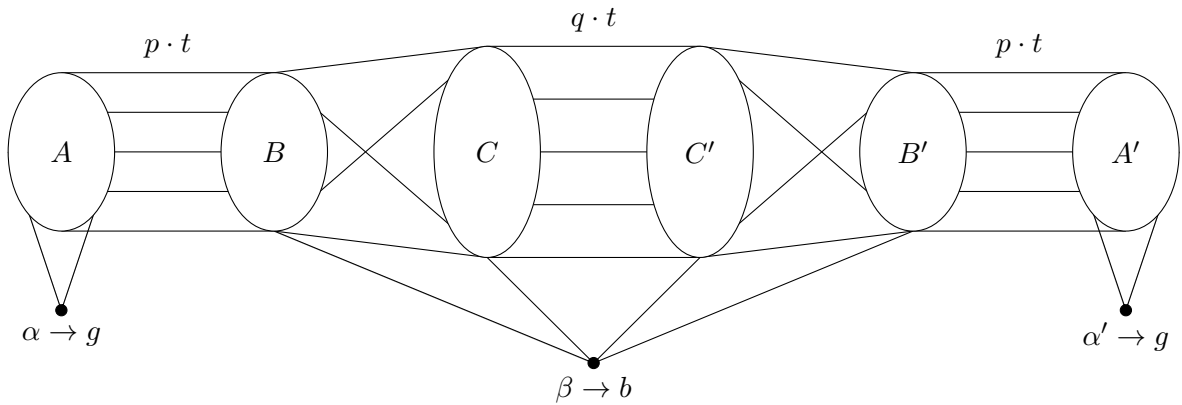
\begin{figure}[h!]\centering
{\def\scaleFactor{.7}
\begin{tikzpicture}[scale=\scaleFactor, every loop/.style={min distance=10mm,looseness=10}]

\foreach \i in {-2,...,2}
	\draw (0,\i)--(4,\i);

\foreach \i in {-2,...,2}
	\draw (-8,.75*\i)--(-4,.75*\i);

\foreach \i in {-2,...,2}
	\draw (8,.75*\i)--(12,.75*\i);
	
\node at ($(2,2.5)$) {$q\cdot t$};
\node at ($(-6,2)$) {$p\cdot t$};
\node at ($(10,2)$) {$p\cdot t$};

\draw (-4,1.5) -- (0,2);
\draw (-4,1.5) -- (0,-2);
\draw (-4,-1.5) -- (0,-2);
\draw (-4,-1.5) -- (0,2);

\draw (8,1.5) -- (4,2);
\draw (8,1.5) -- (4,-2);
\draw (8,-1.5) -- (4,-2);
\draw (8,-1.5) -- (4,2);

\coordinate (g) at (-8,-3);
\filldraw (-9,0) -- (g) circle[radius=3pt] -- (-7,0);
\node at ($(g)+(0,-.5)$) {$\alpha\rightarrow g$};

\coordinate (g) at (12,-3);
\filldraw (11,0) -- (g) circle[radius=3pt] -- (13,0);
\node at ($(g)+(0,-.5)$) {$\alpha'\rightarrow g$};

\filldraw (-4,-1.5) --  (2,-4) node (b){} circle[radius=3pt];
\draw (0,-2) to (b.center);
\draw (4,-2) to (b.center);
\draw (8,-1.5)  to  (b.center);
\node at ($(b)+(0,-.5)$) {$\beta\rightarrow b$};

\draw (-8,0) ellipse (1cm and 1.5cm) [fill=white];
\draw (-4,0) ellipse (1cm and 1.5cm) [fill=white];
\draw (0,0) ellipse (1cm and 2cm) [fill=white];
\draw (4,0) ellipse (1cm and 2cm) [fill=white];
\draw (8,0) ellipse (1cm and 1.5cm) [fill=white];
\draw (12,0) ellipse (1cm and 1.5cm) [fill=white];

\node at (-8,0) {$A$};
\node at (-4,0) {$B$};
\node at (0,0) {$C$};
\node at (4,0) {$C'$};
\node at (8,0) {$B'$};
\node at (12,0) {$A'$};

\end{tikzpicture}
}
\caption{The graph $J$. A label of the form $v \rightarrow u$ means that the vertex $v\in V(J)$ is pinned to $u\in V(H)$ since $S_v=\{u\}$.}
\label{fig:AdHocGadget}
\end{figure}

\begin{defn}
For sets $X$ and $Y$ we define $\ucp{X}{Y}= \{\{x,y\} \mid x\in X, y\in Y\}$ as an undirected version of the usual definition of the Cartesian product.
\end{defn}
\begin{defn}\label{def:GadgetJ}
We now define the graph $J$, as visualised in Figure~\ref{fig:AdHocGadget}. 
Let $p$, $q$ and $t$ be positive  integers --- these are parameters of~$J$. Let
$A$, $A'$, $B$ and $B'$ be  independent sets
of size $p\cdot t$ and  let
$C$ and $C'$ be independent sets
of size $q\cdot t$. These six sets are pairwise disjoint. In addition, we introduce vertices $\alpha$, $\alpha'$ and $\beta$ that are distinct from each other and the remaining vertices. The vertex set of $J$ is the union of $\{\alpha, \alpha',\beta\}$ and the sets $A$, $A'$, $B$, $B'$, $C$ and $C'$. As displayed in Figure~\ref{fig:AdHocGadget}, the edge set of $J$ is defined as follows. The set of edges $\calM_1$ between the vertices of $A$ and $B$ forms a perfect matching (every vertex in $A$ is adjacent to exactly one vertex in $B$ and vice versa). The set of edges $\calM_2$ between the vertices of $C$ and $C'$ and the edges $\calM_3$ between the vertices of $A'$ and $B'$ form perfect matchings respectively. Then
\begin{align*}
E(J)= &\bigcup_{i\in[3]} \calM_i \cup \Bigl(\ucp{B}{C}\Bigr) \cup \Bigl(\ucp{B'}{C'}\Bigr)\\
&\cup \Bigl(\ucp{\{\alpha\}}{A}\Bigr) \cup \Bigl(\ucp{\{\alpha'\}}{A'}\Bigr) \cup \Bigl(\ucp{\{\beta\}}{\left(B\cup C\cup C'\cup B'\right)}\Bigr).
\end{align*}
This completes the definition of the graph $J$.
\end{defn}

\begin{figure}[h!]\centering
{\def\scaleFactor{1}
\begin{tikzpicture}[scale=1, every loop/.style={min distance=10mm,looseness=10}]

	\draw [red, ultra thick]   ($(0,0)+(0cm, 0cm)$) to[out=45,in=135] ($(8,0)+(0cm, 0cm)$);
	\draw [red, ultra thick]   ($(.6,-1.6)+(0cm, 0cm)$) to[out=-55,in=165] ($(3,-2.6cm)+(0cm, 0cm)$);
	\draw [red, ultra thick]   ($(7.4,-1.6)+(0cm, 0cm)$) to[out=-145,in=15] ($(5,-2.6cm)+(0cm, 0cm)$);

	\draw [red, thick] (4,-2.614) ellipse (1.5cm and 1cm) [fill=white];
	\draw[red, thick, rotate around={70:(7.75,-.75)}] (7.75,-.75) ellipse (1.5cm and .8cm) [fill=white];
	\draw[red, thick, rotate around={110:(0.25,-.75)}] (0.25,-.75) ellipse (1.5cm and .8cm) [fill=white];

	\filldraw (0,0) node (w1){} circle[radius=3pt] --++ (0:2cm) node (r1){} circle[radius=3pt] --++ (0:2cm) node (b){} circle[radius=3pt] --++ (0:2cm) node (r2){} circle[radius=3pt] --++ (0:2cm) node (w2){} circle[radius=3pt];
	\filldraw (b) --++ (-90:1.414cm) node (g){} circle[radius=3pt] --++ (-130:1.5cm) node(y1) {} circle[radius=3pt];
	\filldraw (g.center) --++ (-50:1.5cm) node(yk) {} circle[radius=3pt];
	\filldraw (r1.center) --++ (-135:2cm) node (d1){} circle[radius=3pt];
	\filldraw (r2.center) --++ (-45:2cm) node (d2){} circle[radius=3pt];

	\path[-] (r1.center) edge  [in=125,out=55,loop] node {} ();
	\path[-] (b.center) edge  [in=125,out=55,loop] node {} ();
	\path[-] (r2.center) edge  [in=125,out=55,loop] node {} ();
	\path[-] (w1.center) edge  [in=125,out=55,loop] node {} ();
	\path[-] (w2.center) edge  [in=125,out=55,loop] node {} ();
	\path[-] (y1.center) edge  [in=-125,out=-55,loop] node {} ();
	\path[-] (yk.center) edge  [in=-125,out=-55,loop] node {} ();
	\node at ($(g)+(0,-1.2cm)$) {$\ldots$};
	\node at ($(w1)+(0,-.4cm)$) {$w_1$};
	\node at ($(r1)+(0,-.4cm)$) {$r_1$};
	\node at ($(b)+(-.3,-.4cm)$) {$b$};
	\node at ($(r2)+(0,-.4cm)$) {$r_2$};
	\node at ($(w2)+(0,-.4cm)$) {$w_2$};
	\node at ($(d1)+(0cm,-.4cm)$) {$d_1$};
	\node at ($(g)+(-.3cm,0cm)$) {$g$};
	\node at ($(d2)+(0cm,-.4cm)$) {$d_2$};
	\node at ($(y1)+(0cm,.4cm)$) {$y_1$};
	\node at ($(yk)+(0cm,.4cm)$) {$y_k$};
	
\end{tikzpicture}
}
\caption{The graph $H_k$. Circled sets of vertices are independent sets of possibly looped vertices. Sets of vertices that are connected by a thick red edge have a complete set of edges between them.}
\label{fig:Hk}
\end{figure}
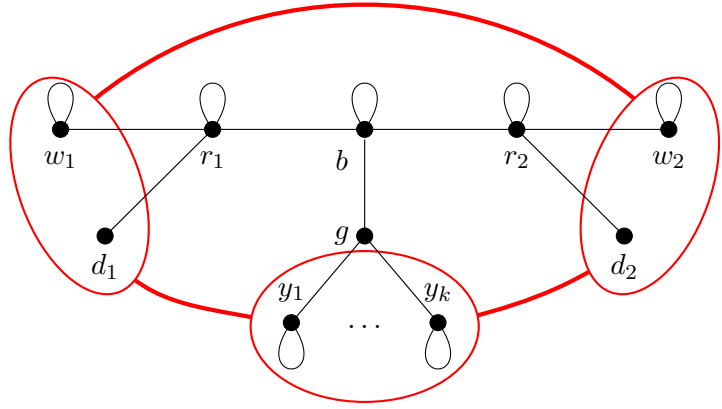

For  
any
positive integer $k$ let $H_k$ be the graph as shown in Figure~\ref{fig:Hk}. The vertex set of $H_k$ is $\{w_1, d_1, r_1, w_2, d_2, r_2, b, g, y_1,\dots, y_k\}$. All of these vertices are looped except for $d_1$, $d_2$ and $g$. The non-loop edges of $H_k$ are 
the edges in
$$\{\{w_1,r_1\}, \{w_2,r_2\}, \{d_1,r_1\}, \{d_2,r_2\}, \{r_1,b\}, \{r_2,b\}, \{b,g\}, \{g,y_1\}, \dots, \{g,y_k\}\},$$ together with 
those in $\ucp{\{w_1,d_1\}}{\{w_2,d_2\}}$, $\ucp{\{w_1,d_1\}}{\{y_1, \dots, y_k\}}$ and $\ucp{\{w_2,d_2\}}{\{y_1, \dots, y_k\}}$.
The significance of this graph will become clear in the proof of Lemma~\ref{lem:MixedTreeHardness3}.

For a graph $J$ we define the vertex lists $S_\alpha=\{g\}$, $S_{\alpha'}=\{g\}$, and $S_\beta=\{b\}$. Also,
for all $v\in V(J)\setminus\{\alpha, \alpha',\beta\}$,
we define
$S_v=V(H_k)$. Finally, we let $\boldS_J=\{S_v \mid v\in V(J)\}$. 
In order to investigate the number of homomorphisms from $(J,\boldS_J)$ to $H_k$,
we set up the following notation.
Suppose that $U$ and $V$ are subsets of $V(J)$ and  that $h$ is a homomorphism  
$h\in \calH((J,\boldS_J),H_k)$.  We define

\begin{myitemize}
\item $h(V)=\{h(x) \mid x\in V\}$ and
\item $h(U,V)=\{(h(x),h(y)) \mid x\in U, v\in V, \{x,y\}\in  E(J)\}$.\end{myitemize}

We say that $\left(h(A,B), h(C,C'), h(B',A')\right)$ is the \emph{type} of $h$. 
We will partition the set $\calH((J,\boldS_J),H_k)$ into different classes by type.
Formally, a type is a tuple $T=(T_1,T_2,T_3)$ where each $T_i$ is a subset of
$\{(x,y) \mid x\in V(H_k), y\in V(H_k), \{x,y\} \in E(H_k)\}$. 
The type of a homomorphism gives a lot of information.
Given a type~$T=(T_1,T_2,T_3)$, let
$A(T) = \{x \mid \exists y \,(x,y)\in T_1\}$,
$B(T) = \{y \mid \exists x \,(x,y)\in T_1\}$,
$C(T) = \{x \mid \exists y \,(x,y)\in T_2\}$,
$C'(T) = \{y \mid \exists x \,(x,y)\in T_2\}$,
$B'(T) = \{x \mid \exists y \,(x,y)\in T_3\}$, and
$A'(T) = \{y \mid \exists x \,(x,y)\in T_3\}$.
 If a homomorphism 
 $h\in \calH((J,\boldS_J),H_k)$  has type $T$
 then it is clear from the definition of~$J$ that
 $h(A)=A(T)$, $h(B)=B(T)$, $h(C)=C(T)$, $h(C')=C'(T)$, $h(B') = B'(T)$ and $h(A')=A'(T)$.
 A type $\type$ is called \emph{non-empty} if there exists a homomorphism from $(J,\boldS_J)$ to $H_k$ that has type $\type$, otherwise it is called empty. The following observation follows from the definition of~$J$.
 \begin{obs}\label{obs:emptytype}
 A type $T=(T_1,T_2,T_3)$ is non-empty if 
 and only if
 \begin{enumerate}[(1)]
 \item $T_1$, $T_2$ and $T_3$ are non-empty,
 \item $B(T) \cup C(T) \cup C'(T) \cup B'(T) \subseteq \Gamma(b)$,
 \item $A(T)\cup A'(T) \subseteq \Gamma(g)$,
 \item  $B(T)\times C(T) \subseteq E(H_k)$ and $B'(T) \times C'(T) \subseteq E(H_k)$.
 \end{enumerate}
 \end{obs}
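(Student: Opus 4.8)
The plan is to verify both implications directly from the structure of the gadget $J$ (Definition~\ref{def:GadgetJ}) together with the pinnings $S_\alpha=S_{\alpha'}=\{g\}$ and $S_\beta=\{b\}$; everything reduces to routine bookkeeping, which is why this is an observation rather than a lemma.

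For the ``only if'' direction, let $h\in\calH((J,\boldS_J),H_k)$ have type $T=(T_1,T_2,T_3)$, so that $h(A)=A(T)$, $h(B)=B(T)$, $h(C)=C(T)$, $h(C')=C'(T)$, $h(B')=B'(T)$ and $h(A')=A'(T)$. Condition~(1) holds because the matchings $\calM_1,\calM_2,\calM_3$ are non-empty (each of $A$, $C$, $B'$ is non-empty), so each $T_i$, being the $h$-image of a non-empty set of edges of $J$, is non-empty. Condition~(2) follows from $h(\beta)=b$ and the edges $\ucp{\{\beta\}}{(B\cup C\cup C'\cup B')}$ of $J$: each vertex of $B\cup C\cup C'\cup B'$ is a neighbour of $\beta$ in $J$, so its image is a neighbour of $b$ in $H_k$, giving $B(T)\cup C(T)\cup C'(T)\cup B'(T)\subseteq\Gamma(b)$. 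Condition~(3) is the same argument with $h(\alpha)=h(\alpha')=g$ and the edges $\ucp{\{\alpha\}}{A}$ and $\ucp{\{\alpha'\}}{A'}$. Condition~(4) uses $\ucp{B}{C}\subseteq E(J)$: for $u\in B(T)$ and $v\in C(T)$ choose preimages $x\in B$, $y\in C$ with $h(x)=u$, $h(y)=v$; then $\{x,y\}\in E(J)$ forces $\{u,v\}\in E(H_k)$, so $B(T)\times C(T)\subseteq E(H_k)$, and $\ucp{B'}{C'}\subseteq E(J)$ gives $B'(T)\times C'(T)\subseteq E(H_k)$ identically.

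For the ``if'' direction, assume (1)--(4) and construct a homomorphism $h$ of type exactly $T$. Write $\calM_1=\{\{a_i,b_i\}\}_i$ with $a_i\in A$, $b_i\in B$, and similarly $\calM_2=\{\{c_i,c'_i\}\}_i$ and $\calM_3=\{\{b'_i,a'_i\}\}_i$. Since $T_1,T_2,T_3$ are non-empty, pick surjections from the respective index sets onto $T_1$, $T_2$, $T_3$; if the $\calM_1$-surjection sends $i$ to $(x,y)$, set $h(a_i)=x$ and $h(b_i)=y$, and proceed analogously on $C\cup C'$ and $B'\cup A'$. This makes $h(A,B)=T_1$, $h(C,C')=T_2$, $h(B',A')=T_3$ and $h(A)=A(T)$, $h(B)=B(T)$, $h(C)=C(T)$, $h(C')=C'(T)$, $h(B')=B'(T)$, $h(A')=A'(T)$. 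Set $h(\alpha)=h(\alpha')=g$ and $h(\beta)=b$; this respects $\boldS_J$ since all other lists equal $V(H_k)$. Every edge of $J$ now maps to an edge of $H_k$: the $\calM_i$-edges land in $T_i\subseteq E(H_k)$; an edge of $\ucp{\{\alpha\}}{A}$ maps to $\{g,u\}$ with $u\in A(T)\subseteq\Gamma(g)$ by~(3), and $\ucp{\{\alpha'\}}{A'}$ is the same; an edge of $\ucp{\{\beta\}}{(B\cup C\cup C'\cup B')}$ maps to $\{b,u\}$ with $u\in\Gamma(b)$ by~(2); and an edge of $\ucp{B}{C}$ maps into $B(T)\times C(T)\subseteq E(H_k)$ by~(4), and $\ucp{B'}{C'}$ is the same. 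Thus $h\in\calH((J,\boldS_J),H_k)$ has type $T$, so $T$ is non-empty.

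I expect the only point worth care is the word \emph{exactly} in the ``if'' direction: forcing $h(A,B)=T_1$ rather than merely a subset requires $\calM_1$ to contain at least $|T_1|$ edges, and similarly for $\calM_2$, $\calM_3$. Since each $|T_i|$ is bounded by the constant number of ordered pairs $(x,y)$ with $\{x,y\}\in E(H_k)$, this holds once the parameters $p,q,t$ of $J$ are taken large enough relative to $|V(H_k)|$, which is the regime in which the observation is used.
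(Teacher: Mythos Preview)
Your proof is correct and is precisely the routine verification the paper gestures at when it says the observation ``follows from the definition of~$J$''; the paper gives no further argument, and your unpacking of the edge-by-edge constraints is the canonical one. Your closing remark about needing $|\calM_i|\ge |T_i|$ is apt and is exactly what Constraint~\ref{constraint:pq} (which the paper states just after the observation) is there to guarantee, since $pt,qt\ge 2|E(H_k)|$ bounds each $|T_i|$.
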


Given a type $\type=(\type_1, \type_2, \type_3)$ we define $N(\type)$ to be the number of homomorphisms in $\calH((J,\boldS_J),H_k)$ that have type $\type$. We also set $\displaystyle\hatN(\type)=\abs{\type_1}^{pt} \abs{\type_2}^{qt} \abs{\type_3}^{pt}$. In Lemma~\ref{lem:hatN} we show that, for non-empty~$T$,
$\hatN(\type)$  is a close approximation to $N(\type)$.

We use the following technical fact. Let $\stirling{a}{b}$ be the Stirling number of the second kind, i.e.\ the number of surjective functions from a set of $a$ elements to a set of $b$ elements.
\begin{lem}[{\cite[Lemma 18]{DGGJApprox}}]\label{lem:DGGJ18}
If $a$ and $b$ are positive integers and $a \ge 2b \ln b$, then
\[
b^a \left( 1- \exp{\left(-\frac{a}{2b}\right)}\right) \le \stirling{a}{b}\le b^a.
\]
\end{lem}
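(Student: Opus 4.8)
The plan is to prove both inequalities by counting the functions from an $a$-element set to a $b$-element set that fail to be surjective; recall that, in the paper's convention, $\stirling{a}{b}$ is the number of surjective functions from $[a]$ to $[b]$. The upper bound $\stirling{a}{b}\le b^a$ is immediate, since every such surjection is in particular one of the $b^a$ functions from $[a]$ to $[b]$.

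For the lower bound I would argue as follows. For each $j\in[b]$, the functions $f\from[a]\to[b]$ whose image avoids $j$ number exactly $(b-1)^a$, so a union bound over the $b$ choices of $j$ shows that there are at most $b\,(b-1)^a$ non-surjective functions from $[a]$ to $[b]$. Applying the standard inequality $1+x\le e^x$ with $x=-1/b$ gives $(b-1)^a=b^a(1-1/b)^a\le b^a e^{-a/b}$, hence at most $b^a\cdot b\,e^{-a/b}$ non-surjective functions.

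The final step is to check that the hypothesis $a\ge 2b\ln b$ forces $b\,e^{-a/b}\le e^{-a/(2b)}$. Writing $u=e^{-a/(2b)}$, the hypothesis gives $a/(2b)\ge\ln b$, so $u\le 1/b$, and then $b\,e^{-a/b}=b\,u^2=(bu)\,u\le u$. Thus the number of non-surjective functions is at most $b^a e^{-a/(2b)}$, and subtracting this count from $b^a$ yields $\stirling{a}{b}\ge b^a\bigl(1-e^{-a/(2b)}\bigr)$, as required. (The degenerate case $b=1$ needs no separate treatment: the claimed lower bound is then smaller than $1=\stirling{a}{1}$.)

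There is essentially no obstacle here; the one point worth flagging is that the constant $\tfrac12$ in the exponent is exactly calibrated to the hypothesis $a\ge 2b\ln b$, and it is precisely the step $b\,e^{-a/b}\le e^{-a/(2b)}$ that uses it. A sharper (inclusion–exclusion) count of the non-surjective functions would be available but is unnecessary, since only an upper bound on that count is needed.
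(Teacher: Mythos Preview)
Your proof is correct. Note that the paper does not actually prove this lemma; it is quoted verbatim from~\cite[Lemma~18]{DGGJApprox} and used as a black box, so there is no ``paper's own proof'' to compare against. Your argument---union bound on the events that a given target is missed, then the elementary estimate $b\,e^{-a/b}\le e^{-a/(2b)}$ under the hypothesis $a\ge 2b\ln b$---is the standard one and is essentially how the cited lemma is established in the original source.
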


\begin{lem}\label{lem:hatN}
Let $p$ and $q$ be positive integers. There exists a positive integer $t_0$ such that for all $t\ge t_0$ and all non-empty types $\type$ of the corresponding graph $J$ we have
\[
\frac{\hatN(\type)}{2} \le N(\type)\le \hatN(\type).
\]
\end{lem}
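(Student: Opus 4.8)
The plan is to identify $N(\type)$ \emph{exactly} as a product of three independent surjection counts — one per matching of $J$ — and then to sandwich each factor between $b^a$ and $b^a(1-\exp(-a/(2b)))$ using Lemma~\ref{lem:DGGJ18}, choosing $t_0$ large enough that the resulting loss factor stays above $1/2$.

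First I would prove the exact identity
\[
N(\type)= \stirling{pt}{\abs{\type_1}}\,\stirling{qt}{\abs{\type_2}}\,\stirling{pt}{\abs{\type_3}},
\]
where, as in Lemma~\ref{lem:DGGJ18}, $\stirling{a}{b}$ is the number of surjections from an $a$-set onto a $b$-set. Fix a non-empty type $\type=(\type_1,\type_2,\type_3)$ and consider the maps $h\from V(J)\to V(H_k)$ that respect $\boldS_J$ and have type $\type$. The vertices $\alpha,\alpha',\beta$ are pinned to $g,g,b$, and recall that having type $\type$ forces $h(A)=A(\type)$, $h(B)=B(\type)$, $h(C)=C(\type)$, $h(C')=C'(\type)$, $h(B')=B'(\type)$, $h(A')=A'(\type)$. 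Since $\type$ is non-empty, Observation~\ref{obs:emptytype} gives $A(\type)\cup A'(\type)\subseteq\Gamma(g)$, $B(\type)\cup C(\type)\cup C'(\type)\cup B'(\type)\subseteq\Gamma(b)$, and $B(\type)\times C(\type),\,B'(\type)\times C'(\type)\subseteq E(H_k)$; hence the complete-bipartite edge blocks $\ucp{\{\alpha\}}{A}$, $\ucp{\{\alpha'\}}{A'}$, $\ucp{\{\beta\}}{(B\cup C\cup C'\cup B')}$, $\ucp{B}{C}$ and $\ucp{B'}{C'}$ impose no further constraint on $h$. Thus the only genuine requirements on $h$ are: along each of the $pt$ edges of the matching $\calM_1$ the ordered image pair lies in $\type_1$, with every element of $\type_1$ attained (this is precisely the statement $h(A,B)=\type_1$); and the analogous conditions for $\calM_2$ with $\type_2$ (on $qt$ edges) and for $\calM_3$ with $\type_3$ (on $pt$ edges). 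Because $\calM_1$, $\calM_2$, $\calM_3$ lie on the pairwise disjoint vertex sets $A\cup B$, $C\cup C'$, $A'\cup B'$ (and $A,B,C,C',B',A'$ are independent sets, so each matching edge's choice is just a choice of an element of the relevant $\type_i$), these requirements decouple, and each amounts to choosing a surjection from the matching's edge set onto the relevant $\type_i$. The displayed identity follows.

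Given the identity, the upper bound $N(\type)\le\hatN(\type)=\abs{\type_1}^{pt}\abs{\type_2}^{qt}\abs{\type_3}^{pt}$ is immediate from $\stirling{a}{b}\le b^a$. For the lower bound, let $M$ be the number of ordered edges of $H_k$, a constant depending only on $k$, so $\abs{\type_i}\le M$ for each $i$ (the case $\abs{\type_i}=1$ is handled directly by $\stirling{a}{1}=1=1^a$). Choose $t_0$ so that for all $t\ge t_0$ we have $pt\ge\max\{2M\ln M,\,6M\}$ and $qt\ge\max\{2M\ln M,\,6M\}$; the bound $2M\ln M\ge 2\abs{\type_i}\ln\abs{\type_i}$ ensures the hypothesis of Lemma~\ref{lem:DGGJ18} holds for all three factors. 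Applying that lemma factorwise yields
\[
N(\type)\ge \hatN(\type)\Bigl(1-e^{-pt/(2\abs{\type_1})}\Bigr)\Bigl(1-e^{-qt/(2\abs{\type_2})}\Bigr)\Bigl(1-e^{-pt/(2\abs{\type_3})}\Bigr),
\]
and since $\abs{\type_i}\le M$ and $pt,qt\ge 6M$, each exponential is at most $e^{-3}<1/6$, so the product of the three bracketed factors exceeds $(5/6)^3>1/2$. Hence $N(\type)\ge\hatN(\type)/2$, as required.

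The computations are short; the step that needs care is the reduction in the second paragraph — checking that every edge of $J$ outside the three matchings is automatically satisfied once $\type$ is known to be non-empty (exactly the content of conditions (2)–(4) of Observation~\ref{obs:emptytype}), and that the three matchings genuinely decouple. After that bookkeeping, the estimate is a direct application of Lemma~\ref{lem:DGGJ18}.
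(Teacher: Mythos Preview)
Your proof is correct and follows essentially the same approach as the paper: both establish the identity $N(\type)=\stirling{pt}{\abs{\type_1}}\stirling{qt}{\abs{\type_2}}\stirling{pt}{\abs{\type_3}}$ and then apply Lemma~\ref{lem:DGGJ18} factorwise. You supply more detail on why the identity holds (the paper simply states it as Equation~\eqref{eq:NofT}), and you use the explicit constant $(5/6)^3>1/2$ where the paper asks for each bracketed factor to exceed $(1/2)^{1/3}$, but these are cosmetic differences.
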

\begin{proof}
Let $\type = (\type_1,\type_2,\type_3)$ be a non-empty type. 
Then
\begin{equation}\label{eq:NofT}
N(\type)=\stirling{p\cdot t}{\abs{\type_1}}\cdot\stirling{q\cdot t}{\abs{\type_2}}\cdot\stirling{p\cdot t}{\abs{\type_3}}.
\end{equation}
For fixed $p$ and $q$ and sufficiently large $t_0$ 
we know from Lemma~\ref{lem:DGGJ18} that for all $t\ge t_0$ we have 
$$1-\exp\left(-\frac{p \cdot t}{2\abs{\type_1}}\right)\ge (1/2)^{1/3},$$ 
an analogous bound holds for the other two factors in Equation~(\ref{eq:NofT}). The statement of the lemma then directly follows from Lemma~\ref{lem:DGGJ18}.
\end{proof}

\begin{defn}
We say that a type $\type=(\type_1,\type_2,\type_3)$ is \emph{maximal} if it is non-empty and every type $\type'=(\type'_1,\type'_2,\type'_3)$ with $\type'\neq \type$, $\type_1\subseteq \type'_1$, $\type_2\subseteq \type'_2$ and $\type_3\subseteq \type'_3$ is empty. 
\end{defn}

Using this definition of maximality we prove that the number of homomorphisms in $\calH((J,\boldS_J),H_k)$ that have a maximal type is exponentially larger as a function of $t$ than the number of homomorphisms that have non-maximal types. Note that the precise value of the fraction $\frac{31+12k}{32+12k}$ that appears in the following lemmas is not important, we only need it to be smaller than $1$. This particular bound uses the fact that, for any type $(\type_1,\type_2,\type_3)$, 
the sets~$T_1$, $T_2$ and $T_3$
have cardinality at most $2\abs{E(H_k)}= 32+12k$. 
\begin{constraint}\label{constraint:pq}
In 
our proofs
we will need the fact that the parameters $p$ and $q$ of $J$ are sufficiently large with respect to the number of edges in $H_k$. 
In particular, we  require that
$p,q \ge 2\abs{E(H_k)}= 32+12k$.
\end{constraint}

\begin{lem}\label{lem:maxType}
Let $\type$ be a non-empty type that is not maximal. Then there exists a non-empty type $\type^*$ such that $\hatN(\type) \le \left(\frac{31+12k}{32+12k}\right)^t \hatN(\type^*)$.
\end{lem}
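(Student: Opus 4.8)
The plan is to exhibit a non-empty type $T^*$ that componentwise contains $T=(T_1,T_2,T_3)$ but is strictly larger in at least one coordinate, and then to observe that a single ``missing'' pair in that coordinate already multiplies $\hatN(T)/\hatN(T^*)$ by a factor that is bounded away from $1$, raised to a positive multiple of $t$.

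First I would use the hypothesis that $T$ is non-empty but not maximal. By the definition of maximality, this yields a type $T'=(T'_1,T'_2,T'_3)\neq T$ with $T_i\subseteq T'_i$ for each $i\in\{1,2,3\}$ such that $T'$ is non-empty. (If one wants $T^*$ itself to be maximal for later use, one may iterate: since every $T_i$ is a subset of the finite set $\{(x,y)\mid\{x,y\}\in E(H_k)\}$, replacing $T'$ by a strictly larger non-empty type as long as one exists terminates at a maximal non-empty type; one then takes that as $T^*$. For the present statement it suffices to set $T^*=T'$.) Since $T^*\neq T$ and $T_i\subseteq T^*_i$ for all $i$, there is an index $j\in\{1,2,3\}$ with $T_j\subsetneq T^*_j$, and hence $\abs{T_j}\le\abs{T^*_j}-1$.

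Next I would estimate the ratio directly from the definition $\hatN(T)=\abs{T_1}^{pt}\abs{T_2}^{qt}\abs{T_3}^{pt}$ and the analogous formula for $\hatN(T^*)$:
\[
\frac{\hatN(T)}{\hatN(T^*)}=\left(\frac{\abs{T_1}}{\abs{T^*_1}}\right)^{pt}\left(\frac{\abs{T_2}}{\abs{T^*_2}}\right)^{qt}\left(\frac{\abs{T_3}}{\abs{T^*_3}}\right)^{pt}.
\]
Every factor here lies in $(0,1]$: the denominators are positive because $T^*$ is non-empty, so $T^*_1,T^*_2,T^*_3$ are non-empty by condition~(1) of Observation~\ref{obs:emptytype}, and each ratio is at most $1$ because $T_i\subseteq T^*_i$. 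For the distinguished index $j$ I would combine $\abs{T_j}\le\abs{T^*_j}-1$ with the bound $\abs{T^*_j}\le 2\abs{E(H_k)}=32+12k$ (as each $T^*_i$ is a set of ordered pairs of adjacent vertices of $H_k$) to get
\[
\frac{\abs{T_j}}{\abs{T^*_j}}\le\frac{\abs{T^*_j}-1}{\abs{T^*_j}}=1-\frac{1}{\abs{T^*_j}}\le 1-\frac{1}{32+12k}=\frac{31+12k}{32+12k}.
\]
Discarding the other two factors (each $\le 1$) and using that the exponent attached to the $j$-th factor equals $pt$ or $qt$, hence is at least $t$ because $p$ and $q$ are positive integers, together with $\tfrac{31+12k}{32+12k}<1$, I conclude
\[
\frac{\hatN(T)}{\hatN(T^*)}\le\left(\frac{31+12k}{32+12k}\right)^{t},
\]
which rearranges to the claimed inequality since $\hatN(T^*)>0$.

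There is essentially no hard step: the argument is a counting comparison. The only points that need a little care are (i) guaranteeing the denominators $\abs{T^*_i}$ are nonzero, which is exactly non-emptiness of $T^*$ via Observation~\ref{obs:emptytype}, and (ii) checking that the exponent multiplying the single shrinking factor is a positive multiple of $t$, so that saving one pair already produces the full factor $\bigl(\tfrac{31+12k}{32+12k}\bigr)^t$ — for this only $p,q\ge 1$ is needed (the stronger Constraint~\ref{constraint:pq} is used elsewhere, not here). If the maximal refinement of $T^*$ is wanted, the finite-poset iteration in the first step supplies it at no extra cost, and the estimate above is unchanged.
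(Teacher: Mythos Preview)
Your proof is correct and follows essentially the same approach as the paper: pick a strictly larger non-empty type $T^*$ guaranteed by non-maximality, locate a coordinate $j$ with $|T_j|\le|T^*_j|-1$, bound the ratio $\hatN(T)/\hatN(T^*)$ using $|T^*_j|\le 2|E(H_k)|=32+12k$, and use $p,q\ge 1$ so that the relevant exponent is at least $t$. The paper's version is just a terser rendition of the same computation.
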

\begin{proof}
Let $\type=(\type_1,\type_2,\type_3)$ be a non-empty type that is not maximal. Then there exists a non-empty type $\type^*=(\type^*_1,\type^*_2,\type^*_3)$ with $\type^*\neq \type$ and $\type_i\subseteq \type^*_i$ for $i\in[3]$. Since $\type^*\neq \type$ there exists an index $i\in[3]$ such that $\type_i \subsetneq \type_i^*$, i.e.\ $\abs{\type_i}\le \abs{\type_i^*}-1$. Then (using the fact that $p,q\ge 1$)
\[
\frac{\hatN(\type)}{\hatN(\type^*)} = \frac{\abs{\type_1}^{pt}\abs{\type_2}^{qt}\abs{\type_3}^{pt}}{\abs{\type_1^*}^{pt}\abs{\type_2^*}^{qt}\abs{\type_3^*}^{pt}} \le \left(\frac{\abs{\type_i^*} -1}{\abs{\type_i^*}}\right)^{t}
\le \left(\frac{2\abs{E(H_k)} -1}{2\abs{E(H_k)}}\right)^{t}
\le \left(\frac{31+12k}{32+12k}\right)^t.
\]
\end{proof}

\begin{defn}\label{eq:EXY}
Let $E=E(H_k)$. For all $X\subseteq V(H_k)$ and $Y \subseteq V(H_k)$ we set $E(X,Y) = \{(x,y) \mid x\in X, y\in Y, \{x,y\} \in E\}$.
\end{defn}

For a set of vertices $S$ in a graph $H$ recall the definition of the set of common neighbours $\Gamma(S)$ 
and the set of all neighbours $\Phi(S)$
from Section~\ref{sec:Preliminaries}.

\begin{lem}\label{lem:necessary1}
Let   $\type=(\type_1, \type_2, \type_3)$ 
be a maximal type. Then
\begin{enumerate}[(1)] 
\item $T_1 = E(A(T),B(T))$, $T_2 = E(C(T),C'(T))$ and $T_3 = E(B'(T),A'(T))$. Also,
\item  
$C(T)=\Gamma\big(\,\Gamma(C(T)) \cap \Gamma(b)\,\big) \cap \Gamma(b)$ 
and  
$C'(T)=\Gamma\big(\,\Gamma(C'(T)) \cap \Gamma(b)\,\big) \cap \Gamma(b)$.
\item $B(T)=\Gamma(C(T)) \cap \Gamma(b)$ and
$B'(T)=\Gamma(C'(T)) \cap \Gamma(b)$ .
\item $A(T) = \Phi(B(T)) \cap \Gamma(g)$ and 
$A'(T) = \Phi(B'(T)) \cap \Gamma(g)$.
 \end{enumerate}
\end{lem}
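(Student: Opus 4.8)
The plan is to prove each of the four claims as a pair of inclusions: a \emph{forward} inclusion that is immediate from Observation~\ref{obs:emptytype} applied to the non-empty type~$T$, and a \emph{reverse} inclusion that is forced by maximality. Throughout I would abbreviate $A=A(T)$, $B=B(T)$, $C=C(T)$, $C'=C'(T)$, $B'=B'(T)$, $A'=A'(T)$, and I would use two facts about the graph $H_k$ of Figure~\ref{fig:Hk}: the vertex $b$ carries a loop, and $\{b,g\}\in E(H_k)$, so $b\in\Gamma(g)$.

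First I would record the forward inclusions. Since $T$ is non-empty, Observation~\ref{obs:emptytype} gives $B\cup C\cup C'\cup B'\subseteq\Gamma(b)$, $A\cup A'\subseteq\Gamma(g)$, and $B\times C\subseteq E(H_k)$, $B'\times C'\subseteq E(H_k)$; the last two say $B\subseteq\Gamma(C)$, $C\subseteq\Gamma(B)$, and symmetrically on the primed side. Combined with the fact that every entry $(x,y)$ of $T_1$ has $x\in A$, $y\in B$ and $\{x,y\}\in E(H_k)$ (and likewise for $T_2,T_3$), this immediately yields $T_1\subseteq E(A,B)$, $T_2\subseteq E(C,C')$, $T_3\subseteq E(B',A')$ (part~(1), $\subseteq$); $B\subseteq\Gamma(C)\cap\Gamma(b)$, $B'\subseteq\Gamma(C')\cap\Gamma(b)$ (part~(3), $\subseteq$); and $A\subseteq\Phi(B)\cap\Gamma(g)$, $A'\subseteq\Phi(B')\cap\Gamma(g)$ (part~(4), $\subseteq$). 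For part~(2): $C\subseteq\Gamma(b)$ is already noted, and for any $w\in\Gamma(C)\cap\Gamma(b)$ and $c\in C$ we have $\{c,w\}\in E(H_k)$ since $w\in\Gamma(C)$, so $C\subseteq\Gamma(\Gamma(C)\cap\Gamma(b))\cap\Gamma(b)$, and likewise for $C'$.

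Next I would handle the reverse inclusions, all by the same template: assuming an inclusion fails, I would exhibit a type $T'$ with $T_i\subseteq T'_i$ for $i\in[3]$, with $T'\neq T$, and with $T'$ non-empty (checked against the four conditions of Observation~\ref{obs:emptytype}), contradicting maximality of~$T$. Part~(1): if $(x,y)\in E(A,B)\setminus T_1$, put $T'=(T_1\cup\{(x,y)\},T_2,T_3)$; as $x\in A$ and $y\in B$ already, all six projections are unchanged, so $T'$ inherits non-emptiness from~$T$ (the other two components of~(1) are identical). Part~(4): if $x^*\in(\Phi(B)\cap\Gamma(g))\setminus A$, pick $y^*\in B$ with $\{x^*,y^*\}\in E(H_k)$ and put $T'=(T_1\cup\{(x^*,y^*)\},T_2,T_3)$; then $B$ is unchanged and $A$ gains only $x^*\in\Gamma(g)$, so all four conditions survive (symmetrically for $A'$). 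Part~(3): if $y^*\in(\Gamma(C)\cap\Gamma(b))\setminus B$, put $T'=(T_1\cup\{(b,y^*)\},T_2,T_3)$: this adds $b$ to $A$ (legal since $b\in\Gamma(g)$) and $y^*$ to $B$ (legal since $y^*\in\Gamma(b)$), while $B\times C\subseteq E(H_k)$ is preserved because $y^*\in\Gamma(C)$; the primed case uses $T_3\cup\{(y^*,b)\}$. Part~(2): if $c^*\in(\Gamma(\Gamma(C)\cap\Gamma(b))\cap\Gamma(b))\setminus C$, then since $B\subseteq\Gamma(C)\cap\Gamma(b)$ (the forward inclusion of~(3)) and $\Gamma$ is inclusion-reversing we get $c^*\in\Gamma(B)\cap\Gamma(b)$; put $T'=(T_1,T_2\cup\{(c^*,b)\},T_3)$, adding $c^*$ to $C$ (legal since $c^*\in\Gamma(b)$) and $b$ to $C'$ (legal since $b$ is looped), which preserves $B\times C\subseteq E(H_k)$ because $c^*\in\Gamma(B)$ and $B'\times C'\subseteq E(H_k)$ because $B'\subseteq\Gamma(b)$; the primed case is symmetric.

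I expect the only delicate step to be the reverse direction of parts~(2) and~(3): unlike in part~(4), the element to be absorbed into a projection does not arrive with a partner vertex already present in the image, so one must be supplied without breaking Observation~\ref{obs:emptytype}. The crucial point will be that the looped vertex $b$ — which lies in $\Gamma(g)$ and, being looped, is adjacent to every vertex of $\Gamma(b)$ (itself included) — is a ``universal partner'' that always keeps the augmented type non-empty. Once this is pinned down, every case is mechanical, and the lemma follows by combining the two inclusions of each part.
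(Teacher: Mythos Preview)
Your proposal is correct and follows essentially the same approach as the paper: forward inclusions from Observation~\ref{obs:emptytype}, reverse inclusions by exhibiting a strictly larger non-empty type, with the looped vertex~$b$ serving as the ``universal partner'' in parts~(2) and~(3). The only cosmetic difference is that the paper enlarges a whole component at once (e.g.\ replacing $T_2$ by $E(S,S')$ in part~(2)) whereas you add a single pair such as $(c^*,b)$; both work for the same reason.
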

\begin{proof}

Let $\type=(\type_1, \type_2, \type_3)$ be a non-empty type. 

{\bf Proof of (1):\quad}
It is clear from the definitions  that
$T_1 \subseteq E(A(T),B(T))$, $T_2 \subseteq E(C(T),C'(T))$ and $T_3 \subseteq E(B'(T),A'(T))$.  
Suppose that $T_2$ is a strict subset of
$E(C(T),C'(T))$.
We will show that $T$ is not maximal.
To this end,  consider the type $\type^*=(\type_1,E(C(T),C'(T)),\type_3)$. 
Note that $A(T^*) = A(T)$, $B(T^*)=B(T)$, $C(T^*)=C(T)$,
$C'(T^*)=C'(T)$, $B'(T^*)=B'(T)$ and $A'(T^*)=A'(T)$.
Using Observation~\ref{obs:emptytype} and the fact that $T$ is non-empty,
we conclude that $T^*$ is non-empty.
Using the definition of maximality (comparing $T$ to $T^*$) we conclude that $T$ is not maximal.
Similarly, if $T_1$ is a strict subset of $E(A(T),B(T))$ 
or if $T_3$ is a strict subset of $E(B'(T),A'(T))$ then $T$ is not maximal.
 
{\bf Proof of (2):\quad}
Let $X = \Gamma\big(\, \Gamma(C(T)) \cap \Gamma(b)\,\big)$
and $S = X   \cap \Gamma(b)$.
If $y\in \Gamma(C(T)) \cap \Gamma(b)$ 
then $y$ is certainly adjacent to everything in $C(T)$, so $C(T) \subseteq X$.
 Since $C(T)\subseteq \Gamma(b)$ by Observation~\ref{obs:emptytype}, we conclude that $C(T)$ is a subset of $S $. 
Similarly, 
defining $X' = \Gamma\big(\, \Gamma(C'(T)) \cap \Gamma(b)\,\big)$
and $S' =  X' \cap \Gamma(b)$, we have
$C'(T) \subseteq S'$.
  Thus, $\type_2\subseteq E(S,S')$.
Consider the type $\type^{*}=(\type_1,E(S,S'), \type_3)$. 

\begin{itemize}
\item
We first show that  $\type^{*}$ is non-empty. 
Note that $A(T^{*}) = A(T)$, $B(T^{*})=B(T)$, $A'(T^{*})=A'(T)$ and $B'(T^{*})=B'(T)$.
Also,  $C(T^{*}) \subseteq S \subseteq \Gamma(b)$
and $C'(T^{*}) \subseteq S' \subseteq \Gamma(b)$.
Using Observation~\ref{obs:emptytype} and the fact that $T$ is non-empty,
we must check that  
 $B(T)\times C(T^{*}) \subseteq E(H_k)$ and $B'(T) \times C'(T^{*}) \subseteq E(H_k)$.
 To do this, we will check that
  $B(T)\times  S \subseteq E(H_k)$ and $B'(T) \times  S' \subseteq E(H_k)$.
 
 We start with the first of these.
 Since $T$ is non-empty, Observation~\ref{obs:emptytype} guarantees that 
 $B(T) \subseteq \Gamma(C(T)) \cap \Gamma(b)$.
 So it suffices to show 
 that
 \big(\,$\Gamma(C(T)) \cap \Gamma(b)\,\big) \times S \subseteq E(H_k)$, which follows from the definition of~$S$.
 The proof that $B'(T) \times  S' \subseteq E(H_k)$ is similar.
 We have shown that $T^{*}$ is non-empty. 
 
 \item We next show that $C(T^*)=S$. We have already established that $C(T^*)\subseteq S$.
 The vertex $b$ is adjacent to everything in $\Gamma(b)$
 so it is  adjacent to everything in the subset
 $ \Gamma(C'(T)) \cap \Gamma(b)$ hence
 $b\in X'$. Since $b$ has a loop, this  implies $b\in S'$. By the definition of $T^*$ it follows that $S\subseteq C(T^*)$, and hence $C(T^*) = S$, as required.
 We can similarly show that $C'(T^*)=S'$.
 \end{itemize}
 
Suppose that $C(T)$ is a strict subset of $S$.
Comparing $T$ to $T^{*}$, we find that 
$T_2$ is a strict subset of $E(S,S')$ so
$T$ is not maximal.
Similarly, if  $C'(T)$ is a strict subset of $S'$ then $T$ is not maximal.
 
 {\bf Proof of (3):\quad} It is immediate from  Observation~\ref{obs:emptytype}
 that 
 $B(T)\subseteq \Gamma(C(T)) \cap \Gamma(b)$ and
$B'(T)\subseteq \Gamma(C'(T)) \cap \Gamma(b)$. 

Suppose that $B(T)$ is a strict subset of 
$\Gamma(C(T)) \cap \Gamma(b)$. We will show that $T$ is not maximal.
To this end, 
let $v$ be any vertex in $\Gamma(C(T)) \cap \Gamma(b)\setminus B(T)$ 
and
consider the type 
$T^* = (T_1 \cup \{(b,v)\},T_2,T_3)$.
 Observation~\ref{obs:emptytype} shows that $T^*$ is non-empty, so $T$ is not maximal.
  Similarly, if $B'(T)$ is a strict subset of $\Gamma(C'(T)) \cap \Gamma(b)$ then $T$ is not maximal. 
 
 {\bf Proof of (4):\quad} It is immediate from Observation~\ref{obs:emptytype} and the definition of a type
 that 
 $A(T) \subseteq \Phi(B(T)) \cap \Gamma(g)$ and 
$A'(T) \subseteq \Phi(B'(T)) \cap \Gamma(g)$. 
 If  either of these  subset inclusions is strict then, as in the proof of (3), it is straightforward to see that $T$ is not maximal.
\end{proof}

\begin{lem}\label{lem:necessary2}
Let $T $ be a maximal type.
Then $C(T)$ and $C'(T)$ are 
both in the set $$ \{  \{b\} ,  \{r_1,b\} ,  \{r_2,b\}   \{r_1, r_2, b,g\} \}.$$
\end{lem}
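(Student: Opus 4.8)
The plan is to exploit the fixed-point characterisation of $C(T)$ and $C'(T)$ supplied by Lemma~\ref{lem:necessary1}(2), which reduces the statement to a short finite computation inside the fixed graph $H_k$. First I would record the relevant neighbourhoods in $H_k$. Since $b$ is looped and its only non-loop neighbours are $r_1$, $r_2$ and $g$, we have $\Gamma(b)=\{b,r_1,r_2,g\}$; similarly $\Gamma(r_1)=\{r_1,w_1,d_1,b\}$, $\Gamma(r_2)=\{r_2,w_2,d_2,b\}$, and, crucially, since $g$ is \emph{unlooped}, $\Gamma(g)=\{b,y_1,\dots,y_k\}$. Next I would observe that $C(T)$ is a non-empty subset of $\Gamma(b)$: non-emptiness follows because a non-empty type has $T_2\neq\emptyset$ (Observation~\ref{obs:emptytype}(1)), hence $C(T)\neq\emptyset$, and $C(T)\subseteq\Gamma(b)$ is immediate from Observation~\ref{obs:emptytype}(2). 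The same holds for $C'(T)$.

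Write $\phi(S)=\Gamma\big(\Gamma(S)\cap\Gamma(b)\big)\cap\Gamma(b)$ for the closure operation appearing in Lemma~\ref{lem:necessary1}(2), so that $C(T)=\phi(C(T))$ and $C'(T)=\phi(C'(T))$. It therefore suffices to show that $\phi(S)\in\{\{b\},\{r_1,b\},\{r_2,b\},\{r_1,r_2,b,g\}\}$ for \emph{every} non-empty $S\subseteq\Gamma(b)$; since every value of $\phi$ then already lies in the target set, every fixed point does too. To prove this I would carry out the following elementary case analysis (and then reuse it once more). For non-empty $S\subseteq\{b,r_1,r_2,g\}$: (a) $b$ is adjacent to each of $b,r_1,r_2,g$, so $b\in\Gamma(S)$ always; (b) $r_1\in\Gamma(S)$ iff $S\subseteq\Gamma(r_1)\cap\Gamma(b)=\{r_1,b\}$; (c) symmetrically $r_2\in\Gamma(S)$ iff $S\subseteq\{r_2,b\}$; (d) $g\in\Gamma(S)$ iff $S\subseteq\Gamma(g)\cap\Gamma(b)=\{b\}$, i.e.\ $S=\{b\}$. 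Combining these cases (and noting that if $S\subseteq\{r_1,b\}$ and $S\subseteq\{r_2,b\}$ then $S=\{b\}$) shows that $\Gamma(S)\cap\Gamma(b)$ equals $\{b,r_1,r_2,g\}$ when $S=\{b\}$; equals $\{r_1,b\}$ when $\{b\}\neq S\subseteq\{r_1,b\}$; equals $\{r_2,b\}$ when $\{b\}\neq S\subseteq\{r_2,b\}$; and equals $\{b\}$ in all remaining cases. In particular $\Gamma(S)\cap\Gamma(b)$ is itself always one of $\{b\},\{r_1,b\},\{r_2,b\},\{r_1,r_2,b,g\}$.

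Applying exactly the same four-case analysis a second time, now with $S$ replaced by $S':=\Gamma(S)\cap\Gamma(b)$, yields $\phi(S)=\Gamma(S')\cap\Gamma(b)$: one gets $\phi(S)=\{b,r_1,r_2,g\}$ if $S'=\{b\}$; $\phi(S)=\{r_1,b\}$ if $S'=\{r_1,b\}$; $\phi(S)=\{r_2,b\}$ if $S'=\{r_2,b\}$; and $\phi(S)=\{b\}$ if $S'=\{r_1,r_2,b,g\}$ (this last set is not $\{b\}$ and is contained in neither $\{r_1,b\}$ nor $\{r_2,b\}$). In every case $\phi(S)\in\{\{b\},\{r_1,b\},\{r_2,b\},\{r_1,r_2,b,g\}\}$, so, being fixed points of $\phi$, both $C(T)$ and $C'(T)$ lie in this set, which is the claim. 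I do not expect a genuine obstacle: the argument is just bookkeeping once Lemma~\ref{lem:necessary1}(2) is available. The one point to handle carefully is that $g$ is unlooped, hence $g\notin\Gamma(g)$; this is precisely why $g$ enters $\Gamma(S)\cap\Gamma(b)$ only when $S=\{b\}$, and it is what forces the single four-element set $\{r_1,r_2,b,g\}$ rather than any smaller set containing $g$.
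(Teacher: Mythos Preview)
Your proposal is correct and follows essentially the same approach as the paper: both arguments use the fixed-point characterisation $C(T)=\Gamma(\Gamma(C(T))\cap\Gamma(b))\cap\Gamma(b)$ from Lemma~\ref{lem:necessary1}(2) together with $C(T)\subseteq\Gamma(b)$ from Observation~\ref{obs:emptytype}, and then carry out a short case analysis on subsets of $\Gamma(b)=\{b,r_1,r_2,g\}$. The only cosmetic difference is that the paper organises the case analysis by asking which of $g$, $r_1$, $r_2$ lie in $C(T)$ and ruling out impossible configurations directly, whereas you compute the whole map $S\mapsto\Gamma(S)\cap\Gamma(b)$ and observe that its image already lies in the target set; either way the underlying computations (in particular the key fact that $g\notin\Gamma(g)$) are identical.
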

\begin{proof}
We will prove this for $C(T)$. The argument for $C'(T)$ is the same.
From Observation~\ref{obs:emptytype}, 
$C(T)$  is a (not necessarily strict) subset of $\Gamma(b) = \{r_1,r_2,b,g\}$
(and it is non-empty).
\begin{itemize}
\item If $g\in C(T)$ then  $\Gamma(C(T))\cap \Gamma(b) = \{b\}$
so, by  item~(2) of Lemma~\ref{lem:necessary1},
$C(T)=\Gamma(b) = \{r_1,r_2,b,g\}$.
\item If $r_1\in C(T)$ and $r_2\in C(T)$ then $\Gamma(C(T))\cap \Gamma(b) = \{b\}$
so,  again,
$C(T)=\Gamma(b) = \{r_1,r_2,b,g\}$.
\item If $C(T)= \{r_1\}$ 
then $\Gamma(C(T))\cap \Gamma(b) = \{r_1,b\}$
so, by  item~(2) of Lemma~\ref{lem:necessary1},
$C(T)= \{r_1,b\}$, which is a contradiction.
\item Similarly, the case $C(T)=\{r_2\}$ gives a contradiction.
\end{itemize}

This covers all possible cases.
\end{proof}

\begin{defn}
For $i\in [6]$ let $X_i\subseteq V(H_k)$. We say that the types $(E(X_1,X_2), E(X_3,X_4), E(X_5,X_6))$ and $(E(X_6,X_5), E(X_4,X_3), E(X_2,X_1))$ are \emph{symmetric} to each other.
\end{defn}
Note that if $\type$ and $\type'$ are symmetric to each other it holds that $N(\type)=N(\type')$.

\begin{table}[h!]
{\centering
\caption{Maximal types of the homomorphisms in $\calH(J,\boldS_J),H_k)$.}
\label{tab:configs}
\setlength{\tabcolsep}{3pt}
\newcommand\Tstrut{\rule{0pt}{2.6ex}} 
\newcounter{rowno}
\setcounter{rowno}{0}
\begin{tabular}{@{}>{$\type_{\stepcounter{rowno}\therowno}$}lcccccc|r@{}}
\multicolumn{1}{l}{}&$A(T)$ & $B(T)$ & $C(T)$ & $C'(T)$ & $B'(T)$ & $A'(T)$ & $\hatN(\type)$\\
	\hline
	\Tstrut
&$\{b\} \cup \calY$& $\{r_1,r_2,b,g\}$ 
& $\{b\}$ & $\{b\}$ 
& $\{r_1,r_2,b,g\}$ & $\{b\} \cup \calY$ 
& $(4+k)^{pt}\cdot 1^{qt}\cdot (4+k)^{pt}$\\
&$\{b\} \cup \calY$& $\{r_1,r_2,b,g\}$ 
& $\{b\}$ & $\{r_1,b\}$ 
& $\{r_1,b\}$ & $\{b\}$ 
& $(4+k)^{pt}\cdot 2^{qt}\cdot 2^{pt}$\\
&$\{b\} \cup \calY$& $\{r_1,r_2,b,g\}$ 
& $\{b\}$ & $\{r_2,b\}$ 
& $\{r_2,b\}$ & $\{b\}$ 
& $(4+k)^{pt}\cdot 2^{qt}\cdot 2^{pt}$\\
&$\{b\} \cup \calY$& $\{r_1,r_2,b,g\}$ 
& $\{b\}$ & $\{r_1,r_2,b,g\}$ 
& $\{b\}$ & $\{b\}$ 
& $(4+k)^{pt}\cdot 4^{qt}\cdot 1^{pt}$\\
&$\{b\}$& $\{r_1,b\}$ 
& $\{r_1,b\}$ & $\{r_2,b\}$ 
& $\{r_2,b\}$ & $\{b\}$ 
& $2^{pt}\cdot 3^{qt}\cdot 2^{pt}$\\
&$\{b\}$& $\{r_1,b\}$ 
& $\{r_1,b\}$ & $\{r_1,b\}$ 
& $\{r_1,b\}$ & $\{b\}$ 
& $2^{pt}\cdot 4^{qt}\cdot 2^{pt}$\\
&$\{b\}$& $\{r_2,b\}$ 
& $\{r_2,b\}$ & $\{r_2,b\}$ 
& $\{r_2,b\}$ & $\{b\}$ 
& $2^{pt}\cdot 4^{qt}\cdot 2^{pt}$\\
&$\{b\}$& $\{r_1,b\}$ 
& $\{r_1,b\}$ & $\{r_1,r_2,b,g\}$ 
& $\{b\}$ & $\{b\}$ 
& $2^{pt}\cdot 6^{qt}\cdot 1^{pt}$\\
&$\{b\}$& $\{r_2,b\}$ 
& $\{r_2,b\}$ & $\{r_1,r_2,b,g\}$ 
& $\{b\}$ & $\{b\}$ 
& $2^{pt}\cdot 6^{qt}\cdot 1^{pt}$\\
&$\{b\}$& $\{b\}$ 
& $\{r_1,r_2,b,g\}$ & $\{r_1,r_2,b,g\}$ 
& $\{b\}$ & $\{b\}$ 
& $1^{pt}\cdot 9^{qt}\cdot 1^{pt}$\\
\end{tabular}
}

{\it Note: }{Recall that $p$, $q$ and $t$ are the parameters of $J$ where $p$ and $q$ satisfy Constraint~\ref{constraint:pq}. Each line corresponds to a type 
		$\Bigl(E\bigl(A(T),B(T)\bigr), E\bigl(C(T),C'(T)\bigr), E\bigl( B'(T),A'(T)\bigr)\Bigr)$. To shorten the notation we set $\calY=\{y_i \mid i\in [k]\}$.}
\end{table}

\begin{lem}\label{lem:tableType}
All maximal types are listed in Table~\ref{tab:configs} (except for those that are symmetric to a listed type). Furthermore, for each listed type $\type$ we give the corresponding value for $\hatN(\type)$.
\end{lem}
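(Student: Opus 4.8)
The plan is to combine the structural restrictions from Lemmas~\ref{lem:necessary1} and~\ref{lem:necessary2} with the explicit local structure of $H_k$, so that only finitely many candidate types survive, and then to read off $\hatN$ by counting edges. First I would record the two neighbourhoods of $H_k$ that control everything: from the definition of $H_k$, $\Gamma(b) = \{r_1,r_2,b,g\}$ and $\Gamma(g) = \{b,y_1,\dots,y_k\}$, and also $\Gamma(r_1)\cap\Gamma(r_2) = \{b\}$. Using these, for each of the four sets $S$ in the list $\mathcal{S} = \{\{b\},\{r_1,b\},\{r_2,b\},\{r_1,r_2,b,g\}\}$ of Lemma~\ref{lem:necessary2} I would compute the pair $\big(\Gamma(S)\cap\Gamma(b),\ \Phi(\Gamma(S)\cap\Gamma(b))\cap\Gamma(g)\big)$: this equals $(\{r_1,r_2,b,g\},\,\{b\}\cup\calY)$ for $S=\{b\}$, $(\{r_1,b\},\,\{b\})$ for $S=\{r_1,b\}$, $(\{r_2,b\},\,\{b\})$ for $S=\{r_2,b\}$, and $(\{b\},\,\{b\})$ for $S=\{r_1,r_2,b,g\}$, where $\calY=\{y_1,\dots,y_k\}$.

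Next, let $T$ be a maximal type. By Lemma~\ref{lem:necessary2} the pair $(C(T),C'(T))$ lies in $\mathcal{S}\times\mathcal{S}$, and by items~(1), (3) and~(4) of Lemma~\ref{lem:necessary1} this pair determines $T$ entirely: $B(T) = \Gamma(C(T))\cap\Gamma(b)$, $A(T) = \Phi(B(T))\cap\Gamma(g)$, with $B'(T),A'(T)$ obtained from $C'(T)$ in the same way, and $T_1 = E(A(T),B(T))$, $T_2 = E(C(T),C'(T))$, $T_3 = E(B'(T),A'(T))$ by Definition~\ref{eq:EXY}. Substituting the four precomputed pairs from the previous paragraph, each choice of $(C(T),C'(T))$ yields exactly one candidate type. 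The symmetry of types introduced just before Table~\ref{tab:configs} acts on these candidates precisely by interchanging $C(T)$ with $C'(T)$ (and hence $B(T)$ with $B'(T)$ and $A(T)$ with $A'(T)$, consistently with the formulas above), so up to symmetry the candidates are indexed by the $\binom{4}{2}+4 = 10$ unordered pairs from $\mathcal{S}$; matching these against the precomputed data shows that they are exactly the ten lines of Table~\ref{tab:configs}. This already establishes the first assertion of the lemma.

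To complete the argument I would then check that each of the ten listed types is in fact non-empty and maximal, so that the table is exact (this is also what later lemmas implicitly use). Non-emptiness follows from Observation~\ref{obs:emptytype}: in each line every set lies inside $\Gamma(b)$ or $\Gamma(g)$ as required, $T_1,T_2,T_3$ are non-empty since each contains the loop $(b,b)$ of $H_k$ (because $b$ belongs to each of $A(T),B(T),C(T),C'(T),B'(T),A'(T)$ in every line), and the products $B(T)\times C(T)$ and $B'(T)\times C'(T)$ lie in $E(H_k)$ because $\{b\}\times S\subseteq E(H_k)$ for every $S\subseteq\Gamma(b)$ and $\{r_i,b\}\times\{r_i,b\}\subseteq E(H_k)$ for $i\in\{1,2\}$ (here the loops of $H_k$ at $r_1,r_2,b$ do the work). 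Maximality then follows because every maximal type is already among the ten candidates, so a candidate that failed to be maximal would be strictly contained, component-wise, in another candidate or its symmetric image; comparing the cardinality triples $(\lvert T_1\rvert,\lvert T_2\rvert,\lvert T_3\rvert)$ read off from the last column of Table~\ref{tab:configs}, and inspecting the actual edge sets in the handful of cases where one triple dominates another, shows that this never occurs.

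Finally, the values of $\hatN(T)=\lvert T_1\rvert^{pt}\lvert T_2\rvert^{qt}\lvert T_3\rvert^{pt}$ follow by counting, for each line, the ordered pairs in $E(A(T),B(T))$, $E(C(T),C'(T))$ and $E(B'(T),A'(T))$ inside $H_k$; for instance line~1 has $A(T)=\{b\}\cup\calY$ and $B(T)=\{r_1,r_2,b,g\}$, contributing the four pairs $(b,r_1),(b,r_2),(b,b),(b,g)$ together with the $k$ pairs $(y_i,g)$, so $\lvert T_1\rvert = 4+k$, while $T_2 = E(\{b\},\{b\})=\{(b,b)\}$ and $\lvert T_3\rvert = 4+k$ by symmetry, giving $(4+k)^{pt}\cdot 1^{qt}\cdot (4+k)^{pt}$. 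I expect the maximality bookkeeping in the third paragraph to be the main obstacle: Lemmas~\ref{lem:necessary1}–\ref{lem:necessary2} supply only necessary conditions, so the dominated candidates must be eliminated by an explicit (if short) case inspection, and one must be careful throughout with the numerous loops of $H_k$.
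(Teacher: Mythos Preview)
Your proposal is correct and follows essentially the same route as the paper: use Lemma~\ref{lem:necessary2} to restrict $(C(T),C'(T))$ to $4\times 4$ choices (ten up to symmetry), then use items~(3), (4) and~(1) of Lemma~\ref{lem:necessary1} to recover $B(T),A(T),B'(T),A'(T)$ and hence $T_1,T_2,T_3$, and finally read off $\hatN$.

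One remark: your third paragraph, where you verify that each of the ten listed types is non-empty and maximal, goes beyond what the lemma actually asserts. The statement only claims the inclusion ``every maximal type is (up to symmetry) in the table'' together with the correctness of the displayed $\hatN$ values; it does not claim that every row of the table is itself a maximal type. The paper's proof accordingly omits this check, and the downstream arguments (Lemmas~\ref{lem:maxType} and~\ref{lem:T4dominance}) only require the inclusion plus the $\hatN$ values. So the ``main obstacle'' you anticipate is not in fact needed here.
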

\begin{proof}
First, Lemma~\ref{lem:necessary2} gives the $4$~possibilities for $C(T)$ and $C'(T)$.
Up to symmetry, this gives the 10~possibilities listed in the table.

Next, for each of the 10~possibilities, we use items~(3) and~(4) of Lemma~\ref{lem:necessary1} to compute
the corresponding sets $A(T)$, $B(T)$, $B'(T)$ and $A'(T)$.

Now item~(1) of Lemma~\ref{lem:necessary1} guarantees that
$T_1 = E(A(T),B(T))$, $T_2 = E(C(T),C'(T))$ and $T_3 = E(B'(T),A'(T))$. 
So 
$$ \hatN( T)=\abs{ E(A(T),B(T))}^{pt} \abs{E(C(T),C'(T)) }^{qt} \abs{ E(B'(T),A'(T))}^{pt}.$$
These quantities are all computed in the table.
 \end{proof}

Let $\type_1, \dots, \type_{10}$ be the types as given in Table~\ref{tab:configs}.
\begin{lem}\label{lem:T4dominance}
Let $k$ be a positive integer. Then there  is a $\gamma\in (0,1)$ and positive integers $p$ and $q$ that satisfy Constraint~\ref{constraint:pq} such that, for all $i\in[10]$ except $i= 4$ and all positive integers $t$, we have $\hatN(\type_i)\le \gamma^t \hatN(\type_4)$.
\end{lem}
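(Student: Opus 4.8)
The plan is to read the values $\hatN(\type_i)$ directly off Table~\ref{tab:configs} and show that, with a suitable ratio $q/p$, the quantity $\hatN(\type_4)$ exceeds every other $\hatN(\type_i)$ by a factor growing exponentially in~$t$. Writing $m=4+k$, the table gives $\hatN(\type_4)=m^{pt}4^{qt}$, and each ratio $\hatN(\type_i)/\hatN(\type_4)$ simplifies to a $t$-th power $\bigl(a_i^{\,p}b_i^{\,q}\bigr)^{t}$ with explicit positive reals $a_i,b_i$; concretely,
\[
\frac{\hatN(\type_1)}{\hatN(\type_4)}=\Bigl(\tfrac{m^p}{4^q}\Bigr)^{t},\qquad
\frac{\hatN(\type_2)}{\hatN(\type_4)}=\frac{\hatN(\type_3)}{\hatN(\type_4)}=\bigl(2^{\,p-q}\bigr)^{t},\qquad
\frac{\hatN(\type_5)}{\hatN(\type_4)}=\Bigl(\bigl(\tfrac{4}{m}\bigr)^{p}\bigl(\tfrac{3}{4}\bigr)^{q}\Bigr)^{t},
\]
\[
\frac{\hatN(\type_6)}{\hatN(\type_4)}=\frac{\hatN(\type_7)}{\hatN(\type_4)}=\Bigl(\bigl(\tfrac{4}{m}\bigr)^{p}\Bigr)^{t},\qquad
\frac{\hatN(\type_8)}{\hatN(\type_4)}=\frac{\hatN(\type_9)}{\hatN(\type_4)}=\Bigl(\bigl(\tfrac{2}{m}\bigr)^{p}\bigl(\tfrac{3}{2}\bigr)^{q}\Bigr)^{t},\qquad
\frac{\hatN(\type_{10})}{\hatN(\type_4)}=\Bigl(\bigl(\tfrac{1}{m}\bigr)^{p}\bigl(\tfrac{9}{4}\bigr)^{q}\Bigr)^{t}.
\]
Once $p,q$ are fixed the constant $a_i^{\,p}b_i^{\,q}$ is independent of~$t$, so it suffices to choose positive integers $p,q$ making every one of these constants strictly less than~$1$; then $\gamma:=\max_{i\neq 4}a_i^{\,p}b_i^{\,q}$ lies in $(0,1)$ and satisfies $\hatN(\type_i)\le\gamma^{t}\hatN(\type_4)$ for all $i\neq 4$ and all~$t$.

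Taking logarithms converts each requirement ``$a_i^{\,p}b_i^{\,q}<1$'' into a linear inequality in $p$ and $q$. Since $m\ge 5$, the bases $4/m$ and $(4/m)(3/4)$ are already below~$1$, so the cases $\type_5,\type_6,\type_7$ impose no constraint whatsoever. Dividing the remaining inequalities by $p$ and writing $\rho=q/p$, the binding requirements become
\[
\rho>\log_4 m \ \ (\type_1),\qquad \rho>1 \ \ (\type_2,\type_3),\qquad
\rho<\tfrac{\ln(m/2)}{\ln(3/2)} \ \ (\type_8,\type_9),\qquad \rho<\tfrac{\ln m}{\ln(9/4)} \ \ (\type_{10}).
\]
As $\log_4 m\ge\log_4 5>1$, the lower bound coming from $\type_2,\type_3$ is subsumed by that from $\type_1$, so the problem reduces to proving that the open interval with left endpoint $\log_4 m$ and right endpoint $\min\{\ln(m/2)/\ln(3/2),\ \ln m/\ln(9/4)\}$ is non-empty for every $m\ge 5$.

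Proving this non-emptiness is the main (though elementary) point, and I expect it to be the only real obstacle. The inequality $\log_4 m<\ln m/\ln(9/4)$ is immediate since $\ln 4>\ln(9/4)$. The inequality $\log_4 m<\ln(m/2)/\ln(3/2)$ rearranges to $\ln m\cdot\ln(8/3)>2(\ln 2)^2$, which I would verify using crude rational bounds such as $\ln m\ge\ln 5>1.6$, $\ln(8/3)>0.9$ and $\ln 2<0.7$. Given a non-empty interval, I pick a rational $\rho=q_0/p_0$ inside it with $p_0,q_0$ positive integers (for instance $p_0$ large and $q_0=\lceil p_0\log_4 m\rceil+1$, which lies below $p_0$ times the right endpoint once $p_0$ is large enough), and then replace $(p_0,q_0)$ by $(Np_0,Nq_0)$ for an integer $N$ large enough that $Np_0,Nq_0\ge 32+12k$; this rescaling preserves $\rho$ and hence all the strict inequalities, so the chosen $p,q$ satisfy Constraint~\ref{constraint:pq}. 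With these $p,q$, every constant $a_i^{\,p}b_i^{\,q}$ ($i\neq 4$) is strictly less than~$1$, so $\gamma:=\max_{i\in[10]\setminus\{4\}}a_i^{\,p}b_i^{\,q}\in(0,1)$ works, giving $\hatN(\type_i)\le\gamma^{t}\hatN(\type_4)$ for all $i\in[10]\setminus\{4\}$ and all positive integers~$t$.
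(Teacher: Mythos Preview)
Your proposal is correct and follows essentially the same approach as the paper: compute the ratios $\hatN(\type_i)/\hatN(\type_4)$ from Table~\ref{tab:configs}, reduce to choosing a rational $\rho=q/p$ in an open interval, and then scale $(p,q)$ to meet Constraint~\ref{constraint:pq}. The only cosmetic difference is that the paper fixes the interval $\bigl(\log_4(4+k),\ \log_{9/4}(4+k)\bigr)$ up front and then checks that any $\rho$ in it automatically satisfies the $\type_8,\type_9$ bound via the clean identity $\log_{9/4}(4+k)<\log_{3/2}\bigl((4+k)/2\bigr)$ (equivalent to $4+k>4$), whereas you take the minimum of the two upper bounds and verify non-emptiness by direct numerical estimates; both routes are fine.
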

\begin{proof}
We choose integers $p,q \ge 32+12k$ ($p$ and $q$ satisfy Constraint~\ref{constraint:pq}) such that 
\begin{equation}\label{eq:fracqp1}
\log_4 (4+k) < \frac{q}{p} < \log_{9/4}(4+k).
\end{equation}
This is possible as $\log_4 (4+k)<\log_{9/4}(4+k)$ for all $k>0$. Suppose that $\type$ and $\type'$ are types listed in Table~\ref{tab:configs} which are distinct from $\type_4$ and have the property that $\hatN(\type')<\hatN(\type)$. Then the sought-for bound automatically holds for $\type'$ if it holds for $\type$. 

We check the sought-for bound for each $i\in[10]$, $i\neq 4$:

\begin{list}{}%
{\renewcommand\makelabel[1]{#1:\hfill}%
   \settowidth\labelwidth{\makelabel{$\type_2$ (and $\type_3$)\quad}}%
   \setlength\leftmargin{\labelwidth}
   \addtolength\leftmargin{\labelsep}}
\item[$\type_1$]  $\frac{\hatN(\type_1)}{\hatN(\type_4)}=(4+k)^{pt}(1/4)^{qt} < \gamma^t$ is fulfilled for some sufficiently large $\gamma<1$ if and only if $(4+k)^{p}/4^{q}<1$. This is true as $\log_4 (4+k)<\frac{q}{p}$ by (\ref{eq:fracqp1}).
\item[$\type_2$ (and $\type_3$)]  $\frac{\hatN(\type_2)}{\hatN(\type_4)}=2^{pt}(1/2)^{qt} < \gamma^t$ is fulfilled for some sufficiently large $\gamma<1$ if and only if $2^{p}/2^{q}<1$. This is true as $q>p$ by (\ref{eq:fracqp1}).
\item[$\type_5$]  $\hatN(\type_5) <\hatN(\type_6)$.
\item[$\type_6$ (and $\type_7$)]  $\frac{\hatN(\type_6)}{\hatN(\type_4)}=(4/(4+k))^{pt} < \gamma^t$ is fulfilled for $4/(4+k)\le 4/5<\gamma<1$.
\item[$\type_8$ (and $\type_9$)]  $\frac{\hatN(\type_8)}{\hatN(\type_4)}=(2/(4+k))^{pt}(3/2)^{qt} < \gamma^t$ is fulfilled for some sufficiently large $\gamma<1$ if and only if $(2/(4+k))^{p}(3/2)^{q}<1$. This is true as $\frac{q}{p}<\log_{9/4} (4+k)<\log_{3/2} ((4+k)/2)$ by (\ref{eq:fracqp1}) and for all $k>0$.
\item[$\type_{10}$]  $\frac{\hatN(\type_{10})}{\hatN(\type_4)}=(1/(4+k))^{pt}(9/4)^{qt} < \gamma^t$ is fulfilled for some sufficiently large $\gamma<1$ if and only if $(1/(4+k))^{p}(9/4)^{q}<1$. This is true as $\frac{q}{p} < \log_{9/4} (4+k)$ by (\ref{eq:fracqp1}).
\end{list}
\end{proof}

Now we have collected all properties of the gadget graph $J$ that we need to prove Lemma~\ref{lem:MixedTreeHardness3}. We will see that this lemma is the final piece to show the classification for graphs of girth at least $5$ stated in Theorem~\ref{thm:RetMain}. In the statement of the lemma we refer to the graph $H_k'$ as depicted in Figure~\ref{fig:Hkprime}.

\begin{figure}[h!]\centering
{\def\scaleFactor{1}
\begin{tikzpicture}[scale=1, every loop/.style={min distance=10mm,looseness=10}]

	\filldraw (0,0) node (r1){} circle[radius=3pt] --++ (0:1cm) node (b){} circle[radius=3pt] --++ (0:1cm) node (r2){} circle[radius=3pt];
	\filldraw (b) --++ (-90:1cm) node (g){} circle[radius=3pt] --++ (-130:1cm) node (y1) {} circle[radius=3pt];
	\filldraw (g.center) --++ (-50:1cm) node(yw) {} circle[radius=3pt];

	\path[-] (r1.center) edge  [in=125,out=55,loop] node {} ();
	\path[-] (b.center) edge  [in=125,out=55,loop] node {} ();
	\path[-] (r2.center) edge  [in=125,out=55,loop] node {} ();
	\node at ($(g)+(0,-.8cm)$) {$\ldots$};
	
	\node at ($(g)+(0,-.8cm)$) {$\ldots$};
	\node at ($(r1)+(0,.8cm)$) {$r_1$};
	\node at ($(b)+(0,.8cm)$) {$b$};
	\node at ($(r2)+(0,.8cm)$) {$r_2$};
	\node at ($(g)+(-.3cm,0)$) {$g$};
	\node at ($(y1)+(0cm,-.4cm)$) {$y_1$};
	\node at ($(yw)+(0cm,-.4cm)$) {$y_k$};
	
\end{tikzpicture}
}
\caption{The graph $H'_k$.}
\label{fig:Hkprime}
\end{figure}
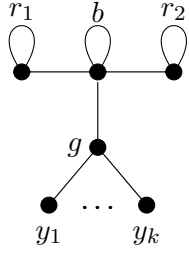

\begin{lem}\label{lem:MixedTreeHardness3}
Let $H$ be graph that has a looped vertex $b$ such that, for some positive integer $k$, $H'_k$ (see Figure~\ref{fig:Hkprime}) is a subgraph of $H[\Gamma^2(b)]$, and $H[\Gamma^2(b)]$ in turn is a subgraph of $H_k$ (see Figure~\ref{fig:Hk}). Then $\sat \leap \OALHom{H}$.
\end{lem}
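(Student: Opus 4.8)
The plan is to prove $\sat \leap \OALHom{H}$ by an AP-reduction from $\largecut$, the problem of counting the maximum cuts of a graph, which is $\sat$-hard under AP-reductions, following the gadget technique of~\cite{DGGJApprox,KelkThesis}. Given an instance of $\largecut$ --- a graph $G$ together with the size $B$ of a maximum cut --- I would construct an instance $(J^G,\boldS)$ of $\OALHom{H}$ by replacing each vertex of $G$ with a copy of the gadget $J$ of Definition~\ref{def:GadgetJ} (using a constant-size cluster of copies at vertices of large degree, after a routine preprocessing that bounds the degree of $G$), and realising each edge $\{u,v\}$ of $G$ by identifying the port $(\alpha',A')$ of $u$'s gadget with the port $(\alpha,A)$ of $v$'s gadget (the pinned vertex $\alpha'$ with $\alpha$, and the independent set $A'$ with $A$ --- these have the same size $p\cdot t$). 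The lists are $S_\alpha=S_{\alpha'}=\{g\}$ and $S_\beta=\{b\}$ in each gadget and $S_w=V(H)$ for every other vertex, so that $(J^G,\boldS)$ is a legal input to $\OALHom{H}$; the parameters $p,q$ of $J$ are chosen as in Lemma~\ref{lem:T4dominance} and $t$ polynomially large in $\abs{V(G)}$.

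The first step is to restrict attention to $H[\Gamma^2(b)]$. Since $\beta$ is pinned to $b$, its neighbours $B\cup C\cup C'\cup B'$ in every gadget are forced into $\Gamma(b)$; and since $\alpha,\alpha'$ are pinned to $g$, which belongs to $\Gamma(b)$ because the edge $\{b,g\}$ lies in $H'_k$ and hence in $H$, the sets $A,A'$ are forced into $\Gamma(g)\subseteq\Gamma^2(b)$. Hence every homomorphism from $(J^G,\boldS)$ to $H$ maps $V(J^G)$ into $\Gamma^2(b)$, so $\hom{(J^G,\boldS)}{H}=\hom{(J^G,\boldS)}{H[\Gamma^2(b)]}$. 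Next I would observe that $\Gamma(b)$ and $\Gamma(g)$ are the same sets of vertices in $H'_k$ and in $H_k$ (namely $\{r_1,r_2,b,g\}$ and $\{b,y_1,\dots,y_k\}$), hence also in $H[\Gamma^2(b)]$; so the analysis of Lemmas~\ref{lem:hatN}--\ref{lem:T4dominance} carries over with $H_k$ replaced by $H[\Gamma^2(b)]$ (possibly enlarging $\gamma$ but keeping it below $1$): every realisable single-gadget type over $H[\Gamma^2(b)]$ is one of those in Table~\ref{tab:configs} up to symmetry; $\type_4$ and the type $\type_4'$ symmetric to it are realisable because each uses only the subgraph $H'_k$; and for all large $t$ every realisable type $\type\notin\{\type_4,\type_4'\}$ has $N(\type)\le\hatN(\type)\le\gamma^t\hatN(\type_4)$, while $\hatN(\type_4)/2\le N(\type_4),N(\type_4')\le\hatN(\type_4)$.

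The second step is to exploit the \emph{anti-ferromagnetic} interaction between the two dominant types. In $\type_4$ the port-set $A$ receives the large image $\{b,y_1,\dots,y_k\}$ and $A'$ the small image $\{b\}$, and in $\type_4'$ these are exchanged; thus choosing $\type_4$ or $\type_4'$ in each gadget amounts to choosing a bit $\sigma(v)\in\{0,1\}$ for each $v\in V(G)$, and at the identified port of an edge $\{u,v\}$ of $G$ both gadgets can be dominant exactly when $\sigma(u)\neq\sigma(v)$, i.e.\ when $\{u,v\}$ is cut by $\sigma$. An uncut edge forces the shared independent set to an image incompatible with every dominant type of at least one of its two gadgets, so that gadget pays a factor at most $\gamma^t$ --- this is where the \emph{uniform} bound of Lemma~\ref{lem:T4dominance} over all non-dominant types is essential, as it prevents an uncut edge from being absorbed by a merely slightly degraded type. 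A careful count, tracking the vertices shared between adjacent gadgets in the Stirling-number formulas for $N(\cdot)$, then yields $\hom{(J^G,\boldS)}{H[\Gamma^2(b)]}=N_B\cdot D\cdot(1+\mathcal{E})$, where $N_B$ is the number of maximum (size-$B$) cuts of $G$, $D$ is an explicit factor computable in polynomial time, and $\abs{\mathcal{E}}\le\mathrm{poly}(\abs{V(G)})\cdot\gamma^t$ --- intuitively, the homomorphisms that use the factor $\gamma^t$ the fewest times are exactly those corresponding to maximum cuts. Taking $t=\Theta(\abs{V(G)})$ makes $\mathcal{E}$ negligible, so $N_B$ is recovered by dividing the oracle estimate of $\hom{(J^G,\boldS)}{H}$ by $D$ and rounding, exactly as in~\cite[proof of Theorem 3]{DGGJApprox}; the discrepancies between the constant bases in the $p$- and $q$-exponents are neutralised by a Dirichlet-approximation step (Lemma~\ref{lem:Dirichlet}), as in Lemma~\ref{lem:SquareFreeHardness}, and are already encapsulated in the choice of $p,q$ in the proof of Lemma~\ref{lem:T4dominance}. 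This gives $\largecut\leap\OALHom{H}$, hence $\sat\leap\OALHom{H}$.

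The step I expect to be hardest is the second one: proving rigorously that an uncut edge really costs an exponential factor and cannot be absorbed by a slightly degraded local configuration (which is exactly what the uniform bound of Lemma~\ref{lem:T4dominance} rules out), carrying out the bookkeeping of gadget-shared vertices so that the normalising factor $D$ is genuinely polynomial-time computable, and fixing $p$, $q$ and $t$ together with the Dirichlet step so that $N_B$ is extracted within the precision guaranteed by the $\OALHom{H}$ oracle. Wiring vertices of $G$ of degree greater than two is a further technicality, handled by a standard preprocessing of $G$ together with a constant-size cluster of gadget copies per vertex.
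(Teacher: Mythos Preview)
Your high-level plan matches the paper's: reduce from $\largecut$, place a copy of the gadget~$J$ at each vertex of~$G$, use the pins $S_\alpha=S_{\alpha'}=\{g\}$ and $S_\beta=\{b\}$, and exploit the fact that the two dominant types $\type_4$ and $\type'_4$ encode a spin. Your first step --- restricting to $H[\Gamma^2(b)]$ and arguing that the type analysis of Lemmas~\ref{lem:hatN}--\ref{lem:T4dominance} survives because $\type_4$ uses only edges of $H'_k$ while every other realisable type can only lose homomorphisms when $H_k$ is replaced by a subgraph --- is correct and is exactly the paper's Case~2.

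The gap is in your edge wiring. Identifying the port $(\alpha',A')$ of $u$'s gadget with $(\alpha,A)$ of $v$'s gadget requires each vertex gadget to have as many ports as the degree of its vertex in~$G$, but $J$ has only two. Your proposed fix, ``routine preprocessing that bounds the degree of $G$'' together with a ``constant-size cluster of copies'', is not routine for $\largecut$: splitting a vertex into several copies would require a \emph{ferromagnetic} coupling to force all copies to the same spin, whereas the only coupling your scheme provides is the anti-ferromagnetic one. Even on a degree-$2$ graph your shared port contributes different factors depending on which side of the cut each endpoint lies (the shared $A$-set has image $\{b\}$ in one orientation of a cut edge and $\{b\}\cup\calY$ in the other), so the normalising factor $D$ you posit is not well-defined as a function of the cut size alone.

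The paper sidesteps all of this by \emph{not} sharing any vertices between vertex gadgets. Instead it couples them through the $C$-sets: for each edge $e=\{u,v\}$ of~$G$ it introduces two fresh independent sets $S_e,S'_e$ of size $s=n{+}1$, with $S_e$ complete to $C_u\cup C'_v\cup\{\beta\}$ and $S'_e$ complete to $C'_u\cup C_v\cup\{\beta\}$. If $u$ and $v$ receive opposite dominant types then one of $S_e,S'_e$ sees only image $\{b\}$ and so ranges freely over $\Gamma(b)$, giving a factor~$4^s$, while the other is forced to~$b$; if they receive the same dominant type then both $S_e$ and $S'_e$ are forced to~$b$, giving factor~$1$. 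This is cleanly multiplicative over edges and symmetric in $u,v$, so a cut of size~$\ell$ contributes exactly $2\,N(\type_4)^n\,4^{s\ell}$ full homomorphisms; choosing $s=n{+}1$ and $t=n^4$ makes the maximum cuts dominate. No Dirichlet step is needed anywhere --- the integers $p,q$ from Lemma~\ref{lem:T4dominance} already suffice, and $N(\type_4)$ is computed exactly.
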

\begin{proof}
We use a reduction from $\largecut$, which is known to be $\sat$-hard (see~\cite{DGGJApprox}). A \emph{cut} of a graph $G$ is a partition of $V(G)$ into two subsets (the order of this pair is ignored) and the size of a cut is the number of edges that have exactly one endpoint in each of these two subsets. 

\prob
{
$\largecut$.
}
{
An integer $K\ge 1$ and a connected graph $G$ in which every cut has size at most $K$.
}
{
The number of size-$K$ cuts in $G$.
}

Let $G$ and $K$ be an input to $\largecut$, $n$ be the number of vertices of $G$ and $\eps\in (0,1)$ be the parameter of the desired precision of approximation in the AP-reduction. From $G$ we construct an input $(G',\boldS)$ to $\OALHom{H}$ by introducing vertex and edge gadgets. 
By the assumption of the lemma, the vertex~$b$ of~$H$ has $\Gamma(b) = \{b,r_1,r_2,g\}$ where 
$b$, $r_1$ and $r_2$ are looped and $g$ is not
and $\Gamma(g) = \{b,y_1,\ldots,y_k\}$ with $k\geq 1$.

Let $p$, $q$ be positive integers that are chosen such that they fulfil Constraint~\ref{constraint:pq} and (\ref{eq:fracqp1}). Note that $p$ and $q$ only depend on $k$ which is a parameter of the fixed graph $H$ and therefore do not depend on the input $G$. 
We will define the parameter~$t$ of the gadget graph~$J$ to be $t=n^4$.
We also define a new parameter $s=n+1$. 

For each vertex $v\in V(G)$ we introduce a vertex gadget $G'_v$ which is a graph $J$ with parameters $p$, $q$ and $t$ as given in Definition~\ref{def:GadgetJ}. We denote the corresponding sets $A,B,C,C',B',A'$ by $A_v,B_v,C_v,C'_v,B'_v$ and $A'_v$, respectively. It is fine to keep the notation for the remaining vertices as $\alpha$, $\alpha'$ and $\beta$ as technically these vertices can be thought of as identical vertices over all gadgets because of their pinning. We say that two gadgets $G'_u$ and $G'_v$ are adjacent if $u$ and $v$ are adjacent in $G$.

We connect vertex gadgets as follows. For every edge $e=\{u,v\}\in E(G)$ we introduce an edge gadget as follows. We introduce two 
size-$s$ independent sets,
denoted by $S_e$ and $S'_e$. We set $V'_e=S_e\cup S'_e$. As shown in Figure~\ref{fig:EdgeGadget1} we construct the set of edges
\[
E'_e = (\ucp{C_u}{S_e}) \cup (\ucp{C'_u}{S'_e})\cup (\ucp{C_v}{S'_e}) \cup (\ucp{C'_v}{S_e})\cup (\ucp{\{\beta\}}{S_e}) \cup (\ucp{\{\beta\}}{S'_e}).
\]

\begin{figure}[h!]\centering
{\def\scaleFactor{1}
\begin{tikzpicture}[scale=.7, every loop/.style={min distance=10mm,looseness=10}]

\draw (-6,2) -- (0,1.5);
\draw (-6,2) -- (0,-1.5);
\draw (-6,-2) -- (0,-1.5);
\draw (-6,-2) -- (0,1.5);

\draw (6,2) -- (0,1.5);
\draw (6,2) -- (0,-1.5);
\draw (6,-2) -- (0,-1.5);
\draw (6,-2) -- (0,1.5);

\coordinate (b) at (0,-3);
\filldraw (-1,0) -- (b) circle[radius=3pt] -- (1,0);
\node at ($(b)+(-1cm,0)$) {$\beta\rightarrow b$};

\draw (-6,-4) -- (0,-4.5);
\draw (-6,-4) -- (0,-7.5);
\draw (-6,-8) -- (0,-7.5);
\draw (-6,-8) -- (0,-4.5);

\draw (6,-4) -- (0,-4.5);
\draw (6,-4) -- (0,-7.5);
\draw (6,-8) -- (0,-7.5);
\draw (6,-8) -- (0,-4.5);

\filldraw (-1,-6) -- (b) circle[radius=3pt] -- (1,-6);

\draw (-6,0) ellipse (1cm and 2cm) [fill=white];
\draw (0,0) ellipse (1cm and 1.5cm) [fill=white];
\draw (6,0) ellipse (1cm and 2cm) [fill=white];

\node at (-6,0) {$C_u$};
\node at (0,0) {$S_e$};
\node at (6,0) {$C'_v$};

\draw (-6,-6) ellipse (1cm and 2cm) [fill=white];
\draw (0,-6) ellipse (1cm and 1.5cm) [fill=white];
\draw (6,-6) ellipse (1cm and 2cm) [fill=white];

\node at (-6,-6) {$C'_u$};
\node at (0,-6) {$S'_e$};
\node at (6,-6) {$C_v$};
\end{tikzpicture}
}
\caption{The edge gadget for the edge $e=\{u,v\}$.}
\label{fig:EdgeGadget1}
\end{figure}
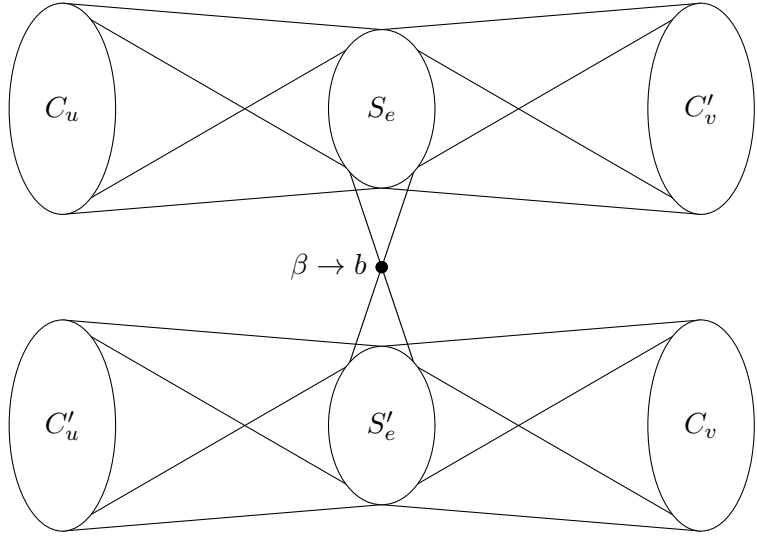

Putting the pieces together, $G'$ is the graph with 
\[
V(G')= \bigcup_{v\in V(G)} V(G'_v) \cup \bigcup_{e\in E(G)} V'_e
\qquad\text{ and }\qquad
E(G')= \bigcup_{v\in V(G)} E(G'_v) \cup \bigcup_{e\in E(G)} E'_e.
\]
Finally, we define the vertex lists $S_\alpha=S_{\alpha'}=\{g\}$, $S_\beta=\{b\}$ and $S_v=V(H)$ for all $v\in V(G')\setminus\{\alpha, \alpha', \beta\}$. Then $\boldS=\{S_v \mid v\in V(G')\}$. This completes the definition of the instance $(G',\boldS)$.

The pinning of the vertex $\beta$ (via the list $S_\beta$)
ensures that every homomorphism from $(G',\boldS)$ to $H$ is a homomorphism from $(G',\boldS)$ to $H[\Gamma^2(b)]$. By the assumption of the lemma, $H[\Gamma^2(b)]$ is a subgraph of $H_k$. We make a case distinction based on the graph $H[\Gamma^2(b)]$. 

\bigskip
\noindent {\bf Case 1: $H[\Gamma^2(b)]=H_k$.}
Let $h$ be a homomorphisms from $(G',\boldS)$ to $H$, $v$ be some vertex of $G$ and $G'_v$ be the corresponding vertex gadget. Then by our definition of $(G',\boldS)$ we observe that $h\vert_{V(G'_v)}$ corresponds to a homomorphism from $(J, \boldS_J)$ to $H_k$ and therefore has a type. 

We say that a homomorphism from $(G',\boldS)$ to $H$ is \emph{full} if its restriction to each vertex gadget is either of type $\type_4$ (from Table~\ref{tab:configs}) or of its symmetric type (let us call it $\type'_4$). Each full homomorphism $h$ defines a cut as it partitions $V(G)$ into those vertices $v$ for which $h\vert_{G'_v}$ has type $\type_4$ and those for which $h\vert_{G'_v}$ has type $\type'_4$. We say that a full homomorphism is \emph{$K$-large} if the size of the corresponding cut is equal to $K$, otherwise we say that the homomorphism is \emph{$K$-small}.
Consider a full homomorphism $h$ from $(G',\boldS)$ to $H_k$.
\begin{itemize}
\item
For an edge $e=\{u,v\}$ of $G$ suppose that $h\vert_{G'_u}$ has type $\type_4$ and $h\vert_{G'_v}$ has type $\type'_4$. Note that by the definition of the edge gadget, we have $h(S_e)\subseteq \Gamma(h(C_u))\cap \Gamma(h(C'_v))$. Then the vertices in $S_e$ can be mapped to any of the $4$ neighbours of $b$, whereas all vertices in $S_e'$ have to be mapped to $b$ (since $h(S_e')\subseteq \Gamma(h(C'_u))\cap \Gamma(h(C_v))$ where $C'_u=C_v =\{r_1, r_2, b,g\}$ and $b$ is the sole common neighbour of $r_1$, $r_2$, $b$ and $g$).
\item
Suppose instead that $h\vert_{G'_u}$ and $h\vert_{G'_v}$ have the same type $\type_4$ or $\type_4'$. Then the homomorphism $h$ has to map the vertices in both $S_e$ and $S_v$ to $b$. 
\end{itemize}
Thus, every pair of adjacent gadgets of different types contributes a factor of $4^s$ to the number of full homomorphisms, whereas every pair of adjacent gadgets of the same type only contributes a factor of $1$. Recall the definition of $N(\type)$ as the number of homomorphisms from $(J,\boldS_J)$ to $H_k$ that have type $\type$. Then for $\ell\ge 1$ every size-$\ell$ cut of $G$ arises in $2\cdot N(\type_4)^n\cdot 4^{s\ell}$ ways as a full homomorphism from $(G',\boldS)$ to $H_k$.

Let $N$ be the number of solutions to $\largecut$ with input $G$ and $K$ (our goal is to approximate this number). We partition the homomorphisms from $(G',\boldS)$ to $H_k$ into three different sets. $Z^*$ is the number of $K$-large (full) homomorphisms, $Z_1$ is the number of homomorphisms that are full but $K$-small and $Z_2$ is the number of non-full homomorphisms.
Then we have $N= Z^*/(2 N(\type_4)^n 4^{sK})$ and $\hom{(G',\boldS)}{H} = \hom{(G',\boldS)}{H_k} = Z^* +Z_1 +Z_2$. Thus it remains to show that $(Z_1 +Z_2)/(2 N(\type_4)^n 4^{sK})\le 1/4$  for our choice of $p$, $q$, $t$ and $s$. Under this assumption we then have $\displaystyle\hom{(G',\boldS)}{H}/(2 N(\type_4)^n 4^{sK}) \in [N,N+1/4]$ and a single oracle call to determine $\hom{(G',\boldS)}{H}$ with precision $\delta=\eps/21$ suffices to determine $N$ with the sought-for precision as demonstrated in~\cite[Proof of Theorem 3]{DGGJApprox}.

Now we prove $(Z_1 +Z_2)/(2 N(\type_4)^n 4^{sK})\le 1/4$.
As there are at most $2^n$ ways to assign a type $\type_4$ or $\type'_4$ to the $n$ vertex gadgets in $G'$ we have $Z_1 \le 2^n \cdot N(\type_4)^n\cdot 4^{s(K-1)}$.
We next obtain the following bound since $s=n+1$:
\[
\frac{Z_1}{2 N(\type_4)^n 4^{sK}}\le \frac{2^n N(\type_4)^n 4^{s(K-1)}}{2 N(\type_4)^n 4^{sK}} = \frac{2^n}{2\cdot 4^s}\le \frac{1}{8}.
\]
We obtain a similar bound for $Z_2$. From Lemmas~\ref{lem:maxType},~\ref{lem:tableType} and~\ref{lem:T4dominance} we know that for our choice of $p$ and $q$ there exists $\gamma\in(0,1)$ such that for every type $\type$ that is not $\type_4$ or $\type_4'$ we have $\hatN(\type)\le \gamma^t\hatN(\type_4)$. Using Lemma~\ref{lem:hatN} this gives $N(\type) \le 2\gamma^t N(\type_4)$ for sufficiently large $t$ with respect to $p$, $q$ and $k$ (which only depend on $H$ but not on the input $G$). Since $t=n^4$ we can assume that $t$ is sufficiently large with respect to $p$ and $q$ as otherwise the input size is bounded by a constant (in which case we can solve $\largecut$ in constant time).

For each type $T=(T_1,T_2,T_3)$, the cardinality of each set~$T_i$
is bounded above by $2\abs{E(H_k)}=32+12k$ and hence there are at most $\left(2^{32+12k}\right)^3$ different types. Furthermore, as $H_k$ has $8+k$ vertices, there are at most $(8+k)^{2sn^2}$ possible functions from the at most $2sn^2$ vertices in $\bigcup_{e\in E(G)} (S_e\cup S'_e)$ to vertices in $H_k$. Since $t=n^4$ and $s=n+1$ we obtain
\begin{align*}
\frac{Z_2}{2 N(\type_4)^n 4^{sK}}
&\le \frac{\left(2^{32+12k}\right)^{3n}\cdot N(\type_4)^{n-1}\cdot 2\gamma^tN(\type_4)\cdot(8+k)^{2sn^2}}{2 N(\type_4)^n 4^{sK}} \\
&= \gamma^t\cdot\frac{\left(2^{32+12k}\right)^{3n}(8+k)^{2sn^2}}{4^{sK}}\le \frac{1}{8}.
\end{align*}
The last inequality holds 
for sufficiently large~$n$
as \[\frac{\left(2^{32+12k}\right)^{3n}(8+k)^{2sn^2}}{4^{sK}}\le C^{n^3}\] for some positive constant $C$ that only depends on $H$,
  but not on the input $G$, whereas $t=n^4$.
{\bf(End of Case 1)}

\bigskip
\noindent {\bf Case 2: $H[\Gamma^2(b)]\neq H_k$.}
By the assumption of the lemma, $H[\Gamma^2(b)]$ is a subgraph of $H_k$.
Let $\calH'$ be the set of homomorphisms in $\calH((J,\boldS_J),H_k)$ that are homomorphisms from $J$ to $H[\Gamma^2(b)]$. Then for each type $T$ the number of homomorphisms in $\calH'$ of type $T$ is at most the number of homomorphisms in $\calH((J,\boldS_J),H_k)$ that have type $T$.

Note that the type $\type_4$ (and its symmetric type) only uses vertices and edges from $H'_k$ and we know that $H'_k$ is a subgraph of $H[\Gamma^2(b)]$ by the assumption of the lemma. Therefore each homomorphism which is of type $\type_4$ is also in $\calH'$ (their number remains unchanged). The analysis is then analogous to that of Case 1. (The number of $K$-large and $K$-small homomorphisms stays the same whereas the number of non-full homomorphisms can only decrease as we only need to consider a subset of the previous types and the number of homomorphisms that have a particular type can only decrease.)
{\bf(End of Case 2)}

\end{proof}

\subsection{Putting the Pieces together}\label{sec:MainTheorem}
Now finally we have all the tools at hand to prove the main classification result for counting retractions to graphs of girth at least $5$, which we restate at this point.
{\renewcommand{\thethm}{\ref{thm:RetMain}}
\begin{thm}
\ThmRetMain
\end{thm}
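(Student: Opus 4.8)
The plan is to assemble the results of the preceding sections into the three clauses of the trichotomy, first reducing to the connected case. By Remark~\ref{rem:Connectivity}, $\Ret{H}$ is AP-interreducible with the collection $\{\Ret{H_j}\}_j$ over the connected components $H_j$ of~$H$: if every component is easy then so is $H$, and if some component is hard then so is $H$. Every component of a graph of girth at least~$5$ again has girth at least~$5$; moreover a connected girth-$\ge 5$ graph is of case-(i) type (irreflexive star, single looped vertex, or edge with two loops) exactly when it would be placed in clause~(i), is an irreflexive caterpillar or a partially bristled reflexive path but not of case-(i) type exactly when it would be placed in clause~(ii), and is neither exactly when it would be placed in clause~(iii). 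Since $\bis \leap \sat$, the classification for general~$H$ is governed by its ``hardest'' component, so it suffices to prove the theorem for connected~$H$. From now on assume $H$ is connected and of girth at least~$5$.

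If $H$ is irreflexive then it is square-free, so Theorem~\ref{thm:RetIrreflexive} applies directly and yields exactly clauses~(i)--(iii) in this case. So assume $H$ has at least one loop. The easiness directions are uniform: by Observation~\ref{obs:HomToRetToLHom}, $\Ret{H} \leap \LHom{H}$, and $\LHom{H}$ is always $\sat$-easy (Theorem~\ref{thm:LHomTricho}), so $\Ret{H} \leap \sat$ always. If $H$ is a reflexive complete graph --- i.e.\ a single looped vertex or an edge with two loops --- then $\LHom{H} \in \FP$ by Theorem~\ref{thm:LHomTricho}(i), so $\Ret{H} \in \FP$, giving clause~(i). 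If $H$ is a partially bristled reflexive path with at least three vertices then $\Ret{H} \eqap \bis$ by Lemma~\ref{lem:RetBristledPath}; combined with Theorem~\ref{thm:RetIrreflexive} for any (irreflexive) caterpillar components, this gives the easiness direction of clause~(ii). The $\bis$-hardness direction of clause~(ii) is inherited from homomorphism counting: a partially bristled reflexive path with at least three vertices (bristled or not) is neither a reflexive complete graph nor an irreflexive complete bipartite graph, so Theorem~\ref{thm:HomBIS} and Observation~\ref{obs:HomToRetToLHom} give $\bis \leap \Hom{H} \leap \Ret{H}$.

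The substantive part is the $\sat$-hardness of clause~(iii): if $H$ is connected, of girth at least~$5$, has a loop, and is none of a single looped vertex, an edge with two loops, or a partially bristled reflexive path, then $\sat \leap \Ret{H}$. I would proceed by a structural case analysis. If $H$ contains a cycle, then (by girth) it contains a cycle of length at least~$5$ and the retraction decision problem $\DRet{H}$ is $\NP$-complete --- directly from Theorem~\ref{thm:DRetPseudotree} when $H$ is a pseudotree, and otherwise from the general classification of $\DRet{H}$ (or by exhibiting an induced $J_3$ and invoking Lemma~\ref{lem:SquareFreeHardness}); by \cite[Theorem 1]{DGGJApprox} this gives $\sat \leap \Ret{H}$. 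If $H$ is a tree that is not loop-connected, Theorem~\ref{thm:DRetPseudotree} again gives $\NP$-completeness of $\DRet{H}$, hence $\sat$-hardness. There remains the case that $H$ is a loop-connected tree with a loop whose looped vertices form a path $c_0,\dots,c_m$. Now I inspect distance-$1$ neighbourhoods of looped vertices: if some looped vertex~$b$ has three or more looped neighbours, then $H[\Gamma(b)] \cong \WR{q}$ with $q \ge 3$ (when $b$ has no unlooped neighbour, Lemma~\ref{lem:MixedTreeHardness2a}) or $H[\Gamma(b)]$ is not a $2$-Wrench (when $b$ also has an unlooped neighbour, Lemma~\ref{lem:MixedTreeHardness2b}); and Lemma~\ref{lem:MixedTreeHardness2b} likewise disposes of every looped~$b$ that has an unlooped neighbour but whose neighbourhood is not a $2$-Wrench (in particular a looped endpoint of the spine with an unlooped neighbour, or a spine vertex with two unlooped neighbours). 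What survives these are trees that resemble partially bristled reflexive paths except that an unlooped branch of length $\ge 2$, or an unlooped vertex attached to more than one spine vertex, hangs off the reflexive spine; this is precisely the residual configuration that Lemma~\ref{lem:MixedTreeHardness3} is built to eliminate, by exhibiting a looped vertex~$b$ with $H'_k \subseteq H[\Gamma^2(b)] \subseteq H_k$ for a suitable~$k$.

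Finally, I would combine the cases: for connected $H$ the above yields $\FP$, $\bis$-equivalence, or $\sat$-equivalence in exact accordance with the statement, and Remark~\ref{rem:Connectivity} (together with $\bis \leap \sat$) lifts this to general~$H$. I expect the main obstacle to be the last step of the third paragraph: a careful enumeration, respecting the girth constraint, of the ways unlooped vertices can be attached to the reflexive spine of a loop-connected tree, verifying that every such configuration which is not a partially bristled reflexive path does contain a looped vertex whose distance-$2$ neighbourhood is sandwiched between $H'_k$ and $H_k$, so that Lemma~\ref{lem:MixedTreeHardness3} applies.
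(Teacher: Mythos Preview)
Your overall architecture matches the paper's: reduce to connected $H$ via Remark~\ref{rem:Connectivity}, dispatch the irreflexive case via Theorem~\ref{thm:RetIrreflexive}, and for graphs with a loop obtain the $\FP$ and $\bis$ clauses from Theorems~\ref{thm:LHomTricho}, \ref{thm:HomBIS} and Lemma~\ref{lem:RetBristledPath}. The divergence, and a genuine gap, is in the $\sat$-hardness case split.

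Your handling of ``$H$ has more than one cycle'' does not go through. The Bulatov--Zhuk dichotomy gives no graph-theoretic criterion you can invoke here, and an induced $J_3$ (an irreflexive graph) need not exist once loops are present --- a reflexive theta graph of girth $\ge 5$ contains none, so Lemma~\ref{lem:SquareFreeHardness} does not apply. Separately, in a tree ``an unlooped vertex attached to more than one spine vertex'' is impossible, so your residual tree case is exactly ``some unlooped vertex has degree $\ge 2$''.

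The paper avoids both issues by taking a different primary split: whether \emph{every unlooped vertex has degree~$1$}. If so, every cycle of $H$ lies in the connected looped subgraph $H^*$, which is then a reflexive path, a reflexive cycle (so $H$ is a pseudotree and Theorem~\ref{thm:DRetPseudotree} applies), or contains an induced $\WR{q}$ with $q \ge 3$ (Lemmas~\ref{lem:MixedTreeHardness2a}/\ref{lem:MixedTreeHardness2b}); in the reflexive-path subcase, the assumption that $H$ is not a partially bristled reflexive path forces an endpoint bristle or a double bristle, and Lemma~\ref{lem:MixedTreeHardness2b} fires. If instead some unlooped vertex has degree $\ge 2$, one chooses a looped $b$ with an unlooped neighbour $g$ of degree $k+1 \ge 2$; unless Lemma~\ref{lem:MixedTreeHardness2b} already applies at $b$ or at one of its looped neighbours $r_1, r_2$, each $r_i$ has at most one further looped and one unlooped neighbour, and the girth-$\ge 5$ hypothesis then forces $H'_k \subseteq H[\Gamma^2(b)] \subseteq H_k$, so Lemma~\ref{lem:MixedTreeHardness3} applies. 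This degree-based split both absorbs your multiple-cycles case (a connected $H^*$ with two cycles has a looped vertex of degree $\ge 3$, hence $\WR{q}$) and turns the $H_k$-sandwich verification you flag as the main obstacle into a short systematic check rather than an open-ended enumeration.
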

\addtocounter{thm}{-1}
}
\begin{proof}
As in the proof of Theorem~\ref{thm:RetIrreflexive}, the fact that the classification extends from connected graphs to graphs with multiple connected components follows from Remark~\ref{rem:Connectivity}. Now assume without loss of generality that $H$ is a connected graph. If $H$ is an irreflexive graph then the statement of the theorem follows from the slightly more general Theorem~\ref{thm:RetIrreflexive} (in the irreflexive case we only require $H$ to be square-free). 

Now suppose that $H$ has at least one looped vertex. From Observation~\ref{obs:HomToRetToLHom} we know that, in general, hardness results for $\Hom{H}$ carry over to $\Ret{H}$ and easiness results carry over from $\LHom{H}$. Then, by Theorem~\ref{thm:LHomTricho}, $\Ret{H}$ is in $\FP$ if $H$ is a single looped vertex or a single looped edge.  
Otherwise, since it is  triangle-free, $H$ cannot be a complete reflexive graph, so
$\Ret{H}$ is $\bis$-hard with respect to AP-reductions by Theorem~\ref{thm:HomBIS}. The $\bis$-easiness for partially bristled reflexive paths follows from our Lemma~\ref{lem:BristledPathEasiness}. Theorem~\ref{thm:LHomTricho} implies that $\Ret{H}$ is always $\sat$-easy.

It remains to show the $\sat$-hardness result for graphs $H$ that have at least one looped vertex but are not partially bristled reflexive paths. To this end we distinguish two disjoint cases:
\begin{enumerate}
\item \label{item:MainHardness1} Suppose that every unlooped vertex in $H$ has degree $1$. Let $H^*$ be the subgraph induced by the looped vertices of $H$. 
As all unlooped vertices have degree $1$, the fact that $H$ is connected implies that $H^*$ is connected. Recall that $\WR{q}$ is a reflexive star with $q$ leaves. Then, in general, $H^*$ is either a reflexive path, a reflexive cycle or it contains a subgraph $\WR{q}$ for some $q\ge3$. 
\begin{enumerate}
\item Suppose that $H^*$ is a reflexive path $u_1, \dots, u_t$. By the fact that $H$ is not a partially bristled reflexive path and all unlooped vertices have degree $1$, it follows that 
either some $u_i$ has more than one unlooped neighbour or
at least one of the endpoints $u_1$ or $u_t$ has an unlooped neighbour. Then $\sat$-hardness follows from Lemma~\ref{lem:MixedTreeHardness2b}.
\item If $H^*$ is a reflexive cycle, then by the fact that every unlooped vertex in $H$ has degree $1$, it holds that $H^*$ is the only cycle in $H$. Then $H$ is a pseudotree and, as $H$ has girth at least $5$, the reflexive cycle $H^*$ has length at least $5$. Therefore $\DRet{H}$ is $\NP$-complete by Theorem~\ref{thm:DRetPseudotree} and it follows that $\Ret{H}$ is $\sat$-hard under AP-reductions by~\cite[Theorem 1]{DGGJApprox}.
\item If $H^*$ contains a subgraph $\WR{q}$ for some $q\ge3$, then $H$ contains a looped vertex with at least $3$ looped neighbours apart from itself. As $H$ is triangle-free, the subgraph $\WR{q}$ is induced and $\sat$-hardness follows either from Lemma~\ref{lem:MixedTreeHardness2a} or Lemma~\ref{lem:MixedTreeHardness2b}.
\end{enumerate} 
\item \label{item:MainHardness2}Suppose there exists an unlooped vertex in $H$ that has degree at least $2$. As $H$ is connected and contains at least one looped vertex, it follows that there exists a looped vertex $b$ with an unlooped neighbour $g$, which has degree $k+1$ for some $k\ge 1$, i.e.~has neighbours $y_1,\dots, y_k$ apart from $b$. Then $H[\Gamma(b)]$ is isomorphic to a $2$-Wrench, or otherwise hardness follows from Lemma~\ref{lem:MixedTreeHardness2b}. Therefore $b$ has exactly $2$ looped neighbours apart from itself. Let us call them $r_1$ and $r_2$. Then, as $H$ has girth at least $5$, the vertices $\{r_1,r_2,b,g,y_1,\dots,y_k\}$ are distinct. This shows that $V(H_k')\subseteq V(H[\Gamma^2(b)])$ and $E(H_k')\subseteq E(H[\Gamma^2(b)])$, i.e.~that $H'_k$ (see Figure~\ref{fig:Hkprime}) is a subgraph of $H[\Gamma^2(b)]$. 

For $i=1,2$ the following hold:
\begin{enumerate}
\item \label{item: MainHardness2_1}Apart from $b$ and $r_i$ itself, the vertex $r_i$ has at most $1$ other looped neighbour, or otherwise hardness follows either from Lemma~\ref{lem:MixedTreeHardness2a} or from Lemma~\ref{lem:MixedTreeHardness2b}. 
\item \label{item: MainHardness2_2}If $r_i$ has an unlooped neighbour, then $H[\Gamma(r_i)]$ is isomorphic to a $2$-Wrench, or otherwise hardness follows from Lemma~\ref{lem:MixedTreeHardness2b}.
\end{enumerate}
From items~\ref{item: MainHardness2_1} and~\ref{item: MainHardness2_2} it follows that for $i=1,2$ the vertex $r_i$ has at most one looped and one unlooped neighbour apart from $b$ and $r_i$ itself. (If they exist let us call the looped neighbour $w_i$ and the unlooped neighbour $d_i$.) Therefore, $V(H[\Gamma^2(b)])\subseteq \{w_1,d_1,r_1,w_2,d_2,r_2, b, g,y_1,\dots,y_k\}\subseteq V(H_k)$.

Note that $d_1$, $d_2$ and $g$ are unlooped 
vertices in~$H$.
Furthermore, for $i=1,2$ we have shown the following
\begin{itemize}
\item $E(H_k') \subseteq E(H[\Gamma^2(b)])$.
\item $\{w_i,r_i\}\in E(H[\Gamma^2(b)])$ if $w_i\in V(H[\Gamma^2(b)])$.
\item $\{d_i,r_i\}\in E(H[\Gamma^2(b)])$ if $d_i\in V(H[\Gamma^2(b)])$.
\end{itemize}

The edges $E(H'_k)$ together with $\{w_1,r_1\}$, $\{w_2,r_2\}$, $\{d_1,r_1\}$, $\{d_2,r_2\}$ (if these exist) form a tree on the vertices $\Gamma^2(b)$ (Recall that a tree might have loops but no cycles). By the fact that $H$ has girth at least $5$, all named vertices are distinct, and it follows that $E(H[\Gamma^2(b)])\subseteq E(H_k)$, which shows that $H[\Gamma^2(b)]$ is a subgraph of $H_k$.

Summarising, $H_k'$ is a subgraph of $H[\Gamma^2(b)]$ and $H[\Gamma^2(b)]$ is a subgraph of $H_k$ and we can apply Lemma~\ref{lem:MixedTreeHardness3} to obtain $\sat$-hardness.
\end{enumerate}

Items~\ref{item:MainHardness1} and~\ref{item:MainHardness2} cover all graphs $H$ that have at least one looped vertex but are not partially bristled reflexive paths. (Note that item~\ref{item:MainHardness1} includes the case where $H$ is reflexive.)
\end{proof}

\section{Approximately Counting Retractions is as least as hard as Counting Surjective Homomorphisms or Compactions}\label{sec:ComplexityLandscape}

This section studies the place of the problem $\Ret{H}$ within the landscape of a number of closely related counting problems. We now give formal definitions for these problems, which are parameterised by a graph~$H$.

Let $G$ be an irreflexive graph. A homomorphism~$h\from G \to H$ is said to be \emph{surjective} if for every vertex $v\in V(H)$ there is a vertex $u\in V(G)$ such that $h(u)=v$. We use $\sur{G}{H}$ to denote the number of surjective homomorphisms from~$G$ to~$H$.
Similarly, the homomorphism~$h$ is a \emph{compaction} if it is surjective and for every non-loop edge $\{v_1,v_2\} \in E(H)$ there is is an edge $\{u_1,u_2\}\in E(G)$ such that $h(u_1)=v_1$ and $h(u_2)=v_2$. We use $\comp{G}{H}$ to denote the number of compactions from~$G$ to~$H$.

\noindent
\begin{tabular}{@{}p{.475\linewidth}p{.475\linewidth}@{}}
\prob
{
$\SHom{H}.$
}
{
An irreflexive graph $G$.
}
{
$\sur{G}{H}$.
}
&
\prob
{
$\Comp{H}$.
}
{
An irreflexive graph $G$.
}
{
$\comp{G}{H}$.
}
\end{tabular}

We also define the corresponding list versions of these two problems. We use $\sur{(G,\boldS)}{H}$ and $\comp{(G,\boldS)}{H}$ to denote the number of surjective homomorphisms from~$(G,\boldS)$ to~$H$ and the number of compactions from~$(G,\boldS)$ to~$H$, respectively. Note that the list version of the problem $\Ret{H}$ is simply the problem $\LHom{H}$.

\noindent
\begin{tabular}{@{}p{.475\linewidth}p{.475\linewidth}@{}}
\prob
{
$\LSHom{H}$.
}
{
An irreflexive graph $G$ and a collection of lists $\boldS=\{S_v\subseteq V(H)\mid v\in V(G)\}$.
}
{
$\sur{(G,\boldS)}{H}$.
}
&
\prob
{
$\LComp{H}$.
}
{
An irreflexive graph $G$ and a collection of lists $\boldS=\{S_v\subseteq V(H)\mid v\in V(G)\}$.
}
{
$\comp{(G,\boldS)}{H}$.
}
\end{tabular}

Furthermore, we define a generalisation of the problems $\Hom{H}$, $\OALHom{H}$ and $\LHom{H}$. Let $2^{V(H)}= \{S \mid S\subseteq V(H)\}$ be the power set of $V(H)$. For a fixed graph $H$ and a set $\calL\subseteq 2^{V(H)}$ we define

\prob
{
$\Hom{H,\calL}$.
}
{
An irreflexive graph $G$ and a collection of lists $\boldS=\{S_v\in \calL\mid v\in V(G)\}$.
}
{
$\hom{(G,\boldS)}{H}$.
}

As a measure of distance between two distributions $\pi$ and $\pi'$ on a finite universe $\Omega$ we use the \emph{total variation distance} $\tv{\pi}{\pi'}=\frac12 \sum_{\omega\in \Omega} \abs{\pi(\omega)-\pi'(\omega)}$. For a set $A\subseteq \Omega$, $\mathrm{Uni}(A)$ is the uniform distribution on $A$.
Furthermore, $\mathrm{Be}(p)$ is the Bernoulli distribution with parameter $p$. In general, we write $X \sim D$ if a random variable $X$ has distribution $D$.

\subsection{Reductions using a Monte Carlo Approach}\label{sec:MonteCarlo}
The main goal of this section is to prove Corollaries~\ref{cor:CompToRet} and~\ref{cor:SHomToRet}. Together they constitute Theorem~\ref{thm:SHomCompToRet} which states that both $\SHom{H}$ and $\Comp{H}$ are AP-reducible to $\Ret{H}$.

In the following two lemmas we prove some necessary ingredients that we use in the proof of Lemma~\ref{lem:MCAlgo}. 
From Section~\ref{sec:Preliminaries} recall that a RAS for $\Hom{H, \calL}$ is an $(\epsilon,\delta)$-approximation 
for $\Hom{H, \calL}$ with $\delta=1/4$.
First, we point out the well-known fact that 
this can be powered to obtain an $(\epsilon,\delta)$-approximation for smaller~$\delta$.

\begin{lem}\label{lem:Powering}
Let $H$ be a graph and $\calL\subseteq 2^{V(H)}$. 
There is an algorithm $\textsc{CountHom}_{H,\calL}$
which uses oracle access to a RAS for $\Hom{H, \calL}$
and has the following properties.  
\begin{itemize}
\item  It is given an input $(G,\boldS)$ to $\Hom{H, \calL}$ together with accuracy parameters $\eps$
and $\delta$ in $(0,1)$.
\item It returns a natural number~$X$ with $\Pr\left(e^{-\eps} \le \frac{X}{\hom{(G,\boldS)}{H}}\le e^{\eps}\right)\ge 1-\delta$.
\item Its running time is bounded by a polynomial in 
$\eps^{-1}$, $\log\delta^{-1}$, and
the number of vertices of $G$.
\end{itemize}
\end{lem}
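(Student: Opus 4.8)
This is the standard "probability amplification by majority vote" (or by median) construction, so the proof is essentially bookkeeping. Given the RAS for $\Hom{H,\calL}$ — which by definition is an $(\eps,1/4)$-approximation — I would run it $m$ independent times on the input $(G,\boldS)$ with accuracy parameter $\eps/2$, obtaining outputs $X_1,\dots,X_m$, and then return the median $X$ of these values. The choice $m = \Theta(\log \delta^{-1})$ will be made precise below. (Using the median rather than the mean is the cleanest route, since each $X_i$ is only guaranteed to be a one-sided multiplicative approximation with constant probability, and medians behave well under such guarantees; one could alternatively take a mean of means, but the median argument is shorter.)

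\textbf{Key steps.} First, fix notation: let $V = \hom{(G,\boldS)}{H}$ and call a single run $X_i$ \emph{good} if $e^{-\eps/2} \le X_i/V \le e^{\eps/2}$. By the RAS guarantee (run with accuracy $\eps/2$), each run is good with probability at least $3/4$, independently. Second, observe that if strictly more than half of $X_1,\dots,X_m$ are good, then the median $X$ is also good: the median lies between the smallest and largest good value (since at most $\lfloor m/2 \rfloor$ values lie strictly below all good values and at most $\lfloor m/2 \rfloor$ lie strictly above), hence $e^{-\eps/2} \le X/V \le e^{\eps/2} \le e^{\eps}$, which in particular gives the claimed $e^{-\eps}\le X/V \le e^{\eps}$. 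Third, bound the failure probability: the number of good runs is stochastically at least a $\mathrm{Bin}(m,3/4)$ random variable, so the probability that at most $m/2$ runs are good is at most the probability that a $\mathrm{Bin}(m,3/4)$ variable deviates below its mean $3m/4$ by $m/4$; by a Chernoff bound (e.g.\ \cite[Chapter 4]{Mitzenmacher2017}) this is at most $e^{-m/24}$ or some such constant-rate exponential. Choosing $m = \lceil 24 \ln \delta^{-1} \rceil$ makes this at most $\delta$. Fourth, the output is a natural number because each $X_i$ is (the RAS returns natural-number-valued approximations to a counting problem, as is standard in this line of work and as used elsewhere in the paper, e.g.\ in the proof of Lemma~\ref{lem:Kelk6.6}), and the median of naturals is a natural. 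Finally, the running time: each of the $m = O(\log \delta^{-1})$ oracle calls runs in time polynomial in $\eps^{-1}$ and $|V(G)|$ (the RAS is fully polynomial in those parameters when $\delta$ is a constant), and computing the median of $m$ numbers is cheap, so the total running time is polynomial in $\eps^{-1}$, $\log \delta^{-1}$, and $|V(G)|$, as required.

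\textbf{Main obstacle.} There is no real obstacle here — this is a routine amplification lemma and the only care needed is (i) to make sure the median argument is stated correctly (strictly more than half good implies the median is good), and (ii) to be slightly careful that "polynomial in $\log\delta^{-1}$" really does hold, i.e.\ that the number of repetitions is logarithmic in $\delta^{-1}$ rather than polynomial in $\delta^{-1}$ — this is exactly what the Chernoff bound buys us over a naive Markov-inequality argument. I would state the Chernoff bound with an explicit constant so the bound $m = \lceil 24\ln\delta^{-1}\rceil$ is justified, and otherwise keep the proof to a few lines.
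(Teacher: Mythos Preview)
Your proposal is correct and follows the same standard median-amplification route that the paper invokes by citing \cite[Lemma 6.1]{JVV1986}; the paper does not spell out the details but simply appeals to that reference together with Observation~\ref{obs:accuracy} to pass between the $(1\pm\eps)$ and $e^{\pm\eps}$ accuracy conventions. One tiny bookkeeping slip: the RAS with accuracy $\eps/2$ guarantees $(1-\eps/2)\le X_i/V\le(1+\eps/2)$, which (via Observation~\ref{obs:accuracy}) gives $e^{-\eps}\le X_i/V\le e^{\eps}$ but not $e^{-\eps/2}\le X_i/V$ on the lower side, so define ``good'' directly as $e^{-\eps}\le X_i/V\le e^{\eps}$ and the rest of your argument goes through unchanged.
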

\begin{proof}
The lemma is basically the same as \cite[Lemma 6.1]{JVV1986} applied to the problem $\Hom{H, \calL}$. The only difference is that \cite[Lemma 6.1]{JVV1986} gives a precision guarantee of the form \[\Pr\left((1-\eps) \le \frac{X}{\hom{(G,\boldS)}{H}}\le (1+\eps)\right)\ge 1-\delta.\] However, by Observation~\ref{obs:accuracy}, using accuracy parameter $\eps/2$ instead of $\eps$ in \cite[Lemma 6.1]{JVV1986} suffices to obtain the desired result.
 \end{proof}

Second, we point out that if $\calL$ contains the set of singletons $\{ \{v\} \mid v\in V(H)\}$ then
$\Hom{H, \calL}$ is \emph{self-reducible}.
So the technique of Jerrum, Valiant and Vazirani~\cite{JVV1986}
reduces the problem of approximately sampling 
homomorphisms with lists in~$\calL$  to the problem of approximately counting them.
The original notion of self-reducibilty, due to Schnorr~\cite{Schnorr},
relies on careful encodings of instances,
so we use instead the 
more general \emph{self-partitionability} notion 
  of Dyer and Greenhill.  
Dyer and Greenhill show~\cite{DGRndmWalks} that 
the technique of Jerrum, Valiant and Vazirani applies to every self-partitionable problem.
Thus, we get the following lemma.

\begin{lem}\label{lem:JVV}
Let $H$ be a graph and $\calL\subseteq 2^{V(H)}$ such that $\{ \{v\} \mid v\in V(H)\}\subseteq \calL$. 
There is an algorithm $\textsc{SampleHom}_{H,\calL}$
which uses oracle access to a RAS for $\Hom{H,\calL}$ and
has the  following properties.
\begin{itemize}
\item It is given an input $(G,\boldS)$ to $\Hom{H,\calL}$ together with
an accuracy parameter $\eps\in (0,1)$.
\item The distribution $D$ of its outputs satisfies
 $\tv{D}{\mathrm{Uni}\bigl(\calH((G,\boldS),H)\bigr)}\le \eps$.
\item  Its running time is bounded by a polynomial in 
$\log\eps^{-1}$ and the number of vertices of~$G$.
\end{itemize}  
\end{lem}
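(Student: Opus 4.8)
The plan is to verify that $\Hom{H,\calL}$ is \emph{self-partitionable} in the sense of Dyer and Greenhill and then to quote their theorem~\cite{DGRndmWalks}, which extends the Jerrum--Valiant--Vazirani method~\cite{JVV1986} from self-reducible problems (in the sense of Schnorr, which hinges on a particular encoding of instances) to all self-partitionable ones: for such a problem a RAS for counting yields an almost uniform sampler within the required time bound. So the only thing that needs checking is that $\Hom{H,\calL}$ is self-partitionable, and the presence of the singletons in $\calL$ is exactly what makes this go through.

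To set up the self-partition, call a vertex $v$ of an instance $(G,\boldS)$ \emph{pinned} if $|S_v|=1$ and \emph{free} otherwise, and let the size of $(G,\boldS)$ be its number of free vertices. If $(G,\boldS)$ has no free vertex then $\calH((G,\boldS),H)$ has at most one element, which is found in polynomial time by testing whether the unique candidate assignment is a homomorphism; this is the base case. Otherwise, let $v$ be the free vertex of smallest index, and for each $u\in S_v$ let $\boldS^{u}$ be obtained from $\boldS$ by replacing $S_v$ with $\{u\}$. Since $\{u\}$ is a singleton and $\{\,\{w\}\mid w\in V(H)\,\}\subseteq\calL$, each $(G,\boldS^{u})$ is again a legitimate instance of $\Hom{H,\calL}$; it has strictly smaller size than $(G,\boldS)$, the number of parts $|S_v|\le|V(H)|$ is bounded by a constant, and
\[
\calH\bigl((G,\boldS),H\bigr)\;=\;\bigsqcup_{u\in S_v}\calH\bigl((G,\boldS^{u}),H\bigr)
\]
is a genuine partition of the solution set into (possibly empty) parts. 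This is precisely the data required for self-partitionability, with the well-founded order given by the size.

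Granting the self-partition, the sampler is the standard descent: starting from $(G,\boldS)$, repeatedly pick the smallest-index free vertex $v$, use the amplified counting routine of Lemma~\ref{lem:Powering} (which turns the RAS for $\Hom{H,\calL}$ into a high-confidence estimator) to estimate $|\calH((G,\boldS^{u}),H)|$ for each $u\in S_v$, pin $v$ to a value $u$ chosen with probability proportional to these estimates, and recurse; once every vertex is pinned, output the resulting homomorphism. Empty sub-instances are detected and pruned, since the estimate of a zero count equals the natural number $0$ with high probability. Each step decreases the size, so the descent has at most $|V(G)|$ levels and, as $|S_v|\le|V(H)|$, makes only polynomially many oracle calls. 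That this process has output distribution within total variation distance $\eps$ of $\mathrm{Uni}\bigl(\calH((G,\boldS),H)\bigr)$, within the stated time bound, is exactly the conclusion Dyer and Greenhill draw for self-partitionable problems, so the lemma follows.

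The only delicate point---the one I would leave inside the citation to~\cite{DGRndmWalks} rather than reproduce---is the error bookkeeping. Each oracle call contributes a small multiplicative distortion to the transition probabilities and a small chance of failure; one must choose the per-call precision and per-call confidence (following the cited construction) so that, after a union bound over the polynomially many calls and after multiplying together the per-level distortions along a root-to-leaf path, the accumulated error in the output distribution is at most $\eps$ in total variation distance, while keeping the running time polynomial in the number of vertices of $G$ and in $\log\eps^{-1}$. This is routine given the bounded branching and polynomial depth established above, so I would simply quote it.
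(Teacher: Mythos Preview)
Your proposal is correct and follows essentially the same approach as the paper: verify that $\Hom{H,\calL}$ is self-partitionable (using the singleton lists in $\calL$ to pin one free vertex at a time, with the base case handled when all vertices are pinned) and then invoke Dyer and Greenhill~\cite{DGRndmWalks}. The paper's proof is slightly terser in that it leaves the sampler's mechanics entirely inside the citation, whereas you sketch the descent explicitly, but the substance is identical.
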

\begin{proof}
Rather than repeating the (lengthy) formal definition of \emph{self-partitionability} from~\cite{DGRndmWalks},
we state the (self-evident) relevant properties of
$\Hom{H,\calL}$ which imply that $\Hom{H,\calL}$ is self-partitionable.
The lemma follows immediately from~\cite{DGRndmWalks}.

Let $(G,\boldS)$ be an input to $\Hom{H,\calL}$. If $v\in V(G)$ and $s\in S_v$,
then let $\boldS^{v\to s} = \{S^{v\to s}_u \mid u\in V(G)\}$
be defined as follows. 
\[
S^{v\to s}_u = \begin{cases} \{s\}  &\text{ if } u=v   \\ 
S_u &\text{ otherwise.}\end{cases}
\]
Note that $(G,\boldS^{v\to s})$ is a valid input to $\Hom{H,\calL}$ as $\{ \{v\} \mid v\in V(H)\}\subseteq \calL.$

The relevant properties are
\begin{enumerate}
\item If for all $v\in V(G)$ we have $S_v \in \{ \{u\} \mid u\in V(H)\}$
then the function $\tau$ which maps each vertex $v\in V(G)$ to the single element in the corresponding list $S_v$ is the only mapping from~$G$ to~$H$ that respects the lists. It is then easy to check whether $\tau$ is a homomorphism.
Therefore, computing $\hom{(G,\boldS)}{H}$ and sampling from the set of list homomorphisms from $(G,\boldS)$ to $H$ is trivial.
\item If $v\in V(G)$ then 
\begin{equation}
\label{eq:partition}
\calH((G,\boldS),H) = \bigcup_{s\in S_v} \calH((G,\boldS^{v\to s}),H).
\end{equation}
\end{enumerate}
Note that the right-hand-side of~\eqref{eq:partition}
is a union of disjoint sets since  
all of the homomorphisms in $\calH((G,\boldS^{v\to s}),H)$ map $v$ to~$s$.

These properties imply that $\Hom{H,\calL}$ is self-partitionable
in the sense of Dyer and Greenhill,  thus the lemma follows from the technique
of Jerrum, Valiant and Vazirani, as demonstrated in~\cite{DGRndmWalks}.

This completes the proof of the lemma, but for the reader who wants to relate the above
properties to the notation of Dyer and Greenhill, we take the size of
an instance $(G,\boldS)$ to be $\abs{\{v\in V(G) \mid \abs{S_v} > 1 \}}$.
The set of smaller instances $\Xi(G,\boldS)$ considered in~\cite{DGRndmWalks}
can be constructed by fixing any $v\in V(G)$ with $\abs{S_v} > 1$
and then 
setting $\Xi(G,\boldS) = \{ (G,\boldS^{v\to s}) \mid s\in S_v\}$.
The functions $k_\xi$ mentioned in~\cite{DGRndmWalks} can all be taken to be constant
functions, with output~$1$.
The injection $\phi_{ (G,\boldS^{v\to s}) }$ is the identity.
Finally $W( (G,\boldS), \tau)$ is just the indicator function that is~$1$ if $\tau$ is a homomorphism from $(G,\boldS)$ to $H$ and $0$ otherwise.
\end{proof}

Our first goal is Corollary~\ref{cor:CompToRet} which is an AP-reduction from $\Comp{H}$ to $\Ret{H}$. The reduction uses a Monte Carlo Algorithm 
(Algorithm~\ref{algo:CoverageMC}). The algorithm is presented more generally, with lists,
so that we can also use it in the reductions of Corollaries~\ref{cor:LCompToLHom},
\ref{cor:SHomToRet} and \ref{cor:LSHomToLHom2}.
The following observation provides the basis for the algorithm. Let $H$ be a graph, $G$ be an irreflexive graph and $\boldS$ be a corresponding set of lists.
If there is a compaction from $(G,\boldS)$ to $H$ then there exists a set $U\subseteq V(G)$ with $|U| \leq \abs{V(H)} + 2\abs{E(H)}$ and a compaction~$\tau$ from~$G[U]$ to~$H$. Accordingly, we define
\begin{align}
\TGS=\{(U,\tau) \mid\ &U\subseteq V(G), \abs{U}\le \abs{V(H)} + 2\abs{E(H)},\label{eq:TGS}\\\nonumber
&\tau \text{ is a compaction from }G[U]\text{ to }H \text{ such that }\forall u\in U, \tau(u)\in S_u\}
\end{align}
and $\tGS=\abs{\TGS}$. Let $(U_i,\tau_i)_{i\in[\tGS]}$ be an arbitrary indexing of the elements of $\TGS$. For $i\in [\tGS]$ we define
\begin{align}
&\OmGSi=\left\{\sigma \in \calH((G,\boldS),H) \mid \enspace\sigma\vert_{U_i}=\tau_i\right\},\label{eq:OmGSi}\\
&\OmPlus=\left\{(i,\sigma) \mid i\in [\tGS]\text{ and }\sigma\in \OmGSi\right\}, \text{and}\label{eq:OmPlus}\\
&\OmCup=\Bigl\{(i,\sigma) \in \OmPlus \mid \sigma\notin \bigcup_{k=1}^{i-1}\OmGSk\Bigr\}.\label{eq:OmCup}
\end{align}
Note that $\abs{\OmPlus} = \sum_{i\in [\tGS]} \abs{\OmGSi}$. As every element of a set $\OmGSi$ is a compaction from $(G,\boldS)$ to $H$ and every such compaction is contained in a set $\OmGSi$, we have
\begin{align*}
\abs{\OmCup} &= \Bigl\vert\bigcup_{i\in [\tGS]} \OmGSi\Bigr\vert = \Bigl\vert\{\sigma \in \calH((G,\boldS),H) \mid \exists i\in [\tGS] \text{ such that } \sigma\in \OmGSi\}\Bigr\vert \\
&= \comp{(G,\boldS)}{H}.
\end{align*}
It is clear from the definitions that $|\OmCup| \geq |\OmPlus|/\tGS$. Thus, 
\begin{equation}\label{eq:CoverageLB}
\comp{(G,\boldS)}{H} = \abs{\OmCup} \ge \frac{\abs{\OmPlus}}{\tGS}.
\end{equation}
Intuitively, for some fixed graph $H$ and $\calL\subseteq 2^{V(H)}$ we use this lower bound to construct a Monte Carlo algorithm (Algorithm~\ref{algo:CoverageMC}) in the style of~\cite[Algorithm 11.2]{Mitzenmacher2017}, which approximately samples from $\OmPlus$ to approximately compute $\abs{\OmCup}=\comp{(G,\boldS)}{H}$. To this end the algorithm relies on access to a RAS oracle for $\Hom{H,\calL^*}$ where $\calL^*=\calL\cup \{ \{v\} \mid v\in V(H)\}$.

\begin{algorithm}
\caption{Approximate Computation of $\abs{\OmCup}$. Let $H$ be a fixed graph, $\calL\subseteq 2^{V(H)}$ and $\calL^*=\calL\cup \{ \{v\} \mid v\in V(H)\}$. Then $\displaystyle\textsc{CountHom}_{H, \calL^*}$ and $\textsc{SampleHom}_{H, \calL^*}$ are the routines from Lemma~\ref{lem:Powering} and~\ref{lem:JVV}, respectively. Let $\TGS$, $\tGS$, $\OmGSi$, $\OmPlus$ and $\OmCup$ be defined as in  Equations~(\ref{eq:TGS})-(\ref{eq:OmCup}). Note that $(U_i, \tau_i)$ is the $i$'th element of $\TGS$.}
\label{algo:CoverageMC}
\begin{algorithmic}
\Require Irreflexive graph $G$ with lists $\boldS=\{S_v\in \calL \mid v\in V(G)\}$ and $\eps,\ \delta\in (0,1)$.
	\If {$\tGS=0$}
		\State $Y=0$.
	\Else
		\State $\eps'=\frac{\eps}{12}$, $\delta' =\frac{\delta}{2}$, $\delta'' = \frac{\delta'}{\tGS}$.
		\For{$i=1, \dots, \tGS$}
			\State For all $v\in V(G)$, if $v \in U_i$, set $S^i_v =\{\tau_i(v)\}$, otherwise set $S^i_v=S_v$.
			\State $\boldS^i= \{S^i_v \mid v\in V(G)\}$
			\State $\omega_i = \countHom{G,\boldS^i,\eps',\delta''}$. 
		\EndFor
		\State $\displaystyle\omega = \sum_{i=1}^{\tGS} \omega_i$.
		\State $\displaystyle m = \left\lceil 6\tGS\cdot   \frac{\ln(2/\delta')}{{\eps'}^2}\right\rceil$.
		\For{$j=1, \dots, m$}
			\State Choose $i\in [\tGS]$ with probability $\frac{\omega_i}{\omega}$.
			\State $\displaystyle\sigma_j = \sampleHom{G,\boldS^i,\eps'/(2\abs{V(H)}^n)}.$
			\State Let $X_j$ be $1$ in the event $(i,\sigma_j) \in \OmCup$
			 and $0$ otherwise.
		\EndFor
		\State $Y=\frac{\omega}{m}\sum_{j=1}^m X_j$.
	\EndIf
\Ensure $Y$
\end{algorithmic}
\end{algorithm}

\begin{lem}\label{lem:MCAlgo}
Algorithm~\ref{algo:CoverageMC} returns an $(\eps,\delta)$-approximation of $\abs{\OmCup}$ if it has access to a RAS oracle for $\Hom{H,\calL^*}$ and every list in $\boldS$ is an element of $\calL$. For fixed $\delta$, the algorithm runs in time polynomial in $n=\abs{V(G)}$ and $\eps^{-1}$.
\end{lem}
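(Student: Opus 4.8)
The plan is to analyze Algorithm~\ref{algo:CoverageMC} as a standard coverage-style Monte Carlo estimator, in the spirit of~\cite[Section 11.2.2, Algorithm 11.2]{Mitzenmacher2017}, while carefully tracking the errors introduced by the fact that we only have approximate counting and approximate sampling oracles (rather than exact ones). First I would handle the trivial case $\tGS=0$: here there is no compaction-witnessing pair, so $\OmCup=\emptyset$ and the algorithm correctly returns $Y=0$. For the main case, I would set up the following chain of estimates. The quantity $\omega_i$ is an approximation to $|\OmGSi| = \hom{(G,\boldS^i)}{H}$ with multiplicative error $e^{\pm\eps'}$ and failure probability at most $\delta''$; by a union bound over $i\in[\tGS]$, with probability at least $1-\delta'$ every $\omega_i$ is a good approximation, hence $\omega = \sum_i \omega_i$ satisfies $e^{-\eps'}|\OmPlus| \le \omega \le e^{\eps'}|\OmPlus|$. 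Note that each $\boldS^i$ is a valid input to $\Hom{H,\calL^*}$ since pinning a vertex to a singleton list uses a list in $\calL^* = \calL \cup \{\{v\}\mid v\in V(H)\}$, so Lemmas~\ref{lem:Powering} and~\ref{lem:JVV} apply.

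Next I would analyze the sampling loop. Conditioned on the event that all $\omega_i$ are good, each iteration $j$ first picks an index $i$ with probability $\omega_i/\omega$ and then calls $\sampleHom{G,\boldS^i,\cdot}$, which returns $\sigma_j$ from a distribution within total variation distance $\eps'/(2|V(H)|^n)$ of $\mathrm{Uni}(\OmGSi)$. I would argue that the resulting pair $(i,\sigma_j)$ has a distribution within total variation distance at most some small multiple of $\eps'$ (roughly $\eps' + \eps'/(2|V(H)|^n) \le 2\eps'$, say, using that $|\OmGSi|/\omega$ is within $e^{\pm 2\eps'}$ of the ideal probability $|\OmGSi|/|\OmPlus|$) of the uniform distribution on $\OmPlus$. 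Under the exact-uniform distribution, $\Pr[(i,\sigma_j)\in\OmCup] = |\OmCup|/|\OmPlus| =: \rho$, and by~\eqref{eq:CoverageLB} we have $\rho \ge 1/\tGS$. So the perturbed success probability $p = \E[X_j]$ satisfies $|p - \rho| \le 2\eps'$ (up to constants), hence $p \ge \rho(1 - 2\eps'\tGS)$ — wait, that is not quite strong enough, so instead I would bound $|p-\rho|\le 2\eps'$ additively and note $p \ge 1/\tGS - 2\eps'$; since $\eps' = \eps/12$, this is problematic unless $\tGS$ is used to scale. Actually the cleaner route, following Mitzenmacher--Upfal, is: the total-variation error in sampling translates to $|p\cdot\tGS - \rho\cdot\tGS| \le 2\eps'\tGS$, and we need the TV bound to be $O(\eps'/\tGS)$; but the algorithm only asks for TV distance $\eps'/(2|V(H)|^n)$, and since $\tGS \le |V(G)|^{|V(H)|+2|E(H)|}\cdot |V(H)|^{\cdots}$ is polynomial while... here I must be careful. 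The intended reading is that $|V(H)|^n$ dominates $\tGS$, so $\eps'/(2|V(H)|^n) \le \eps'/(2\tGS)$, giving $|p\tGS - \rho\tGS|\le \eps'$, i.e.\ $p \ge \rho(1-\eps'\tGS/(\rho\tGS)) \ge \rho e^{-\eps'}\cdot(\text{something})$; I would make this precise so that $p \ge e^{-\eps'/2}\rho \ge e^{-\eps'/2}/\tGS$.

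Then I would apply a Chernoff bound to the sum $\sum_{j=1}^m X_j$ of independent $\mathrm{Be}(p)$ variables: with $m = \lceil 6\tGS\ln(2/\delta')/{\eps'}^2\rceil$ and $p \ge e^{-\eps'/2}/\tGS \ge 1/(2\tGS)$ (say), we get $mp = \Omega(\ln(2/\delta')/{\eps'}^2)$, so with probability at least $1-\delta'$ the empirical mean $\frac1m\sum_j X_j$ is within a factor $e^{\pm\eps'}$ of $p$. Hence, on the intersection of the ``good $\omega$'' event and the ``good Chernoff'' event — which by a union bound has probability at least $1 - 2\delta' = 1-\delta$ — the output $Y = \frac{\omega}{m}\sum_j X_j$ satisfies
\[
e^{-3\eps'}|\OmCup| \;\le\; Y \;\le\; e^{3\eps'}|\OmCup|,
\]
collecting the $e^{\pm\eps'}$ from $\omega$ vs.\ $|\OmPlus|$, the $e^{\pm\eps'/2}$ from $p\tGS$ vs.\ $\rho\tGS$, and the $e^{\pm\eps'}$ from the Chernoff estimate (and using $|\OmPlus| = |\OmCup|\cdot\tGS\cdot(\rho\tGS)^{-1}$ so that the factors telescope to yield $Y \approx \rho\omega \approx \rho|\OmPlus| = |\OmCup|$). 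Since $\eps'=\eps/12$, by Observation~\ref{obs:accuracy} we have $e^{3\eps'} = e^{\eps/4} \le 1+\eps$ and $e^{-3\eps'}\ge 1-\eps$, so $Y$ is an $(\eps,\delta)$-approximation of $|\OmCup|$. For the running time: $\tGS$ is bounded by a polynomial in $n$ (since $|U|$ is bounded by the constant $|V(H)|+2|E(H)|$ and there are $O(n^{|V(H)|+2|E(H)|})$ choices of $U$, each with at most $|V(H)|^{|V(H)|+2|E(H)|}$ maps $\tau$); $m$ is polynomial in $\tGS$ and $\eps^{-1}$ (for fixed $\delta$); and each call to $\countHom{\cdots}$ and $\sampleHom{\cdots}$ runs in time polynomial in $n$, $\eps^{-1}$ and $\log(\delta'')^{-1} = \log(2\tGS/\delta)$, all polynomial in $n$ and $\eps^{-1}$. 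Computing $T_{G,\boldS}$ itself, and checking membership $(i,\sigma_j)\in\OmCup$ (which requires checking $\sigma_j|_{U_k}\neq\tau_k$ for $k<i$), takes polynomial time as well.

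The main obstacle I expect is the bookkeeping in the second paragraph: correctly propagating the total-variation sampling error through the two-stage sampling of $(i,\sigma_j)$ and verifying that the crude TV target $\eps'/(2|V(H)|^n)$ is indeed small enough relative to $1/\tGS$ so that the induced additive error in the success probability $p$ is dominated by $\eps'\cdot\rho$ rather than merely $\eps'$ — this is what makes the Chernoff bound with $m = \Theta(\tGS\ln(2/\delta')/{\eps'}^2)$ samples suffice. Everything else (the union bounds, the Chernoff bound, Observation~\ref{obs:accuracy}, the polynomial-time claims) is routine.
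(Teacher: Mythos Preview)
Your overall structure matches the paper's, and the trivial case, the union bound over the $\omega_i$, the runtime analysis, and the final Chernoff step are all fine. The genuine gap is in the sampling analysis. You lump the two sources of error into a single total-variation bound of order~$\eps'$ between the distribution of $(i,\sigma_j)$ and the uniform distribution on $\OmPlus$. That bound is correct, but it only gives an \emph{additive} estimate $|p-\rho|\le O(\eps')$, which is useless when $\rho=|\OmCup|/|\OmPlus|$ may be as small as $1/\tGS$ (a polynomial in~$n$) while $\eps'=\eps/12$ is fixed. Your attempted fix, noting $\eps'/(2|V(H)|^n)\le \eps'/(2\tGS)$, addresses only the stage-2 sampler error; the dominant contribution to the TV distance is the stage-1 error from using $\omega_i/\omega$ instead of $|\OmGSi|/|\OmPlus|$, and that contribution really is $\Theta(\eps')$, not $O(\eps'/\tGS)$.

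The paper avoids TV distance altogether and instead bounds each \emph{individual point mass} $D((k,\sigma))$ multiplicatively. The stage-1 error is already multiplicative: $\omega_k/\omega\in[e^{-2\eps'},e^{2\eps'}]\cdot|\OmGSk|/|\OmPlus|$. For stage~2, the key observation is that the uniform distribution on $\OmGSk$ assigns mass $1/|\OmGSk|\ge 1/|V(H)|^n$ to every point, so the sampler's additive TV error $\eps'/(2|V(H)|^n)$ on each point mass is at most $(\eps'/2)\cdot(1/|\OmGSk|)$, i.e.\ a multiplicative factor $\le e^{\eps'}$. Combining gives $D((k,\sigma))\in[e^{-3\eps'},e^{3\eps'}]/|\OmPlus|$ for every $(k,\sigma)\in\OmPlus$, hence $p'\in[e^{-3\eps'},e^{3\eps'}]\rho$. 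This multiplicative bound on $p'$ holds regardless of how small $\rho$ is, so $p'\ge \rho/2\ge 1/(2\tGS)$ and the choice of $m$ suffices for Chernoff. This is the step you need to rework; the rest of your argument then goes through (with slightly different constants).
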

\begin{proof}
First note that given oracle access to a RAS for $\Hom{H,\calL^*}$, the routines $\textsc{CountHom}_{H,\calL^*}$ and $\textsc{SampleHom}_{H, \calL^*}$ exist as shown in Lemmas~\ref{lem:Powering} and~\ref{lem:JVV} (by definition $\calL^*$ contains $\{ \{v\} \mid v\in V(H)\}$). Furthermore, the input to these routines is valid: A list $S_v^i\in \boldS^i$ is either of the form $\{\tau_i(v)\}$ or otherwise $S_v^i=S_v\in\calL$. Therefore, in general, $S_v^i\in \calL^*$. Thus Algorithm~\ref{algo:CoverageMC} is well-defined.

Next we show that the runtime condition is met. Assume $\delta$ to be fixed. Note that we can determine $\TGS$ exactly by enumerating all possible assignments of at most $\abs{V(H)}+2\abs{E(H)}$ vertices of $G$ to the vertices of $H$ and checking whether the resulting assignment is a compaction. Checking can be done in polynomial time and $\tGS=\abs{\TGS}\le \sum_{k=1}^{\abs{V(H)}+2\abs{E(H)}}n^k\in \text{poly}(n)$. It follows that $m\in \text{poly}(n, \eps^{-1})$. The 
runtime of the routine $\countHom{G,\boldS^i,\epsilon', \delta''}$ is in $\text{poly}(n, 1/{\epsilon'})$ by Lemma~\ref{lem:Powering}. Finally, the runtime of $\sampleHom{G,\boldS^i,\eps'/(2\abs{V(H)}^n)}$ is in $\text{poly}(n, \log(1/\epsilon'))$ by Lemma~\ref{lem:JVV}. It is   essential here that the runtime of $\textsc{SampleHom}_{H, \calL^*}$ has logarithmic dependence on the precision parameter as the precision we use is $\eps'/(2\abs{V(H)}^n)$, which is exponential in $n$.

If $\abs{\TGS}=\tGS= 0$ then $\abs{\OmCup}=0$ and the algorithm returns an exact solution. To prove the correctness of the algorithm it remains to show that otherwise it is an $(\eps,\delta)$-approximation. 

Note that by the definition of the $\boldS^i$ in the first part of the algorithm, $\OmGSi=\calH((G,\boldS^i),H)$. Then, by Lemma~\ref{lem:Powering} and the definition of $\delta''$, $\countHom{G,\boldS^i,\eps',\delta''}$ returns a number $\omega_i$ with $\Pr(e^{-\eps'}\abs{\OmGSi} \le \omega_i\le e^{\eps'}\abs{\OmGSi})\ge 1-\delta'/\tGS$. By the union bound, with probability of at least $1-\delta'$, we have
\begin{equation}\label{eq:omAccuracy}
e^{-\eps'}\abs{\OmGSi} \le \omega_i\le e^{\eps'}\abs{\OmGSi} \qquad (\forall i\in[\tGS]).
\end{equation}
The following two subclaims are based on this assumption. Let  $p=\abs{\OmCup}/\abs{\OmPlus}$.

\medskip
\noindent {\bf Subclaim 1:} Assume that (\ref{eq:omAccuracy}) holds. Then for all $j\in[m]$ we have $X_j\sim \mathrm{Be}(p')$ where $e^{-3\eps'}p\le p'\le e^{3\eps'}p$.
\medskip

\noindent {\bf Proof of Subclaim 1:} 
Consider fixed $\omega_1,\ldots,\omega_{\tGS}$
satisfying~\eqref{eq:omAccuracy}.
 Note that the distribution of $(i,\sigma_j)$, conditioned on these, 
does not depend on the index $j$. 
Let $D$ be the distribution (conditioned on 
$\omega_1,\ldots,\omega_{\tGS}$)
such that for all $j\in [m]$ we have $(i,\sigma_j)\sim D$. 
By Lemma~\ref{lem:JVV} and the fact that $\OmGSk=\calH((G,\boldS^k),H)$ we have
\[
D((k,\sigma))= \Pr(\sigma_j=\sigma\,\vert\,i=k)\cdot \Pr(i=k) \le \left(\frac{1}{\abs{\OmGSk}}+\frac{\eps'}{2\abs{V(H)}^n}\right)\cdot\Pr(i=k)
\]
First using $\abs{\OmGSk}\le\abs{V(H)}^n$ and then using Observation~\ref{obs:accuracy} it follows
\[
D((k,\sigma))\le \left(1+\frac{\eps'}{2}\right)\frac{1}{\abs{\OmGSk}}\cdot\Pr(i=k) \le e^{\eps'}\frac{1}{\abs{\OmGSk}}\cdot\Pr(i=k) = e^{\eps'}\frac{1}{\abs{\OmGSk}}\cdot\frac{\omega_k}{\sum_{i=1}^{\tGS} \omega_i}.
\] 
Using the assumption of this subclaim, we obtain
\[
D((k,\sigma))\le e^{\eps'}\frac{1}{\abs{\OmGSk}}\cdot e^{2\eps'}\frac{\abs{\OmGSk}}{\sum_{i\in [\tGS]} \abs{\OmGSi}} = e^{3\eps'}\frac{1}{\abs{\OmPlus}}
\]
and  thus
\[
p'= \Pr(X_j=1) = \sum_{(i,\sigma)\in \OmCup} D((i,\sigma))
\le e^{3\eps'} \sum_{(i,\sigma)\in \OmCup} \frac{1}{\abs{\OmPlus}} = e^{3\eps'}p.
\]
Analogously we obtain the lower bound $e^{-3\eps'}p\le p'$.
{\bf (End of the proof of Subclaim 1.)}

\medskip
\noindent {\bf Subclaim 2:} Assume that (\ref{eq:omAccuracy}) holds. Then $e^{-4\eps'}\abs{\OmCup}\le \E[Y]\le e^{4\eps'}\abs{\OmCup}$.
\medskip

\noindent {\bf Proof of Subclaim 2:}
Consider fixed 
$\omega_1,\ldots,\omega_{\tGS}$
satisfying~\eqref{eq:omAccuracy}.
Conditioned on these,
we have $X_j\sim \mathrm{Be}(p')$ and
\[
\E[Y] = \frac{\sum_{i\in[\tGS]} \omega_i}{m} \sum_{j\in [m]}\E[X_j] = \sum_{i\in[\tGS]} \omega_i \cdot p'.
\]
We now use~\eqref{eq:omAccuracy} as well as Subclaim 1 to obtain
\[
\E[Y] \le e^{\eps'} \sum_{i\in[\tGS]} \abs{\OmGSi} \cdot e^{3\eps'}p = e^{4\eps'}\abs{\OmCup}
\]
and
\[
\E[Y] \ge e^{-\eps'} \sum_{i\in[\tGS]} \abs{\OmGSi} \cdot e^{-3\eps'}p = e^{-4\eps'}\abs{\OmCup}.
\]
\noindent{\bf (End of the proof of Subclaim 2.)} 

Next we show that,
conditioned on computing $\omega_i$'s that satisfy (\ref{eq:omAccuracy}), 
the number of samples~$m$ is sufficiently large.
 First, by Subclaim 1, $X_1,\dots, X_m$ are independent indicator random variables that have distribution $\mathrm{Be}(p')$ and expected value $p'$. By Subclaim 1 and Observation~\ref{obs:accuracy} we have 
\[
 (1-6\eps')p\le e^{-3\eps'}p\le p'\le e^{3\eps'}p\le (1+6\eps')p.
\]
From the definition of $\eps'$ it follows that $\abs{p'-p}\le 6\eps'p\le\eps p/2$ and consequently $p'\ge p/2$. Using this fact and taking into account that by Equation~\eqref{eq:CoverageLB} we have $\tGS\geq 1/p$, it follows that
\begin{align*}
m &= \left\lceil 6\tGS\cdot  \frac{\ln(2/\delta')}{{\eps'}^2}\right\rceil
\ge 6\frac{\ln(2/\delta')}{{\eps'}^2 p}\ge 3 \frac{\ln(2/\delta')}{{\eps'}^2 p'}.
\end{align*}
Thus, we can use~\cite[Theorem 11.1]{Mitzenmacher2017} to obtain
$\Pr\left(\abs{Y-\E[Y]}\ge \eps'\E[Y]\right)\le \delta'$ which is conditioned on the fact that (\ref{eq:omAccuracy}) holds. Now taking into account the fact that, with probability at least $1-\delta'$, 
$\omega_1,\ldots,\omega_{\tGS}$
satisfy~\eqref{eq:omAccuracy}, we have shown that, with probability of at least $(1-\delta')^2\ge 1-\delta$, we have 
\[
\abs{Y-\E[Y]}\le \eps'\E[Y]=\frac{\eps}{12}\E[Y].
\]
By Subclaim 2 and Observation~\ref{obs:accuracy} we also know that 
\[
\abs{\E[Y]-\abs{\OmCup}}\le 8\eps'\abs{\OmCup}= \frac{2\eps}{3}\abs{\OmCup}.
\]
Summarising, with probability of at least $1-\delta$, we have
\begin{align*}
\abs{Y-\abs{\OmCup}}&\le \abs{Y-\E[Y]} + \abs{\E[Y]-\abs{\OmCup}} \le \frac{\eps}{12}\E[Y] + \frac{2\eps}{3}\abs{\OmCup}\\ 
&\le \frac{\eps}{12}\left(\abs{\OmCup} + \frac{2\eps}{3}\abs{\OmCup}\right) + \frac{2\eps}{3}\abs{\OmCup} \le \eps \abs{\OmCup}.
\end{align*}
Hence, $Y$ is an ($\eps,\delta)$-approximation of $\abs{\OmCup}$.
\end{proof}

\begin{cor}\label{cor:CompToRet}
Let $H$ be a graph. Then $\Comp{H} \leap \Ret{H}$.
\end{cor}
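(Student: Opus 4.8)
The plan is to deduce Corollary~\ref{cor:CompToRet} directly from Lemma~\ref{lem:MCAlgo} by instantiating the constraint-language parameter~$\calL$ so that the auxiliary problem $\Hom{H,\calL^*}$ appearing there coincides with $\Ret{H}$. Concretely, I would take $\calL=\{V(H)\}$, so that $\calL^*=\calL\cup\{\{v\}\mid v\in V(H)\}$ is exactly the family of lists of cardinality $1$ or $\abs{V(H)}$ (using that $V(H)$ is the unique subset of $V(H)$ of size $\abs{V(H)}$). The first step is then the observation that, under this choice, $\Hom{H,\calL^*}$ and $\Ret{H}$ are literally the same computational problem: their valid inputs agree and the required output $\hom{(G,\boldS)}{H}$ is identical. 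In particular, the RAS oracle for $\Ret{H}$ supplied in an AP-reduction is a RAS oracle for $\Hom{H,\calL^*}$, precisely of the kind that Lemma~\ref{lem:MCAlgo} and Algorithm~\ref{algo:CoverageMC} require. (The structural reason the whole approach works is that $\Ret{H}$ is the minimal list-homomorphism problem that is simultaneously rich enough to express ``count all homomorphisms from~$G$'', which $\Comp{H}$ needs, and closed under pinning single vertices, which is what makes $\textsc{SampleHom}$ available via self-partitionability.)

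Next I would set up the reduction itself. Given an instance~$G$ of $\Comp{H}$ with $n=\abs{V(G)}$ and a precision $\eps\in(0,1)$, put $\boldS=\{S_v=V(H)\mid v\in V(G)\}$; every list lies in $\calL$ and $\comp{(G,\boldS)}{H}=\comp{G}{H}$. With $\TGS,\tGS,\OmGSi,\OmPlus,\OmCup$ defined from $(G,\boldS)$ as in Equations~(\ref{eq:TGS})--(\ref{eq:OmCup}), the computation carried out just before Algorithm~\ref{algo:CoverageMC} shows $\abs{\OmCup}=\comp{(G,\boldS)}{H}=\comp{G}{H}$. I would then invoke Lemma~\ref{lem:MCAlgo}: running Algorithm~\ref{algo:CoverageMC} on input $(G,\boldS)$ with error parameters $\eps$ and $\delta=1/4$, using the given $\Ret{H}$-RAS in place of the required $\Hom{H,\calL^*}$-RAS, outputs a value~$Y$ that is an $(\eps,1/4)$-approximation of $\abs{\OmCup}=\comp{G}{H}$, and for the fixed confidence $1/4$ the running time is polynomial in $n$ and $\eps^{-1}$. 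The remaining bookkeeping is to confirm the three formal requirements of an AP-reduction: every oracle query is an instance of $\Ret{H}$; by Lemmas~\ref{lem:Powering} and~\ref{lem:JVV} each query is made with a precision parameter whose reciprocal is polynomial in $n$ and $\eps^{-1}$; and $Y$ is a RAS for $\Comp{H}$ exactly when the oracle is a RAS for $\Ret{H}$.

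I do not expect a genuine obstacle at the level of Corollary~\ref{cor:CompToRet}: once Lemma~\ref{lem:MCAlgo} is in hand the corollary is essentially a one-line instantiation, and the only things one must be careful about are the identification $\Hom{H,\calL^*}=\Ret{H}$ and the AP-reduction bookkeeping above. The real work lies inside Lemma~\ref{lem:MCAlgo}; the subtle point there — worth keeping in mind while writing the corollary, since it is the reason the approach is feasible at all — is that the almost-uniform sampler of Lemma~\ref{lem:JVV} must be run with the exponentially small total-variation tolerance $\eps'/(2\abs{V(H)}^n)$ so that a union bound over the at most $\abs{V(H)}^n$ possible samples controls the event $(i,\sigma_j)\in\OmCup$, yet still costs only polynomial time; this is exactly why the \emph{logarithmic} dependence of $\textsc{SampleHom}$ on its precision parameter is indispensable, and why the density bound $\abs{\OmCup}\ge\abs{\OmPlus}/\tGS$ of Equation~(\ref{eq:CoverageLB}) is enough to make $\tGS$ times the number of Monte Carlo rounds many samples suffice.
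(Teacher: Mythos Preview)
Your proposal is correct and follows essentially the same approach as the paper: instantiate Lemma~\ref{lem:MCAlgo} with $\calL=\{V(H)\}$ so that $\Hom{H,\calL^*}$ coincides with $\Ret{H}$, observe that $\comp{(G,\boldS)}{H}=\comp{G}{H}$ for the trivial list system, and set $\delta=1/4$ to obtain a RAS. Your write-up is in fact more explicit than the paper's on the AP-reduction bookkeeping and on why $\calL^*$ matches the $\Ret{H}$ list constraint.
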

\begin{proof}
Let $\calL=\{V(H)\}$. Then the problem $\Hom{H,\calL^*}$ is identical to $\OALHom{H}$. Furthermore, given an irreflexive graph $G$ and a set $\boldS=\{S_v \in \calL \mid v\in G\}$, for this choice of $\calL$ it holds that $\comp{(G,\boldS)}{H}=\comp{G}{H}$.

Then, by Lemma~\ref{lem:MCAlgo}, given a RAS oracle for $\OALHom{H}$, Algorithm~\ref{algo:CoverageMC} computes an $(\eps,\delta)$-approximation of $\abs{\OmCup}=\comp{(G,\boldS)}{H}=\comp{G}{H}$. If we choose $\delta=1/4$ then the algorithm is an FPRAS for $\Comp{H}$. 
\end{proof}

\begin{cor}\label{cor:LCompToLHom}
Let $H$ be a graph. Then $\LComp{H} \leap \LHom{H}$.
\end{cor}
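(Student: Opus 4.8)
The plan is to imitate the proof of Corollary~\ref{cor:CompToRet} almost verbatim, instantiating the Monte Carlo machinery of Lemma~\ref{lem:MCAlgo} with the full power set of $V(H)$ in place of the trivial list family $\{V(H)\}$. First I would set $\calL = 2^{V(H)}$. Then $\calL^* = \calL \cup \{\{v\} \mid v\in V(H)\}$ equals $2^{V(H)}$ as well, since every singleton already lies in the power set. Hence the problem $\Hom{H,\calL^*}$ is literally the problem $\LHom{H}$: both take an irreflexive graph $G$ together with an arbitrary collection of lists $\boldS = \{S_v \subseteq V(H) \mid v\in V(G)\}$ and ask for $\hom{(G,\boldS)}{H}$.

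Next I would note that, for this $\calL$, every admissible input $(G,\boldS)$ to $\LComp{H}$ is an admissible input for Algorithm~\ref{algo:CoverageMC} (every list in $\boldS$ is a subset of $V(H)$, hence an element of $\calL$), and that the quantity $\abs{\OmCup}$ which the algorithm approximates is exactly $\comp{(G,\boldS)}{H}$ --- this identity for general lists is established in the discussion preceding Lemma~\ref{lem:MCAlgo}. So the output of $\LComp{H}$ on $(G,\boldS)$ is precisely $\abs{\OmCup}$.

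Putting the pieces together, Lemma~\ref{lem:MCAlgo} says that, given a RAS oracle for $\Hom{H,\calL^*} = \LHom{H}$, Algorithm~\ref{algo:CoverageMC} returns an $(\eps,\delta)$-approximation of $\comp{(G,\boldS)}{H}$ and, for fixed $\delta$, runs in time polynomial in $\abs{V(G)}$ and $\eps^{-1}$. Taking $\delta = 1/4$ yields an FPRAS for $\LComp{H}$ that uses an oracle for $\LHom{H}$, which is exactly the desired AP-reduction $\LComp{H} \leap \LHom{H}$. I do not anticipate any real obstacle here: the only thing to verify is that the substitution $\calL = 2^{V(H)}$ is legitimate wherever Lemma~\ref{lem:MCAlgo} and the counting and sampling routines of Lemmas~\ref{lem:Powering} and~\ref{lem:JVV} are invoked, and the one nontrivial-sounding requirement --- that $\calL^*$ contain all singletons --- holds trivially. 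Thus the corollary is a one-line specialisation of Corollary~\ref{cor:CompToRet}, with $\Ret{H}$ replaced by $\LHom{H}$.
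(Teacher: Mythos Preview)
Your proposal is correct and essentially identical to the paper's proof: the paper also sets $\calL=2^{V(H)}$, observes that $\Hom{H,\calL^*}=\Hom{H,\calL}$ is $\LHom{H}$, and invokes Lemma~\ref{lem:MCAlgo} with $\delta=1/4$. There is nothing to add.
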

\begin{proof}
Let $\calL=2^{V(H)}$. Then the problem $\Hom{H,\calL^*}=\Hom{H,\calL}$ is identical to $\LHom{H}$. 

From Lemma~\ref{lem:MCAlgo} it follows that given a RAS oracle for $\LHom{H}$, Algorithm~\ref{algo:CoverageMC} returns an $(\eps,\delta)$-approximation of $\abs{\OmCup}=\comp{(G,\boldS)}{H}$. In particular, as $\calL$ is unrestricted, it does so for any valid input $(G,\boldS)$ of the problem $\LComp{H}$. Thus, if we choose $\delta=1/4$, the algorithm is an FPRAS for $\LComp{H}$. 
\end{proof}

To obtain Corollaries~\ref{cor:CompToRet} and~\ref{cor:LCompToLHom}, the only property of compactions we use is the fact that for every compaction from $G$ to $H$ there exists a preimage $U$ of polynomial size, i.e.\ a set $U\subseteq V(G)$ with $|U| \leq \abs{V(H)} + 2\abs{E(H)}$ and a compaction~$\tau$ from~$G[U]$ to~$H$. (This is used in  Equation~(\ref{eq:TGS}).)

Similarly, for every surjective homomorphism from $G$ to $H$ there exists a set $U\subseteq V(G)$ with $|U| = \abs{V(H)}$ such that there exists a surjective homomorphism~$\tau$ from~$G[U]$ to~$H$. If we substitute
\begin{align*}\label{eq:TG}
\TGS=\{(U,\tau) \mid\ &U\subseteq V(G), \abs{U}= \abs{V(H)},\\
&\tau \text{ is a surjective homomorphism from }G[U]\text{ to }H \text{ such that }\forall u\in U, \tau(u)\in S_u\}
\end{align*}
for  Equation~(\ref{eq:TGS}), Lemma~\ref{lem:MCAlgo} still holds and now $\abs{\OmCup}=\sur{(G,\boldS)}{H}$.

Therefore, analogously to the previous two corollaries we obtain the following.
\begin{cor}\label{cor:SHomToRet}
Let $H$ be a graph. Then $\SHom{H} \leap \Ret{H}$.
\end{cor}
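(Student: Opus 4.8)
\textbf{Proof proposal for Corollary~\ref{cor:SHomToRet}.} The plan is to mirror the proof of Corollary~\ref{cor:CompToRet} exactly, replacing compactions by surjective homomorphisms throughout. First I would set $\calL = \{V(H)\}$, so that $\calL^* = \{V(H)\} \cup \{\{v\} \mid v\in V(H)\}$ and the problem $\Hom{H,\calL^*}$ coincides with $\OALHom{H}$. Then, given an irreflexive graph $G$ and a set of lists $\boldS = \{S_v \in \calL \mid v\in V(G)\}$ (so every $S_v = V(H)$), we have $\sur{(G,\boldS)}{H} = \sur{G}{H}$, which is what $\SHom{H}$ asks us to compute.

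Next I would invoke the modified version of Lemma~\ref{lem:MCAlgo} discussed immediately before the corollary statement: with the substituted definition of $\TGS$ (indexing pairs $(U,\tau)$ where $U\subseteq V(G)$ has $\abs{U} = \abs{V(H)}$ and $\tau$ is a surjective homomorphism from $G[U]$ to $H$ respecting the lists), Algorithm~\ref{algo:CoverageMC} still satisfies the conclusion of Lemma~\ref{lem:MCAlgo}, and now $\abs{\OmCup} = \sur{(G,\boldS)}{H}$. The two facts that make this substitution legitimate are: (i) for every surjective homomorphism from $G$ to $H$ there is a witness set $U$ of size exactly $\abs{V(H)}$ (pick one preimage vertex for each vertex of $H$), so the analogue of the containment argument preceding~\eqref{eq:CoverageLB} goes through, giving $\abs{\OmCup} = \sur{(G,\boldS)}{H}$ and $\abs{\OmCup} \ge \abs{\OmPlus}/\tGS$; and (ii) enumerating all candidate pairs $(U,\tau)$ still takes time polynomial in $n = \abs{V(G)}$, since $\tGS \le n^{\abs{V(H)}} \cdot \abs{V(H)}^{\abs{V(H)}} \in \mathrm{poly}(n)$ and checking surjectivity of $\tau$ on $G[U]$ is polynomial.

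Finally, combining these: given a RAS oracle for $\OALHom{H}$ (equivalently $\Ret{H}$), Algorithm~\ref{algo:CoverageMC} computes an $(\eps,\delta)$-approximation of $\abs{\OmCup} = \sur{(G,\boldS)}{H} = \sur{G}{H}$ in time polynomial in $n$ and $\eps^{-1}$ for fixed $\delta$; taking $\delta = 1/4$ yields an FPRAS for $\SHom{H}$ relative to an FPRAS for $\Ret{H}$, which is exactly an AP-reduction $\SHom{H} \leap \Ret{H}$.

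Honestly, there is no serious obstacle here — all the work has been done in Lemma~\ref{lem:MCAlgo} and the remark following Corollary~\ref{cor:LCompToLHom}. The only point requiring a moment's care is verifying that the "witness set of polynomial size'' property is precisely what Lemma~\ref{lem:MCAlgo}'s proof uses (it enters only through Equation~\eqref{eq:TGS} and the counting bound $\tGS \in \mathrm{poly}(n)$), and that the surjective-homomorphism version still satisfies $\abs{U} = \abs{V(H)} \le \abs{V(H)} + 2\abs{E(H)}$ so that nothing in the runtime analysis degrades. Everything else — the powering lemma, the self-partitionability-based sampler, the Monte Carlo concentration bound — is untouched.
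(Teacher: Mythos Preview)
Your proposal is correct and follows essentially the same approach as the paper: you set $\calL = \{V(H)\}$ so that $\Hom{H,\calL^*}$ is $\OALHom{H}$, invoke the surjective-homomorphism version of Lemma~\ref{lem:MCAlgo} (as set up in the remark immediately preceding the corollary), and take $\delta = 1/4$ to obtain the AP-reduction. The paper states only that the result follows ``analogously to the previous two corollaries,'' so your write-up is in fact more explicit than the original.
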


\begin{cor}\label{cor:LSHomToLHom2}
Let $H$ be a graph. Then $\LSHom{H} \leap \LHom{H}$.
\end{cor}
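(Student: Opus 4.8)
The plan is to mirror the proof of Corollary~\ref{cor:LCompToLHom}, but using the surjective-homomorphism variant of Algorithm~\ref{algo:CoverageMC} sketched in the remark immediately preceding Corollary~\ref{cor:SHomToRet}. First I would take $\calL = 2^{V(H)}$, so that $\Hom{H,\calL^*} = \Hom{H,\calL} = \LHom{H}$: adjoining the singleton lists to $\calL$ changes nothing when $\calL$ is already the full power set, and for this choice of $\calL$ an instance $(G,\boldS)$ of $\LSHom{H}$ with its lists in $\calL$ is simply an arbitrary instance of $\LSHom{H}$. In particular, a RAS oracle for $\LHom{H}$ is a RAS oracle for $\Hom{H,\calL^*}$, which is what Algorithm~\ref{algo:CoverageMC} requires.

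Next I would replace the definition of $\TGS$ in Equation~\eqref{eq:TGS} by the surjective-homomorphism version: $\TGS$ is the set of pairs $(U,\tau)$ with $U \subseteq V(G)$, $\abs{U} = \abs{V(H)}$, and $\tau$ a surjective homomorphism from $G[U]$ to $H$ satisfying $\tau(u) \in S_u$ for all $u \in U$. The structural fact already pointed out in the excerpt is that every surjective homomorphism from $(G,\boldS)$ to $H$ restricts to such a pair. Defining $\OmGSi$, $\OmPlus$, $\OmCup$ from this $\TGS$ exactly as in Equations~\eqref{eq:OmGSi}--\eqref{eq:OmCup}, one then gets $\abs{\OmCup} = \sur{(G,\boldS)}{H}$ together with the density bound $\abs{\OmCup} \ge \abs{\OmPlus}/\tGS$, and $\tGS \in \mathrm{poly}(n)$ since one enumerates assignments of at most $\abs{V(H)}$ vertices of $G$. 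The proof of Lemma~\ref{lem:MCAlgo} uses only these three properties of $\TGS$ (polynomial bound on $\tGS$, the covering identity $\abs{\OmCup} = \sur{(G,\boldS)}{H}$, and the lower bound $\abs{\OmCup} \ge \abs{\OmPlus}/\tGS$), so it carries over verbatim; hence Algorithm~\ref{algo:CoverageMC}, given a RAS oracle for $\Hom{H,\calL^*} = \LHom{H}$, returns an $(\eps,\delta)$-approximation of $\sur{(G,\boldS)}{H}$ in time polynomial in $\abs{V(G)}$ and $\eps^{-1}$ for fixed $\delta$, making only oracle calls of the precision permitted by an AP-reduction.

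Finally, taking $\delta = 1/4$ turns Algorithm~\ref{algo:CoverageMC} into an FPRAS for $\LSHom{H}$ that uses a RAS oracle for $\LHom{H}$, which is precisely an AP-reduction $\LSHom{H} \leap \LHom{H}$. I do not expect any real obstacle: the only point needing (minor) attention is verifying that re-using Lemma~\ref{lem:MCAlgo} is legitimate, i.e. that its proof never secretly exploited a property of compactions beyond the three listed above --- but this is exactly the observation recorded in the remark before Corollary~\ref{cor:SHomToRet}, so the argument is a short adaptation of the already-established Corollaries~\ref{cor:LCompToLHom} and~\ref{cor:SHomToRet}.
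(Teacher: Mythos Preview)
Your proposal is correct and follows essentially the same approach as the paper: the paper proves Corollary~\ref{cor:LSHomToLHom2} by declaring it analogous to Corollary~\ref{cor:LCompToLHom}, after noting that Lemma~\ref{lem:MCAlgo} continues to hold once $\TGS$ is redefined via surjective homomorphisms. Your write-up spells out exactly this argument, including the choice $\calL = 2^{V(H)}$ and the observation that only the polynomial size of $\TGS$, the identity $\abs{\OmCup}=\sur{(G,\boldS)}{H}$, and the density bound are used.
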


Corollaries~\ref{cor:CompToRet} and~\ref{cor:SHomToRet} together constitute Theorem~\ref{thm:SHomCompToRet}.

\subsection{Additional Reductions and Consequences}\label{sec:AdditionalReductions}
The following simple reductions complete our current knowledge of the complexity landscape given in Figure~\ref{fig:ApproxCountingLandscape}.

\begin{lem}\label{lem:APLHomHardness}
Let $H$ be a graph. Then $\LHom{H}\leap \LSHom{H}$ and $\LHom{H}\leap \LComp{H}$.
\end{lem}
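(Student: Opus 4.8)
The plan is to give two parsimonious (or near-parsimonious) reductions, one for each claim, by adding a gadget to the input graph that forces surjectivity (respectively, compaction) ``for free'' while only using the lists that are already available in a $\LHom{H}$ instance. Since $\LHom{H}$ permits arbitrary lists $S_v\subseteq V(H)$, we have a great deal of freedom; in particular, we may use singleton lists.

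First I would handle $\LHom{H}\leap\LSHom{H}$. Given an instance $(G,\boldS)$ of $\LHom{H}$ with $V(H)=\{u_1,\dots,u_q\}$, construct $G'$ by adding $q$ new isolated vertices $z_1,\dots,z_q$ (no edges incident to them) and setting $S'_{z_i}=\{u_i\}$, with $S'_v=S_v$ for $v\in V(G)$. Then any list homomorphism from $(G',\boldS')$ to $H$ restricts to a list homomorphism from $(G,\boldS)$ to $H$ on $V(G)$, maps each $z_i$ to $u_i$ (forced), and hence is automatically surjective; conversely every list homomorphism from $(G,\boldS)$ to $H$ extends uniquely. Thus $\sur{(G',\boldS')}{H}=\hom{(G,\boldS)}{H}$, and a single oracle call to $\LSHom{H}$ with precision $\eps$ suffices. (If one prefers to keep $G'$ connected, attach each $z_i$ to a fresh vertex with list $\{u\}$ for some neighbour $u$ of $u_i$, or attach a $K_{q+1}$-clique gadget as in the footnote on $\DRet{H}$; the count is unchanged up to an easily computable factor, which does not spoil the AP-reduction.)

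Next, $\LHom{H}\leap\LComp{H}$. Here we must additionally cover every non-loop edge of $H$. Enumerate the non-loop edges of $H$ as $\{a_1,b_1\},\dots,\{a_m,b_m\}$. Build $G'$ from $G$ by adding, for each $\ell\in[m]$, a fresh edge $\{x_\ell,y_\ell\}$ disjoint from everything else, with lists $S'_{x_\ell}=\{a_\ell\}$ and $S'_{y_\ell}=\{b_\ell\}$; also add the $q$ singleton vertices $z_1,\dots,z_q$ as before to force vertex-surjectivity (alternatively, note the edge gadgets already hit every vertex that is an endpoint of a non-loop edge, so one only needs singletons for isolated looped vertices of $H$). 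Keep $S'_v=S_v$ for $v\in V(G)$. Every list homomorphism from $(G',\boldS')$ to $H$ then (i) restricts to a list homomorphism from $(G,\boldS)$ to $H$, (ii) maps each edge $\{x_\ell,y_\ell\}$ to the edge $\{a_\ell,b_\ell\}$ (forced by the singleton lists), so it is a compaction, and conversely each list homomorphism from $(G,\boldS)$ to $H$ extends uniquely. Hence $\comp{(G',\boldS')}{H}=\hom{(G,\boldS)}{H}$ and one oracle call to $\LComp{H}$ with precision $\eps$ gives the required FPRAS.

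I do not expect a genuine obstacle here: both reductions are parsimonious and use only lists of size $1$ or size at most $|V(H)|$, which are legal inputs for the list problems. The only mild subtlety is the bookkeeping if one insists on connected inputs — then one must verify that the attaching gadgets (cliques or pinned neighbours) multiply the homomorphism count by a fixed, polynomial-time-computable constant independent of $(G,\boldS)$, so that the oracle's relative error is preserved; this is routine and analogous to the argument in the footnote defining $\DRet{H}$ and to Remark~\ref{rem:Connectivity}. Since $\LSHom{H}$ and $\LComp{H}$ place no restriction on lists, connectivity of the input is not actually required by the problem definitions, so I would present the simple disconnected construction above and remark that the connected variant follows by the standard clique trick.
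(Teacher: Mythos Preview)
Your proposal is correct and takes essentially the same approach as the paper: a parsimonious reduction that adds a pinned gadget forcing surjectivity/compaction for free. The paper does it slightly more compactly by adding a single disjoint (irreflexive) copy $H'$ of $H$ with each vertex pinned via a singleton list to its counterpart in $H$; this one gadget simultaneously hits every vertex and every non-loop edge, so the same construction gives $\hom{(G,\boldS)}{H}=\sur{(G',\boldS')}{H}=\comp{(G',\boldS')}{H}$ in one line, whereas you build separate gadgets for the two reductions. Your extended discussion of connectivity is unnecessary (the problem definitions impose no such restriction), but it does no harm.
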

\begin{proof}
Let $v_1,\dots, v_q$ be the vertices of $H$ and let $(G,\boldS)$ be an input to $\LHom{H}$. Further, let $H'$ be a copy of $H$ and let $u_1,\dots,u_q$ be the vertices of $H'$ ordered in the same way as $v_1,\dots, v_q$. For $i\in [q]$ let $S_{u_i}=\{v_i\}$ and let $\boldS'=\boldS\cup\{S_{u_i} \colon i\in[q]\}$. Let $G'$ be the disjoint union of $G$ and $H'$. Then $\hom{(G,\boldS)}{H} = \sur{(G', \boldS')}{H}=\comp{(G', \boldS')}{H}$.
\end{proof}

From Corollaries~\ref{cor:LCompToLHom} and~\ref{cor:LSHomToLHom2} as well as Lemma~\ref{lem:APLHomHardness} we immediately obtain Theorem~\ref{thm:LHomEquivalent} which we restate at this point.
{\renewcommand{\thethm}{\ref{thm:LHomEquivalent}}
\begin{thm}
\ThmLHomEquivalent
\end{thm}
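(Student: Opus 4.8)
The statement to prove is Theorem~\ref{thm:LHomEquivalent}: for any graph $H$, $\LSHom{H}\eqap\LHom{H}$ and $\LComp{H}\eqap\LHom{H}$. As the text immediately before the restated theorem already announces, this is a matter of assembling three results that have been established earlier in the section. The plan is therefore simply to cite these and observe that they close the cycle of reductions.

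\begin{proof}
By Corollary~\ref{cor:LSHomToLHom2} we have $\LSHom{H}\leap\LHom{H}$, and by Corollary~\ref{cor:LCompToLHom} we have $\LComp{H}\leap\LHom{H}$. Conversely, Lemma~\ref{lem:APLHomHardness} gives $\LHom{H}\leap\LSHom{H}$ and $\LHom{H}\leap\LComp{H}$. Combining the two directions yields $\LSHom{H}\eqap\LHom{H}$ and $\LComp{H}\eqap\LHom{H}$, as claimed.
\end{proof}

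There is no real obstacle here: all the work has been done. The one direction that carried content, $\LComp{H}\leap\LHom{H}$ and $\LSHom{H}\leap\LHom{H}$, was handled by the Monte Carlo machinery of Section~\ref{sec:MonteCarlo} — in particular Lemma~\ref{lem:MCAlgo}, which I would regard as the substantive input (it exploits that a compaction or surjective homomorphism always has a constant-size witness set $U$, so that counting reduces via self-partitionability and the Jerrum--Valiant--Vazirani sampling-to-counting technique to a polynomial union of list-homomorphism instances). The reverse direction $\LHom{H}\leap\LSHom{H}$, $\LHom{H}\leap\LComp{H}$ is the trivial gadget of Lemma~\ref{lem:APLHomHardness}: attach a pinned copy of $H$ to force surjectivity (and, simultaneously, edge-surjectivity) without changing the count of list homomorphisms on $G$. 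So the proof proposal is genuinely just the two-line assembly above; any effort spent would go into double-checking that the inputs to Corollaries~\ref{cor:LCompToLHom} and~\ref{cor:LSHomToLHom2} are stated with $\calL$ unrestricted (which they are, via $\calL=2^{V(H)}$), so that they apply to arbitrary $\LComp{H}$ and $\LSHom{H}$ instances rather than only to instances with restricted list structure.
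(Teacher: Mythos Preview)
Your proof is correct and matches the paper's own argument exactly: the paper likewise derives the theorem by combining Corollaries~\ref{cor:LCompToLHom} and~\ref{cor:LSHomToLHom2} with Lemma~\ref{lem:APLHomHardness}. Your additional commentary on where the actual content lies (the Monte Carlo machinery of Lemma~\ref{lem:MCAlgo} versus the trivial pinned-copy gadget) is accurate and well observed.
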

\addtocounter{thm}{-1}
}
\bibliography{\jobname}
\end{document}